\theoremstyle{plain}
\newtheorem{theorem}{Theorem}[section]
\newtheorem{prop}[theorem]{Proposition}
\newtheorem{lemma}[theorem]{Lemma}
\newtheorem{coro}[theorem]{Corollary}
\theoremstyle{definition}
\newtheorem{definition}[theorem]{Definition}
\newtheorem{remark}[theorem]{Remark}
\newtheorem{example}[theorem]{Example}
\newcommand{\ep}{\operatorname{ep}}
\newcommand{\p}{\mathbf{p}}
\newcommand{\Ind}{\operatorname{Ind}}
\newcommand{\pp}{\operatorname{pp}}
\newcommand{\scp}{\operatorname{sc}}
\newcommand{\ac}{\operatorname{ac}}
\newcommand{\Z}{{\mathbb Z}}
\newcommand{\Q}{{\mathbb Q}}
\newcommand{\R}{{\mathbb R}}
\newcommand{\N}{{\mathbb N}}
\newcommand{\mc}{\mathcal}
\newcommand{\dist}{\operatorname{d}}
\newcommand{\Refl}{\operatorname{R}}
\newcommand{\orb}{\operatorname{Orb}}
\renewcommand{\H}{\operatorname{H}}
\newcommand{\card}{\operatorname{card}}
\newcommand{\me}{\mathrm{e}}
\newcommand{\tr}{\operatorname{Tr}}
\renewcommand{\c}{c}
\DeclarePairedDelimiter\norm{\lVert}{\rVert}
\begin{document}

\title[Spectral properties of almost minimal substitution systems]{Spectral properties of Schr\"{o}dinger operators associated to almost minimal substitution systems}

\author{Benjamin Eichinger}
\address{Departments of Mathematics, Rice University MS-136,
\newline
\hspace*{\parindent}Box 1892, Houston, TX 77251-1892, USA}
\email{be11@rice.edu}

\author{Philipp Gohlke}
\address{Fakult\"at f\"ur Mathematik, Universit\"at Bielefeld, \newline
\hspace*{\parindent}Postfach 100131, 33501 Bielefeld, Germany}
\email{pgohlke@math.uni-bielefeld.de}

\begin{abstract}
We study the spectral properties of ergodic Schr\"{o}dinger operators that are associated to a certain family of non-primitive substitutions on a binary alphabet. The corresponding subshifts provide examples of dynamical systems that go beyond minimality, unique ergodicity and linear complexity. In some parameter region, we are naturally in the setting of an infinite ergodic measure. The almost sure spectrum is singular and contains an interval. We show that under certain conditions, eigenvalues can appear. Some criteria for the exclusion of eigenvalues are fully characterized, including the existence of strongly palindromic sequences. Many of our structural insights rely on return word decompositions in the context of non-uniformly recurrent sequences. We introduce an associated induced system that is conjugate to an odometer.

\end{abstract}

\keywords{Schr\"{o}dinger operators, non-primitive substitutions}

\subjclass[2010]{81Q10, 37B10, 52C23}

\maketitle

\section{Introduction}
We are interested in a family of discrete Schr\"{o}dinger operators $\H_w$, where $w \in \mathbb{X}$ and $(\mathbb{X},S,\mu)$ is an ergodic symbolic dynamical system. More precisely, $\mathbb{X} \subset \mc A^{\Z}$ for some finite alphabet $\mc A$, $S$ denotes the left shift $(Sx)_n = x_{n+1}$ and $\mu$ is an $S$-ergodic measure. Further, $\H_w \colon \ell^2(\Z) \to \ell^2 (\Z)$, with
\begin{equation}
\label{EQ:Schroedinger}
(\H_{w} \psi)_n = \psi_{n-1} + \psi_{n+1} + V_n(w) \psi_n,
\end{equation}
where $V_n(w) = V(w_n)$ for some injective potential function $V \colon \mc A \to \R$. The spectral properties of $\H_w$ depend heavily on the structure of $\mathbb{X}$. Heuristically, the more complex the sequence space $\mathbb{X}$, the more singular becomes the spectral type of $\H_w$ for typical elements $w \in \mathbb{X}$. The extreme cases, where $\mathbb{X}$ is the finite orbit of a periodic sequence or the full shift $\mathbb{X}= \mc A^{\Z}$ (with $\mu$ the Bernoulli measure) are well understood \cite{DamanikFillmanESO}. 
An important tool to construct sequence spaces that are aperiodic but still maintain a relatively low complexity are \emph{substitutions}. A substitution is given by a map $\varrho \colon \mc A \to \mc A^{+} = \cup_{n \in \N} \mc A^n$, which is extended to finite and (bi-) infinite words via concatenation and we can naturally associate a sequence space $\mathbb{X} = \mathbb{X}_{\varrho}$ to a given $\varrho$, compare \cite{baake,fogg,queffelec} for general background. 
\\Possibly the most paradigmatic and well-studied example is the Fibonacci-substitution, see \cite{DamFibonacci} for a survey on spectral results for the associated Schr\"{o}dinger operators. 
There are several ways to argue that the corresponding dynamical system $(\mathbb{X}_{\varrho},S)$ is of low complexity. From a topological perspective, this corresponds to the observation that $(\mathbb{X}_{\varrho},S)$ is \emph{minimal}, that is, every point has a dense orbit under $S$. Further, $(\mathbb{X}_{\varrho},S)$ is \emph{uniquely ergodic}, meaning that the space $\mc M (\mathbb{X}_{\varrho})$ of shift-invariant probability measures on $\mathbb{X}_{\varrho}$ consists of a singleton. Finally, the \emph{complexity function} of $\mathbb{X}_{\varrho}$ satisfies $c(n) = n + 1$ for all $n \in \N$, where $c(n)$ gives the number of different blocks of length $n$ that can appear in sequences in $\mathbb{X}_{\varrho}$. This is the smallest complexity function that can occur for an aperiodic substitution, attaching also a flavor of combinatorial minimality to $\mathbb{X}_{\varrho}$. 
The Fibonacci-substitution is an example of a \emph{primitive} substitution, a property that guarantees that $(\mathbb{X}_{\varrho},S,\mu)$ is minimal, uniquely ergodic and has a complexity function that grows not faster than linearly \cite{baake}. The Schr\"{o}dinger operators associated to primitive substitution systems have been the object of extensive studies over the last decades \cite{BBG,BovierGhez,DamanikGordon,DamanikFillmanESO, Lenz02, LTWW}. Denote by $\sigma(\H_w)$ the spectrum of $\H_w$, and let $\sigma_{\pp}(\H_w)$, $\sigma_{\scp}(\H_w)$ and $\sigma_{\ac}(\H_w)$ be its pure point part, singular continuous part and absolutely continuous part, respectively. Some basic spectral results hold in full generality for $(\mathbb{X}_{\varrho},S,\mu)$ if $\mathbb{X}_{\varrho}$ is aperiodic and $\varrho$ a primitive substitution, see for example \cite{DamanikFillmanESO, DamanikLenz-Bosh}. 
\begin{enumerate}
\item There is a compact set $\Sigma \subset \R$ such that $\sigma(\H_w) = \Sigma$ for all $w \in \mathbb{X}_{\varrho}$.
\item $\Sigma$ is a Cantor set of Lebesgue measure $0$.
\item $\sigma(\H_w)_{\ac} = \varnothing$ for all $w \in \mathbb{X}_{\varrho}$.
\end{enumerate}
Excluding eigenvalues has not been achieved in the same generality, but there are sufficient criteria to exclude them either on a set of full measure (using Gordon potentials) \cite{DamanikGordon}, or \emph{generically}, that is on a dense $G_{\delta}$ set (using palindromes) \cite{HofKnillSimon}. 
 
In this paper we will consider non-primitive substitution systems that exhibit a larger complexity in the topological, measure-theoretical and combinatorial sense. We mainly focus on a family of substitutions $\varrho$ on $\mc A = \{a,b\}$, given by
\begin{equation}
\label{EQ:main-family}
\varrho \colon 
\begin{cases}
a \mapsto a^p,
\\ b \mapsto ba^{k_1} ba^{k_2} \cdots ba^{k_r},
\end{cases}
\end{equation}
where $p \in \N$, $r \geqslant 2$ and $k_i \in \N_0$ for all $1 \leqslant i \leqslant r$ and $\sum k_i > 0$. 
On a binary alphabet, every substitution is either trivial, primitive or gives rise to the same subshift as a substitution of this form.
If $p=1$ and $k_r = 0$, $\varrho$ gives rise to a minimal and uniquely ergodic dynamical system. This case has been considered in \cite{DamanikLenz, dOliveira}, where it was shown that the spectral properties $(1)$-$(3)$, mentioned for primitive substitutions still hold, and eigenvalues can be excluded using similar techniques. Our main concern is with the complementary case that $p >1$ or $k_r >0$. In this case, $\varrho$ is an \emph{almost primitive} substitution as defined by Yuasa in \cite{Yuasa07}. In the classification of \cite{RustMaloney} $\varrho$ is \emph{wild} if $p=1$ and $k_r >0$, and it is \emph{tame} otherwise. The space $\mathbb{X}_{\varrho}$ is \emph{almost minimal}: it contains exactly one shift-periodic point ($a^{\Z}$) and all other elements have a dense orbit. Its complexity function can be as large as $c(n) \sim n^2$, depending on the parameters. Apart from the trivial measure on the periodic orbit, there is precisely one ergodic measure, denoted by $\nu$, on $\mathbb{X}_{\varrho}$. This measure is non-atomic and is infinite precisely if $p \geqslant r$ \cite{Yuasa07}. Ergodic Schr\"{o}dinger operators in the infinite measure setting have received attention recently \cite{BDFL}, the spectral result developed therein find a natural application in our setting. In contrast to the primitive case, we will show the following.
\begin{theorem}
\label{THM:basic-spectral}
Let $\varrho$ be the substitution in \eqref{EQ:main-family} and suppose $p>1$ or $k_r >0$. Then,
\begin{enumerate}
\item There is a compact set $\Sigma \subset \R$ such that $\sigma(\H_w) = \Sigma$ for all $w \in \mathbb{X}_{\varrho} \setminus \{ a^{\Z} \}$.
\item $[-2,2] + V(a) \subsetneq \Sigma$.
\item There is an infinite, countable set $\mathbb{X}_{\varrho}^{\ep} \subset \mathbb{X}_{\varrho}$ such that $\sigma_{\ac}(\H_w) = [-2,2] + V(a)$ for all $w \in \mathbb{X}_{\varrho}^{\ep}$ and $\sigma_{\ac}(\H_w) = \varnothing$ otherwise.
\end{enumerate}
\end{theorem}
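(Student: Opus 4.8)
The plan is to treat the three assertions in order, using throughout that the hull $\mathbb{X}_{\varrho}$ is \emph{almost minimal}: every $w \neq a^{\Z}$ satisfies $\overline{\orb(w)} = \mathbb{X}_{\varrho}$. For (1) I would invoke the upper semicontinuity of the spectrum along orbits. If $w' \in \overline{\orb(w)}$, pick $n_j$ with $S^{n_j} w \to w'$; since each $\H_{S^{n} w}$ is unitarily equivalent to $\H_w$ and $\H_{S^{n_j} w} \to \H_{w'}$ in the strong resolvent sense, one gets $\sigma(\H_{w'}) \subseteq \sigma(\H_w)$. Applying this to any pair $w, w' \in \mathbb{X}_{\varrho} \setminus \{a^{\Z}\}$, each lying in the other's orbit closure by almost minimality, yields $\sigma(\H_{w}) = \sigma(\H_{w'}) =: \Sigma$. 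The same inclusion with $w' = a^{\Z} \in \overline{\orb(w)}$ gives $[-2,2] + V(a) = \sigma(\H_{a^{\Z}}) \subseteq \Sigma$, the easy half of (2).

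The strict inclusion in (2) is the main conceptual obstacle, because one cannot simply insert a single isolated defect: the combinatorics of $\varrho$ force every long $a$-run to be flanked by short runs, so words $a^{N} b a^{N}$ with $N$ large need not occur, and the naive Weyl-sequence argument for a band-edge eigenvalue fails. I would instead argue by rigidity. Suppose $\Sigma = [-2,2] + V(a)$. Using the points constructed for (3) (elements of $\mathbb{X}_{\varrho}$ asymptotic to $a^{\Z}$ on one side), there is a non-periodic $w$ with $\sigma_{\ac}(\H_w) = [-2,2] + V(a) = \Sigma$, i.e. $\H_w$ is purely absolutely continuous with spectrum a single interval of length $4$. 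By Kotani theory and Remling's oracle theorem every right limit of $w$ is then reflectionless on $[-2,2] + V(a)$; but the only whole-line operator reflectionless on an interval that equals its entire spectrum is the free operator $\Delta + V(a)$, forcing all such limits to equal $a^{\Z}$ and hence $w = a^{\Z}$, a contradiction. Alternatively one may attempt a direct discriminant/transfer-matrix computation producing an explicit energy outside $[V(a) - 2, V(a) + 2]$ in the spectrum; this is more elementary but requires controlling the trace-map dynamics near the band edges.

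For (3) the organizing tool is the invariance of the absolutely continuous spectrum under finite-rank perturbations. Decoupling $\H_w$ at a single bond is a rank-two perturbation, so by the Kato--Rosenblum theorem $\sigma_{\ac}(\H_w) = \sigma_{\ac}(\H_w^{+}) \cup \sigma_{\ac}(\H_w^{-})$, where $\H_w^{\pm}$ are the induced half-line operators. If $w$ agrees with $a^{\Z}$ on one half-line, the corresponding $\H_w^{\pm}$ is the free half-line operator with $\sigma_{\ac} = [-2,2] + V(a)$, while the other half-line carries a genuinely aperiodic sequence whose half-line operator has \emph{no} absolutely continuous spectrum; this last fact is the key spectral input and is where the structural analysis of $\varrho$ together with the infinite-measure results of \cite{BDFL} enter. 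Combining the two gives $\sigma_{\ac}(\H_w) = [-2,2] + V(a)$. Conversely, if $w$ carries infinitely many letters $b$ on both sides, both half-line operators are aperiodic, both have empty absolutely continuous spectrum, and hence $\sigma_{\ac}(\H_w) = \varnothing$.

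It remains to pin down $\mathbb{X}_{\varrho}^{\ep}$ as the set of points that are eventually equal to $a^{\Z}$ on at least one side, and to show it is countably infinite. Infinitude is clear: the two-sided fixed point $x^{*} = a^{-\infty} . b a^{k_1} b \cdots$ of $\varrho$ lies in $\mathbb{X}_{\varrho}$, and its shift orbit $\{ S^{n} x^{*} : n \in \Z \}$ consists of distinct one-sided-asymptotic points. Countability is the combinatorial crux: I would use recognizability of $\varrho$ together with the return-word decomposition and the induced odometer structure developed earlier in the paper to show that a sequence equal to $a^{-\infty}$ on one side is determined by finitely much data up to the substitution action, so that only countably many such points exist. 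I expect the two genuine difficulties to be the strictness in (2) and the vanishing of the half-line absolutely continuous spectrum on the aperiodic side in (3); the remaining steps are either standard perturbation theory or bookkeeping with the substitution structure.
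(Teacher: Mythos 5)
Your treatment of (1), the easy half of (2), the decoupling step in (3), and the countability of $\mathbb{X}_{\varrho}^{\ep}$ all track the paper's route: strong approximation along dense orbits for the uniformity of $\sigma(\H_w)$ off the periodic point, a finite-rank decoupling into half-line operators, and a preimage/return-word argument showing the eventually periodic points form finitely many shift orbits. Two steps, however, do not hold up as written.

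The rigidity argument for the strict inclusion in (2) fails. Remling's oracle theorem constrains the right limits of $w$ to be reflectionless only on the essential support of $\sigma_{\ac}(\H_w^{+})$, the absolutely continuous spectrum of the \emph{right half-line} operator, not of the whole-line operator. For the point $w = a^{\infty}.\varrho^{\infty}(b)$ you invoke, the entire set $\sigma_{\ac}(\H_w) = V(a)+[-2,2]$ is carried by the free left half-line, while $\sigma_{\ac}(\H_w^{+}) = \varnothing$; hence Remling imposes no condition on the right limits and no contradiction arises. (The mirror point $\omega^{-}.a^{\infty}$ does not help: its only right limit is $a^{\Z}$, which is reflectionless on the interval.) Note also that $\sigma_{\ac}(\H_w) = \sigma(\H_w)$ as sets does not make $\H_w$ purely absolutely continuous. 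The paper's actual argument is deterministic and short: by \cite[Thm.~8]{Killip-Simon}, a discrete Schr\"odinger operator whose spectrum is contained in $V(a)+[-2,2]$ must have potential identically equal to $V(a)$; since every $w \neq a^{\Z}$ contains the letter $b$ and $V$ is injective, the inclusion is strict. Your rigidity intuition is exactly this sum-rule statement, but it must be invoked directly rather than derived through reflectionless limits.

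In (3), the key input --- that the aperiodic half-line carries no absolutely continuous spectrum --- is attributed to the infinite-measure results of \cite{BDFL}, which is not where it comes from; those results only give almost-sure constancy of the spectral components. The paper obtains it from Remling's oracle theorem: a half-line operator with finitely valued potential and nonempty absolutely continuous spectrum has an eventually periodic potential, and by Proposition~\ref{Prop:ap-properties} the point $a^{\Z}$ is the only element of $\mathbb{X}_{\varrho}$ that is eventually periodic on both sides, so for $w \notin \mathbb{X}_{\varrho}^{\ep}$ both half-line ac spectra vanish, while for $w \in \mathbb{X}_{\varrho}^{\ep}\setminus\{a^{\Z}\}$ exactly one of them equals $V(a)+[-2,2]$. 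This is a citable fix, but as proposed the justification of the decisive spectral fact is missing.
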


The superscript ``$\ep$" stands for \emph{eventually periodic}, which characterizes the sequences in $\mathbb{X}^{\ep}_{\varrho}$.  
All of the properties in Theorem~\ref{THM:basic-spectral} follow from structural results on $\mathbb{X}_{\varrho}$, most of which can already be found in \cite{Yuasa07}. 
We emphasize that the spectrum and its absolutely continuous component are no longer uniform in $\mathbb{X}_{\varrho}$. Like in the primitive setting, the $\nu$-almost sure spectrum $\Sigma$ is singular, but it now contains an interval and hence is far from being a Cantor set of Lebesgue measure $0$. 

Again, excluding eigenvalues proves to be the hardest part of the spectral analysis. 
In fact, somewhat surprisingly, there are situations such that for all $w$ in a dense subset of $\mathbb{X}_{\varrho}$, the corresponding Schr\"{o}dinger operator $\H_w$ admits an eigenvalue.

\begin{theorem}
\label{THM:Eigenvalue}
There exists an almost primitive substitution $\varrho$ with the following properties. There is a point $\overline{w} \in \mathbb{X}_{\varrho}^{\ep} \setminus\{a^{\Z}\}$ and parameters $V(a),V(b) \in \R$ such that $\H_{\overline{w}}$ has an eigenvalue. The same holds for every point in $\orb(\overline{w}) = \{ S^n \overline{w} \mid n \in \Z\}$ which lies dense in $\mathbb{X}_{\varrho}$. At the same time, $\sigma_{\pp}(H_w) = \varnothing$ for $\nu$-almost every $w \in \mathbb{X}_{\varrho}$.
\end{theorem}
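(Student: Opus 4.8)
The plan is to separate the statement into two parts of entirely different character: the construction of a single eventually periodic orbit that carries an eigenvalue (assertions one and two), and the $\nu$-almost sure \emph{absence} of eigenvalues (assertion three). For the construction I would fix one concrete substitution, say $p=2$, $r=3$ and $(k_1,k_2,k_3)=(1,0,0)$, so that $\varrho\colon a\mapsto aa,\ b\mapsto babb$. This is almost primitive (indeed tame, as $p>1$), and since $p<r$ the measure $\nu$ is finite, which will simplify the last part. Because $k_1\geqslant 1$ and $p>1$, iterating $\varrho$ produces inside $\varrho^n(b)$ the factors $a^{\,p^{n-1}k_1}\,b\,a^{\,p^{n-2}k_1}$, whose flanking $a$-runs grow without bound; hence every word $a^m b a^{m'}$ lies in the language, and the sequence $\overline w$ defined by $\overline w_0=b$ and $\overline w_n=a$ for $n\neq 0$ belongs to $\mathbb{X}_\varrho$. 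By construction $\overline w\in\mathbb{X}_\varrho^{\ep}\setminus\{a^{\Z}\}$, and writing $P_0\psi=\langle\delta_0,\psi\rangle\delta_0$ we have $\H_{\overline w}=\H_{a^{\Z}}+(V(b)-V(a))P_0$, a rank-one perturbation of the constant operator $\H_{a^{\Z}}$ whose spectrum is the purely absolutely continuous band $[-2,2]+V(a)$. The decaying ansatz $\psi_n=z^{\abs{n}}$ with $\abs{z}<1$ reduces the eigenvalue equation to $z^{-1}-z=V(b)-V(a)$, which has a unique solution whenever $V(b)\neq V(a)$, producing an eigenvalue $z+z^{-1}+V(a)\notin[-2,2]+V(a)$. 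Since $V$ is injective, this eigenvalue always exists, giving the first assertion.

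The second assertion is then essentially automatic. Each $S^n\overline w$ carries the same isolated defect shifted to position $-n$, and conjugation by the shift unitary on $\ell^2(\Z)$ identifies $\H_{S^n\overline w}$ with $\H_{\overline w}$, so the entire orbit $\orb(\overline w)$ shares that eigenvalue; and as $\overline w\neq a^{\Z}$ and $\mathbb{X}_\varrho$ is almost minimal, $\orb(\overline w)$ is dense in $\mathbb{X}_\varrho$.

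The third assertion is the substantive one, and the plan is a three-block Gordon argument. The engine is the classical criterion that $\sigma_{\pp}(\H_w)=\varnothing$ as soon as the potential contains, around the origin, three consecutive identical blocks of equal length $q_j$ with $q_j\to\infty$ and the origin in the central copy. First I would check that the seed $bbb$ already occurs in $\varrho^2(b)$, so that $\varrho^n(bbb)=\varrho^n(b)^3$ is legal for every $n$ and provides cubes of length $q_n=\abs{\varrho^n(b)}\to\infty$. The heart of the matter is to show that for $\nu$-almost every $w$ the origin falls into the central copy of such a cube for infinitely many $n$. For this I would use the hierarchical (recognizability) decomposition of $w$ into level-$n$ inflation words, together with the return-word description and the odometer conjugacy of the induced system established earlier: these control the position of the origin inside the level-$n$ tower, and the positive frequency of the pattern $bbb$ at each level guarantees that a uniformly positive proportion of admissible origin positions is that of a cube centre. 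A Borel--Cantelli / tower-refinement argument along the hierarchy then yields the infinitely-often statement $\nu$-almost surely, after which Gordon's lemma concludes.

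I expect the genuine obstacle to be this final step: converting the merely positive frequency of the cubes $\varrho^n(b)^3$ in the language into the statement that a single fixed origin is covered by them for infinitely many $n$ and $\nu$-almost every $w$. Plain Birkhoff averages only place the cubes \emph{somewhere} along $w$; pinning them to the origin requires the finer recurrence supplied by the odometer structure, and one must verify that this recurrence persists even in the regime where $\nu$ is infinite. For the specific substitution above the difficulty is mitigated by the finite-measure choice $p<r$, but a careful treatment of the general family is where the real work lies.
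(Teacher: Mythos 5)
There is a fatal gap at the very first step: the point $\overline w=a^{\infty}.ba^{\infty}$ does \emph{not} belong to $\mathbb{X}_{\varrho}$ for your substitution (nor for any non-trivial almost primitive substitution). Proposition~\ref{Prop:ap-properties} states that $a^{\Z}$ is the only element of $\mathbb{X}_{\varrho}$ that is eventually periodic to both sides; concretely, for $\varrho\colon a\mapsto aa,\ b\mapsto babb$ the return-word alphabet is $\mc N=\{0\}\cup\{2^m\}$ and the Toeplitz skeleton $(k_1k_2\,?)^{\Z}=(1\,0\,?)^{\Z}$ forces at least one of any two adjacent return words to be $ba^0$ or $ba^1$, so a word $a^mba^{m'}$ with both $m,m'\geqslant 2$ is never legal. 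Your claimed factor $a^{p^{n-1}k_1}\,b\,a^{p^{n-2}k_1}$ of $\varrho^n(b)$ does not occur: the long $a$-runs produced by $\varrho^{n-1}(a)$ sit \emph{between} two $b$'s, i.e.\ they give $ba^{2^{n-1}}b$, not a $b$ flanked by two long runs. Consequently the rank-one perturbation argument has nothing to apply to. This is not a repairable detail — the entire difficulty of the first assertion is that the only non-trivial eventually periodic points are (shifts of) $a^{\infty}.\varrho^{\infty}(b)$, whose right half-line potential is genuinely aperiodic. The paper's proof constructs the decaying eigenfunction against that aperiodic half by tuning $V(a),V(b),E$ so that $T_E(\varrho(b))$ swaps the stable and unstable eigenspaces of the hyperbolic matrix $T_E(a)$, and then controls the growth via the alternating height function $h(n)=\sum_j(-1)^{j+1}m_j$ of the intervening $a$-blocks (for $\varrho\colon a\mapsto a^p,\ b\mapsto bab^4$, $p>5$). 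None of this is present in your proposal.

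The third assertion also does not go through as planned for your substitution. The paper's Gordon criterion (Proposition~\ref{PROP:as-absence-of-eigenvalues}) requires a word $uuuu_1\in\mc L_{\varrho}$ with $u_1\in\mc A'$, i.e.\ $\Ind_b(\mc L_{\varrho})>3$; for $b\mapsto babb$ one checks via Proposition~\ref{PROP:power-test} (here $\beta^{(2)}=10210010$) that no $y$ with $|y|\leqslant r-1=2$ has $y^3$ in the relevant window, so $\Ind_b=3$ exactly and the criterion fails. The extra fourth partial block is not cosmetic: it is what produces $|\varrho^n(u_1)|'$ distinct anchor positions for the cube, giving $\nu(B'_{m_n})\geqslant \lambda^{-k}\,|\varrho^n(u_1)|'\lambda^{-n}\,\nu([b])$, which stays bounded below. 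With only $\varrho^n(b)^3$ legal you get one anchored cylinder per occurrence and the frequencies tend to zero, so $\limsup_n\nu(B'_n)>0$ is lost. Finally, the "pin the origin to the cube centre for a.e.\ $w$" problem you flag is a red herring: the no-eigenvalue set is shift-invariant, so by ergodicity it suffices to show the anchored Gordon set has \emph{positive} measure; no Borel--Cantelli or tower-refinement argument is needed (the only genuine subtlety in the infinite-measure case is restricting to $\mathbb{X}_{\varrho}\setminus[a]$ to regain continuity from above, which the paper handles by demanding $v_1\in\mc A'$).
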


This result shows that we cannot expect to exclude eigenvalues in full generality.
The main technical part of this paper is devoted to characterize criteria that allow us to exclude eigenvalues generically or almost surely. 
We call $u_1 \cdots u_n \in \mc A^n$ a palindrome if $u_1 \cdots u_n = u_n \cdots u_1$. In the minimal setting, the existence of arbitrarily large palindromes that can occur in $\mathbb{X}_{\varrho}$ is enough to conclude the existence of a dense set of \emph{strong palindromes}, implying generic absence of eigenvalues \cite{HofKnillSimon}. Not so in the almost minimal case. The following result relies on a full characterization of the existence of non-trivial strongly palindromic sequences in $\mathbb{X}_{\varrho}$, provided in Theorem~\ref{THM:PALINDROMES}. 

\begin{theorem}
\label{THM:palindromes-spectral}
Let $\varrho$ be as in Theorem~\ref{THM:basic-spectral} and suppose $k_1 \cdots k_{r-1} \in \N_0^{r-1}$ is a palindrome. If $r \in 2 \N +1$, we have generic absence of eigenvalues. The same holds if $k_r = k_i = 0$ for some $k_i \in \{k_1,\ldots,k_{r-1} \}$. If $r \in 2 \N$ and $p=1$ there is a computable subset $\Sigma' \subset \Sigma$ such that we can prove generic absence of eigenvalues on $\Sigma'$. Both $\Sigma' = \Sigma$ and $\Sigma' = \emptyset$ are possible, depending on the parameters of $\varrho$ and $|V(a) - V(b)|$.
\end{theorem}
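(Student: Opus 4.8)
The plan is to deduce all four assertions from the Hof--Knill--Simon palindrome criterion \cite{HofKnillSimon}, combined with the almost minimality of $\mathbb{X}_{\varrho}$ and the structural dichotomy of Theorem~\ref{THM:PALINDROMES}. Recall that the set $\mathcal{P} \subset \mathbb{X}_{\varrho}$ of strongly palindromic sequences is a $G_{\delta}$ (a countable intersection of open conditions of the form ``$w$ carries a palindrome of radius at least $N$ about a center within distance $N$ of the origin''), and that $\sigma_{\pp}(\H_w) = \varnothing$ for every $w \in \mathcal{P}$. The crucial reduction is that $\mathcal{P}$ is shift-invariant, so if there exists a single non-trivial $w^{\star} \in \mathcal{P}$, then $\orb(w^{\star}) \subset \mathcal{P}$ is dense by almost minimality (every point other than $a^{\Z}$ has dense orbit), whence $\mathcal{P}$ is a dense $G_{\delta}$ and we obtain generic absence of eigenvalues. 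In particular any sequence that is symmetric about a fixed center with palindromes of unbounded radius is strongly palindromic, so the task is reduced to producing one such sequence.

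For odd $r$ I would argue as follows. Under the standing assumption that $k_1 \cdots k_{r-1}$ is a palindrome, the word $\varrho(b)$ has a palindromic ``core'' $b\ts a^{k_1} b \cdots b\ts a^{k_{r-1}} b$ obtained by deleting the trailing $a^{k_r}$; since $r^n$ is odd, this core carries a distinguished central letter $b$. Working at the level of the gap sequence between consecutive $b$'s, one checks that the palindrome condition on $(k_1,\dots,k_{r-1})$ is propagated by $\varrho$, so the cores of $\varrho^n(b)$ are palindromes of radius tending to infinity about a fixed central $b$, and their two-sided limit is a non-trivial element of $\mathcal{P}$. The case $k_r = k_i = 0$ is analogous but centered on a bond: if $k_r = 0$ then every $\varrho^n(b)$ begins and ends with $b$, while an interior $k_i = 0$ makes $bb$ admissible, so the words $\varrho^n(b)\,\varrho^n(b)$ are legitimate factors symmetric about the central $bb$-bond, again producing a non-trivial strongly palindromic sequence (for any parity of $r$). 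In either situation the reduction of the first paragraph yields generic absence of eigenvalues, and I would cite Theorem~\ref{THM:PALINDROMES} for the precise admissibility bookkeeping.

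The even case $r \in 2\N$, $p=1$ is the genuinely hard one. Here the core of $\varrho^n(b)$ has an even number $r^n$ of $b$'s, so its center is forced into the central gap $a^{k_{r/2}}$, and this $a$-block grows (linearly when $p=1$) under iteration; consequently no fixed center survives to infinity except the one producing $a^{\Z}$, and Theorem~\ref{THM:PALINDROMES} confirms that non-trivial strongly palindromic sequences are then absent. The substitute is to fix a non-trivial $w^{\star}$ whose palindromic prefixes $w^{\star}|_{[0,L_n]}$ (the cores of $\varrho^n(b)$) have radius $l_n \sim r^n$ about centers $c_n \sim l_n$ sitting in $a$-blocks of radius $m_n = o(l_n)$, and to run the standard two-sided Hof--Knill--Simon reflection argument about $c_n$ on a putative $\ell^2$-eigenfunction at energy $E$. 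Through the palindrome symmetry this relates the solution near the two edges by a transfer product that, across the central run, reduces to a power $T_a(E)^{\,O(m_n)}$ of the single-letter transfer matrix over $a$; the reflection produces a lower bound $\|(\psi_{c_n+l_n},\psi_{c_n+l_n+1})\| \geqslant \kappa_n \|(\psi_{c_n-l_n},\psi_{c_n-l_n+1})\|$, and $\liminf_n \kappa_n > 0$ contradicts $\ell^2$-decay. I would then define $\Sigma' \subset \Sigma$ to be the computable set of energies on which these central-block transfers stay controlled along $n$ — a boundedness condition read off from the trace recursions underlying Theorem~\ref{THM:basic-spectral} and the induced-odometer structure — and upgrade ``no eigenvalue in $\Sigma'$ for $w^{\star}$'' to a dense $G_{\delta}$ by shift-invariance and almost minimality exactly as before.

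The main obstacle is this last step: establishing the quantitative reflection estimate that balances the exponential palindrome growth $l_n \sim r^n$ against the (energy-dependent) growth of the central-block transfer $T_a(E)^{\,O(m_n)}$, and thereby pinning down $\Sigma'$ explicitly and verifying that it is computable. Showing that both extremes $\Sigma' = \Sigma$ and $\Sigma' = \varnothing$ actually occur is a separate point: it amounts to exhibiting parameter regimes in which the controlled-transfer energy set is, respectively, all of $\Sigma$ or disjoint from it, and I expect this to be governed by how the coupling $|V(a)-V(b)|$ positions the singular part $\Sigma \setminus ([-2,2]+V(a))$ relative to the region where the central-block contribution remains subdominant.
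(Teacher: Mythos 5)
There is a genuine gap at the heart of your reduction. You claim that ``any sequence that is symmetric about a fixed center with palindromes of unbounded radius is strongly palindromic,'' and you then only construct such fixed-center sequences (the nested palindromic cores of $\varrho^n(b)$ about the central letter for odd $r$, and $\varrho^n(b)\varrho^n(b)$ about a central bond in the type-$0$ case). This is not what strong palindromicity means, and it is not enough for the Hof--Knill--Simon mechanism: the definition requires palindromes $(P_n,\ell_n,c_n)$ with $c_n\to\infty$ and $B^{c_n}/\ell_n\to 0$, because the Jitomirskaya--Simon reflection argument produces a lower bound on $\|(\psi_{m},\psi_{m-1})\|$ at points $m\approx 2c_n\to\infty$, and only then contradicts $\ell^2$-decay. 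A potential that is globally reflection-symmetric about one fixed center is perfectly compatible with having eigenvalues. Concretely, in the odometer coordinates the point you build for odd $r$ is $x[\p]$ with $p_j=(r-1)/2$ for all $j$; its unique infinite-level reflection center sits at $-1/2$, and every other center $c'$ carries a maximal palindrome of length only $O(c')$ (Lemma~\ref{LEM:palindromic-length}), so this point is not $B$-strongly palindromic for any $B>1$. The paper's construction in Theorem~\ref{THM:PALINDROMES} is precisely designed to avoid this: one holds $p_j=(r-1)/2$ (resp.\ $r-1$, resp.\ steers onto the letter $0$ in the type-$0$ case) only on long finite blocks, harvests a palindrome of length $\geqslant nB^{c_n}$ at a center $c_n$, then moves to a strictly larger center $c_{n+1}$ and repeats, so that the centers escape to infinity. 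You cite Theorem~\ref{THM:PALINDROMES} only ``for the admissibility bookkeeping,'' but you actually need its full existence statement; your own constructions do not supply it.

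Two smaller issues. First, your $G_\delta$ claim for the set $\mathcal P$ of strongly palindromic sequences is unjustified: the conditions you write down (``a palindrome of radius $\geqslant N$ about a center within distance $N$ of the origin'') do not encode the exponential requirement $B^{c_n}/\ell_n\to 0$. The paper instead gets genericity by applying Simon's wonderland theorem to the set $\{w : H_w \text{ has no eigenvalues on } \mc I_\varepsilon\}$ for closed energy sets $\mc I_\varepsilon$, together with density of the orbit of one good point and a Baire intersection over $\varepsilon_n=1/n$; you should route the $G_\delta$ argument through the operators, not through $\mathcal P$. Second, for $r$ even and $p=1$ you leave the identification of $\Sigma'$ as an open obstacle, whereas the theorem asserts a \emph{computable} set: the paper obtains it directly as $\Sigma'=\Sigma\cap\{E : B_E<r^{2/k_r}\}$ with $B_E=(\max_c\|T_E(c)\|)^2$, by combining the sharp threshold $B'=r^{2/k_r}$ for the existence of $B$-strongly palindromic sequences (Theorem~\ref{THM:PALINDROMES}) with the energy-resolved palindrome criterion; no new reflection estimate or trace-map analysis is needed. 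As written, your proposal establishes none of the four assertions without importing the very constructions of Theorem~\ref{THM:PALINDROMES} that it tries to replace.
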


In all other cases, we show in Theorem~\ref{THM:PALINDROMES}, that there are no strongly palindromic sequences in $\mathbb{X}_{\varrho}$. Hence, other techniques will be needed in this case to exclude eigenvalues. Like in the minimal case, we obtain that the set of strong palindromes is either empty or uncountable and that it is of measure zero for $\nu$.
An important tool in analyzing the structure of points in $\mathbb{X}_{\varrho} \setminus \mathbb{X}_{\varrho}^{\ep}$ is their decomposition into \emph{return words} of the letter $b$, which are of the form $ba^k$ for some $k \in \N_0$ in our setting. Return words were introduced by Durand in order to study minimal substitution systems \cite{DurandReturn}. The substitution $\varrho$ on $\mc A$ induces a substitution $\bar{\varrho}$ on the infinite alphabet of return words, which is conjugate to $\varrho$ in an appropriate sense. The sequence space $\mathbb{X}_{\bar{\varrho}}$, constructed from $\bar{\varrho}$ consists of (generalized) \emph{Toeplitz sequences}. This observation is key to determine whether a sequence $w \in \mathbb{X}_{\varrho} \setminus \{a^{\Z} \}$ is strongly palindromic. The same procedure allows us to strengthen several of the results in \cite{dOliveira} which treat the minimal case $p=1$ and $k_r=0$.

As mentioned when we discussed primitive substitutions, an important strategy to exclude eigenvalues almost surely is to show that almost every $w \in \mathbb{X}_{\varrho}$ gives rise to a Gordon potential. If $\varrho$ is primitive, it is enough to find a word $v = v_1 \cdots v_n \in \mc A^+$ such that $vvvv_1$ occurs in $w$ for some $w \in \mathbb{X}_{\varrho}$ \cite{DamanikGordon}. In our context, a sufficient criterion  takes the following form.

\begin{theorem}
\label{THM:as-no-eigenvalues}
Let $\varrho$ be as in Theorem~\ref{THM:basic-spectral} and suppose it has the following property.
\begin{itemize}
\item[$(\ast)$] There is a $v \in \mc A^+$ and $w' \in \mathbb{X}_{\varrho}$ such that $bvbvbvb$ occurs in $w'$.
\end{itemize}
Then, $\sigma_{\pp} (H_w) = \emptyset$ for $\nu$-almost every $w \in \mathbb{X}_{\varrho}$. Given $\varrho$, there is a finite algorithm that checks whether property $(\ast)$ is fulfilled.
\end{theorem}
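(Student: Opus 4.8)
The plan is to realize $\nu$-almost every $w$ as a \emph{Gordon potential} and then invoke the two-sided Gordon-type argument for the absence of point spectrum, in the form adapted to our (possibly infinite-measure) setting from \cite{BDFL} (see also \cite{DamanikGordon}). Recall that this argument produces, at \emph{every} energy, a solution that cannot lie in $\ell^2$, provided the potential contains three consecutive identical blocks of length $q_n$ straddling the origin along a sequence $q_n \to \infty$. Hence the task is purely combinatorial: to show that for $\nu$-almost every $w$ the sequence contains arbitrarily long \emph{cubes} $uuu$ positioned so that the origin falls in the central copy.

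First I would recast $(\ast)$ as a cube condition and bootstrap it through the substitution. Writing $u = bv$, the block $bvbvbvb$ is exactly $uuu$ followed by the first letter of $u$; since $\varrho(b)$ begins with $b$, the word $\varrho(u) = \varrho(b)\varrho(v)$ again begins with $b$, so $\varrho^n(u)\,\varrho^n(u)\,\varrho^n(u)\,b$ occurs in $\mathbb{X}_{\varrho}$ for every $n$. Setting $u^{(n)} = \varrho^n(u)$ and $q_n = |u^{(n)}| \to \infty$, this produces a family of cubes $(u^{(n)})^3$ of unboundedly growing length in the language of $\mathbb{X}_{\varrho}$.

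The heart of the proof is to guarantee, for $\nu$-almost every $w$, that infinitely many of these cubes sit across the origin. Since $\mathbb{X}^{\ep}_{\varrho}$ is countable and $\nu$ is non-atomic, it suffices to treat points $w \notin \mathbb{X}^{\ep}_{\varrho}$; such $w$ carry a bi-infinite return-word expansion that is a genuine point of the induced Toeplitz system, which is conjugate to a minimal, uniquely ergodic odometer. Minimality forces each cube $(u^{(n)})^3$ to occur in every such $w$, and the Toeplitz/odometer structure forces these occurrences to recur along an arithmetic progression of positions with controlled period. Consequently the set of start positions of $(u^{(n)})^3$ is syndetic, and one can select, for infinitely many $n$, an occurrence whose central copy covers $0$; this is precisely the alignment $w_{[-q_n,\,2q_n-1]} = (u^{(n)})^3$ required by the Gordon estimate. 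Feeding this into the Gordon lemma excludes eigenvalues at all energies simultaneously, giving $\sigma_{\pp}(\H_w) = \varnothing$.

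I expect the alignment step to be the main obstacle. Because $\nu$ may be infinite and the frequency of $(u^{(n)})^3$ tends to $0$ as $n \to \infty$, a direct Borel--Cantelli argument over scales is unavailable; the point is to replace measure summation by the \emph{periodic} recurrence of patterns intrinsic to the Toeplitz/odometer structure, which delivers the ``infinitely often across the origin'' statement on a single full-measure set of $w$. Finally, property $(\ast)$ is algorithmically decidable: $\varrho$ is recognizable away from $a^{\Z}$, so occurrence of a prescribed finite word in $\mathbb{X}_{\varrho}$ is checkable, and desubstitution bounds a priori the length of a minimal witness $v$ (any long cube is the $\varrho$-image of a shorter one, down to a base case of length controlled by $\max_x |\varrho(x)|$); hence a finite search over the language of $\mathbb{X}_{\varrho}$ up to that length decides whether some $bvbvbvb$ occurs.
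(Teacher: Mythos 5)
The combinatorial bootstrap at the start is right and matches the paper: with $u=bv$, property $(\ast)$ gives $\varrho^n(u)\varrho^n(u)\varrho^n(u)\ts b\in\mc L_{\varrho}$ for all $n$, producing cubes of unbounded length that begin with the primitive letter $b$. The gap is in the alignment step, and it is exactly where you yourself flag difficulty. Syndeticity of the occurrences of $(u^{(n)})^3$ does \emph{not} let you ``select, for infinitely many $n$, an occurrence whose central copy covers $0$'': the window of favorable start positions has length $q_n=|\varrho^n(u)|$, while the syndeticity gap at scale $n$ is in general much larger than $q_n$, so for a fixed $w$ there is no reason any occurrence lands in that window, let alone for infinitely many $n$. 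Indeed, the conclusion cannot hold for \emph{every} non-eventually-periodic $w$ by any deterministic recurrence argument of this kind; the paper only obtains a set of Gordon-aligned potentials of \emph{positive} $\nu$-measure, not full measure. Concretely, the paper defines $B'_n=\bigcup_{v\in\mc L_{\varrho}\cap\mc A^n,\,v_1\in\mc A'}[v.vv]$, restricts to $X'=\mathbb{X}_{\varrho}\setminus[a]$ (which has \emph{finite} $\nu$-measure precisely because the cubes start with a primitive letter) so that reverse Fatou applies, and computes $\limsup_n\nu(B'_n)\geqslant \lambda^{-k}L'_{u_1}\nu([b])>0$ via the explicit frequency formula of Proposition~\ref{Prop:ergodic-measures} and \eqref{Eq:left-ev}. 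This only gives $\nu(B')>0$. The passage to full measure is then not about the Gordon set at all: the event $A=\{w: \H_w \text{ has no eigenvalues}\}$ is shift-invariant and contains $B'$, so ergodicity of $\nu$ forces $\nu(A^C)=0$. Your proposal is missing both halves of this mechanism (the finite-measure restriction that legitimizes the $\limsup$ bound, and the ergodicity upgrade applied to the eigenvalue-free event rather than to the Gordon event).

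A secondary, smaller issue: your decidability argument (``any long cube is the $\varrho$-image of a shorter one, down to a base case of length controlled by $\max_x|\varrho(x)|$'') is not justified as stated, since not every cube in $\mc L_{\varrho}$ is a substitution image of a cube and the return-word alphabet is infinite. The paper instead proves reduction lemmas in the return-word language (Lemmas~\ref{LEM:reduce-by-r}, \ref{LEM:n-larger-3-criterion} and Proposition~\ref{PROP:index-short-words}), showing any power $y^n\in\mc L'_{\bar{\varrho}}$ can be reduced to one with $|y|\leqslant r-1$, which yields the finite test of Proposition~\ref{PROP:power-test} and Corollary~\ref{Cor:short-3-structure}. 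You would need an argument of comparable substance here.
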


The first part of this theorem closely resembles a result for minimal substitutions \cite[Thm.~3]{DamanikLenz}.
The algorithm mentioned in the second part of Theorem~\ref{THM:as-no-eigenvalues} will be made explicit in Proposition~\ref{PROP:power-test}. It is most easily formulated in terms of return words and its derivation relies on the Toeplitz structure of the sequences in $\mathbb{X}_{\bar{\varrho}}$.

The rest of the paper is structured as follows. Having set up the necessary notation in Section~\ref{SEC:prel}, we introduce the class of almost primitive substitutions in Section~\ref{SEC:almost-primitive} and review some of the results in \cite{Yuasa07}. In Section~\ref{SEC:structural}, we restrict to the two-alphabet case and show that the corresponding family of substitutions is still general enough to yield a variety of different complexity functions. We then introduce the return word substitution $\bar{\varrho}$ and study the structure of the corresponding Toeplitz sequences. A connection is made between $\mathbb{X}_{\varrho}$ and $\mathbb{X}_{\bar{\varrho}}$ via an induced system. A large part of the section is devoted to the study of (strong) palindromes before we turn to the repetition properties that are crucial in Theorem~\ref{THM:as-no-eigenvalues}.
 In Section~\ref{SEC:schroedinger} we perform a spectral analysis, based on the previous results. Here, we also present a mechanism that leads to the occurence of an eigenvalue under specific conditions and wrap up the proofs for all of the theorems stated in the Introduction.
Finally, we state some open questions and suggestions for further research in Section~\ref{SEC:outlook}.

\section{Preliminaries}
\label{SEC:prel}

\subsection{Words and subshifts}
Let $\mathcal{A}$ be a compact set, called the \emph{alphabet}, the elements of which will be called \emph{letters}. We equip $\mathcal{A}^n$, for $n \in \mathbb{N}$ as well as $\mathcal{A}^{\mathbb{Z}}$ with the product topology.
A \emph{word} $u$ in $\mathcal{A}$ is a finite concatenation of letters, that is, $u \in \mathcal{A}^+ := \cup_{n \geqslant 1} \mathcal{A}^n$. 
The \emph{length} of a word $u = u_1 \cdots u_n \in \mathcal{A}^n$ is given by $|u| = n$. Every word of the form $u_{[k,m]} := u_k \cdots u_m$ with $1 \leqslant k \leqslant m \leqslant n$ is called a \emph{subword} of $u$ and we write $v \triangleleft u$ if $v$ is a subword of $u$. The concatenation of two words $u$ and $v$ will be written as $uv$. Given $m \in \N$, we denote by $u^m = u \cdots u$, the \emph{$m$-th power of $u$}, given by the concatenation of $m$ copies of the word $u$. Similarly, $u^{\Z}$ is given by the periodic bi-infinite word $ \ldots uu.uu \ldots \in \mc A^{\Z}$, where the dot (marker) indicates the separation of the symbols with index $-1$ and $0$.
The number of occurrences of a word $v \in \mathcal{A}^m$ in a word $u \in \mathcal{A}^n$ is given by $|u|_v = \# \{ 0 \leqslant j \leqslant n-m \mid  u_{[j+1,j+m]} = v  \}$. Note that the occurrences of $v$ in $u$ may overlap. Similarly, for a word $v \in \mathcal{A}^m$ and a sequence $w \in \mathcal{A}^{\mathbb{Z}}$ we define $|w|_v = \# \{ j \in \mathbb{Z} \mid v = w_{[j+1,j+m]} := w_{j+1} \cdots w_{j+m} \}$ and we say that $v$ is a subword of $w$ if $|w|_v > 0$. Given $v \in \mc A^+$, the associated \emph{cylinder set} is given by $[v] = \{ w \in \mc A^{\Z} \mid w_0 \cdots w_{|v|-1} = v \}$. Similarly, for $u,v \in \mc A^+$, we set $[u.v] = \{w \in \mc A^{\Z} \mid w_{-|u|} \cdots w_{|v|-1} = uv \}$.
The space $\mathcal{A}^{\mathbb{Z}}$ is equipped with a continuous (left) shift action $S$, defined via $S(w)_n = w_{n+1}$ for all $w \in \mathcal{A}^{\mathbb{Z}}$ and $n \in \mathbb{Z}$. A \emph{subshift} $\mathbb{X}$ is a closed shift-invariant subspace of $\mathcal{A}^{\mathbb{Z}}$. The orbit of a point $w \in \mathbb{X}$ is given by $\orb(w) = \{ S^n(w) \mid n \in \mathbb{Z} \}$. 
\subsection{Substitutions} 
Let $\mc A$ be finite and equipped with the discrete topology. A \emph{substitution} on $\mathcal{A}$ is a map $\varrho \colon \mathcal{A} \to \mathcal{A}^{+}$, which is extended to $\mathcal{A}^+$ via concatenation. 
The \emph{substitution matrix} $M$ is indexed by the alphabet and defined via $M_{ab} = |\varrho(b)|_a$ for all $a,b \in \mathcal{A}$. We call a substitution $\varrho$ \emph{primitive} if there exists a $k \in \mathbb{N}$ such that for all $a,b \in \mathcal{A}$ we have $a \triangleleft \varrho^k(b)$. This is equivalent to $M$ being a primitive matrix.

\begin{example}
\label{EX:guiding}
As a guiding example for our later discussion, consider the non-primitive substitution $\varrho$ on $\mathcal{A} = \{a,b\}$, with $\varrho \colon a \mapsto a, b \mapsto bba$. It can be iterated, for example $\varrho^2(b) = \varrho(bba) = \varrho(b) \varrho(b) \varrho(a) = bbabbaa$. The corresponding substitution matrix is given as
\[M =
\begin{pmatrix}
1 & 1\\
0 & 2
\end{pmatrix},
\]
where we have chosen a lexicographic order for the indices. 
\end{example}

We extend a substitution $\varrho$ on $\mathcal{A}$ to $\mathcal{A}^{\mathbb{Z}}$  by the prescription $\varrho( \cdots w_{-2} w_{-1} . w_0 w_1 \cdots) = \cdots \varrho(w_{-2}) \varrho(w_{-1}) . \varrho(w_0) \varrho(w_1) \cdots$. 
A word $v$ is called \emph{admitted} by the substitution $\varrho$ if $v \triangleleft \varrho^n(a)$, for some $a \in \mathcal{A}$ and $n \in \mathbb{N}$. 
We define the subshift $\mathbb{X}_{\varrho}$ associated to a substitution $\varrho$ as the set of points $w \in \mc A^{\Z}$ such that every subword of $w$ is admitted.
A word is called \emph{legal} if it appears as a subword of some $w \in \mathbb{X}_{\varrho}$. The set of legal words $\mc L_{\varrho}$ is called the \emph{language} of $\varrho$.
\\It is worth mentioning that in general not all admitted words are legal, compare \cite{RustMaloney}. Although the distinction will not be important for the bulk of this work, we will encounter an instant where it actually matters in Example~\ref{EX:trivial}.
If $\varrho$ is primitive, the set of legal and admitted words coincide and the subshift $\mathbb{X}_{\varrho}$ is minimal, that is, $\orb(w)$ is dense in $\mathbb{X}_{\varrho}$ for all $w \in \mathbb{X}_{\varrho}$. In contrast, a subshift $\mathbb{X}$ is called \emph{almost minimal} if there is a single $v \in \mathbb{X}$ such that $S(v) = v$ and every other point has a dense orbit. In this case, $\{v\}$ is the only non-trivial closed and shift-invariant subspace of $\mathbb{X}$. 
Let us conclude by recalling a classic structural property of substitution subshifts which states that up to a small shift, every sequence in $\mathbb{X}_{\varrho}$ has a preimage under $\varrho$. 
\begin{lemma}
\label{LEM:pre-images}
For every $w \in \mathbb{X}_{\varrho}$, there exists $v \in \mathbb{X}_{\varrho}$ with $w = S^{\ell} \varrho(v)$ for some $0 \leqslant \ell < |\varrho(v_0)|$.
\end{lemma}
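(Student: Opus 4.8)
The plan is to prove the statement by a finite-approximation argument combined with a compactness (diagonal) extraction, which is the standard route to a \emph{desubstitution} for a general substitution subshift. Fix $w \in \mathbb{X}_{\varrho}$ and set $L = \max_{a \in \mc A} |\varrho(a)|$. For each $N \geqslant 1$ the central block $w_{[-N,N]}$ is legal, hence admitted, so there are $c \in \mc A$ and $m = m(N) \geqslant 1$ with $w_{[-N,N]} \triangleleft \varrho^m(c)$. Writing $u = \varrho^{m-1}(c) = u_1 \cdots u_s \in \mc A^{+}$, we have the block decomposition $\varrho^m(c) = \varrho(u_1) \varrho(u_2) \cdots \varrho(u_s)$, which endows $\varrho^m(c)$ with a partition into the consecutive blocks $\varrho(u_i)$.

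First I would record the local cutting data. The occurrence of $w_{[-N,N]}$ inside $\varrho^m(c)$ identifies the position of $w_0$ with a position inside a unique block $\varrho(u_{i_0})$, say at offset $\ell_N$ with $0 \leqslant \ell_N < |\varrho(u_{i_0})|$. Re-indexing the letters of $u$ so that the block covering $w_0$ carries index $0$ produces a finite word $v^{(N)}$ over $\mc A$ (a subword of $u$) and an offset $\ell_N$ such that $w_{[-N,N]}$ coincides with the corresponding central segment of $S^{\ell_N} \varrho(v^{(N)})$, with $\ell_N < |\varrho(v^{(N)}_0)|$.

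Next I would extract a limit. Since $\ell_N \in \{0,\ldots,L-1\}$ takes only finitely many values, pass to an infinite subsequence along which $\ell_N \equiv \ell$ is constant. Along this subsequence, for each fixed $j \in \Z$ the coordinate $v^{(N)}_j$ (defined once $N$ is large enough that index $j$ is covered) lives in the finite alphabet $\mc A$, so a diagonal argument yields a further subsequence on which $v^{(N)}_j$ is eventually constant, equal to some $v_j$; this defines $v \in \mc A^{\Z}$. It remains to verify the two required properties. For the identity $w = S^{\ell} \varrho(v)$: fix any finite index range, note that for all large $N$ in the subsequence $v$ agrees with $v^{(N)}$ on the indices relevant to that range, so $S^{\ell}\varrho(v)$ agrees with $S^{\ell_N}\varrho(v^{(N)})$ there, which in turn agrees with $w$; letting the range exhaust $\Z$ gives the identity. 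For $v \in \mathbb{X}_{\varrho}$: every finite subword of $v$ is, for large $N$, a subword of $v^{(N)}$, hence of $u = \varrho^{m-1}(c)$, hence admitted, so all subwords of $v$ are admitted and $v \in \mathbb{X}_{\varrho}$. Finally $0 \leqslant \ell < |\varrho(v_0)|$ because $v_0 = v^{(N)}_0$ for large $N$ and $\ell = \ell_N < |\varrho(v^{(N)}_0)|$.

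The only genuinely delicate point is the bookkeeping in the local cutting data: one must make the identification of the block covering $w_0$ and its offset precise and consistent with the shift $S^{\ell_N}$, and ensure that the two extractions (first a constant offset, then coordinatewise-stable letters) are compatible, so that the block boundaries of the approximants converge to a single coherent block boundary in the limit. Once the indexing conventions are fixed, the convergence of the identity $w = S^{\ell} \varrho(v)$ and the admissibility of the subwords of $v$ are routine.
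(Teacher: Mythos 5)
Your argument is correct and is essentially the paper's own proof: both desubstitute the central blocks $w_{[-N,N]}$ via $w_{[-N,N]} \triangleleft \varrho^{m}(c) = \varrho(\varrho^{m-1}(c))$, record the cutting offset, and extract a limit preimage $v$ by compactness of $\mc A^{\Z}$, checking admissibility of the subwords of $v$ at the end. Your version is slightly more explicit about first fixing the offset $\ell$ by pigeonhole before the diagonal extraction, a point the paper's proof leaves implicit, but the approach is the same.
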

\begin{proof}
For every $m \in \mathbb{N}$ there exist $n_m \in \mathbb{N}$ and $a_m \in \mathcal{A}$ such that $w_{[-m,m]} \triangleleft \varrho^{n_m}(a_m)= \varrho (\varrho^{n_m-1} (a_m))$. In particular, $w_{[-m,m]} \triangleleft \varrho (u^{(m)})$ for some admitted word $u^{(m)}$. Extending $u^{(m)}$ in an arbitrary way to both sides and shifting the resulting word appropriately, we obtain a bi-infinite word $v^{(m)} \in \mathcal{A}^{\mathbb{Z}}$ such that $w_{[-m,m]}$ coincides with $\varrho(v^{(m)})_{[-m,m]}$ up to a shift of at most $|\varrho(v^{(m)}_0)| -1$. Since $\mathcal{A}^{\mathbb{Z}}$ is compact, the sequence $(v^{(m)})_{m \in \mathbb{N}}$ has an accumulation point $v$. It is straightforward to verify that all subwords of $v$ are admitted by construction and thus $v \in \mathbb{X}_{\varrho}$.
\end{proof}

A point $w \in \mathbb{X}_{\varrho}$ is called \emph{$\varrho$-recognizable} if there exists a \emph{unique} $v \in \mathbb{X}_{\varrho}$ and a \emph{unique} $\ell \in \N_0$ such that $w = S^{\ell} \varrho(v)$ and $0 \leqslant \ell < |\varrho(v_0)|$. We call a substitution $\varrho$ \emph{recognizable} if every point $w \in \mathbb{X}_{\varrho}$ is $\varrho$-recognizable. A recent breakthrough in the classification of recognizability showed that a point $w \in \mathbb{X}_{\varrho}$ is $\varrho$-recognizable whenever it is not periodic \cite[Thm.~5.3]{BSTY}.

\subsection{Generalized substitutions of constant length} The concept of a generalized substitution as we use it here was recently introduced in \cite{durand2}. For this subsection, assume that $\mc A$ is the one-point compactification of a discrete countable set. Like in the finite alphabet case, we extend every map $\varrho \colon \mc A \to \mc A^{+}$ to $\mc A^{+}$ and $A^{\Z}$ via concatenation. Assume that there exists a natural number $\ell \in \N$ such that $|\varrho(a)| = \ell$ for all $a \in \mc A$. We call such a map a \emph{generalized substitution of constant length $\ell$} if $\varrho \colon \mc A \to \mc A^{\ell}$ is continuous. If the context is clear, we call $\varrho$ simply a `generalized substitution' or `substitution'.
A word $w \in \mc A^+$ is called \emph{admitted} if it is a subword of some $\varrho^n(a)$ or if it appears as a limit of such subwords in $\mc A^{|w|}$. The definitions of $\mathbb{X}_{\varrho}$ and $\mc L_{\varrho}$ are the same as for standard substitutions. That is, $w \in \mathbb{X}_{\varrho}$ if every subword of $w$ is admitted and $\mc L_{\varrho}$ consists of the subwords of sequences $w \in \mathbb{X}_{\varrho}$. The modification of the term `admitted' is necessary to ensure that $\mathbb{X}_{\varrho}$ is closed. 
\\We present a classical example for a generalized substitution of constant length, which has been studied under the name \emph{Infini-bonacci substitution}, compare \cite{Cass97,durand2,ferenczi}. As we will see in Section~\ref{SEC:structural}, it is closely related to our guiding Example~\ref{EX:guiding}.
\begin{example}
\label{EX:Infini-bo}
Let $\mc A = \N_0 \cup \{\infty\}$ be the one-point compactification of the natural numbers. Consider the generalized substitution of constant length $\bar{\varrho} \colon k \mapsto 0 \, (k+1)$ for $k \in \N_0$ and $\infty \mapsto 0 \infty$. The word $\infty 0$ is not contained in any of the words $\bar{\varrho}^n(k)$ for $n \in \N$ and $k \in \mc A$. Nevertheless, it is admitted since it appears as a limit of the words $n 0$ as $n \to \infty$ and $n 0$ is contained in $\bar{\varrho}^{n+1}(0)$ for all $n \in \N$. This was pointed out in \cite{durand2}. 
\end{example}

\section{Almost primitive substitutions}
\label{SEC:almost-primitive}

In this section, we are only concerned with \emph{finite} alphabets $\mathcal{A}$.
Our object of interest is a class of non-primitive substitutions that were introduced as \emph{almost primitive} substitutions in \cite{Yuasa07}. Most of the material presented in this section is a summary of results from \cite{Yuasa07} to which we refer for details. Almost primitive substitutions give rise to subshifts with exactly one minimal component (given by a singleton set) and the property that all elements in the complement of this minimal component have a dense orbit. Such systems have formerly been studied under the name of \emph{almost minimal} dynamical systems. 

For the sake of being self-contained, we recall the basic definitions and essential properties of almost primitive substitutions. 

\begin{definition}
\label{DEF:almost-prim-subst}
A substitution $\varrho$ on a finite alphabet $\mc A$ is called \emph{almost primitive} if it satisfies the following properties.
\begin{enumerate}
\item There exists a unique $a \in \mc A$ and $p \in \N$ such that $\varrho(a) = a^p$.
\item There is a $k \in \N$ such that for all $b \in \mc A$ and $c \in \mc A \setminus \{a\}$, we have $b \triangleleft \varrho^k(c)$.
\item For all $n \in \N$, the word $a^n$ is contained in the language $\mc L_{\varrho}$.
\end{enumerate}
The elements in $\mc A' := \mc A \setminus \{a\}$ will be called \emph{primitive letters}. The \emph{quasi-first} letter in a word $u$ is the first primitive letter in $u$.
\end{definition}

It is immediate from the third property that $a^\Z \in \mathbb{X}_{\varrho}$ for every almost primitive substitution $\varrho$. Let us discuss a rather trivial special case.
\begin{example}
\label{EX:trivial}
Consider the substitution $\varrho \colon a \mapsto a^p, b \mapsto a^r b a^s$ on $\mc A = \{a,b\}$, with $r,s \in \N_0$ and $p \in \N$. By convention, we set $a^0$ to be the empty word. It is easily seen that $\varrho$ is almost primitive if and only if $r+s \geqslant 1$. If $r=0, s\geqslant 1$, we obtain $\varrho^k(b) = b a^{m_k}$ for all $k \in \N$ and some $m_k \in \N$, satisfying $m_k \to \infty$ as $k \to \infty$. Since $ab$ is not an admitted word, this yields $\mathbb{X}_{\varrho} = \{ a^\Z \}$. Note that in this case the letter $b$ is admitted but not legal. Similarly for $r \geqslant 1$ and $s = 0$. In the case that $r,s \geqslant 1$, we have $\varrho^k(a) = a^{n_k} b a^{m_k}$ for all $k \in \N$ and some sequences $(m_k)_{k \in \N}$ and $(n_k)_{k \in \N}$ going to infinity. Thus, $\omega = a^{\infty} . b a^{\infty} \in \mathbb{X}_{\varrho}$ and $\mathbb{X}_{\varrho} = \orb(\omega) \cup \{a^\Z\} = \overline{\orb(\omega)}$.
\end{example}

When speaking of an almost primitive substitution in the following, we implicitly exclude the trivial cases discussed in the example above. These are the only cases that we exclude. Let us collect some of the topological properties that were shown in \cite[Lem.~2.6, Thm.~3.8]{Yuasa07}.

\begin{prop}
\label{Prop:ap-properties}
Let $\varrho$ be a (non-trivial) almost primitive substitution. Then, $a^\Z$ is the only periodic point in $\mathbb{X}_{\varrho}$ and all other points have a dense orbit, that is, $\mathbb{X}_{\varrho}$ is almost minimal. Also, $a^\Z$ is the only point which is eventually periodic both to the right and to the left. There is a point $\omega \in \mathbb{X}_{\varrho}$ with the following properties.
\begin{itemize}
\item $\omega = a^{\infty}.\omega^+$ or $\omega = \omega^-.a^{\infty}$ for some $\omega^+, \omega^- \in \mc A^\N$.
\item $\omega$ has a dense orbit in $\mathbb{X}_{\varrho}$.
\item $ \omega = S^{\ell} \varrho^k(\omega)$ for some $k \in \N$ and $\ell \in \N_0$.
\item $\omega$ is recurrent, that is, it contains every finite subword infinitely many times.
\item The word $a^n$ appears in $\omega$ with bounded gaps for all $n \in \N$.
\item $\omega \neq a^{\mathbb{Z}}$.
\end{itemize}
Due to the third property, we call $\omega$ a \emph{quasi-fixed point} of $\varrho^k$.  
\end{prop}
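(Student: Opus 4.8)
The plan is to isolate one workhorse lemma — that every $w \in \mathbb{X}_{\varrho} \setminus \{a^{\Z}\}$ has a dense orbit — and deduce almost minimality, the uniqueness of the periodic point, and the two-sided eventual-periodicity statement from it almost formally; the construction of $\omega$ and a uniform control of the lengths of $a$-runs are then the only places where genuine work is required.

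\medskip

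First I would construct $\omega$. Property (2) of Definition~\ref{DEF:almost-prim-subst} says that the incidence matrix restricted to the primitive letters $\mc A' = \mc A \setminus \{a\}$ is primitive. Hence there is a primitive letter $c$ and a power $k$ so that $c$ is the first (or, in the symmetric situation, the last) primitive letter of $\varrho^{k}(c)$ and is there immediately preceded (resp.\ followed) by an $a$; the latter can be arranged because $ac$ (resp.\ $ca$) is legal and, by primitivity, is a subword of $\varrho^{k}(c)$ once $k$ is large. Iterating $\varrho^{k}$ from this anchored occurrence of $c$ produces a one-sided fixed point $\omega^{+}$ (resp.\ $\omega^{-}$) starting (resp.\ ending) with $c$, while on the opposite half-line the rule $\varrho(a^{\infty})=a^{\infty}$ floods everything with $a$'s. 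This yields a bi-infinite point $\omega = a^{\infty}.\omega^{+}$ (or $\omega^{-}.a^{\infty}$) with $\varrho^{k}(\omega)=S^{\ell}\omega$, so $\omega$ is a quasi-fixed point of $\varrho^{k}$, and $\omega \neq a^{\Z}$ since it contains $c$. Verifying $\omega \in \mathbb{X}_{\varrho}$ reduces to checking that each junction word $a^{n}(\omega^{+})_{[0,m]}$ is legal; this is exactly where the hypothesis $p>1$ or $k_r>0$ enters, since it forces the trailing (resp.\ leading) $a$-runs of $\varrho^{n}(c)$ to grow without bound, so that arbitrarily long $a^{n}$ occur adjacent to the fixed-point prefix.

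\medskip

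Next I would prove the dense-orbit lemma: if $w \neq a^{\Z}$ then $w$ contains every legal word. As $w \neq a^{\Z}$ it has a primitive letter; applying Lemma~\ref{LEM:pre-images} repeatedly and using that $\varrho(a)=a^{p}$ contains no primitive letter, each desubstitution step retains a primitive letter in the preimage, so $w$ contains a full block $\varrho^{n}(c_n)$ with $c_n \in \mc A'$ for every $n$. By Property (2) every letter $c'$ satisfies $c' \triangleleft \varrho^{k}(c_n)$, whence $\varrho^{m}(c') \triangleleft \varrho^{m+k}(c_n) \triangleleft w$ for all $c'$ and $m$; since any legal word lies inside some $\varrho^{N}(d)$ (legal $\Rightarrow$ admitted), this shows $w$ contains all legal words, i.e.\ $\orb(w)$ is dense. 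Applied to $w=\omega$ this gives both the dense orbit and, because the counts $|\varrho^{m}(c)|_{c'}$ grow, the recurrence of $\omega$. Almost minimality is now immediate: $a^{\Z}$ is $S$-fixed and is the only constant sequence in $\mathbb{X}_{\varrho}$ (arbitrarily long runs of a primitive letter are not legal), while every other point has a dense orbit; thus $\{a^{\Z}\}$ is the unique minimal set. Since $\mathbb{X}_{\varrho}$ is infinite (it contains the non-periodic $\omega$), a periodic point $\neq a^{\Z}$ would have a \emph{finite} dense orbit, forcing $\mathbb{X}_{\varrho}$ to be finite — a contradiction; hence $a^{\Z}$ is the only periodic point.

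\medskip

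The two-sided eventual-periodicity statement follows the same philosophy. Suppose $w$ is eventually periodic both to the right and to the left. Then either $w$ has only finitely many primitive letters, or one of its periodic tails contains a primitive letter and therefore bounds every $a$-run of $w$. In the first case $w$ cannot contain the primitive-letter-rich words $\varrho^{n}(c)$ for large $n$; in the second it cannot contain $a^{n}$ for large $n$, which is legal by Property (3). Either way the dense-orbit lemma is violated unless $w=a^{\Z}$. The last remaining assertion — that each $a^{n}$ occurs in $\omega$ with \emph{bounded gaps}, which is a uniform statement not subsumed by density — I would obtain by a self-similar descent: a window of $\omega$ containing no $a^{n}$ desubstitutes, via Lemma~\ref{LEM:pre-images}, to a window of comparable relative length containing no $a^{n'}$ with $n'<n$ whenever $p>1$ or $k_r>0$, and the recursion bottoms out at a single block $\varrho^{m_0}(c)$, which must contain $a^{n}$ for $m_0$ large. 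This forces a uniform bound on the length of $a^{n}$-free windows.

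\medskip

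I expect the main obstacle to be precisely this uniform run-length estimate. Turning the self-similar bound into a bounded-gap statement requires making the desubstitution rigorous — controlling the boundary overlaps of the blocks and invoking recognizability of non-periodic points \cite{BSTY} — and verifying that the avoidance parameter $n'$ strictly decreases, which is exactly what pins down the role of the standing hypothesis $p>1$ or $k_r>0$ (in the excluded regime $p=1$, $k_r=0$ the descent stalls, consistent with that case being minimal rather than almost minimal). Once this control over $a$-runs is in place, the construction of $\omega$ and the verification $\omega \in \mathbb{X}_{\varrho}$ become routine, and the remaining assertions are the formal consequences of the dense-orbit lemma described above; all of this is in the spirit of \cite[Lem.~2.6, Thm.~3.8]{Yuasa07}.
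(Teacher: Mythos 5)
The paper does not prove this proposition at all --- it is quoted from Yuasa \cite[Lem.~2.6, Thm.~3.8]{Yuasa07} --- so there is no in-paper argument to match your attempt against. Taken on its own merits, your central lemma (every $w \neq a^{\Z}$ contains $\varrho^{n}(c_n)$ for primitive letters $c_n$, hence contains every legal word, hence has dense orbit) is correct, and it does cleanly deliver almost minimality, the uniqueness of the periodic point, the recurrence of $\omega$, and the construction of $\omega$ as a quasi-fixed point.

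There is, however, a genuine gap in your proof that $a^{\Z}$ is the only point eventually periodic to both sides. Your dichotomy --- ``finitely many primitive letters, or some periodic tail contains a primitive letter and therefore bounds every $a$-run of $w$'' --- omits the main case: $w = a^{\infty}\, m\, u^{\infty}$ (or its mirror image) where the period $u$ contains a primitive letter but the opposite tail is $a^{\infty}$. Such a $w$ has infinitely many primitive letters \emph{and} contains $a^{n}$ for every $n$, so neither horn yields a contradiction; the sentence ``it cannot contain $a^{n}$ for large $n$'' is simply false here. This is precisely the case that carries the content of the claim, since it is what shows that $\omega^{+}$ itself is not eventually periodic. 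To close it you need that $a$-runs \emph{flanked by primitive letters} are unbounded in $\mc L_{\varrho}$ --- which follows from the bounded-gaps property you establish afterwards (each $a^{n}$ recurs inside $\omega^{+}$, hence occurs bracketed by primitive letters) --- whereas in $w = a^{\infty} m u^{\infty}$ every flanked $a$-run has length at most $\max(|m|,|u|)$. Separately, the bounded-gaps estimate is left as an acknowledged sketch; the descent can be made to work (the avoidance parameter drops by a factor of $p$ when $p>1$ and by the additive constant $k_r>0$ when $p=1$, and $a$-free windows are bounded since otherwise $(\mc A')^{\Z}$ would meet $\mathbb{X}_{\varrho}$ and contradict your dense-orbit lemma), but as written it is a plan rather than a proof, and the preceding item depends on it.
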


\begin{remark}
Since $\omega$ contains a primitive letter and $\varrho^n(\omega) \in \mathbb{X}_{\varrho}$ for all $n \in \mathbb{N}$, it is straightforward to see that all admitted words are legal. This is not true for some of the trivial cases discussed in Example~\ref{EX:trivial}.
\end{remark}

We define an \emph{eventually periodic} point to be a bi-infinite word $w$ that can be written in the form $w = uv$, with $u,v \in \mathcal{A}^{\mathbb{N}}$ and $u$ or $v$ periodic, where the marker is at an arbitrary position. Let $\mathbb{X}_{\varrho}^{\ep}$ be the set of eventually periodic points in $\mathbb{X}_{\varrho}$. By the above proposition, there are clearly eventually periodic points in the subshift. It is natural to ask how big the corresponding set is.
For a moment suppose $\omega = a^\infty.\omega^+$. In \cite{Yuasa07}, $\omega$ was constructed as a sequence of the form $\omega = a^\infty.b u \varrho^k(u) \varrho^{2k}(u) \ldots$ for some word $u \in \mc L$ with the property $\varrho^k(b) = a^m b u$, for some $m \in \N_0$, $k \in \N$ and $b \in \mathcal{A}'$. In fact, we can reverse the argument and show that (up to a shift) every eventually periodic point is of this form, which leads to the following statement.

\begin{lemma}
\label{LEM:event-periodic}
The set $\mathbb{X}_{\varrho}^{\ep}$ of eventually periodic points consists of finitely many shift-orbits. In particular, it is a countable set.
\end{lemma}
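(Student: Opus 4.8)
The plan is to put every eventually periodic point into a normal form and then show, by recognizable desubstitution, that each such point is rigidly determined by finite data.

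\emph{Normalization.} First I would record that a one-sided periodic tail can only be a power of $a$: if $w$ has, say, a left tail of period $\bar u$ (so $\bar u^{\ts n}\triangleleft w$ for all $n$), then every subword of $\bar u^{\ts\Z}$ is legal, whence $\bar u^{\ts\Z}\in\mathbb{X}_{\varrho}$; being shift-periodic, Proposition~\ref{Prop:ap-properties} forces $\bar u^{\ts\Z}=a^{\Z}$, i.e. $\bar u\in\{a\}^+$. Combined with the fact that $a^{\Z}$ is the only point eventually periodic on both sides, every $w\in\mathbb{X}_{\varrho}^{\ep}\setminus\{a^{\Z}\}$ is, up to a shift, of exactly one of the forms $a^{\infty}.\omega^+$ (left type) or $\omega^-.a^{\infty}$ (right type), aperiodic on the non-periodic side and with a primitive letter adjacent to the marker. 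Normalizing so that the quasi-first (resp.\ quasi-last) primitive letter sits at position $0$ (resp.\ $-1$) singles out a unique representative in each shift-orbit, so it suffices to bound the number of normalized left-type and right-type points; I treat the left type, the right type being entirely symmetric.

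\emph{Desubstitution preserves the type.} Let $E^-$ be the set of normalized left-type points. Each $w\in E^-$ is aperiodic, hence $\varrho$-recognizable by \cite[Thm.~5.3]{BSTY}, giving a unique preimage $v$ with $w=S^{\ell}\varrho(v)$. I would check that $v\in E^-$ as well: it is aperiodic (else $w$ would be periodic), and its left tail is again $a^{\infty}$. The key point for the latter is that $\varrho(c)$ contains a primitive letter for every $c\in\mc A'$ — otherwise $\varrho^k(c)$ would be a power of $a$, contradicting property~(2) of Definition~\ref{DEF:almost-prim-subst} — so primitive letters of $v$ cannot accumulate at $-\infty$ without doing so in $w=S^{\ell}\varrho(v)$. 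Writing $b=\mathrm{qf}(w)$ for the quasi-first letter and noting that the quasi-first letter of $\varrho(c)$ defines a map $g\colon\mc A'\to\mc A'$ with $\mathrm{qf}(w)=g(\mathrm{qf}(v))$, iteration shows $\mathrm{qf}(w)\in\bigcap_n g^n(\mc A')=:\mc A'_\infty$, the set on which $g$ restricts to a permutation.

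\emph{Closing up to a quasi-fixed point (the crux).} Fix $w\in E^-$ with $b_0:=\mathrm{qf}(w)\in\mc A'_\infty$ and pass to $\sigma:=\varrho^d$, where $d$ is a multiple of the $g$-cycle length of $b_0$ chosen large enough (via property~(2)) that $\sigma(b_0)=a^m b_0\ts\tilde u$ with $\tilde u$ containing a primitive letter; almost primitivity guarantees that a sufficiently high power of $\varrho$ applied to $b_0$ carries a further primitive letter after the leading $b_0$, and on two letters this is exactly where non-triviality (Example~\ref{EX:trivial}) is used. Iterating the $\sigma$-desubstitution yields $w=w^{(0)},w^{(1)},w^{(2)},\ldots\in E^-$, all with quasi-first letter $b_0$ and satisfying $w^{(j)}=S^{m}\sigma(w^{(j+1)})$ with the same $m$, equivalently $p^{(j)}=\tilde u\,\sigma(p^{(j+1)})$ for the suffixes $p^{(j)}$ after $b_0$. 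Unrolling gives $p^{(0)}=\tilde u\,\sigma(\tilde u)\cdots\sigma^{n-1}(\tilde u)\,\sigma^{n}(p^{(n)})$ for every $n$; since $\tilde u$ carries a primitive letter, $\lvert\sigma^{j}(\tilde u)\rvert$ grows and the stable prefixes converge to $P:=\tilde u\,\sigma(\tilde u)\,\sigma^2(\tilde u)\cdots$. Hence $p^{(0)}=P$ is forced, so $w=a^{\infty}.b_0\ts P$ depends only on $b_0$ and $\sigma$. This bounds $E^-$ by $\lvert\mc A'_\infty\rvert\leqslant\lvert\mc A'\rvert$, and symmetrically the right-type points are finite; with $a^{\Z}$ this gives finitely many shift-orbits, a fortiori countably many.

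The main obstacle is the third step: proving that the desubstitution orbit genuinely closes into a single rigid quasi-fixed point rather than merely permuting quasi-first letters. The two delicate ingredients are the choice of the power $d$ forcing $\tilde u\neq\varepsilon$ (equivalently, excluding the degenerate both-sided-periodic word $a^\infty.b_0 a^\infty\notin\mathbb{X}_\varrho$, where non-triviality enters) and the convergence of the unrolled product $P$ — this is precisely what converts ``finitely many quasi-first letters'' into ``finitely many points''.
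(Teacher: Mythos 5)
Your proof is correct and follows essentially the same route as the paper's: repeatedly desubstitute the point $a^{\infty}.v$, track the quasi-first primitive letter until the chain closes into a cycle (the paper does this by a direct pigeonhole argument, you via the eventual image of the map $g$ on $\mc A'$), and unroll $p^{(0)}=\tilde u\,\sigma(\tilde u)\,\sigma^2(\tilde u)\cdots$ to see that the point is the unique quasi-fixed point determined by $b_0$. The only cosmetic differences are your appeal to recognizability (the paper gets by with mere existence of preimages from Lemma~\ref{LEM:pre-images}) and a slightly more explicit treatment of why $\tilde u$ is nonempty, which is where non-triviality enters in both arguments.
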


\begin{proof}
Suppose $w \in \mathbb{X}_{\varrho}^{\ep}$ and $w \neq a^{\Z}$. Let us assume that $w$ is eventually periodic to the left, the case that it is eventually periodic to the right is treated analogously. Since $a^{\Z}$ is the only periodic point in the hull, $w$ is of the form $w = a^{\infty} v$ with $v \in \mc A^{\N_0}$, $v_0 \neq a$ and the marker at an arbitrary position. Up to a finite shift, we can assume that $w = a^{\infty} . v$. Let $b = w_0 \in \mc A'$. There exists a $w^{(1)} \in \mathbb{X}_{\varrho}$ such that $w = S^{k_1} \varrho(w^{(1)})$ for some $0 \leqslant k_1 < |\varrho(w^{(1)}_0)|$ by Lemma~\ref{LEM:pre-images}. Since for $c \in \mc A'$ the word $\varrho(c)$ contains primitive letters and $\varrho(a) = a^p$, we necessarily have $w^{(1)} = a^\infty . v^{(1)}$, for some $v^{(1)} \in \mc A^{\N_0}$ with $v^{(1)}_0 \in \mc A'$. Note that $b := w_0$ is the quasi-first letter in $\varrho(w^{(1)}_0)$. Inductively we find a sequence of bi-infinite words $(w^{(n)})_{n \in \N}$, and shifts $(k_n)_{n \in \N}$ such that $w^{(n)} := a^\infty. v^{(n)}$ for some $v^{(n)} \in \mc A^{\N_0}$, with $v^{(n)}_0 \in \mc A'$, $w^{(n)} = S^{k_{n+1}} \varrho(w^{(n+1)})$ and $w^{(n)}_0$ is the quasi-first letter in $\varrho(w^{(n+1)}_0)$. 
It is not hard to see that $w_0^{(n)} = b$ for some $1 \leqslant n \leqslant \card (\mc A')$. Indeed, assume this is not the case. Then, by the pigeon-hole principle, there are $1 \leqslant r < s \leqslant \card(\mc A') $ with $w^{(r)}_0 = w^{(s)}_0 = c$ for some $c \in \mc A'\setminus \{b \}$. But then, by construction $w_0^{(r-1)} = w_0^{(s-1)}$ and recursively, $w_0 = w^{(s-r)}_0 \neq b$, in contradiction to the assumption.
It follows that $w = S^k \varrho^n(w^{(n)})$ for some $0 \leqslant k < | \varrho^n(w^{(n)}_0)|$ and $b$ is the quasi-first letter in $\varrho^n(w^{(n)}_0)$. In other words, $\varrho^n(b) = a^t b u$ for some $t \in \N_0$ and a word $u$. If $u$ is the empty word, we are in the situation of Example~\ref{EX:trivial}, so we discard that case. 
In summary, we have concluded from $w = a^\infty.b \cdots$ that $w = a^\infty.b u \cdots$ and $w = S^k \varrho^n(w^{(n)})$ for a word $w^{(n)} = a^\infty.b \cdots$. We can repeat the procedure to conclude that $w^{(n)} = a^\infty. b u \ldots$ implying that $w = a^\infty. b u \varrho^n(u) \cdots$. By induction, $w = a^\infty. b u \varrho^n(u) \varrho^{2n}(u) \varrho^{3n}(u) \cdots$. In particular, there is at most one way to extend $w = a^\infty. b \cdots$ to a sequence in $\mathbb{X}_{\varrho}$ for each $b \in \mc A'$. Thereby, there are at most $\card(\mc A')$ different left eventually periodic orbits. The same holds for right eventually periodic orbits by similar reasoning.
\end{proof}

Restricting the substitution matrix $M$ to the block of primitive letters, we obtain a primitive matrix $M'$, defined via $M'_{bc} = |\varrho(c)|_b$ for all $b,c \in \mc A'$. The Perron--Frobenius (PF) eigenvalue of $M'$ will be denoted by $\lambda$. Note that $\delta_{a^\Z}$ is always an ergodic probability measure on $\mathbb{X}_{\varrho}$. It turns out that $(\mathbb{X}_{\varrho},S)$ is uniquely ergodic if and only if $p \geqslant \lambda$. In that case,
\[
\lim_{n \to \infty} \frac{|\varrho^n(c)|_b}{|\varrho^n(c)|} = 0
\]
for all $b \in \mc A'$ and $c \in \mc A$, that is, every primitive letter has vanishing densities. We get a more refined quantity, if we modify the length of a word to account only for the primitive letters. More precisely, we define
\[
|u|' = \sum_{b \in \mc A'} |u|_b.
\]
for all $u \in \mc A^\Z$. This gives rise to an ergodic measure on $\mathbb{X}_{\varrho}$ which is infinite precisely if every primitive letter has vanishing densities. The precise statement is as follows, compare \cite[Prop.~5.4, Thm.~5.6]{Yuasa07}.

\begin{prop}
\label{Prop:ergodic-measures}
There is a unique (up to scaling) non-atomic invariant measure $\nu$ on $\mathbb{X}_{\varrho}$ which is finite on every clopen set disjoint from $a^\Z$. The measure $\nu$ is ergodic and (with an appropriate scaling) satisfies
\[
\nu([u]) = \lim_{n \to \infty} \frac{|\varrho^n(b)|_u}{|\varrho^n(b)|'} > 0,
\]
for all $b \in \mc A'$ and $u \in \mc L_{\varrho}$. In particular, the expression on the right hand side is independent of $b \in \mc A'$. If $p<\lambda$, the measure $\nu$ can be normalized to a probability measure $\mu$, given by
\[
\mu([u]) = \lim_{n \to \infty} \frac{|\varrho^n(b)|_u}{|\varrho^n(b)|} > 0,
\]
for all $b \in \mc A'$ and $u \in \mc L_{\varrho}$. If $p \geqslant \lambda$, the measure $\nu$ is infinite and we obtain that $\nu([a^m]) = \infty$ for all $m \in \N$.
\end{prop}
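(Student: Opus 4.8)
The engine of the proof is Perron--Frobenius (PF) theory applied to the primitive block $M'$. Ordering $\mc A = \{a\} \cup \mc A'$ with $a$ first, property (1) of Definition~\ref{DEF:almost-prim-subst} forces the substitution matrix to be block upper triangular, $M = \left(\begin{smallmatrix} p & r^{T} \\ 0 & M' \end{smallmatrix}\right)$, where $r_b = |\varrho(b)|_a$ for $b \in \mc A'$ and $M'$ is primitive by property (2). Hence $M^{n} = \left(\begin{smallmatrix} p^{n} & s_{n}^{T} \\ 0 & (M')^{n} \end{smallmatrix}\right)$ with $s_{n+1} = p\ts s_{n} + ((M')^{n})^{T} r$, so that $|\varrho^{n}(b)|_{a} = (s_{n})_{b}$ and $|\varrho^{n}(b)|' = \mathbf{1}^{T}(M')^{n} e_{b}$. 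Writing $v_{L}, v_{R}$ for the positive left/right PF eigenvectors of $M'$ normalized by $v_{L}^{T} v_{R} = 1$, the PF theorem gives $(M')^{n}/\lambda^{n} \to v_{R} v_{L}^{T}$, whence $|\varrho^{n}(b)|' \sim (\mathbf{1}^{T} v_{R})\,(v_{L})_{b}\,\lambda^{n}$. I would record these asymptotics at the outset, since they control every normalization below.

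The technical heart is to show that for each $u \in \mc L_{\varrho}$ the limit $\nu([u]) = \lim_{n} |\varrho^{n}(b)|_{u}/|\varrho^{n}(b)|'$ exists, is positive, and is independent of $b \in \mc A'$. First I would pass to an induced substitution on a higher-block alphabet (legal words of length $|u|$): an occurrence of $u$ in $\varrho^{n+1}(b) = \varrho(\varrho^{n}(b))$ either sits inside a single image $\varrho(c)$ of a letter $c$ of $\varrho^{n}(b)$, or straddles finitely many consecutive images, and the straddling occurrences are anchored at letter-boundaries and hence counted by letter frequencies at level $n$. This yields a linear recursion for the occurrence vector whose leading block is again governed by $M'$; dividing by $|\varrho^{n}(b)|' \sim c\lambda^{n}$ and invoking PF convergence gives existence, positivity (by almost primitivity every legal $u$ occurs in $\varrho^{N}(c)$ for all primitive $c$ and all large $N$, so its count is bounded below by a constant times $\lambda^{n}$) and $b$-independence (the rank-one limit $v_{R}v_{L}^{T}$ washes out the starting letter). \emph{The main obstacle is the case of words with long $a$-runs}, e.g.\ $u = a^{m}$, whose occurrences are \emph{not} anchored at primitive letters: here I would count $a^{m}$ via the runs of $a$'s delimited by primitive letters and use the recursion $s_{n+1} = p\ts s_{n} + ((M')^{n})^{T} r$ to control the joint growth of the number of runs (of order $\lambda^{n}$) and their total length $|\varrho^{n}(b)|_{a}$. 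This is exactly the point where the cases $p < \lambda$ and $p \geqslant \lambda$ separate.

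Granting existence of the limits, I would verify the consistency relations $\nu([u]) = \sum_{c \in \mc A}\nu([uc]) = \sum_{c \in \mc A}\nu([cu])$ directly from the definition (they hold for occurrence counts up to boundary terms that vanish after division by $|\varrho^{n}(b)|'$), so that $\nu$ extends to a shift-invariant Borel measure on $\mathbb{X}_{\varrho}$, finite on every cylinder disjoint from $a^{\Z}$. Non-atomicity for points $w \neq a^{\Z}$ follows from Proposition~\ref{Prop:ap-properties}: such $w$ have infinite orbit, so an atom would force infinite mass on a clopen set disjoint from $a^{\Z}$; that $a^{\Z}$ itself carries no mass comes from the normalization by the primitive length. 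For ergodicity and uniqueness I would invoke recognizability of non-periodic points \cite[Thm.~5.3]{BSTY}, which identifies the cone of invariant measures with the inverse-limit cone $\bigcap_{n} M^{n}(\R^{\mc A}_{\geqslant 0})$; for our block-triangular $M$ this cone is spanned by exactly two extreme rays --- the $a$-eigendirection, giving the atomic measure $\delta_{a^{\Z}}$, and the $M'$-PF direction, giving $\nu$. Thus $\nu$ is the unique (up to scaling) non-atomic invariant measure and is ergodic as an extreme point; equivalently, the first-return system to a cylinder disjoint from $a^{\Z}$ is uniquely ergodic, and this transfers to $\nu$.

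Finally, the finite/infinite dichotomy is read off the recursion for $s_{n}$. Since the forcing term $((M')^{n})^{T} r$ grows like $\lambda^{n}$ while the homogeneous part grows like $p^{n}$, one gets $|\varrho^{n}(b)|_{a} \sim C\lambda^{n}$ when $p < \lambda$, $\,|\varrho^{n}(b)|_{a} \sim C p^{n}$ when $p > \lambda$, and the resonant growth $|\varrho^{n}(b)|_{a} \sim C\ts n\ts \lambda^{n}$ when $p = \lambda$. Comparing with $|\varrho^{n}(b)|' \sim c\lambda^{n}$ and $|\varrho^{n}(b)| = |\varrho^{n}(b)|_{a} + |\varrho^{n}(b)|'$ shows that for $p < \lambda$ the total length is comparable to the primitive length, so $\nu$ normalizes to the probability measure $\mu$ with $\mu([u]) = \lim_{n}|\varrho^{n}(b)|_{u}/|\varrho^{n}(b)|$; whereas for $p \geqslant \lambda$ the $a$-count dominates and, estimating $|\varrho^{n}(b)|_{a^{m}} \geqslant |\varrho^{n}(b)|_{a} - (m-1)\,R_{n}$ with $R_{n}$ the number of maximal $a$-runs (of order $|\varrho^{n}(b)|'$), one finds $|\varrho^{n}(b)|_{a^{m}}/|\varrho^{n}(b)|' \to \infty$, i.e.\ $\nu([a^{m}]) = \infty$, so $\nu$ is infinite.
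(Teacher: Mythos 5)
The paper itself does not prove this proposition: it is imported from Yuasa \cite[Prop.~5.4, Thm.~5.6]{Yuasa07}, so there is no in-paper argument to compare against. Judged on its own terms, your plan assembles the right ingredients --- Perron--Frobenius theory for the primitive block $M'$, a higher-block recursion for word counts, an induced system for uniqueness, and the comparison of the growth of $|\varrho^n(b)|_a$ against $|\varrho^n(b)|'$ for the finite/infinite dichotomy --- and you correctly isolate the genuinely delicate point, namely words consisting of long $a$-runs, whose counts are not carried by the primitive block. (That point can be closed by writing $|\varrho^n(b)|_{a^m} = |\varrho^n(b)|_a - \sum_{j=1}^{m-1}\#\{\text{maximal runs of length} \geqslant j\}$ and observing that the run-length statistics reduce to counts of words of the form $b'a^jb''$ with $b',b'' \in \mc A'$, which are anchored at primitive letters and hence governed by $M'$.)

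The one step that does not work as stated is the uniqueness/ergodicity argument via the cone $\bigcap_n M^n(\R^{\mc A}_{\geqslant 0})$. You claim this cone has exactly two extreme rays, one giving $\delta_{a^{\Z}}$ and one giving $\nu$. When $p < \lambda$ this is fine: the second ray is the positive PF eigenvector of $M$, with $a$-coordinate $r^Tv_R/(\lambda - p) > 0$. But when $p \geqslant \lambda$ the measure $\nu$ satisfies $\nu([a^m]) = \infty$, so its vector of cylinder masses is not a point of $\R^{\mc A}_{\geqslant 0}$ at all; correspondingly the normalized columns of $M^n$ indexed by $b \in \mc A'$ converge projectively to $e_a$, the limit cone degenerates toward the single ray $\R_{\geqslant 0}e_a$, and it cannot ``see'' $\nu$. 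So in exactly the regime where the proposition is most interesting, this identification breaks down. The correct route is the one you mention only as an aside: establish unique ergodicity of the induced system on $X' = \bigcup_{b \in \mc A'}[b]$, whose cylinder counts are governed by the primitive matrix $M'$ alone (so the classical PF argument applies), and then extend the unique invariant probability measure on $X'$ to a $\sigma$-finite invariant measure on $\mathbb{X}_{\varrho}$ via the return-time tower. This also yields $\nu(\{a^{\Z}\}) = 0$ for free, since $a^{\Z}$ never enters $X'$; your appeal to ``normalization by the primitive length'' does not by itself exclude an atom at $a^{\Z}$ when every cylinder $[a^m]$ containing it has infinite mass.
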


By a simple application of PF theory to the matrix $M'$, we find that the limit
\begin{equation}
\label{Eq:left-ev}
L_b' := \lim_{n \to \infty} \frac{|\varrho^n(b)|'}{\lambda^n} > 0
\end{equation}
exists and is positive for all $b \in \mc A'$.
In the case that $p< \lambda$, we can specify the normalization via $\mu = (1 - f_a) \nu $, where $f_a$ denotes the frequency of the letter $a$ in the limit of large inflation words starting from a primitive letter, that is,
\[
f_a = \lim_{n \to \infty} \frac{|\varrho^n(b)|_a}{|\varrho^n(b)|},
\]
for every $b \in \mc A'$. The fact that the measure $\nu$ is not normalizable for $p \geqslant \lambda$ can be explained heuristically by the fact that $f_a = 1$ in this case.

\section{Structural properties for two-letter alphabet substitutions}
\label{SEC:structural}

In this section we restrict to the case of a two-letter alphabet $\mathcal{A} = \{a,b\}$ which allows us to work out explicitly some structural and combinatorial properties that will be useful in the context of Schr\"{o}dinger operators. As we will see, even under the restriction to two letters, the family of almost primitive substitutions is rich enough to provide examples for each of the possible complexity classes that can occur for substitutions. This will be discussed in the first subsection. In the second subsection we use the idea of \emph{return words} to uncover a generalized Toeplitz structure over an infinite alphabet. Properties like palindromicity and the repetition of subwords in those generalized Toeplitz sequences will be in close relation to the original substitution. 

\begin{remark}
Every substitution on the alphabet $\mc A = \{a,b\}$ is either almost primitive, gives rise to a minimal subshift, or coincides with one of the (trivial) substitutions $\varrho \colon a \mapsto c^m, b \mapsto d^n$, with $c,d \in \mc A$.
As we discussed in the introduction, the case of primitive substitutions is fairly well-studied and was extended to non-primitive substitutions on $\{a,b\}$ that give rise to a minimal subshift in \cite{dOliveira}.
In this sense, the discussion of almost primitive substitution completes the treatment of Schr\"{o}dinger operators associated with substitution systems on a binary alphabet.
\end{remark}

\subsection{Complexity classes}

A combinatorial approach to quantify the complexity of a language $\mathcal{L}_{\varrho}$ (or its corresponding subshift $\mathbb{X}_{\varrho}$) is given by the \emph{complexity function} $\c \colon \mathbb{N} \to \mathbb{N}$, with $\c(n) = \# \{v \in \mathcal{L}_{\varrho} \mid |v| = n\}$. Similarly, given a one- or two-sided sequence $w$, its complexity function is defined via $\c_w(n) = \# \{v \triangleleft w \mid |v| = n\}$. Suppose $\varrho$ is an arbitrary substitution on a finite alphabet $\mathcal{A}$ and that $w$ is a one-sided fixed point under $\varrho$, satisfying $w = \varrho^{\infty}(a) = \lim_{n \to \infty} \varrho^n(a)$ for some $a \in \mathcal{A}$. It was shown by Pansiot in \cite{Pan84} that $\c_w$ falls into one of the classes $\varTheta(1), \varTheta(n), \varTheta(n \log \log n), \varTheta(n \log n)$ or $\varTheta(n^2)$. Here we have used the notation that $f \in \varTheta(g)$ if there exist $c_1, c_2 > 0$ such that $c_1 g(n) \leqslant f(n) \leqslant c_2 g(n)$ for all $n \in \mathbb{N}$. The class of $\c_w$ is determined by the growth behavior of the letters under $\varrho$, compare also \cite[Def.~2.1, Thm.~2.2]{durand} for a neat presentation of this result in English.
It turns out that the complexity class of a $2$-letter almost primitive substitution is entirely determined by the values of $p$ and $\lambda$. Note that the quasi-fixed point $\omega$ is dense in $\mathbb{X}_{\varrho}$ such that $\c_{\omega}(n) = \c(n)$ for all $n \in \mathbb{N}$. Nevertheless, the result of Pansiot is not directly applicable since $\omega$ is in general not invariant under $\varrho$. However, we can always conjugate our substitution to another substitution which allows for a fixed point and provides the same set of legal words. 

Every (non-trivial) almost primitive substitution on $\mathcal{A} = \{a,b\}$ is of the form $\varrho \colon a \mapsto a^p, b \mapsto a^k b u$, with $p \in \mathbb{N}$, $k \in \mathbb{N}_0$ and $b \triangleleft u \in \mathcal{A}^+$. An associated substitution which has a fixed point is given by $\widetilde{\varrho} \colon a \mapsto a^p, b \mapsto b u a^k$. It is easily checked that $\widetilde{\varrho}$ is also almost primitive.

\begin{lemma}
\label{LEM:b-start}
The substitutions $\varrho$ and $\widetilde{\varrho}$ have the same set of admitted words. In particular, $\mathbb{X}_{\varrho} = \mathbb{X}_{\widetilde{\varrho}}$. 
\end{lemma}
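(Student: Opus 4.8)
We want to show that $\varrho \colon a \mapsto a^p,\ b \mapsto a^k b u$ and $\widetilde{\varrho} \colon a \mapsto a^p,\ b \mapsto b u a^k$ admit exactly the same words, from which $\mathbb{X}_{\varrho} = \mathbb{X}_{\widetilde{\varrho}}$ follows immediately, since a subshift is determined by its set of admitted words. The plan is to compare the iterates $\varrho^n(b)$ and $\widetilde{\varrho}^n(b)$ directly and show they differ only by a cyclic shift of a block of $a$'s, which cannot affect the set of subwords once $n$ is large. The key observation is that $\widetilde{\varrho}(x) = a^{-k}\varrho(x)a^k$ in a formal sense for $x = b$, and that $\varrho$ and $\widetilde{\varrho}$ agree on the letter $a$.

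\emph{Key steps.}
First I would establish the precise relationship between one application of each map. Writing $\varrho(b) = a^k (bu)$ and $\widetilde\varrho(b) = (bu) a^k$, the two images of $b$ are cyclic rotations of each other by the prefix $a^k$; the images of $a$ coincide. Second, I would prove by induction on $n$ that $\varrho^n(b)$ and $\widetilde\varrho^n(b)$ are related by moving a pure block of $a$'s from one end to the other. The inductive mechanism is that $\varrho^n(b) = a^{t_n} \cdot \widetilde\varrho^n(b)_{\text{core}}$ where the trailing/leading $a$-blocks are absorbed compatibly because $\varrho(a^j) = a^{pj} = \widetilde\varrho(a^j)$ for every $j$. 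Concretely, if $\varrho^n(b) = a^{s} y$ with $y$ a word beginning with $b$, then applying $\varrho$ versus $\widetilde\varrho$ to the two forms only shuffles an $a$-block of length controlled by $p^n k$ between the front and the back, while the interior is identical. Third, I would conclude that every subword of $\varrho^n(b)$ that avoids being truncated by the shifted $a$-block is already a subword of $\widetilde\varrho^m(b)$ for suitable $m$, and conversely; since $a^N \in \mc L_{\varrho}$ for all $N$ (Definition~\ref{DEF:almost-prim-subst}(3)), the $a$-blocks themselves contribute no new words to either language.

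\emph{The main obstacle.}
The delicate point is a boundary effect: a subword of $\varrho^n(b)$ may straddle the junction between the relocated $a$-block and the rest of the word, so it is not literally a subword of $\widetilde\varrho^n(b)$ for the \emph{same} $n$. The remedy is to pass to a higher iterate. Because $\widetilde\varrho$ is almost primitive, $\widetilde\varrho^m(b)$ eventually contains every legal word of bounded length (it contains arbitrarily long $a$-runs adjacent to every short word), so any boundary-straddling subword of $\varrho^n(b)$ reappears cleanly in the interior of $\widetilde\varrho^m(b)$ for large $m$. I expect this comparison argument — matching interiors and handling the two junctions — to be the crux, and it is where one must be careful to invoke that both substitutions act identically on $a$ and that $a^N$ is legal for all $N$. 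A clean way to organize it is to fix a target subword $v$ with $|v| = \ell$, take $n$ large enough that $v \triangleleft \varrho^n(b)$, and then exhibit an occurrence of $v$ inside $\widetilde\varrho^{n+c}(b)$ for a constant $c$ depending only on $\ell$ and the shift length; the symmetric argument gives the reverse inclusion, and equality of admitted words yields $\mathbb{X}_\varrho = \mathbb{X}_{\widetilde\varrho}$.
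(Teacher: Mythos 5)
Your setup is the paper's: establish the conjugation $\widetilde\varrho^{m}(w) = a^{-k_m}\varrho^{m}(w)a^{k_m}$ and reduce the claim to locating $\varrho^n(b)$-subwords inside $\widetilde\varrho^{n+c}(b)$. But the crux step — the one you yourself flag as delicate — is not actually established, and the justification you offer for it fails. You assert that, by almost primitivity, $\widetilde\varrho^m(b)$ ``contains arbitrarily long $a$-runs adjacent to every short word'', so that a subword of $\varrho^n(b)$ straddling the relocated $a$-block reappears cleanly in $\widetilde\varrho^m(b)$ for large $m$. This principle is false in general: for $\varrho\colon a \mapsto a^2,\ b \mapsto ba^5ba^3b$ the return-word Toeplitz structure forces every occurrence of $ba^3b$ to be immediately preceded by $ba^5$, so $a^6ba^3b$ is never admitted; short legal words need not carry long $a$-runs on a given side. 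The fallback phrasing ``$\widetilde\varrho^m(b)$ eventually contains every legal word of bounded length'' is circular, because whether the straddling word is legal for $\widetilde\varrho$ is precisely what is in question.

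The gap closes in one line once you use the hypothesis you never invoke, namely $b \triangleleft u$ (part of the normal form of a non-trivial almost primitive substitution on two letters). Writing $\widetilde\varrho^{n+1}(b) = a^{-k_{n+1}}\varrho^{n}(a^k b u)a^{k_{n+1}} = v'\,\varrho^{n}(u)\,a^{k_{n+1}}$, the block $\varrho^{n}(u)$ appears intact, and since $b \triangleleft u$ it contains $\varrho^{n}(b)$ — leading $a$-block included — as a subword. Thus the \emph{whole} word $\varrho^{n}(b)$ sits inside $\widetilde\varrho^{n+1}(b)$, and no case distinction between interior and junction-straddling subwords is needed at all; the symmetric computation gives the reverse inclusion. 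This is exactly how the paper argues, and it is both the missing ingredient and a simplification of your proposed decomposition.
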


\begin{proof}
Note that $\widetilde{\varrho}(b) = a^{-k} \varrho(b) a^k$ and inductively we find that $\widetilde{\varrho}^m(b) = a^{-k_m} \varrho^m(b) a^{k_m}$, where $k_m$ denotes the number of $a$'s in the prefix of $\varrho^m(b)$ until the first occurrence of $b$ (recall that $\widetilde{\varrho}^m(b)$ starts with the letter $b$ for all $m \in \mathbb{N}$). The relation $\widetilde{\varrho}^m(a) = a^{-k_m} \varrho^m(a) a^{k_m}$ is trivial and by concatenation we obtain
\[
\widetilde{\varrho}^m(w) = a^{-k_m} \varrho^m(w) a^{k_m},
\]
for all $w \in \mc A^+$.
We only show $\mathcal{L}_{\varrho} \subset \mathcal{L}_{\widetilde{\varrho}}$, the opposite inclusion follows similarly. If $v \in \mathcal{L}_{\varrho}$, there exists an $r \in \mathbb{N}$ such that $v \triangleleft \varrho^r(b)$. We will show $\varrho^r(b) \triangleleft \widetilde{\varrho}^{r+1}(b)$ implying $v \in \mathcal{L}_{\widetilde{\varrho}}$ and thus $\mathcal{L}_{\varrho} \subset \mathcal{L}_{\widetilde{\varrho}}$. Note that
\[
\widetilde{\varrho}^{r+1}(b) = a^{-k_{r+1}} \varrho^r(a^k b u) a^{k_{r+1}} = v' \varrho^r(u) a^{k_{r+1}},
\]
for some $v' \in \mathcal{A}^+$ starting with $b$. This follows because $\varrho^r(a^kb)$ already contains the letter $b$, so it needs to have $a^{k_{r+1}}$ as a prefix. Since $b \triangleleft u$, we also have $\varrho^r(b) \triangleleft \varrho^r(u) \triangleleft \widetilde{\varrho}^{r+1}(b)$ and the claim follows.
\end{proof}

Clearly, both $\varrho$ and $\widetilde{\varrho}$ have the same substitution matrix $M$, which is of the form
\[
M = \begin{pmatrix}
p & q
\\ 0 & r
\end{pmatrix}, \quad
M^n = \begin{pmatrix}
p^n & q \sum_{k = 0}^{n-1} p^k r^{n-1-k}
\\ 0 & r^n
\end{pmatrix},
\] 
where $r = \lambda$ in the notation of Section~\ref{SEC:almost-primitive}. From this, it is straightforward to read off the growth behavior of $|\varrho^n(a)|$ and $|\varrho^n(b)|$. Note that $r=1$ gives rise to one of the trivial cases discussed in Example~\ref{EX:trivial}, in which case $c(n)$ is either bounded or linear in $n$. For $r>1$, we have the following case distinction.
\begin{lemma}
\label{LEM:inflation-word-growth}
Let $r>1$. The length $|\varrho^n(b)|$ has one of the following growth behaviors in $n$.
\begin{itemize}
\item $r< p : \, |\varrho^n(b)| \sim p^n$,
\item $r=p : \, |\varrho^n(b)| \sim n p^n$,
\item $p < r: \, |\varrho^n(b)| \sim r^n$,
\end{itemize}
where we use the notation $f(n) \sim g(n)$ if $\lim_{n \to \infty} f(n) / g(n) = C>0$.
\end{lemma}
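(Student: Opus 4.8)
The plan is to read off $|\varrho^n(b)|$ directly from the explicit form of $M^n$ recorded just above the statement, and then to treat the resulting geometric sum separately in the three regimes $r<p$, $r=p$, $r>p$. The first step is to observe that $|\varrho^n(b)|$ equals the sum of the two entries in the column of $M^n$ indexed by $b$, since that column records the number of $a$'s and the number of $b$'s in $\varrho^n(b)$. Writing $q = M_{ab} = |\varrho(b)|_a$, this gives
\[
|\varrho^n(b)| = (M^n)_{ab} + (M^n)_{bb} = q \sum_{j=0}^{n-1} p^j r^{n-1-j} + r^n .
\]

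Next I would evaluate the geometric sum. For $r \neq p$ the identity $p^n - r^n = (p-r)\sum_{j=0}^{n-1} p^j r^{n-1-j}$ yields $\sum_{j=0}^{n-1} p^j r^{n-1-j} = (p^n - r^n)/(p-r)$, whereas for $r=p$ every summand equals $p^{n-1}$, so the sum collapses to $n\ts p^{n-1}$. Substituting back produces the closed forms
\[
|\varrho^n(b)| =
\begin{cases}
\dfrac{q}{p-r}\bigl(p^n - r^n\bigr) + r^n, & r \neq p, \\
q\ts n\ts p^{n-1} + p^n, & r = p .
\end{cases}
\]
Dividing by the candidate growth rate and passing to the limit then gives $|\varrho^n(b)|/p^n \to q/(p-r)$ when $r<p$, $|\varrho^n(b)|/(n\ts p^n) \to q/p$ when $r=p$, and $|\varrho^n(b)|/r^n \to q/(r-p) + 1$ when $r>p$, which is exactly the asserted behavior once each limit is shown to be a strictly positive constant.

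The calculation is elementary; the only point requiring a moment of care, and hence the closest thing to an obstacle, is verifying that the limiting constant is positive, i.e. that $q \geqslant 1$. This follows from almost primitivity: property $(2)$ of Definition~\ref{DEF:almost-prim-subst} forces $a \triangleleft \varrho^k(b)$ for some $k$, which would fail if $\varrho(b)$ contained no $a$, since then $\varrho(b) = b^{\ts r}$ and inductively $\varrho^m(b) = b^{\ts r^m}$ for all $m$, never producing an $a$. Hence $q = |\varrho(b)|_a \geqslant 1$, and each of the three limits above is genuinely a positive constant. (For the family in \eqref{EQ:main-family} this is immediate from the standing assumption $\sum_i k_i > 0$, which is precisely $q \geqslant 1$.)
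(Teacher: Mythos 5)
Your proof is correct and is precisely the ``straightforward'' computation the paper alludes to when it omits the argument: the explicit form of $M^n$ displayed just before the lemma is clearly intended to be summed columnwise exactly as you do, and your evaluation of the geometric sum in the three regimes, together with the observation that $q=|\varrho(b)|_a\geqslant 1$ by almost primitivity, gives the stated asymptotics with strictly positive limiting constants. Nothing further is needed.
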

The proof is straightforward and we omit it. Comparing with Pansiot's result \cite{Pan84}, we obtain the following case distinction.

\begin{coro}
\label{COR:complexities}
Suppose $\varrho$ is a (non-trivial) almost primitive substitution on the alphabet $\mathcal{A} = \{a,b \}$ with $|\varrho(a)|_a = p$ and $|\varrho(b)|_b = r$. The corresponding complexity function can be classified as follows.
\[
c(n) \in \begin{cases}
\varTheta(n) & \mbox{if } 1<r < p,
\\ \varTheta(n \log \log n) & \mbox{if } 1<r = p,
\\ \varTheta(n \log n) & \mbox{if } 1<p < r,
\\ \varTheta(n^2) & \mbox{if } 1=p < r.
\end{cases}
\]
\end{coro}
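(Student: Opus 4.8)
The plan is to reduce Corollary~\ref{COR:complexities} to Pansiot's theorem \cite{Pan84}, in the form presented in \cite{durand}, which classifies $c_w$ for a genuine fixed point $w = \varrho^\infty(a)$. Since $\varrho$ itself need not admit a fixed point, I would first pass to the conjugate substitution $\widetilde{\varrho}\colon a \mapsto a^p,\ b \mapsto bua^k$ supplied by Lemma~\ref{LEM:b-start}. Here $\widetilde{\varrho}(b)$ starts with $b$ and has length $\geqslant 2$ (because $b \triangleleft u$), so $w := \widetilde{\varrho}^\infty(b)$ is a well-defined one-sided fixed point. By Lemma~\ref{LEM:b-start} the two substitutions have the same language, hence the same complexity function $c$; and since every legal word occurs in some $\varrho^m(b) \triangleleft \widetilde{\varrho}^{m+1}(b)$, which is a prefix of $w$, the fixed point $w$ realizes all legal words, giving $c_w(n) = c(n)$ for every $n$. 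Finally $w$ is not eventually periodic: the dense quasi-fixed point $\omega$ is non-periodic by Proposition~\ref{Prop:ap-properties}, so $c = c_\omega$ is unbounded, and therefore so is $c_w$. Thus Pansiot's classification applies to $w$ and transfers to $c$.

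The second step is to recall the criteria governing Pansiot's four classes, phrased through the letter growth orders $|\widetilde{\varrho}^n(x)| \asymp n^{d_x}\theta_x^{\,n}$ (with $\theta_x \geqslant 1$ real and $d_x \in \N_0$): a letter is bounded when $\theta_x = 1,\ d_x = 0$, and growing otherwise. Pansiot's theorem then yields $\varTheta(n)$ when all letters grow and share a common rate $\theta$ with $d_x = 0$; $\varTheta(n\log\log n)$ when all letters grow and share a common rate but some $d_x \geqslant 1$; $\varTheta(n\log n)$ when all letters grow but two of them have distinct rates $\theta_x$; and $\varTheta(n^2)$ when a bounded letter is present (the aperiodicity of $w$ being already established).

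It remains to read off the growth data from $|\widetilde{\varrho}^n(a)| = p^n$ and Lemma~\ref{LEM:inflation-word-growth}, and to match cases. For $1 < r < p$ both letters grow like $p^n$ (so $\theta_a = \theta_b = p$, $d_a = d_b = 0$), giving $\varTheta(n)$. For $1 < r = p$ one has $|\widetilde{\varrho}^n(a)| = p^n$ while $|\widetilde{\varrho}^n(b)| \sim n p^n$, i.e.\ a common base $p$ with $d_b = 1$, giving $\varTheta(n\log\log n)$. For $1 < p < r$ both letters grow (as $p > 1$) but at the distinct rates $p$ and $r$, giving $\varTheta(n\log n)$. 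For $1 = p < r$ the letter $a$ is bounded while $b \sim r^n$ grows, giving $\varTheta(n^2)$.

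The individual computations are routine; the substance lies in correctly aligning the three thresholds of Pansiot's theorem with Lemma~\ref{LEM:inflation-word-growth}. I expect the only genuinely delicate points to be (a) verifying that the reduction from $\varrho$ to the fixed point of $\widetilde{\varrho}$ preserves the complexity function exactly, including the aperiodicity of $w$, and (b) the boundary regime $r = p$, where the two letters share an exponential rate and are separated only by a polynomial factor --- this is precisely what lands the complexity in the rare $\varTheta(n\log\log n)$ class rather than $\varTheta(n)$ or $\varTheta(n\log n)$.
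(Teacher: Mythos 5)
Your proposal is correct and follows essentially the same route as the paper: pass to the conjugate substitution $\widetilde{\varrho}$ of Lemma~\ref{LEM:b-start}, whose fixed point realizes the full language, and then read off Pansiot's four classes from the letter growth rates of Lemma~\ref{LEM:inflation-word-growth}. The only small caveat is that in Pansiot's theorem the presence of a bounded letter gives $\varTheta(n^2)$ only when $w$ has infinitely many factors over the bounded letters; here this is automatic since $a^n \in \mathcal{L}_{\varrho}$ for all $n$ by Definition~\ref{DEF:almost-prim-subst}(3).
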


\begin{remark}
Some of the above discussion readily generalizes to the case that $\# \mathcal{A} > 2$. Indeed, if we replace $r$ by $\lambda$, we find similar classes of growth behaviour for primitive letters under $\varrho$ as detailed in Lemma~\ref{LEM:inflation-word-growth}, compare \cite[Lem.~5.1--5.3]{Yuasa07}. It can be verified that the result by Pansiot in \cite{Pan84} can be extended from fixed points to the language arising from a letter, compare \cite{Gohlke-TK} for details. Hence, the complexity classification in Corollary~\ref{COR:complexities} remains true even for larger alphabets if $r$ is replaced by $\lambda$.
\end{remark}

\begin{remark}
We can get more refined information on the complexity function by using the methods presented in \cite{Cass97}. For instance, the complexity function of our guiding example $\varrho \colon a \mapsto a, b \mapsto bba$ satisfies $c(n) = n^2/2 + \mc O (n \log n)$, as can be derived from \cite[Sec.~6]{Cass97}.
\end{remark}

\subsection{Return word substitution}
As we have seen in Lemma~\ref{LEM:b-start}, we may restrict our attention to substitutions $\varrho$ such that $\varrho(b)$ starts with the letter $b$. In this case, we have a substitution of the following form
\[
\varrho = \begin{cases}
 a \mapsto  a^p ,
 \\b \mapsto ba^{k_1} ba^{k_2} \cdots b a^{k_r} ,
 \end{cases}
\]
with $k_i \in \mathbb{N}_0$ for $1 \leqslant i \leqslant r$. We avoid the trivial cases by assuming $r \geqslant 2$. In order for the substitution to be almost primitive we further need that $p>1$ or $k_r > 0$, ensuring that the third condition in Definition~\ref{DEF:almost-prim-subst} is fulfilled. The complementary case with $p=1$ and $k_r = 0$ gives rise to minimal substitutions and some spectral properties of its corresponding Schr\"{o}dinger operators have been discussed in \cite{DamanikLenz,dOliveira}. A \emph{return word} for $b$ is a word of the form $bu$ such that $u$ does not contain $b$ and $bub$ is legal, see \cite{DurandReturn} for a more general definition of return words. In our case, every return word for $b$ is of the form $ba^k$, with $k \in \mathbb{N}_0$. For the sake of brevity, we use the correspondence $ba^k \leftrightarrow k$. Let us state this more formally.
\begin{definition}
\label{DEF:tau}
Let $\varrho$ be an almost primitive substitution on $\mc A = \{a,b\}$ and assume that $\varrho(b)_1 = b$. We set $\mathcal{N} = \{k \in \mathbb{N}_0 \mid ba^k b \in \mathcal{L}_{\varrho} \}$ and define the \emph{return word expansion} $\tau \colon \mc N \to \mc A^+$ by
\[
\tau(k)= ba^k,
\]
which we extend to $\mathcal{N}^+$ and $\mathcal{N}^{\mathbb{Z}}$ by concatenation. More explicitly, if $x \in \mathcal{N}^{\mathbb{Z}}$, we set $\tau(x) = \cdots \tau(x_{-2}) \tau(x_{-1}). \tau(x_0) \tau(x_1) \cdots$. 
\end{definition}

Suppose that $x = x^-.x^+ \in \mc N^{\Z}$ and that $x_0$ is a large number. Then $\tau(x) = \tau(x^-).\tau(x^+)$ is close to the point $\tau(x^-).ba^{\infty}$ in $\mc A^{\Z}$. It is therefore natural to extend $\tau$ (formally) to $\overline{\mc N} = \mc N \cup \{ \infty \}$ via $\tau(\infty) = ba^{\infty}$ and to assume that $k\in \N$ is close to $\infty$ if $k$ is large. The latter is achieved by defining the topology on $\overline{\mc N}$ to be the one-point compactification. 
Some care needs to be taken as we extend $\tau$ to $x \in \overline{\mc N}^{\Z}$ because words beyond an infinite numbers of $a$'s are no longer `visible' in $\tau(x)$ and we need to make a distinction between right-sided and left-sided infinite sequences. Let $x =x^-.x^+ \in \overline{\mc N}^{\Z} $ and set $\tau(x) = \tau(x^-).\tau(x^+)$ formally. 
If $x^+ \in \mc N^{\N_0}$, then $\tau(x^+)$ is already well-defined. Otherwise, suppose $m \in \N_0$ is the smallest natural number such that $x_m = \infty$. Then, we define $\tau(x^+) = \tau(x_0) \cdots \tau(x_m) = \tau(x_0) \cdots \tau(x_{m-1}) ba^{\infty}$.
Similarly, assume that $\ell \in -\N$ is the largest negative number such that $x_{\ell} = \infty$. Then, $\tau(x^-) = a^{\infty} \tau(x_{\ell+1}) \cdots \tau(x_{-1})$, to be read as a left-sided sequence. With this definition, it is straightforward to verify that $\tau$ is a continuous map from $\overline{\mc N}^{\Z}$ to $\mc A^{\Z}$. 
\\We emphasize that an extension of $\tau$ to a map on $\overline{\mc N}^{+}$ implicitly requires to fix whether $\tau(u)$ should be read as a left-sided or a ride-sided sequence. Hence, it is either adapted to the action of $\tau$ on the non-negative entries of points in $\overline{\mc N}^{\Z}$ or to their negative entries, but never to both. In either case, $\tau$ is not injective on $\overline{\mc N}^{+}$.

\begin{remark}
Before we proceed, a word of caution is in order. Depending on the context, we will treat $\mathcal{N}$ as a formal alphabet or as a subset of $\mathbb{N}_0$. As natural numbers, the elements of $\mathcal{N}$ are naturally equipped with an order relation `$<$' and algebraic operations such as summation and multiplication. The distinction between formal concatenation and multiplication will be clear from the context.
\end{remark}

We want to construct a generalized substitution of constant length $\bar{\varrho}$ on the alphabet $\overline{\mathcal{N}}$ which is `conjugate' to $\varrho$, that is $\tau \circ \bar{\varrho} = \varrho \circ \tau$ on $\overline{\mathcal{N}}$. Since $\varrho(ba^k) = ba^{k_1} \cdots b a^{k_{r-1}} b a^{k_r + p k}$ for $k \in \N$ and $\varrho(ba^{\infty}) = ba^{k_1} \cdots ba^{k_{r-1}} ba^{\infty}$, this is achieved by the following definition.
\begin{definition}
Let $\varrho \colon a \mapsto a^p, b \mapsto ba^{k_1} \cdots ba^{k_r}$ be an almost primitive substitution. The \emph{return word substitution} $\bar{\varrho} \colon \overline{\mathcal{N}} \to \overline{\mathcal{N}}^{+}$ associated to $\varrho$ is given by
\begin{equation}
\label{EQ:toeplitz-sub}
\bar{\varrho}(k) = k_1 \cdots k_{r-1} f(k),
\end{equation}
where $f \colon \overline{\mathcal{N}} \to \overline{\mathcal{N}}$ is the affine function defined as $f(k) = k_r + p k$, for all $k \in \mathcal{N}$ and $f(\infty) = \infty$.
\end{definition}

\begin{lemma}
\label{LEM:conjugate-substitutions}
We have $\tau \circ \bar{\varrho} = \varrho \circ \tau$ on both $\mathcal{N}^{+}$ and $\overline{\mathcal{N}}^{\mathbb{Z}}$.
\end{lemma}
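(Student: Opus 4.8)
The plan is to verify the conjugacy relation $\tau \circ \bar\varrho = \varrho \circ \tau$ by checking it on single letters $k \in \overline{\mathcal N}$ and then extending to words and bi-infinite sequences by the concatenation structure of both maps. Since $\bar\varrho$ is defined letter-by-letter and $\tau$ is a concatenation map, the identity for single letters propagates: for $u = u_0 u_1 \cdots \in \mathcal N^+$ we have $\tau(\bar\varrho(u)) = \tau(\bar\varrho(u_0)) \tau(\bar\varrho(u_1)) \cdots = \varrho(\tau(u_0)) \varrho(\tau(u_1)) \cdots = \varrho(\tau(u))$, using that $\varrho$ is itself a concatenation (substitution) map. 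So the heart of the matter is the single-letter computation.

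First I would treat a finite letter $k \in \mathcal N$. On the one hand, $\tau(\bar\varrho(k)) = \tau(k_1 \cdots k_{r-1} f(k)) = \tau(k_1) \cdots \tau(k_{r-1}) \tau(f(k)) = b a^{k_1} \cdots b a^{k_{r-1}} b a^{k_r + pk}$, using $f(k) = k_r + pk$ and $\tau(j) = b a^j$. On the other hand, $\varrho(\tau(k)) = \varrho(b a^k) = \varrho(b) \varrho(a)^k = (b a^{k_1} \cdots b a^{k_r})(a^p)^k = b a^{k_1} \cdots b a^{k_{r-1}} b a^{k_r + pk}$. These agree, which establishes the relation on $\mathcal N$, and hence on $\mathcal N^+$ by the concatenation argument above. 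This already gives the $\mathcal N^+$ part of the claim.

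The subtler point, and the one I expect to be the only real obstacle, is the letter $\infty$ and the bi-infinite extension to $\overline{\mathcal N}^{\mathbb Z}$, because the extension of $\tau$ past an infinite block of $a$'s is \emph{directional}: a right-infinite $ba^\infty$ truncates everything to its right, and symmetrically on the left. I would first check the letter $\infty$ directly: $\tau(\bar\varrho(\infty)) = \tau(k_1 \cdots k_{r-1} f(\infty)) = \tau(k_1) \cdots \tau(k_{r-1}) \tau(\infty) = b a^{k_1} \cdots b a^{k_{r-1}} b a^{\infty}$, while $\varrho(\tau(\infty)) = \varrho(b a^\infty) = \varrho(b) a^\infty = b a^{k_1} \cdots b a^{k_{r-1}} b a^{k_r} a^\infty = b a^{k_1} \cdots b a^{k_{r-1}} b a^\infty$; these agree once one observes that $a^{k_r} a^\infty = a^\infty$, i.e.\ the trailing finite block $a^{k_r}$ is absorbed into the infinite block and becomes invisible, which is exactly the directional convention built into the definition of $\tau$.

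For a general $x = x^-.x^+ \in \overline{\mathcal N}^{\mathbb Z}$, I would handle the two half-sequences separately according to whether they contain the letter $\infty$. If $x^+ \in \mathcal N^{\mathbb N_0}$ contains no $\infty$, the finite-letter computation applies termwise and $\tau(\bar\varrho(x^+)) = \varrho(\tau(x^+))$ directly. If $\infty$ does occur, let $m$ be the first index with $x_m = \infty$; then by definition $\tau(x^+) = \tau(x_0) \cdots \tau(x_{m-1}) b a^\infty$ depends only on $x_0, \ldots, x_m$, and one must check that $\bar\varrho$ respects this truncation, namely that $\bar\varrho$ sends the first $\infty$ of $x^+$ to the first $\infty$ of $\bar\varrho(x^+)$ after the corresponding prefix, so that applying $\tau$ after $\bar\varrho$ truncates at the matching location. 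This follows because $f(\infty) = \infty$ and $f(\mathcal N) \subset \mathcal N$, so the image block $\bar\varrho(x_0) \cdots \bar\varrho(x_{m-1})$ contains only finite letters while $\bar\varrho(x_m) = k_1 \cdots k_{r-1} \infty$ introduces the first $\infty$; applying $\tau$ and using the single-letter identities at each step, together with the absorption $a^{k_r} a^\infty = a^\infty$ at the junction, yields $\tau(\bar\varrho(x^+)) = \varrho(\tau(x^+))$. The left half $x^-$ is treated by the mirror-image argument using the largest index $\ell$ with $x_\ell = \infty$. Concatenating the two halves and invoking the continuity of $\tau$ already established before the lemma completes the verification on $\overline{\mathcal N}^{\mathbb Z}$.
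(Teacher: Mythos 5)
Your proposal is correct and follows essentially the same route as the paper: verify the identity on single letters of $\overline{\mathcal N}$ (using $\varrho(ba^k) = \varrho(b)(a^p)^k = ba^{k_1}\cdots ba^{k_{r-1}}ba^{f(k)}$) and extend by concatenation. You simply spell out in more detail the truncation/absorption issues at the letter $\infty$, which the paper dispatches in one sentence.
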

\begin{proof}
For $k \in \overline{\mathcal{N}}$, we obtain
\[
\tau(\bar{\varrho}(k)) = \tau(k_1 \cdots k_{r-1} f(k)) 
= ba^{k_1} \cdots ba^{k_{r-1}} ba^{f(k)} = \varrho(ba^k) = \varrho(\tau(k)).
\]
If $k=\infty$, this holds irrespective of whether the sequences are taken to be left-sided or right-sided. The rest follows by concatenation.
\end{proof}

As mentioned above, almost primitivity enforces $p>1$ or $k_r >0$.
In many cases, this implies that $f(k) > k$ for all $k \in \mathcal{N}$. However, there is one exception that will be important for future case distinctions.

\begin{definition}
We say that $\varrho$ is of \emph{type $0$} if $k_r = k_i = 0$ for some $k_i \in \{ k_1, \ldots, k_{r-1} \}$. Accordingly, we call $\bar{\varrho}$ of type $0$ precisely if $\varrho$ is of type $0$.
\end{definition}

We emphasize that $\varrho$ can only be of type $0$ if $p>1$ since we have excluded the case that $p=1$ and $k_r = 0$.
It is important to note that $\lim_{n \to \infty} f^n(k) = \infty$ for all $k \in \mathcal{N}$ precisely if $\varrho$ is \emph{not} of type $0$. If $\varrho$ is of type $0$, we obtain instead $0 \in \mathcal{N}$ and
\[
\lim_{n \to \infty} f^n(k) = 
\begin{cases}
\infty, \mbox{ if } k > 0,\\
0, \mbox{ if } k= 0.
\end{cases}
\]
This gives another motivation for the name \emph{type $0$}. Since we exclude the trivial case $b \mapsto b^r$, at least one of the letters in $\{k_1, \ldots, k_r\}$ needs to be larger than $0$. If the substitution is of type $0$ this applies in particular to
\[
k_{\max} = \max \{ k_1,\ldots,k_{r-1} \}.
\]
In every case, we have that $f(k_{\max}) > k_{\max}$ and the sequence $(f^n(k_{\max}))_{n \geqslant 0}$ is strictly increasing.

Using the structure in \eqref{EQ:toeplitz-sub}, we can specify the possible return words via $\mathcal{N} = \{ f^m(k_i) \mid m \in \mathbb{N}_0, 1\leqslant i \leqslant r-1 \}$. Thus, we may also regard $f$ as a (letter-to-letter) generalized substitution on $\mathcal{N}$, which extends continuously to a generalized substitution on $\overline{\mathcal{N}}$. We emphasize that $k_r$ does \emph{not} need to be in $\mathcal{N}$. Another subtlety is that for a letter $n \in \mathcal{N}$ the representation $n = f^m(k_i)$ is not necessarily unique. 

%

\begin{example}
Recall the substitution $\varrho \colon a \mapsto a, \, b \mapsto bba$ from our guiding example. Tracing the first few iterations of $\varrho$, we obtain
\[
b \mapsto bba \mapsto bbabbaa \mapsto bbabbaabbabbaaa \mapsto \ldots
\]
and we observe that $ba^k b$ is legal for all $k \in \mathbb{N}_0$, implying $\mathcal{N} = \mathbb{N}_0$. The corresponding return word substitution is given by $\bar{\varrho} \colon j \mapsto 0 (j+1)$ for all $j \in \mathbb{N}_0$ and $\infty \mapsto 0 \infty$. This is precisely the Inifini-bonacci substitution from Example~\ref{EX:Infini-bo}.
Here, $k_1 = 0$, $k_r = 1$ and $f(j) = j+1$, for all $j \in \mathbb{N}_0$. Thus, the set $\mathcal{N} = \{ f^m(k_1) \mid m \in \mathbb{N}_0 \}$ consists of a single orbit of $f$
and the substitution is not of type $0$.  

It is known that $\mathbb{X}_{\bar{\varrho}}$ is topologically equivalent to the $2$-adic odometer \cite{durand2,ferenczi}. We will come back to this later. For $\bar{\varrho}$, the first few iterations, starting from $0$ are given by
\[
0 \mapsto 01 \mapsto 0102 \mapsto 01020103 \mapsto 0102010301020104 \mapsto \ldots,
\]
which is in line with the fact that $\varrho^n(b) = \tau (\bar{\varrho}^n(0))$, for all $n \in \mathbb{N}$. We observe that the words $\bar{\varrho}^n(0)$ have a specific structure: Every second entry is given by a $0$, every fourth entry is given by a $1$, and so forth. A similar observation holds in the general case as we will discuss in the following.
\end{example}

We return to the general case $\bar{\varrho}(k) = k_1 \cdots k_{r-1} f(k)$. Similar as in Example~\ref{EX:Infini-bo}, we observe that $\infty k_1$ is an admitted word. Note that $k_1$ is a prefix of $\bar{\varrho}(k_1)$ and hence a prefix of $\bar{\varrho}^n(k_1)$ for all $n \in \N$. By induction, $\bar{\varrho}^m(k_1)$ is a prefix for all $\bar{\varrho}^n(k_1)$ as long as $m \leqslant n$. Hence, $\bar{\varrho}^n(k_1)$ has a well-defined limit in $\overline{\mc N}^{\Z}$ as $n \to \infty$, which we denote by $\bar{\varrho}^{\infty}(k_1)$. Similarly, $\bar{\varrho}^m(\infty)$ is a suffix of $\bar{\varrho}^n(\infty)$ for all $m \leqslant n$ and the point
\[
x^{\star} = \bar{\varrho}^{\infty} (\infty). \bar{\varrho}^{\infty}(k_1) = \ldots k_1 \cdots k_{r-1} \infty. k_1 \cdots k_{r-1} f(k_1) \ldots
\]
is a well-defined fixed point of $\bar{\varrho}$.
 If $\varrho$ is of type $0$, the letter $0$ is fixed under $f$, leading to an additional fixed point of $\bar{\varrho}$, given by
\[
x^{\star,0} = \bar{\varrho}^{\infty} (0). \bar{\varrho}^{\infty}(k_1) = \ldots k_1 \cdots k_{r-1} 0. k_1 \cdots k_{r-1} f(k_1) \ldots,
\]
which coincides with $x^{\star}$ everywhere but at the position indexed by $-1$. All subwords of $x^{\star}$ are admitted, hence $x^{\star} \in \mathbb{X}_{\bar{\varrho}}$. On the other hand, $x^{\star}$ contains all of the words $\bar{\varrho}^n(k)$ for $k \in \overline{\mc N}$ and $n \in \N$, implying, that the orbit of $x^{\star}$ is dense in $\mathbb{X}_{\bar{\varrho}}$. The latter claim also follows from the observation that $\bar{\varrho}$ is \emph{primitive} in the sense of \cite{durand2}, which was shown to imply that $\mathbb{X}_{\bar{\varrho}}$ is minimal \cite[Thm.~24]{durand2}. By the fixed-point property, we find that
\[
x^{\star} = \bar{\varrho}(x^{\star}) = \ldots \bar{\varrho}(x^{\star}_{-2}) \bar{\varrho}(x^{\star}_{-1}). \bar{\varrho}(x^{\star}_{0}) \bar{\varrho}(x^{\star}_{1}) \bar{\varrho}(x^{\star}_{2}) \ldots,
\]
where $\bar{\varrho}(x^{\star}_i) = k_1 \cdots k_{r-1} f(x^{\star}_i)$. Hence, $x^{\star}$ can be written as a periodic sequence with undetermined positions on an $r$-periodic lattice, which are filled with the letters of $f(x^{\star})$. Iterating this observation naturally leads to the concept of a generalized Toeplitz sequence as defined in the following.

\subsection{Generalized Toeplitz sequences}
We generalize the notion of a Toeplitz sequence from finite to compact alphabets. For the finite alphabet case, compare for example \cite{LiuQu}, where a similar notation was used.  We refer the reader to \cite{down} for general background on Toeplitz sequences and odometers.
Given a compact alphabet $\mathcal{A}$, let $\mathcal{B} = \mathcal{A} \cup \{?\}$, and call `$?$' an undetermined letter. Likewise, for a sequence $x \in \mathcal{B}^{\mathbb{Z}}$, we call the set of all positions $i$, such that $x_i = ?$ the \emph{undetermined part} of $x$. Given two sequences $x,y \in \mathcal{B}^{\mathbb{Z}}$, we define a filling operation $x \blacktriangleright y$ by replacing all undetermined letters in $x$ by the letters of $y$. More precisely, the first `$?$' in $x$ at a position $\geqslant 0$ is replaced by $y_0$, the second `$?$' is replaced by $y_1$ and so on. For the negative positions we proceed analogously.
For example, if $x = (10?)^{\mathbb{Z}}$ and $y \in \mathcal{B}^{\mathbb{Z}}$, we set
$ x \blacktriangleright y = \ldots 10 y_{-2} 10 y_{-1}. 10 y_0 10 y_1 10 \ldots$.
If $x$ contains only one undetermined letter we define $x \blacktriangleright a$ by replacing the unique letter `$?$' with $a \in \mc A$.

Let $(u^{(j)})_{j \in \mathbb{N}}$ a sequence of words in $\mathcal{A}^{+}$ and $(p_j)_{j \in \mathbb{N}} \in \mathbb{Z}^{\mathbb{N}}$. A \emph{generalized Toeplitz sequence} with coding sequence $(u^{(j)}, p_j)_{j \in \mathbb{N}}$ is a point $x \in \mathcal{A}^{\mathbb{Z}}$ constructed as follows. Define $\alpha^{(j)} = S^{p_j} \bigl( \bigl( u^{(j)} ?\bigr)^{\mathbb{Z}} \bigr)$ and 
\[
x^{(j)} = \alpha^{(1)} \blacktriangleright \alpha^{(2)} \blacktriangleright \ldots \blacktriangleright  \alpha^{(j)}.
\]
Let $x^{(\infty)} = \lim_{j \to \infty} x^{(j)}$. This sequence has either one or no undetermined letter. If $x^{(\infty)}$ contains no undetermined letter, we set $x = x^{(\infty)}$ and say that $x$ is \emph{normal}. If $x^{(\infty)}$ contains an undetermined letter, we set $x = x^{(\infty)} \blacktriangleright a$, for some $a \in \mc A$ which is an accumulation point of letters in $(u^{(j)})_{j \in \N}$. In that case we say that $x$ is \emph{extended} by $a$.

Now we specify to $\mc A = \overline{\mc N}$. Using the above  notation, we can write $x^{\star} = \alpha \blacktriangleright f(x^{\star})$ with $\alpha =  \bigl( k_1 \ldots k_{r-1} ?\bigr)^{\mathbb{Z}}$. If we extend $f$ to undetermined letters by $f(?) = ?$, we can show by induction that
\[
x^{\star} = \alpha \blacktriangleright f(\alpha) \blacktriangleright \ldots \blacktriangleright f^m(\alpha) \blacktriangleright f^{m+1}(x^{\star}),
\]
for all $m \in \N$. We observe that the undetermined part of $x^{(m)} = \alpha \blacktriangleright f(\alpha) \blacktriangleright \ldots \blacktriangleright f^m(\alpha)$ is given by $r^m \mathbb{Z} -1$. Passing to the limit $m \to \infty$, we find that $-1$ remains an undetermined position for $x^{(\infty)}$. On the other hand, we know that $x^{\star}_{-1} = \infty$. Hence, $x^{\star}$ is a generalized Toeplitz sequence with coding sequence $(f^{j-1}(k_1 \cdots k_{r-1}), 0)_{j \in \mathbb{N}}$, extended by the letter $\infty$. 
 If $\varrho$ is not of type $0$, the extension is unique for this coding sequence, since $\infty$ is the only accumulation point of sequences of the form $(f^j (k_{i_j}))_{j \in \N}$. In the case that $\varrho$ is of type $0$, we have additionally the letter $0$ as an accumulation point of the (constant) sequence $(f^j(0))_{j \in \N}$. This leads to the fixed point $x^{\star,0}$ as an extension of $x^{(\infty)}$ by the letter $0$.

Let us consider more general coding sequences $(f^{j-1}(k_1 \cdots k_{r-1}), p_j)_{j \in \mathbb{N}}$.
Because of the $r$-periodic structure of the sequences $\alpha^{(j)}$, $p_j$ and $p_j + r$ generate the same sequence. Thus, one should rather think of $p_j$ as an element of $\mathbb{Z}/ r\mathbb{Z}$, such that the sequence $(p_j)_{j \in \mathbb{N}}$ can be identified with an $r$-adic integer $\p \in \mathbb{Z}_r$ via $\p = \sum_{j \geqslant 1} p_j r^{j-1}$, with $0 \leqslant p_j \leqslant r-1$. With some abuse of notation, we will also write $\p = (p_j)_{j \in \mathbb{N}}$ for the corresponding sequence. 
An undetermined letter remains in the limit precisely if $p_j$ is eventually equal to $0$ or eventually equal to $-1$ modulo $r$. 
This comprises exactly those $\p \in \mathbb{Z}_r$ that correspond to the natural embedding of $\mathbb{Z}$ in the $r$-adic integers. 

Let us adapt the notation introduced for generalized Toeplitz sequences to our present setting.

\begin{definition}
\label{DEF:notations}
Let $\p \in \Z_r$, represented by $\p = (p_n)_{n \in \N} \in \{0,\ldots,r-1\}^{\N}$ as a sequence. Then, we set $\alpha^{(n)} = S^{p_n} \bigl( \bigl( f^{n-1}(k_1 \ldots k_{r-1}) ?\bigr)^{\mathbb{Z}} \bigr)$ and $x^{(n)} = \alpha^{(1)} \blacktriangleright \cdots \blacktriangleright \alpha^{(n)}$, for all $n \in \mathbb{N}$. Let $x^{(\infty)} = \lim_{n \to \infty} x^{(n)}$. 
The undetermined part of the word $x^{(n)}$ is denoted by $U_n \subset \Z$ and we define $q_n \in \Z$ to be the position of the first `$?$' to the left of the origin in $x^{(n)}$, for all $n \in \N$.
\end{definition}

\begin{figure}
\begin{tikzpicture}

\matrix[row sep=5mm,column sep=10mm] {
 \node (11)  {?};&\node (12)  {?}; & \node (13)  {?};&\node (14) {?}; & \node (15)  {?};&\node[scale=0.9] (16)  {$x^{(n)}$};\\
	\node (21)  {?};	&	  &		\node (23)  {?};&		&\node (25)  {?};&\node[scale=0.9] (26)  {$x^{(n+1)}$};\\
	 \node[white] (31)  {?};& \node[scale=0.7] (32)  {$f^n(k_1)$};& \node[white] (33)  {?};&\node (34) {}; & \node (35)  {};&\node (36) {} ;\\
};

\draw[|-|] (11) -- node[above] {$\beta^{(n)}$}  node[below,pos=0.3,scale=0.7,outer sep=0.2cm] {$0$}    node[below,pos=-0.2,scale=0.7,outer sep=0.2cm] {$q_n$}  
node[pos=0.3, 
fill,
circle,scale=0.25]{}
(12);
\draw[|-|] (12) -- node[above] {$\beta^{(n)}$} (13);
\draw[|-|] (13) -- node[above] {$\beta^{(n)}$} (14);
\draw[|-|] (14) -- node[above] {$\beta^{(n)}$} (15);

\draw[|-|] (21) -- node[above] {$\beta^{(n+1)}$} (23);
\draw[|-|] (23) -- node[above] {$\beta^{(n+1)}$} (25);

\draw[|-|] (31) -- node[above] {$\beta^{(n)}$} (32);
\draw[|-|] (32) -- node[above] {$\beta^{(n)}$} (33);

\end{tikzpicture}
\caption{Structure of $x^{(n)}$ and $x^{(n+1)}$ for the case $r=2$, compare Definition~\ref{DEF:notations} and Lemma~\ref{LEM:approximant-structure}. The last line illustrates the relation $\beta^{(n+1)} = \beta^{(n)} f^n (k_1) \beta^{(n)}$. }
\label{FIG:approx-structure}
\end{figure}
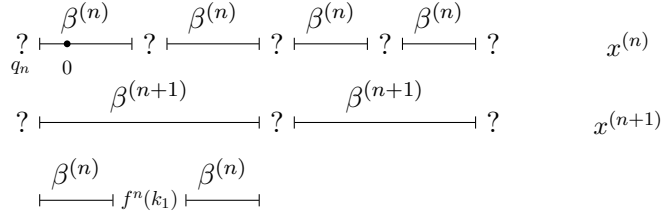

For notational convenience, we suppress the implicit dependence on $\p$ for the quantities in Definition~\ref{DEF:notations}. It is worth noticing that all quantities indexed by $n$ in fact depend only on the first $n$ coordinates of the sequence $\p$.
From construction, it is clear that $U_n$ is a (shifted) lattice of period $r^n$. The precise form is given as follows.

\begin{lemma}
\label{LEM:periodic-approximants}
Let $\p \in \Z_r$. Then, $x^{(n)}$ is $r^n$-periodic and its undetermined part is given by $U_n = r^n \mathbb{Z} + q_n$, where $q_n = -1 - \sum_{m=1}^n p_m r^{m-1}$ and $x^{(\infty)}$ coincides with $x^{(n)}$ on all positions in $\mathbb{Z} \setminus U_n$.
\end{lemma}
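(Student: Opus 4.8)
The plan is to argue by induction on $n$, using the recursion $x^{(n+1)} = x^{(n)} \blacktriangleright \alpha^{(n+1)}$, which is immediate from the definition of $x^{(n)}$ as a left-to-right iterated filling. Throughout I keep track of a single reference position, namely the largest strictly negative undetermined position, which is exactly $q_n$; the whole statement then reduces to controlling how this position and the surrounding lattice of `$?$'-positions transform under one more filling step. See Figure~\ref{FIG:approx-structure} for the geometry in the case $r=2$.

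For the base case $n=1$ I would observe that the block $f^{0}(k_1 \cdots k_{r-1})\,?$ has length $r$, so the undetermined part of $\bigl(f^{0}(k_1\cdots k_{r-1})\,?\bigr)^{\mathbb{Z}}$ is precisely $r\mathbb{Z}-1$; applying the left shift $S^{p_1}$ moves it to $r\mathbb{Z}-1-p_1$. Since $0\leqslant p_1\leqslant r-1$, the largest strictly negative element is $-1-p_1$, which equals $q_1$, the sequence $x^{(1)}=\alpha^{(1)}$ is $r$-periodic, and $U_1=r\mathbb{Z}+q_1$, matching the claimed formulas.

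For the inductive step, assume $x^{(n)}$ is $r^n$-periodic with $U_n=r^n\mathbb{Z}+q_n$. The key point is that $U_n$ is an arithmetic progression of common difference $r^n$, so, by the definition of $\blacktriangleright$, consecutive undetermined positions of $x^{(n)}$ receive consecutive letters of the $r$-periodic sequence $\alpha^{(n+1)}$: the position $q_n+jr^n$ (with $j\geqslant 1$) is filled by $\alpha^{(n+1)}_{j-1}$, and the position $q_n-jr^n$ (with $j\geqslant 0$) by $\alpha^{(n+1)}_{-1-j}$. Since the undetermined part of $\alpha^{(n+1)}$ is $r\mathbb{Z}-1-p_{n+1}$, a position stays undetermined in $x^{(n+1)}$ iff its associated $\alpha^{(n+1)}$-index is $\equiv -1-p_{n+1}\pmod r$. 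On the negative side this reads $j\equiv p_{n+1}\pmod r$, whose smallest nonnegative solution is $j=p_{n+1}$, so the largest strictly negative surviving position is $q_n-p_{n+1}r^n$; setting $q_{n+1}=q_n-p_{n+1}r^n$ and unrolling the recursion gives the closed form $q_{n+1}=-1-\sum_{m=1}^{n+1}p_m r^{m-1}$. Solving the congruence on both sides then shows that the full set of surviving positions is exactly the lattice $r^{n+1}\mathbb{Z}+q_{n+1}$, i.e. $U_{n+1}=r^{n+1}\mathbb{Z}+q_{n+1}$. Finally, because consecutive undetermined positions are spaced by $r^n$ while $\alpha^{(n+1)}$ has period $r$, the letters newly placed at formerly undetermined positions repeat with position-period $r^{n+1}$; together with the $r^n$-periodic (hence $r^{n+1}$-periodic) determined part of $x^{(n)}$ this yields $r^{n+1}$-periodicity of $x^{(n+1)}$.

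For the last assertion I would note that $q_{n+1}\equiv q_n\pmod{r^n}$ and $r^{n+1}\mathbb{Z}\subseteq r^n\mathbb{Z}$, so the undetermined parts are nested, $U_{n+1}\subseteq U_n$; since $\blacktriangleright$ never overwrites an already determined letter, every $x^{(m)}$ with $m\geqslant n$ agrees with $x^{(n)}$ on $\mathbb{Z}\setminus U_n$, and therefore so does the limit $x^{(\infty)}$. I expect the main obstacle to be precisely the bookkeeping in the inductive step: correctly matching the outward enumeration of the `$?$'-positions of $x^{(n)}$ (counted separately on the two sides of the origin, as $\blacktriangleright$ prescribes) with the $S^{p_{n+1}}$-shifted undetermined pattern of $\alpha^{(n+1)}$, and verifying that the survivors form exactly $r^{n+1}\mathbb{Z}+q_{n+1}$ rather than a shifted or sparser set. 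The sign conventions for the shift and for the negative-index enumeration are where an off-by-one error is most likely, which is why anchoring the argument to the single reference position $q_n$ is advantageous.
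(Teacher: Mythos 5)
Your argument is correct and is exactly the induction the paper leaves to the reader (the paper states only that the lemma ``follows easily by induction'' and omits the details). The bookkeeping checks out: the base case gives $U_1=r\mathbb{Z}-1-p_1$, the enumeration of the `?'-positions of $x^{(n)}$ on both sides of the origin is the single affine map $m\mapsto (m-q_n)/r^n-1$ on the lattice $U_n$, so the congruence analysis yields $q_{n+1}=q_n-p_{n+1}r^n$ and $U_{n+1}=r^{n+1}\mathbb{Z}+q_{n+1}$, and the nesting $U_{n+1}\subset U_n$ gives the final assertion.
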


This follows easily by induction, so we omit the proof. We can be a bit more precise about the structure of $x^{(n)}$.

\begin{lemma}
\label{LEM:approximant-structure}
For every $n \in \mathbb{N}$, the approximant $x^{(n)}$ is of the form $S^{-(q_n +1)} \bigl( (\beta^{(n)} ?)^{\mathbb{Z}} \bigr)$, where $\bar{\varrho}^n(k) = \beta^{(n)} f^n(k)$, for all $n \in \mathbb{N}$ and $k \in \overline{\mathcal{N}}$. 
In particular, $\bar{\varrho}^n(k)$ and $\bar{\varrho}^n(k')$ differ only in the last letter for all $k, k' \in \overline{\mathcal{N}}$. 
\end{lemma}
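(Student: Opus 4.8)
The plan is to prove both assertions by induction on $n$, first extracting the word $\beta^{(n)}$ from the algebraic structure of $\bar{\varrho}$ and only then matching it to the geometric filling procedure that defines $x^{(n)}$. First I would establish the ``in particular'' statement, which simultaneously \emph{defines} $\beta^{(n)}$. Since the first $r-1$ letters of $\bar{\varrho}(k) = k_1 \cdots k_{r-1} f(k)$ do not depend on $k$, a direct induction gives $\bar{\varrho}^n(k) = \beta^{(n)} f^n(k)$ with $\beta^{(n)} \in \overline{\mathcal{N}}^{+}$ of length $r^n - 1$ independent of $k \in \overline{\mathcal{N}}$: in the inductive step one writes $\bar{\varrho}^{n+1}(k) = \bar{\varrho}^n(\bar{\varrho}(k)) = \bar{\varrho}^n(k_1)\cdots\bar{\varrho}^n(k_{r-1})\bar{\varrho}^n(f(k))$ and substitutes $\bar{\varrho}^n(k_i) = \beta^{(n)} f^n(k_i)$ and $\bar{\varrho}^n(f(k)) = \beta^{(n)} f^{n+1}(k)$. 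This not only shows that $\bar{\varrho}^n(k)$ and $\bar{\varrho}^n(k')$ differ only in the last letter, but also yields the self-similar recursion
\[
\beta^{(n+1)} = \beta^{(n)} f^n(k_1)\, \beta^{(n)} f^n(k_2) \cdots \beta^{(n)} f^n(k_{r-1})\, \beta^{(n)},
\]
which is exactly the relation depicted in Figure~\ref{FIG:approx-structure} for $r=2$ and is the combinatorial engine of the whole argument.

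Next I would prove $x^{(n)} = S^{-(q_n+1)}\bigl((\beta^{(n)}?)^{\mathbb{Z}}\bigr)$ by induction on $n$. By Lemma~\ref{LEM:periodic-approximants} the undetermined part of $x^{(n)}$ is the lattice $U_n = r^n\Z + q_n$, and one checks directly that the proposed form has the same undetermined part: the tile $\beta^{(n)}?$ has length $r^n$ and carries its ``$?$'' in the rightmost slot, so after the shift $S^{-(q_n+1)}$ the undetermined positions become $r^n\Z + q_n$. Thus the claim reduces to verifying that, on a single period, the determined letters of $x^{(n)}$ spell out $\beta^{(n)}$. The base case $n=1$ is immediate from $\alpha^{(1)} = S^{p_1}\bigl((\beta^{(1)}?)^{\Z}\bigr)$ together with $q_1 = -1 - p_1$, so that $-(q_1+1) = p_1$.

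For the inductive step I pass from $x^{(n)}$ to $x^{(n+1)} = x^{(n)} \blacktriangleright \alpha^{(n+1)}$ and track how the filling acts on the slots of $U_n$. Writing the undetermined positions as $u_t := q_n + t\,r^n$, the bound $-r^n \leqslant q_n \leqslant -1$ (which follows from $0 \leqslant p_m \leqslant r-1$) forces $u_1$ to be the first undetermined slot at a nonnegative position and $u_0$ the last at a negative position, so the filling convention assigns $\alpha^{(n+1)}_{t-1}$ to $u_t$ uniformly for every $t \in \Z$. Since $\alpha^{(n+1)} = S^{p_{n+1}}\bigl((f^n(k_1\cdots k_{r-1})?)^{\Z}\bigr)$ is $r$-periodic with its ``$?$'' in the last coordinate, the slot $u_t$ survives as undetermined iff $t \equiv -p_{n+1} \pmod r$; this recovers $U_{n+1} = r^{n+1}\Z + (q_n - p_{n+1} r^n)$ and hence $q_{n+1} = q_n - p_{n+1} r^n$, in agreement with Lemma~\ref{LEM:periodic-approximants}. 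Between two consecutive surviving slots the remaining $r-1$ filled slots receive exactly $f^n(k_1), \ldots, f^n(k_{r-1})$ in order, while the determined stretches in between still carry the tile $\beta^{(n)}$ by the inductive hypothesis; reading off one full period then produces $\beta^{(n)} f^n(k_1)\beta^{(n)} \cdots f^n(k_{r-1})\beta^{(n)} = \beta^{(n+1)}$, which is the required tile.

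I expect the only real obstacle to be the bookkeeping in this filling step: one must pin down which undetermined slot of $x^{(n)}$ counts as the ``first at a nonnegative position'' so that the filling convention aligns with the $r$-periodic pattern of ``$?$'' in $\alpha^{(n+1)}$, and then keep the normalization $S^{-(q_n+1)}$ consistent with the sign conventions for $S$ and for $q_n$. Everything else is a routine matching of periods and offsets, and the self-similar recursion for $\beta^{(n)}$ guarantees that the content harvested between surviving undetermined positions assembles into $\beta^{(n+1)}$ exactly as claimed.
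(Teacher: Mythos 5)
Your proposal is correct and takes essentially the same route as the paper: both rest on the induction driven by the recursion $\beta^{(n+1)} = \beta^{(n)} f^n(k_1)\cdots\beta^{(n)} f^n(k_{r-1})\beta^{(n)}$, with the only difference being organizational — the paper defines $\beta^{(n)}$ as the periodic block of $x^{(n)}$ and deduces $\bar{\varrho}^n(k)=\beta^{(n)}f^n(k)$, while you define $\beta^{(n)}$ algebraically from $\bar{\varrho}$ and then verify the Toeplitz form of $x^{(n)}$. Your explicit bookkeeping of the filling step (slots $u_t = q_n + t r^n$, the bound $-r^n \leqslant q_n \leqslant -1$, and the assignment $u_t \mapsto \alpha^{(n+1)}_{t-1}$) is exactly what the paper compresses into the phrase ``by construction,'' and it checks out.
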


\begin{proof}
If we denote by $(\beta^{(n)}?)$ the (shortest) periodic block for $x^{(n)}$, then the representation $x^{(n)} = S^{-(q_n +1)} \bigl( (\beta^{(n)} ?)^{\mathbb{Z}} \bigr)$ follows immediately from Lemma~\ref{LEM:periodic-approximants}. Hence, it suffices to show that $\bar{\varrho}^n(k) = \beta^{(n)} f^n(k)$ for all $n \in \N$. This is done by induction. For $n = 1$, the periodic block for $x^{(1)} = \alpha^{(1)}$ is given by $(k_1 \ldots k_{r-1}?)$ and so $\beta^{(1)} = k_1 \cdots k_{r-1}$, in line with $\bar{\varrho}(k) = k_1 \cdots k_{r-1} f(k) = \beta^{(1)} f(k)$. Suppose the claim holds for $n \in \N$. By construction, we have
\[
\beta^{(n+1)} f^{n+1}(k) = \beta^{(n)} f^n(k_1) \cdots \beta^{(n)} f^n(k_{r-1}) \beta^{(n)} f^n(f(k)) = \bar{\varrho}^n(k_1 \cdots k_{r-1} f(k)) = \bar{\varrho}^{n+1}(k),
\]
where we have applied the induction assumption in the penultimate step.
\end{proof}

In the following, we fix $\beta^{(n)}$ to be defined as in Lemma~\ref{LEM:approximant-structure} for all $n \in \mathbb{N}$, that is
\begin{itemize}
\item $\beta^{(n)} \in \mathcal{L}_{\bar{\varrho}}$ the unique word such that $(\beta^{(n)}?)$ is a periodic block for $x^{(n)}$.
\end{itemize}
In particular, $|\beta^{(n)}| = r^n -1$ and $\beta^{(n)}$ satisfies the recursion relation
\begin{equation}
\label{EQ:beta-recursion}
\beta^{(n+1)} = \beta^{(n)} f^n(k_1) \cdots \beta^{(n)} f^n(k_{r-1}) \beta^{(n)}
\end{equation}
for all $n \in \N$.

We show next that every $x \in \mathbb{X}_{\bar{\varrho}}$ is indeed a generalized Toeplitz sequence (either normal or extended). 
\begin{lemma}
\label{LEM:p-from-x}
Let $x \in \mathbb{X}_{\bar{\varrho}}$. Then there is a unique $\p = \p[x] \in \Z_r$ such that $x$ coincides with $x^{(\infty)}$, possibly up to its undetermined part. If $x^{(\infty)}$ contains an undetermined letter, it is either $x = x^{(\infty)} \blacktriangleright 0$ or $x = x^{(\infty)} \blacktriangleright \infty$. If $\varrho$ is not of type $0$, only the latter case is possible.
\end{lemma}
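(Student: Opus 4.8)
The plan is to reconstruct the $r$-adic integer $\p[x]$ coordinate by coordinate, reading off each $p_n$ from how $x$ aligns against the periodic approximants $x^{(n)}$. Concretely, I would proceed by induction on $n$, showing that the first $n$ coordinates $(p_1,\ldots,p_n)$ are uniquely determined by $x$. The base case uses the observation from Lemma~\ref{LEM:approximant-structure} that $x^{(1)} = \alpha^{(1)} = S^{p_1}((k_1\cdots k_{r-1}?)^{\Z})$ is $r$-periodic with a single undetermined position per period. Since every $x \in \mathbb{X}_{\bar{\varrho}}$ is built from the substitution $\bar{\varrho}$ and the periodic block $\beta^{(1)} = k_1\cdots k_{r-1}$ records the deterministic part of $\bar{\varrho}(k)$, the position of the $b$-return-word markers forces a unique residue $p_1 \in \{0,\ldots,r-1\}$ for the location of the undetermined coordinate relative to the origin.

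For the inductive step, I would use the recursion $\beta^{(n+1)} = \beta^{(n)} f^n(k_1) \cdots \beta^{(n)} f^n(k_{r-1}) \beta^{(n)}$ from \eqref{EQ:beta-recursion}, together with Lemma~\ref{LEM:pre-images} applied to $\bar{\varrho}$ (equivalently, recognizability, since $x \neq a^{\Z}$-type periodic points are excluded by the minimality of $\mathbb{X}_{\bar{\varrho}}$). The key point is that $\bar{\varrho}$ is recognizable on its non-periodic sequences, so $x$ has a unique $\bar{\varrho}$-preimage $y \in \mathbb{X}_{\bar{\varrho}}$ with a unique shift, and the induction hypothesis applied to $y$ yields $\p[y] = (p_2, p_3, \ldots)$, while the desingularization shift determines $p_1$. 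Iterating, the coordinates assemble into a well-defined $\p[x] = \sum_{j\geqslant 1} p_j r^{j-1} \in \Z_r$. By Lemma~\ref{LEM:periodic-approximants}, $x^{(\infty)}$ built from this $\p$ agrees with $x$ on $\Z \setminus U_n$ for every $n$, hence everywhere except possibly on $\bigcap_n U_n$, which is a single point or empty.

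The final claim about the extension is a direct consequence of the accumulation-point analysis carried out just before Definition~\ref{DEF:notations}: if $x^{(\infty)}$ has an undetermined position (which happens exactly when $\p$ lies in the image of $\Z \hookrightarrow \Z_r$, i.e. $p_j$ is eventually $0$ or eventually $r-1$), then the missing letter must be an accumulation point of sequences $(f^j(k_{i_j}))_{j}$. Since $\lim_n f^n(k) = \infty$ for $k > 0$ and the only other possible limit is $0$ in the type-$0$ case, the only admissible fillings are $\infty$ or $0$, with $0$ excluded unless $\varrho$ is of type $0$. I would verify legality of these two extensions by noting that they yield precisely the fixed points $x^{\star}$ and $x^{\star,0}$ (up to shift), both already shown to lie in $\mathbb{X}_{\bar{\varrho}}$.

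The main obstacle is the uniqueness of $\p[x]$, specifically controlling the inductive step at the boundary case where $x^{(\infty)}$ carries an undetermined letter. Here one must be careful that the recognizability argument still pins down $p_1$ unambiguously even though $x$ itself is a limit of periodic approximants whose undetermined positions $q_n = -1 - \sum_{m=1}^n p_m r^{m-1}$ may march off to $\pm\infty$; the subtlety is that the two distinct extensions (by $0$ and by $\infty$) correspond to the \emph{same} $\p$, so uniqueness holds at the level of $\p$ but not at the level of the filled sequence. I expect the cleanest route is to phrase the induction purely in terms of matching $x$ against the deterministic coordinates $\Z \setminus U_n$, where no extension ambiguity arises, and only afterward address the single leftover position.
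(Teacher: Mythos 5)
Your route is genuinely different from the paper's: the paper writes $x$ as a limit $\lim_j S^{m_j}x^{\star}$ of shifts of the fixed point and shows that the integers $m_j$ converge in $\Z_r$ (because the letters $f^n(k_{\max})$ occur in $x^{\star}$ only on a fixed coset of $r^n\Z$, so $m_j$ is eventually constant modulo $r^n$), whereas you propose to recover $\p[x]$ digit by digit via desubstitution. That alternative could in principle be made to work, but as written it has two genuine gaps, and both trace back to the same missing observation.

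First, the base case. Your stated reason that $p_1$ is uniquely determined --- ``the position of the $b$-return-word markers'' --- does not apply: $x$ is a sequence over $\overline{\mc N}$, and there are no occurrences of $b$ to locate. The actual difficulty is that a letter $f^m(k_i)$ sitting at an undetermined position of $x^{(1)}$ may coincide, as an element of $\mc N$, with one of the letters $k_1,\dots,k_{r-1}$ (e.g.\ $k_r=0$, $p=2$, $k_1=2$, $k_2=1$ gives $f(k_2)=k_1$), so one cannot locally tell determined from undetermined positions. What rescues uniqueness is that $f^n(k_{\max})>k_{\max}\geqslant k_i$ for all $i$ and $f^n(k_{\max})>f^m(k_j)$ for $m<n$, so every occurrence of the letter $f(k_{\max})$ in $x$ lies in $U_1$, and these occurrences (which appear with bounded gaps by the Toeplitz structure) pin down the coset $U_1$ of $r\Z$, hence $p_1$. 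You never supply this, and without it the uniqueness of $\p[x]$ --- the main content of the lemma --- is not established. Second, the inductive step leans on recognizability of $\bar{\varrho}$. The recognizability result the paper cites (\cite[Thm.~5.3]{BSTY}) concerns substitutions on \emph{finite} alphabets; $\bar{\varrho}$ is a generalized substitution of constant length over the infinite compact alphabet $\overline{\mc N}$, so recognizability cannot simply be imported. It does hold here, but proving it requires exactly the same $f(k_{\max})>k_{\max}$ argument (to fix the cutting positions modulo $r$) together with injectivity of $f$ (to identify the preimage letters), and likewise the existence of preimages (Lemma~\ref{LEM:pre-images}) would have to be re-derived in the compact-alphabet setting. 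Your handling of the final extension claim is essentially correct and matches the paper's accumulation-point argument, but the two earlier steps need to be filled in before the proof stands.
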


\begin{proof}
Since $\mathbb{X}_{\bar{\varrho}}$ is the orbit closure of $x^{\star}$, there exists a sequence of integers $(m_j)_{j \in \N}$ such that $ \lim_{j \to \infty} S^{m_j} x^{\star} = x$.
If $\sum_{j = 1}^{n} p_j r^{j-1} = m$ is the $r$-adic expansion of $m$, it is readily verified that $S^m x^{\star}$ is a generalized Toeplitz sequence with coding $(f^{j-1}(k_1 \cdots k_{r-1}), p_j)_{j \in \mathbb{N}}$, where $p_j = 0$ for $j > n$.
Note that for each $n \in \N$ the sequence $(m_j)_{j \in \N}$ is eventually constant modulo $r^n$. This is because the letters $f^n(k_{\max})$ appear in $x^{\star}$ only at positions within the lattice $r^n \Z$. Hence, we obtain a unique element $\p = (p_n)_{n \in \N} \in \Z_r$ via the condition
\[
\sum_{k = 1}^{n} p_k r^{k-1} = \lim_{j \to \infty} (m_j \mod r^n)
\]
for all $n \in \N$. It follows that for each $n \in \N$, $x$ coincides with $x^{(n)}$ up to its undetermined part, which implies the first statement.
If $\p \in \Z_r \setminus \Z$, the sequence $x^{(\infty)}$ has no undetermined letter and we are done. Suppose $\p \in \Z$. Then, $x^{(\infty)}$ is undetermined at some position $\ell$ and $(S^{m_j} x^{\star})_{\ell} = f^{n_j}(k_{i_j})$ for some $n_j \to \infty$. Since $\infty$ (and $0$, if $\varrho$ is of type $0$) are the only accumulation points of sequences $(f^{n}(k_{i_n}))_{n \in \N}$, the claim follows.
\end{proof}

Hence, $\p[\cdot] \colon \Z_r \to \mathbb{X}_{\bar{\varrho}}$ with $x \mapsto \p[x]$ is a well-defined map. Conversely, given $\p \in \Z_r \setminus \Z$, the point 
\[ x[\p] := x^{(\infty)}\] is in the orbit closure of $x^{\star}$. This gives a partial inverse of the map $\p[\cdot]$ on $\Z_r \setminus \Z$. It extends to an inverse on the whole of $\Z_r$ precisely if $\varrho$ is not of type $0$. In this case, we set $x[\p] = x^{(\infty)} \blacktriangleright \infty$. If $\varrho$ is of type $0$, the map $\p$ is $2:1$ on $\p^{-1}(\Z) = \orb(x^{\star}) \cup \orb(x^{\star,0})$. 

\begin{remark}
\label{REM:Z_r-conjugation}
The map $\p[\cdot]$ is a topological semi-conjugation from $(\mathbb{X}_{\bar{\varrho}}, S)$ to $(\Z_r, +1)$ and from $(
\mathbb{X}_{\bar{\varrho}},\bar{\varrho})$ to $(\Z_r, \times r)$. It is in fact a topological conjugation (in both cases) precisely if $\varrho$ is not of type $0$. In any case, $\p[\cdot]$ is not injective at most on a countable set. Since $\mathbb{X}_{\bar{\varrho}}$ contains no periodic point, for each $S$-invariant measure on $\mathbb{X}_{\bar{\varrho}}$, every countable set is a null-set. Hence, the pushforward under $x[\cdot]$ of the Haar-measure on $\Z_r$ (restricted to $\Z_r \setminus \Z$) is the unique $S$-invariant probability measure on $\mathbb{X}_{\bar{\varrho}}$. In particular, $(\mathbb{X}_{\bar{\varrho}},S)$ is strictly ergodic.
\end{remark}

\subsection{The induced system}

We will show in this subsection that there is an induced system on $(\mathbb{X}_{\varrho}, S)$ which is topologically conjugate to $(\mathbb{X}_{\bar{\varrho}},S)$ via the map $\tau$. First, we observe that $\tau(x) \in [b]$ for all $x \in \mathbb{X}_{\bar{\varrho}}$. This motivates to study $\mathbb{X}_{\varrho}^b = \mathbb{X}_{\varrho} \cap [b]$. 
As a preparation, let us return once more to the structure of the approximant $x^{(n)}$. The following observation is also useful in the following subsections in order to relate the length of a subword in $x \in \mathbb{X}_{\bar{\varrho}}$ to that of the corresponding subword in $\tau(x)$.

\begin{lemma}
\label{LEM:u-n-inclusion}
For all $n \in \mathbb{N}$, we have $\tau(\beta^{(n)})b \triangleleft \varrho^n(b)$.
\end{lemma}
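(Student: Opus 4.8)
The plan is to derive an explicit closed form for $\varrho^n(b)$ via the conjugacy between $\varrho$ and $\bar{\varrho}$, from which the claimed subword relation (in fact, a prefix relation) falls out immediately. First I would iterate Lemma~\ref{LEM:conjugate-substitutions}. Since $\bar{\varrho}$ maps $\mc N$ into $\mc N^{+}$ (each $k_i$ with $1 \leqslant i \leqslant r-1$ lies in $\mc N$, and $f(\mc N) \subseteq \mc N$ because $\mc N = \{ f^m(k_i) \mid m \in \N_0,\ 1 \leqslant i \leqslant r-1\}$), the identity $\tau \circ \bar{\varrho} = \varrho \circ \tau$ composes with itself to give $\tau \circ \bar{\varrho}^{\,n} = \varrho^n \circ \tau$ on $\mc N^{+}$ for every $n \in \N$.

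Next I would fix any $k \in \mc N$ (such $k$ exists since $r \geqslant 2$ forces $k_1 \in \mc N$). Applying the iterated conjugacy to the single letter $k$ and invoking Lemma~\ref{LEM:approximant-structure}, which gives $\bar{\varrho}^{\,n}(k) = \beta^{(n)} f^n(k)$, I compute
\[
\varrho^n(ba^k) = \varrho^n(\tau(k)) = \tau(\bar{\varrho}^{\,n}(k)) = \tau(\beta^{(n)} f^n(k)) = \tau(\beta^{(n)})\, b\, a^{f^n(k)},
\]
using $\tau(f^n(k)) = b a^{f^n(k)}$. On the other hand, since $\varrho^n(a) = a^{p^n}$, we also have $\varrho^n(ba^k) = \varrho^n(b)\, a^{p^n k}$.

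The key arithmetic input is that $f(k) = k_r + pk$ is affine, so $f^n(k) = p^n k + c_n$ with $c_n := f^n(0) \geqslant 0$ a constant independent of $k$. Hence $a^{f^n(k)} = a^{c_n} a^{p^n k}$, and equating the two expressions for $\varrho^n(ba^k)$ and cancelling the common suffix $a^{p^n k}$ yields
\[
\varrho^n(b) = \tau(\beta^{(n)})\, b\, a^{c_n}.
\]
In particular $\tau(\beta^{(n)})b$ is a prefix of $\varrho^n(b)$, which proves the stronger statement $\tau(\beta^{(n)})b \triangleleft \varrho^n(b)$.

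I do not expect a serious obstacle here: the only delicate point is the bookkeeping of the trailing block of $a$'s, which is controlled precisely by the non-negativity of $c_n$ (equivalently, by $f^n(k) \geqslant p^n k$). The preliminary verification that iterating the conjugacy is legitimate, i.e.\ that $\bar{\varrho}$ preserves $\mc N$, is immediate from the description $\mc N = \{ f^m(k_i) \mid m \in \N_0,\ 1 \leqslant i \leqslant r-1\}$, and a quick consistency check on lengths ($|\beta^{(n)}| = r^n - 1$ and both sides carrying $r^n$ occurrences of $b$) confirms the closed form.
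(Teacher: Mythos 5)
Your proof is correct and follows essentially the same route as the paper: both apply the conjugacy $\tau\circ\bar{\varrho}^{\ts n}=\varrho^n\circ\tau$ to a single return word $k\in\mc N$, invoke $\bar{\varrho}^{\ts n}(k)=\beta^{(n)}f^n(k)$, and compare with $\varrho^n(b)\varrho^n(a^k)$. The only (cosmetic) difference is that you make the trailing block of $a$'s explicit via $f^n(k)=p^nk+f^n(0)$, whereas the paper argues qualitatively that $\varrho^n(a^k)=a^m$ must be a strict suffix of $\tau(f^n(k))=ba^{f^n(k)}$ because the latter begins with $b$.
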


\begin{proof}
Let $k \in \mathcal{N}$ and $n \in \mathbb{N}$. Then, 
\[
 \tau(\beta^{(n)}) \tau(f^n(k)) = \tau (\bar{\varrho}^n(k)) = \varrho^n(ba^k) = \varrho^n(b) \varrho^n(a^k).
 \]
By the structure of $\varrho$, it is $\varrho^n(a^k) = a^m$ for some $m \in \mathbb{N}$. On the other hand, $\tau(f^n(k))$ starts with the letter $b$. Hence, $\varrho^n(a^k)$ is a (strict) suffix of $\tau(f^n(k))$ implying that $\tau(\beta^{(n)})b$ is a prefix of $\varrho^n(b)$.
\end{proof}

For our present purposes this observation is useful in order to show that $\tau$ maps points in $\mathbb{X}_{\bar{\varrho}}$ to points in $\mathbb{X}_{\varrho}$. In fact, we get a much stronger result.

\begin{lemma}
\label{LEM:tau-hom}
The map $\tau \colon \overline{\mc N}^{\Z} \to \{a,b\}^{\Z}$ restricts to a homeomorphism $\tau \colon \mathbb{X}_{\bar{\varrho}} \to \mathbb{X}_{\varrho}^b$.
\end{lemma}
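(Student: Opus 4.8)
**Plan for proving Lemma~\ref{LEM:tau-hom} (that $\tau$ restricts to a homeomorphism $\mathbb{X}_{\bar\varrho} \to \mathbb{X}_\varrho^b$).**

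The plan is to establish four things: (i) $\tau$ maps $\mathbb{X}_{\bar\varrho}$ into $\mathbb{X}_\varrho$, with image landing in $[b]$; (ii) $\tau$ is injective on $\mathbb{X}_{\bar\varrho}$; (iii) $\tau$ is surjective onto $\mathbb{X}_\varrho^b$; and (iv) conclude the homeomorphism via compactness. Continuity of $\tau$ on $\overline{\mathcal N}^{\Z}$ was already noted in the excerpt, so continuity of the restriction is automatic, and since $\mathbb{X}_{\bar\varrho}$ is compact (closed subshift) and $\mathbb{X}_\varrho^b$ is Hausdorff, any continuous bijection will be a homeomorphism. Thus the real work is (i)--(iii).

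For the inclusion (i), I would take $x \in \mathbb{X}_{\bar\varrho}$ and show every finite subword of $\tau(x)$ is legal for $\varrho$. Here Lemma~\ref{LEM:u-n-inclusion} is the key input: since $\tau(\beta^{(n)})b \triangleleft \varrho^n(b)$, and by Lemma~\ref{LEM:p-from-x} every $x$ agrees with a periodic approximant $x^{(n)}$ built out of blocks $\beta^{(n)}$ off its undetermined part, any window of $\tau(x)$ sits inside $\tau(\beta^{(n)})b$ for $n$ large (using that the block length $|\tau(\beta^{(n)})| \to \infty$), hence inside $\varrho^n(b)$, so it is legal. The fact that $\tau(x)$ begins the letter $b$ at position $0$ — placing it in $[b]$ — follows because $\tau(x_0) = ba^{x_0}$ starts with $b$; care is only needed when $x_0 = \infty$, but then $\tau(\infty) = ba^\infty$ still starts with $b$. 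So $\tau(x)\in\mathbb{X}_\varrho^b$.

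For surjectivity (iii), I would start with $w \in \mathbb{X}_\varrho \cap [b]$, so $w_0 = b$, and decompose $w$ into return words of $b$. Since $w \neq a^{\Z}$ and every primitive letter occurs, the occurrences of $b$ in $w$ partition it (possibly with a one-sided infinite tail of $a$'s, handled by the $\infty$ letter) into blocks $ba^{k_j}$, giving a sequence $x \in \overline{\mathcal N}^{\Z}$ with $\tau(x) = w$; legality of each $ba^{k_j}b$ in $w$ shows $k_j \in \mathcal N$, and one must check the resulting $x$ lies in $\mathbb{X}_{\bar\varrho}$ — this is where I would invoke the conjugacy $\tau\circ\bar\varrho = \varrho\circ\tau$ (Lemma~\ref{LEM:conjugate-substitutions}) together with recognizability (Lemma~\ref{LEM:pre-images} and the discussion around $\varrho$-recognizability) to pull the admissibility of subwords of $w$ back to admissibility of subwords of $x$ under $\bar\varrho$. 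Injectivity (ii) should then come from the fact that the return-word decomposition of a point in $[b]$ is \emph{unique}: the letter $b$ unambiguously marks the start of each block, so distinct $x$ produce distinct $\tau(x)$ (the one-sided tail conventions for $\infty$ being exactly what makes $\tau$ well-defined and injective rather than merely well-defined).

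The main obstacle I anticipate is the surjectivity/injectivity bookkeeping around the \emph{infinite} letter $\infty$ and the type~$0$ phenomenon. Because $\tau(\infty) = ba^\infty$ absorbs everything to one side, the left-sided versus right-sided reading conventions (fixed in the extension of $\tau$ to $\overline{\mathcal N}^{\Z}$) must be used carefully to recover a \emph{unique} preimage $x$ from $w$; a point $w = \tau(x^-).ba^\infty$ has its right tail entirely determined, and one must confirm no second preimage arises from a different placement of $\infty$. Verifying $x \in \mathbb{X}_{\bar\varrho}$ rather than merely $x \in \overline{\mathcal N}^{\Z}$ is the other delicate point, since admissibility for the \emph{generalized} substitution $\bar\varrho$ allows limiting words (as in Example~\ref{EX:Infini-bo}); I would lean on the explicit Toeplitz description of $\mathbb{X}_{\bar\varrho}$ via Lemma~\ref{LEM:p-from-x} to identify the candidate $\p[x]$ directly from the positions of large $a$-gaps in $w$, thereby certifying $x\in\mathbb{X}_{\bar\varrho}$ without an appeal to recognizability at all.
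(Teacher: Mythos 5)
Your plan is sound, and its skeleton --- continuity, image contained in $\mathbb{X}_{\varrho}^b$, injectivity, surjectivity, then compactness to upgrade a continuous bijection to a homeomorphism --- is exactly that of the paper's proof; the differences lie in how two steps are executed. For the inclusion $\tau(\mathbb{X}_{\bar{\varrho}}) \subset \mathbb{X}_{\varrho}^b$, the paper argues only for the $\infty$-free points $x[\p]$ with $\p \in \Z_r \setminus \Z$ (every subword of such an $x$ sits inside some $\beta^{(n)}$, so Lemma~\ref{LEM:u-n-inclusion} applies directly) and then extends to all of $\mathbb{X}_{\bar{\varrho}}$ by density of these points, continuity of $\tau$ and closedness of $\mathbb{X}_{\varrho}^b$; this sidesteps the case analysis around the letter $\infty$ that your direct window argument would have to carry out by hand. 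For surjectivity, the paper does not build the preimage by decomposing $w$ into return words at all: it notes $w_{[-m,m]} \triangleleft \varrho^{n_m}(b) \triangleleft \tau(\bar{\varrho}^{n_m}(k_1))$, hence $w$ agrees on $[-m,m]$ with $\tau(S^{j_m} x^{\star})$ for suitable shifts $j_m$, and extracts an accumulation point $x$ of $(S^{j_m} x^{\star})_m$ by compactness, with $\tau(x) = w$ by continuity. This entirely avoids the point you correctly flag as delicate --- certifying that a hand-built decomposition is admissible for the \emph{generalized} substitution $\bar{\varrho}$ --- and needs neither recognizability nor the Toeplitz parametrization; note in this regard that Lemma~\ref{LEM:p-from-x} as stated applies to points already known to lie in $\mathbb{X}_{\bar{\varrho}}$, so invoking it to certify membership would require reproving a version of it for your candidate $x$. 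Your constructive route can be made to work and buys an explicit description of $\tau^{-1}$, but the compactness argument buys a much shorter proof. The injectivity discussion is essentially the same in both: uniqueness of the $b$-marked decomposition away from the orbit of $x^{\star}$, plus a separate check that distinct shifts of $x^{\star}$ (and, in the type $0$ case, the point $x^{\star,0}$) have distinct images.
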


\begin{proof}
The continuity of $\tau$ was stated in the discussion following Definition~\ref{DEF:tau}. 
Suppose $x = x[\p]$ for some $\p \in \Z_r \setminus \Z$. Then, every subword $v$ of $x$ is contained in $\beta^{(n)}$ for some $n \in \N$. Hence, $\tau(v) \triangleleft \tau(\beta^{(n)}) \triangleleft \varrho^n(b)$ by Lemma~\ref{LEM:u-n-inclusion}. This proves that every subword of $\tau(x)$ is in $\mc L_{\varrho}$ and thereby $\tau(x) \in \mathbb{X}_{\varrho}$. Note that $\mathbb{X}_{\varrho}^b$ is closed. Since the points $x[\p]$ with $\p \in \Z_r \setminus \Z$ lie dense in $\mathbb{X}_{\bar{\varrho}}$, we obtain $\tau(\mathbb{X}_{\bar{\varrho}} ) \subset \mathbb{X}_{\varrho}^b$ by the continuity of $\tau$. 
Suppose $\tau(x) = \tau(x')$ for $x, x' \in \mathbb{X}_{\bar{\varrho}}$. The point $\tau(x)$ is eventually periodic precisely if $x$ contains the letter `$\infty$'. If $x$ and $x'$ do not contain the letter `$\infty$', $x = x'$ is clear from the definition of $\tau$. On the other hand, the orbit of $x^{\star}$ coincides with the set of  sequences in $\mathbb{X}_{\bar{\varrho}}$ that contain (precisely one) occurrence of `$\infty$'. It is straightforward to verify that $S^n(x^{\star})$ and $S^m(x^{\star})$ have different images under $\tau$ for $n \neq m$. In any case $x = x'$ and so $\tau$ is injective on $\mathbb{X}_{\bar{\varrho}}$. 
Finally, let $w \in \mathbb{X}_{\varrho}^b$. For all $m \in \N$, we have that $w_{[-m,m]} \triangleleft \varrho^{n_m}(b) \triangleleft \tau(\bar{\varrho}^{n_m}(k_1))$ for some $n_m \in \N$. Since $\bar{\varrho}^{n_m}(k_1) \triangleleft x^{\star}$, we find a shift $j_m \in \Z$ such that $\tau(S^{j_m}(x^{\star}))$ coincides with $w$ on $[-m,m]$. By compactness, the sequence $(S^{j_m}(x^{\star}))_{m \in \N}$ has an accumulation point $x \in \mathbb{X}_{\bar{\varrho}}$ and $\tau(x) = w$ follows by continuity of $\tau$. This shows surjectivity of $\tau$. Thereby, $\tau$ is a continuous invertible map on a compact metric domain and the continuity of its inverse is immediate.
\end{proof}

We want to find a map $S_b \colon \mathbb{X}_{\varrho}^b \to \mathbb{X}_{\varrho}^b$ such that $(\mathbb{X}_{\bar{\varrho}}, S)$ and $(\mathbb{X}_{\varrho}^b, S_b)$ are topologically conjugate. Due to Lemma~\ref{LEM:tau-hom}, this is achieved by defining $S_b = \tau \circ S \circ \tau^{-1}$.
If $w \in [ba^kb]$ for some $k \in \N$, we find that $S_b(w) = S^k(w)$. The only point not covered by this observation is $w' = w^{-}.ba^{\infty} = \tau(S^{-1}(x^{\star}))$. Here, we obtain $S_b(w') = \tau(x[0]) = \tau(x^{\star}) = a^{\infty}.\varrho^{\infty}(b)$. In summary,
\[
S_b(w) = 
\begin{cases}
S^{t_b(w)} (w) & \mbox{if } w \neq w^{-}.ba^{\infty},
\\ a^{\infty}.\varrho^{\infty}(b) & \mbox{if } w = w^{-}.ba^{\infty},
\end{cases}
\]
where $t_b(w) = \inf \{t \in \mathbb{N} \mid S^t(w)_0 = b \}$ is the \emph{first return time} to $[b]$ for $w \in \mathbb{X}^b_{\varrho}$. In analogy to the context of (uniformly) recurrent sequences \cite{durand2}, we call $S_b$ the \emph{first return map} and $(\mathbb{X}_{\varrho}^b, S_b)$ the corresponding \emph{induced system}. 
As long as $w \neq w^{-}.ba^{\infty}$, the intuition behind the term `first return map' is clear. We quickly motivate, how this intuition can be extended to the point $w'=w^{-}.ba^{\infty}$. Naively, we would like to write $S_b(w') = S^{t_b(w')}(w')$. However, $t_b(w') = \infty$ and so this expression is not well-defined. In a way, we need to know, what is `beyond' the infinite number of $a$'s. Apart from $a^{\Z}$, every left-eventually periodic point is of the form $a^{\infty} \varrho^{\infty}(b)$, with the marker at an arbitrary position. Therefore $\varrho^{\infty}(b)$ is the unique right-sided sequence that starts with $b$ and can be placed to the right of an infinite number of $a$'s. This is reflected by the fact that on the level of return words, both sequences $w^{-}.ba^{\infty}$ and $a^{\infty}.\varrho^{\infty}(b)$ are associated with the same orbit, built from $x^{\star}$, where the word $ba^{\infty}$ is collapsed to a single letter.
\\Recall that $\tau$ was constructed such that it fulfills $\tau(\varrho(x)) = \bar{\varrho}(\tau(x))$ for all $x \in \mathbb{X}_{\bar{\varrho}}$. Combining this with Lemma~\ref{LEM:tau-hom} and Remark~\ref{REM:Z_r-conjugation}, we obtain the following commuting diagrams.

\[\begin{tikzcd}
\Z_r \arrow[leftarrow]{r}{\p[\cdot]} \arrow[swap]{d}{+1} & \mathbb{X}_{\bar{\varrho}} \arrow{d}{S} \arrow{r}{\tau}& \mathbb{X}_{\varrho}^b \arrow{d}{S_b} \\%
\Z_r \arrow[leftarrow]{r}{\p[\cdot]}& \mathbb{X}_{\bar{\varrho}} \arrow{r}{\tau} & \mathbb{X}_{\varrho}^b
\end{tikzcd} 
\quad \quad
\begin{tikzcd}
\Z_r \arrow[leftarrow]{r}{\p[\cdot]} \arrow[swap]{d}{\times r} & \mathbb{X}_{\bar{\varrho}} \arrow{d}{\bar{\varrho}}\arrow{r}{\tau}& \mathbb{X}_{\varrho}^b \arrow{d}{\varrho} \\%
\Z_r \arrow[leftarrow]{r}{\p[\cdot]}& \mathbb{X}_{\bar{\varrho}}
\arrow{r}{\tau} & \mathbb{X}_{\varrho}^b
\end{tikzcd}  
\]

%

\begin{remark}
\label{REM:measure-relation}
 Since $\tau$ is a topological conjugation, the unique ergodic probability measure $\nu_b$ on $(\mathbb{X}_{\varrho}^b,S_b)$ is given by a pushforward under $\tau$ of the unique ergodic probability measure on $(\mathbb{X}_{\bar{\varrho}},S)$, compare Remark~\ref{REM:Z_r-conjugation}. 
The unique non-atomic ergodic measure $\nu$ on $\mathbb{X}_{\varrho}$, presented in Proposition~\ref{Prop:ergodic-measures} is $S_b$-invariant when restricted to $\mathbb{X}_{\varrho}^b$. Since $\nu(\mathbb{X}_{\varrho}^b) = \nu([b]) = 1$, the restriction of $\nu$ to $\mathbb{X}_{\varrho}^b$ is also a probability measure and therefore needs to coincide with $\nu_b$. 
\end{remark}
%

For the following subsections, we want to relate the languages of $\varrho$ and $\bar{\varrho}$ via the map $\tau$. An obvious obstacle is that $\tau$ is a priori not well-defined on $y \in \mc L_{\bar{\varrho}}$ if $\infty \triangleleft y$, compare the discussion following Definition~\ref{DEF:tau}. Therefore we define $\mc L_{\bar{\varrho}}' = \mc L_{\bar{\varrho}} \cap \mc N^{+}$, the language of all `$\infty$-free' words. Similarly, we set $\mathbb{X}_{\bar{\varrho}}' = \mathbb{X}_{\bar{\varrho}} \cap \mc N^{\Z} = \mathbb{X}_{\bar{\varrho}} \setminus \orb(x^{\star})$. It is straightforward to verify that $y \in \mc L_{\bar{\varrho}}'$ if and only if $y \triangleleft x$ for some $x \in \mathbb{X}_{\bar{\varrho}}'$. Hence, we regard $\mc L_{\bar{\varrho}}'$ as the language of the (non-compact) sequence space $\mathbb{X}_{\bar{\varrho}}'$. Note that $\tau$ restricts to a homeomorphism from $\mathbb{X}_{\bar{\varrho}}'$ to $\mathbb{X}_{\varrho} \setminus \mathbb{X}_{\varrho}^{\ep}$.

\subsection{Palindromes} 

Symmetries of sequences play an important role in determining the spectral type of the corresponding Schr\"{o}dinger operator. Two concepts that are of importance to exclude eigenvalues in the minimal setting are \emph{palindromes} and repetition properties of a subshift, compare \cite{DamanikGordon, DamanikLenz, HofKnillSimon}. In this and the following subsection, we slightly modify these concepts to be better adapted to the almost minimal setting.  

Let us start by considering local reflection symmetries. We define a reflection operator $\Refl \colon \mathcal{A}^+ \to \mathcal{A}^+$ as follows. Given a word $u = u_1 \cdots u_n $, we assign $\Refl(u) = u_n \cdots u_1$.  We say that $u$ is centered at position $c \in \mathbb{Z}$ in the sequence $x \in \mathcal{A}^{\mathbb{Z}}$ if $x_{[m,\ell]} = u$, where $c = (m + \ell)/2$. 

\begin{definition}
A word $u \in \mathcal{A}^+$ is called a \emph{palindrome} if $\Refl(u) = u$ and it is called a \emph{$b$-palindrome} if it additionally satisfies $u_1 = b$. Let $B>1$. We say that a sequence $w \in \mathcal{A}^{\mathbb{Z}}$ is \emph{$B$-strongly palindromic} with data $(P_n, \ell_n, c_n)_{n \in \mathbb{N}}$ if $(P_n)_{n \in \mathbb{N}}$ is a sequence of palindromes of length $|P_n| = \ell_n$, centered in $w$ at position $c_n > 0$, satisfying $c_n \to \infty$ as $n \to \infty$, such that 
\[
\lim_{n \to \infty} \frac{B^{c_n}}{\ell_n} = 0,
\]
that is, the length of the palindromes grows exponentially faster than their center in $w$. A sequence is said to be \emph{strongly palindromic} if it is $B$-strongly palindromic for some $B>1$. If $w \in \mathcal{A}^{\mathbb{Z}}$ and $P$ is a palindrome in $w$, centered at $c$, with length $\ell = |P|$, then we write $(P,\ell,c) \triangleleft w$, with slight abuse of notation.
\end{definition}

Note that, due to its symmetry, a $b$-palindrome also ends in the letter $b$. The point $a^{\mathbb{Z}} \in \mathbb{X}_{\varrho}$ is clearly strongly palindromic. However, this is not particularly useful since the orbit of $a^{\mathbb{Z}}$ is not dense in the subshift. For the aperiodic words in $\mathbb{X}_{\varrho}$, strong  palindromicity is a more subtle issue. Of course, every $w \in \mathbb{X}_{\varrho}$ contains the growing sequence of palindromes $a^k$, for $k \in \mathbb{N}$ but their centers are too far apart to guarantee strong palindromicity. In this aspect, our situation differs from the primitive one, where an infinite number of palindromes is sufficient to guarantee the existence of a strongly palindromic sequence with dense orbit \cite{HofKnillSimon}. 

In fact, it suffices to consider $b$-palindromes to determine whether $w \in \mathbb{X}_{\varrho} \setminus \{ a^{\mathbb{Z}} \}$ is strongly palindromic. To see this, let us begin with a preparatory result. 
For $j \in \mathbb{N}$, let $\dist_b(a^j)$ be the length of the shortest return word of $a^j$ that contains the letter $b$. In other words, $\dist_b(a^j)$ is the shortest distance between two occurrences of $a^j$ in $w \in \mathbb{X}_{\varrho} \setminus \{ a^{\mathbb{Z}} \}$ that are separated by the letter $b$. The intuition behind the next result is that large powers of the letter $a$ correspond to large letters in $\mc N$ and these occur in sequences $x \in \mathbb{X}_{\bar{\varrho}}$ always with a large separation due to the Toeplitz structure of $x$. 

\begin{lemma}
\label{LEM:a-separation}
There is a constant $C>0$ such that for every $j \in \mathbb{N}$, we have $\dist_b(a^j) \geqslant (1+ C) j$. 
\end{lemma}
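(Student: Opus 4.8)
The plan is to transport everything to the return-word sequence and exploit the sparsity of large letters in the Toeplitz structure. Given $w \in \mathbb{X}_{\varrho} \setminus \{a^{\Z}\}$, realize the relevant subword as a subword of $\tau(x)$ for some $x \in \mathbb{X}_{\bar{\varrho}}$ via Lemma~\ref{LEM:tau-hom}. A block $a^j$ necessarily sits inside a single maximal run of $a$'s, so it corresponds to a letter $x_i \geqslant j$; two occurrences of $a^j$ separated by the letter $b$ therefore correspond to two indices $i < i'$ with $x_i \geqslant j$ and $x_{i'} \geqslant j$, and the return words $\tau(x_{i+1}) \cdots \tau(x_{i'-1})$ lie strictly between the two runs. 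A routine bookkeeping of positions (first block flush right in run $i$, second block flush left in run $i'$) shows that for any such pair of occurrences, starting at $s_1 < s_2$, one has $s_2 - s_1 \geqslant j + |\tau(x_{i+1} \cdots x_{i'-1})|$. Hence it suffices to bound from below the combined length of the return words separating two letters $\geqslant j$.

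The key structural input is that letters $\geqslant j$ occupy only a sparse sublattice. Given $j$, let $m \geqslant 1$ be minimal with $f^m(k_{\max}) \geqslant j$, so that $f^{m-1}(k_{\max}) < j$. By \eqref{EQ:beta-recursion} and induction, every letter of $\beta^{(m)}$ is of the form $f^{l-1}(k_i)$ with $l \leqslant m$, hence is at most $f^{m-1}(k_{\max}) < j$. Combining Lemmas~\ref{LEM:periodic-approximants} and \ref{LEM:approximant-structure}, $x$ agrees off $U_m = r^m \Z + q_m$ with the $r^m$-periodic sequence $S^{-(q_m+1)}\bigl((\beta^{(m)}?)^{\Z}\bigr)$; thus any index $i$ with $x_i \geqslant j$ must lie in $U_m$. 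Two such indices are therefore at least $r^m$ apart, and the $r^m - 1$ return words strictly between two consecutive lattice points of $U_m$ form exactly one copy of $\beta^{(m)}$. In particular $|\tau(x_{i+1} \cdots x_{i'-1})| \geqslant |\tau(\beta^{(m)})|$, and since $\beta^{(m)}$ contains the letter $f^{m-1}(k_{\max})$ we obtain $|\tau(\beta^{(m)})| \geqslant 1 + f^{m-1}(k_{\max})$.

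It remains to convert this into the asserted linear gap. From $f(k) = k_r + pk$ we have $f^m(k_{\max}) = p\, f^{m-1}(k_{\max}) + k_r$, and since $j \leqslant f^m(k_{\max})$ this gives $f^{m-1}(k_{\max}) \geqslant (j - k_r)/p$. Hence $\dist_b(a^j) \geqslant j + f^{m-1}(k_{\max}) \geqslant j + (j - k_r)/p$, which is at least $(1 + \tfrac{1}{2p})\,j$ as soon as $j \geqslant 2k_r$. The remaining finitely many values $j \leqslant \max(k_{\max}, 2k_r)$ are covered by the trivial bound $\dist_b(a^j) \geqslant j + 1$, so taking $C$ to be the minimum of $\tfrac{1}{2p}$ and $\min_{j \leqslant \max(k_{\max}, 2k_r)} \bigl(\dist_b(a^j)/j - 1\bigr) > 0$ yields a uniform positive constant. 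The main obstacle is the structural step of the second paragraph: identifying the sparse lattice $U_m$ on which all letters $\geqslant j$ must sit and recognizing the intervening block as a full copy of $\beta^{(m)}$; once this is in place, the remaining estimates on the affine map $f$ are elementary.
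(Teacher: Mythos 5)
Your proposal is correct and follows essentially the same route as the paper's proof: pass to the return-word sequence via $\tau^{-1}$, observe that both occurrences of $a^j$ force letters $\geqslant j$ in $x$, use the Toeplitz structure to locate a letter of size at least $f^{n_j-1}(k_{\max}) \gtrsim j/p$ strictly between them, and convert back, adjusting $C$ for small $j$. Your identification of the lattice $U_m$ and the full intervening copy of $\beta^{(m)}$ is a slightly more explicit packaging of the paper's statement that $x_{[m_1+1,m_2-1]}$ contains all letters $f^{n_j-1}(k_i)$, but the substance is identical.
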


\begin{proof}
Let $w \in \mathbb{X}^b_{\varrho} \setminus{\mathbb{X}^{\ep}}$. Since $w$ contains the full language, $d_b(a^j)$ appears as the distance between two occurrences of $a^j$ in $w$, for all $j \in \mathbb{N}$. Fix $j$ and let $\ell_1< \ell \leqslant \ell_2$ be such that $w_{[\ell_1 +1, \ell_1 +j]} = w_{[\ell_2 +1, \ell_2 +j]} = a^j$, $w_\ell = b$ and $\ell_2 - \ell_1$ minimal with that property, that is, $\dist_b(a^j) = \ell_2 - \ell_1$. Define $x = \tau^{-1}(w) \in \mathbb{X}_{\bar{\varrho}}$ and note that the occurrences of $a^j$ need to be contained in different return words of $b$. Hence, there exist $m_1 < m_2$ such that the two occurrences of $a^j$ are subwords of $\tau(x_{m_1})$ and $\tau(x_{m_2})$, respectively, within the word $w = \tau(x)$. This implies that $x_{m_1}, x_{m_2} \geqslant j$. 
The idea of proof is as follows. In the Toeplitz construction we insert in each approximation step new letters which are roughly by a factor $p$ larger than in the step before. Hence, between $x_{m_1}$ and $x_{m_2}$ there needs to be a letter in $x$ which is roughly of the size $j/p$. This numerical value also reflects the length of the corresponding return word. The details follow.
Recall that $x_{m_1} \in \mathcal{N}$ and so $x_{m_1} = f^{q_1} (k_{i_1})$ for some $1 \leqslant i_1 \leqslant r-1$ and $q_1 \in \mathbb{N}_0$, and accordingly for $x_{m_2}$. 
 Recall that $k_{\max} = \max\{k_1,\ldots,k_{r-1} \} $ and let $n_j = \min \{ n \in \mathbb{N}_0 \mid f^{n}(k_{\max}) \geqslant j \}$, implying that $q_1, q_2 \geqslant n_j$. Assume that $j$ is large enough to ensure $n_j \geqslant 1$. 
By the generalized Toeplitz structure of $x$, this shows that the word $x_{[m_1 +1, m_2-1]}$ contains all of the letters $f^{n_j - 1}(k_i)$, for $1 \leqslant i \leqslant r-1$. In particular,
\[
\tau (f^{n_j -1}(k_{\max})) \triangleleft \tau(x_{[m_1+1,m_2-1]}) 
\triangleleft w_{[\ell_1+j +1,\ell_2]},
\]
which yields $\ell_2 - \ell_1 - j \geqslant |\tau (f^{n_j -1}(k_{\max}))| = f^{n_j -1}(k_{\max}) + 1$. Recall that $f(m) = p m + k_r \leqslant (p+1) m$, for all $m \geqslant k_r$. Assuming that $j \in \mathbb{N}$ is large enough to ensure $f^{n_j -1}(k_{\max}) \geqslant k_r$, we obtain $d_b(a^j) -j \geqslant f^{n_j -1}(k_{\max}) \geqslant f^{n_j}(k_{\max})/(p+1) \geqslant j/(p+1)$, which gives the result for $C = 1/(p+1)$. We adjust the constant $C>0$ to extend the result to all $j \in \mathbb{N}$.
\end{proof}

Suppose that $u \in \mc A^+$ is a long palindrome which contains the letter $b$. Lemma~\ref{LEM:a-separation} shows that we can restrict $u$ to a $b$-palindrome without changing its length by more than a given factor. Hence, we can restrict our attention to $b$-palindromes in our quest for strongly palindromic sequences.

\begin{lemma}
\label{LEM:b-strong-palindromes}
Suppose $w \in \mathbb{X}_{\varrho} \setminus \{ a^{\mathbb{Z}} \}$ is $B$-strongly palindromic with data $(P_n, \ell_n, c_n)_{n \in \mathbb{N}}$. Then, there exists a sequence of $b$-palindromes $(P'_n)_{n \in \mathbb{N}}$ with length $|P'_n| = \ell'_n$ and center $c_n$ such that 
$
\lim_{n \to \infty} \frac{B^{c_n}}{\ell'_n} = 0.
$
\end{lemma}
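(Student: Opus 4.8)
The plan is to obtain each $b$-palindrome $P'_n$ from $P_n$ by stripping off the maximal $a$-blocks at its two ends, and to control the length lost in this operation with Lemma~\ref{LEM:a-separation}. The only delicate point is to rule out that $P_n$ is a pure power of $a$, for then no such stripping yields a $b$-palindrome at all.

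First I would show that $P_n$ contains the letter $b$ for all large $n$. Since $w \neq a^{\Z}$, fix a position $i_0$ with $w_{i_0} = b$. The palindrome $P_n$ occupies the window $[\,c_n - (\ell_n-1)/2,\ c_n + (\ell_n-1)/2\,]$. From $c_n \to \infty$ and $B^{c_n}/\ell_n \to 0$ one has $\ell_n \geq B^{c_n}$ for large $n$; as $B>1$, the term $B^{c_n}$ dominates $c_n$, so the right endpoint of this window tends to $+\infty$ while the left endpoint $c_n - (\ell_n-1)/2$ tends to $-\infty$. Hence for all sufficiently large $n$ the window contains $i_0$, so $P_n$ contains a $b$; the finitely many remaining indices do not affect the claimed limit and may be discarded.

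For such $n$ I would write $P_n = a^{t} Q a^{t}$, where $a^{t}$ is the maximal power of $a$ occurring as a prefix of $P_n$. By the symmetry $\Refl(P_n) = P_n$ the suffix of $P_n$ is the same power $a^{t}$, the central block obeys $\Refl(Q) = Q$, and $Q$ begins and ends with $b$. Thus $P'_n := Q$ is a $b$-palindrome with the same center $c_n$ and length $\ell'_n := \ell_n - 2t$. To bound $t$ (the case $t=0$ being immediate), I observe that the two end-blocks are occurrences of $a^{t}$ in $w$ separated by $Q$, which contains $b$, and a direct index count gives their distance as $\ell_n - t = t + \ell'_n$. Therefore $\dist_b(a^{t}) \leq t + \ell'_n$, while Lemma~\ref{LEM:a-separation} gives $\dist_b(a^{t}) \geq (1+C)\, t$. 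Combining the two yields $C t \leq \ell'_n$, so that
\[
\ell_n = 2t + \ell'_n \leq \frac{C+2}{C}\, \ell'_n ,
\]
and consequently $B^{c_n}/\ell'_n \leq \tfrac{C+2}{C}\, B^{c_n}/\ell_n \to 0$, as required.

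The main obstacle is the first step, and I expect it to be resolved cleanly as above, without any statement about the lengths of maximal $a$-runs in $w$: the exponential growth of $\ell_n$ relative to $c_n$ forces the two-sided support of $P_n$ to exhaust $\Z$, so it cannot avoid a fixed occurrence of $b$. The same mechanism makes the length bound automatic in the degenerate direction, since a palindrome of the form $a^{t} b a^{t}$ would force $t \leq 1/C$; hence arbitrarily long palindromes necessarily carry a large central $b$-block $Q$, which is exactly what the estimate quantifies.
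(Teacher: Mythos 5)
Your proposal is correct and follows essentially the same route as the paper: after ensuring $P_n$ contains a $b$ (the paper normalizes $w_0=b$ and uses $\ell_n/2>c_n$ to see that $P_n$ covers the origin, while you fix a $b$-position $i_0$ and let the window exhaust it — the same mechanism), both proofs write $P_n=a^tQa^t$ and invoke Lemma~\ref{LEM:a-separation} to get $Ct\leqslant\ell'_n$ and hence $\ell_n\leqslant\frac{C+2}{C}\ell'_n$. No gaps.
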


\begin{proof}
Note that $S^k w \in [b]$ for some $k \in \mathbb{Z}$ and since a shift only affects $c_n$ by a constant, this does not alter the defining relation for strong palindromicity. Thus, we suppose $w_0 = b$ without loss of generality. Possibly removing a finite number of entries from $(P_n)_{n \in \mathbb{N}}$, we can assume that $\ell_n \geqslant B^{c_n} > 2 c_n$ and thus $\ell_n /2 > c_n$. This implies that $P_n$ contains the $0$-position of $w$ and therefore $b \triangleleft P_n$. By the reflection symmetry, $P_n$ is of the form $a^j P_n' a^j$, for some $j \in \mathbb{N}_0$ and a $b$-palindrome $P_n' \in \mathcal{A}^+$. If $j = 0$, it is $\ell_n = \ell_n'$ and we are done. Thus, let us assume $j \in \mathbb{N}$. Since $P_n'$ contains the letter $b$, we know due to Lemma~\ref{LEM:a-separation} that $|a^j P_n'| \geqslant j + Cj$ and thus $\ell_n' = |P_n'| \geqslant Cj$ for some $C>0$. Combining this with $2j = \ell_n - \ell'_n$, we obtain
$
\ell_n' \geqslant  \ell_n C/(2+C),
$
which implies the result.
\end{proof}

An immediate consequence of this result is that none of the eventually periodic points (apart from $a^{\mathbb{Z}}$) can be $B$-strongly palindromic for any $B>1$. Regarding return words, this means that we can restrict our attention to $\mathbb{X}_{\bar{\varrho}}' = \tau^{-1}(\mathbb{X}_{\varrho}^b \setminus \mathbb{X}_{\varrho}^{\ep} )$.
\\The $b$-palindromes are exactly the words $u \in \mathcal{L}_{\varrho}$ of the form  $u = ba^{j_1}ba^{j_2} \cdots b a^{j_m} b$ with $j_i \in \mc N$ for all $1 \leqslant i \leqslant m$ and $j_1 j_2 \cdots j_m \in \mathcal{L}_{\bar{\varrho}}'$ palindromic. Hence, $u \in \mathcal{L}_{\varrho}$ is a $b$-palindrome (of length $\geqslant 2$) if and only if it is of the form $u = \tau(y)b$ for a palindrome $y \in \mathcal{L}_{\bar{\varrho}}'$. 
That is, the $b$-palindromes of length $\geqslant 2$ in $\mathcal{L}_{\varrho}$ are in one-to-one correspondence to palindromes in $\mathcal{L}_{\bar{\varrho}}'$. 

\begin{lemma}
\label{LEM:sp-only-for-large-pal}
Suppose $\mathbb{X}_{\varrho} \setminus \{a^{\Z} \}$ contains a strongly-palindromic sequence. Then, $\mc L_{\bar{\varrho}}'$ contains arbitrarily long palindromes.
\end{lemma}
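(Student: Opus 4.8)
The plan is to reduce the whole statement to a rigidity property of the Toeplitz sequences $x \in \mathbb{X}_{\bar{\varrho}}'$, namely that \emph{large letters repeat only across large index-gaps}, and then to play this off against the defining inequality of strong palindromicity. First I would record that, by Lemma~\ref{LEM:b-strong-palindromes} and the remark following it, no eventually periodic point other than $a^{\Z}$ can be strongly palindromic; hence the given strongly palindromic $w$ lies in $\mathbb{X}_{\varrho}\setminus\mathbb{X}_{\varrho}^{\ep}$. After a shift I may assume $w\in[b]$ with $w_0=b$, so that $x:=\tau^{-1}(w)\in\mathbb{X}_{\bar{\varrho}}'$. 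Applying Lemma~\ref{LEM:b-strong-palindromes} yields $b$-palindromes $(P_n',\ell_n',c_n)_{n\in\N}$ with $c_n\to\infty$, $\ell_n'\to\infty$ and $B^{c_n}/\ell_n'\to 0$; for $n$ large enough that $\ell_n'\geqslant 2$, the one-to-one correspondence between $b$-palindromes and palindromes in $\mc L_{\bar{\varrho}}'$ provides palindromes $y_n\in\mc L_{\bar{\varrho}}'$ with $P_n'=\tau(y_n)b$, and each $y_n$ occurs in the single sequence $x$. It then suffices to prove $|y_n|\to\infty$.

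The crucial structural input is a \emph{separation estimate} for $x$, sharpening Lemma~\ref{LEM:a-separation} to the level of return words. For $V\in\mc N$ let $n_0(V)$ be the least $n$ with $V\in\{f^n(k_1),\ldots,f^n(k_{r-1})\}$. I would show that, for every $x\in\mathbb{X}_{\bar{\varrho}}'$, all occurrences of the letter $V$ lie in the single lattice $U_{n_0(V)}=r^{n_0(V)}\Z+q_{n_0(V)}$ of Lemma~\ref{LEM:periodic-approximants}: a position receives a letter $V=f^n(k_i)$ exactly when it is filled at level $n+1$, which forces it to have been undetermined at level $n_0(V)$, i.e.\ to lie in $U_{n_0(V)}$, since the lattices $U_n$ are nested by $q_{n+1}=q_n-p_{n+1}r^{n}$. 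Consequently any two occurrences of $V$ in $x$ are at index-distance at least $r^{n_0(V)}$. As $V=f^{n_0(V)}(k_i)\leqslant f^{n_0(V)}(k_{\max})$ and $f^n(k_{\max})$ is strictly increasing to $\infty$ (exponentially if $p>1$, and linearly with rate $k_r\geqslant 1$ in the degenerate case $p=1$), we get $n_0(V)\to\infty$ as $V\to\infty$; hence large letters of $x$ repeat only across arbitrarily large index-gaps.

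I would then conclude by contradiction. Suppose $\mc L_{\bar{\varrho}}'$ contained palindromes only up to some length $M$, so that $|y_n|\leqslant M$ for all $n$. By the separation estimate, any value occurring at least twice within $y_n$ satisfies $r^{n_0}\leqslant M-1$ and is therefore bounded by a constant $V_{\max}(M)$. In the palindrome $y_n=j_1\cdots j_{m_n}$ every off-center letter coincides with its mirror image at a distinct position and thus occurs at least twice, so all off-center letters are $\leqslant V_{\max}(M)$; only a central letter, present exactly when $m_n$ is odd, is unconstrained. Since $\sum_i j_i=\ell_n'-m_n-1\to\infty$ while the off-center contribution is at most $(M-1)V_{\max}(M)$, for large $n$ the length $m_n$ must be odd and the central letter $J_n\to\infty$ dominates, giving $\ell_n'=O(J_n)$. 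Finally, the central block $a^{J_n}$ of $P_n'$ is centered exactly at $c_n$, so it occupies an interval of length $J_n$ around $c_n$; as $w_0=b$ lies outside this all-$a$ block we obtain $c_n\geqslant J_n/2$. Then $B^{c_n}/\ell_n'\geqslant B^{J_n/2}/O(J_n)\to\infty$, contradicting $B^{c_n}/\ell_n'\to 0$. Hence $|y_n|\to\infty$ and $\mc L_{\bar{\varrho}}'$ contains arbitrarily long palindromes.

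I expect the separation estimate of the second paragraph to be the main obstacle. Morally it is the return-word refinement of Lemma~\ref{LEM:a-separation}, but that lemma only bounds distances in $w$ and is too weak here: long $a$-runs sitting between two occurrences of a value would satisfy it vacuously, so one cannot deduce that a repeated value of $y_n$ is \emph{bounded} from it. What is genuinely needed is the \emph{index}-distance bound $r^{n_0(V)}$ extracted from the nested-lattice description of the generalized Toeplitz sequences, together with the uniform growth $n_0(V)\to\infty$; both the careful bookkeeping of which level fills a given position and the verification of this growth across all parameter regimes (notably $p=1$, $k_r\geqslant 1$) will require some care.
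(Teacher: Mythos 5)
Your proposal is correct, but it takes a substantially longer route than the paper, which disposes of the lemma with a one-line restriction argument: after shifting so that $w_0=b$, strong palindromicity gives $\ell_n\geqslant B^{c_n}>2c_n$ eventually, so the palindrome $P_n$ covers the window $[0,2c_n]$ symmetric about its center $c_n$; the restriction $w_{[0,2c_n]}$ is then itself a $b$-palindrome, hence equals $\tau(x_{[0,s_n]})b$ for a palindrome $x_{[0,s_n]}\in\mc L_{\bar{\varrho}}'$, and $s_n\to\infty$ because each window ends in a $b$ and the windows grow with $c_n$. Note that this only uses the crude consequence ``$\ell_n$ eventually exceeds $2c_n$'' of the defining limit, and never needs to know anything about which letters of $\overline{\mc N}$ can repeat. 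Your argument instead proves the contrapositive by a genuine structural input: the separation estimate that all occurrences of a letter $V$ lie in the lattice $U_{n_0(V)}$, so that a bounded-length palindrome in $\mc L_{\bar{\varrho}}'$ can only account for unbounded $\ell_n'$ through a single huge central letter, which in turn forces $c_n\gtrsim J_n/2$ and contradicts $B^{c_n}/\ell_n'\to 0$. I checked the pieces and they hold (the nestedness $U_{n+1}\subset U_n$ gives the index-gap $r^{n_0(V)}$; the type-$0$ exceptional position only ever carries the letter $0$, for which the bound is vacuous; and the center computation matches Lemma~\ref{LEM:c'-c-relation}). What your approach buys is a quantitative rigidity statement -- short palindromes in the return-word language are incompatible with strong palindromicity for structural reasons -- which is morally the mechanism behind Lemma~\ref{LEM:a-separation}, Lemma~\ref{LEM:palindromic-length} and the later classification in Theorem~\ref{THM:PALINDROMES}; but for this particular lemma it is overkill, and you could have avoided the entire contradiction by simply truncating each $P_n$ symmetrically about its center down to the origin.
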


\begin{proof}
Assume $w \in \mathbb{X}_{\varrho} \setminus \{a^{\Z} \}$ is a $B$-strongly palindromic sequence for some $B>1$ with data $(P_n,\ell_n,c_n)$. Then, $w$ is not eventually periodic and hence $x = \tau^{-1}(w) \in \mathbb{X}_{\bar{\varrho}}'$ does not contain the letter `$\infty$'. Without loss of generality, we assume that $w_0 = b$ and that $w_{[0,2 c_n]} \triangleleft P_n$ for all $n \in \N$. Then, $w_{[0, 2 c_n]}$ is a $b$-palindrome and therefore $w_{[0,2c_n]} = \tau(x_{[0,s_n]})b$ for some $s_n \in \N_0$ and $x_{[0,s_n]}$ is a palindrome in $\mc L_{\bar{\varrho}}'$. Clearly, $c_n > c_m$ implies $s_n > s_m$ and therefore $\lim_{n \to \infty} c_n = \infty$ yields that $\lim_{n \to \infty} s_n = \infty$. 
\end{proof}

Hence, there is only hope to find strongly-palindromic sequences in $\mathbb{X}_{\varrho} \setminus \{a^{\Z} \}$ if there are arbitrarily long palindromes in $\mc L_{\bar{\varrho}}'$. The following gives a sufficient criterion.

\begin{lemma}
\label{LEM:inflation-palindromes}
Suppose $k_1 \cdots k_{r-1}$ is a palindrome. Then, $\beta^{(n)}$ is a palindrome of length $r^n -1$ for every $n \in \mathbb{N}$.
\end{lemma}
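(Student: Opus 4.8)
The plan is to prove the statement by induction on $n$, driving everything off the recursion relation \eqref{EQ:beta-recursion} for $\beta^{(n)}$. The length $|\beta^{(n)}| = r^n - 1$ has already been recorded above, so I only need to verify the palindrome property $\Refl(\beta^{(n)}) = \beta^{(n)}$; it is also a quick consistency check that the recursion gives $|\beta^{(n+1)}| = r\,|\beta^{(n)}| + (r-1) = r^{n+1} - 1$. For the base case $n = 1$, the periodic block is $\beta^{(1)} = k_1 \cdots k_{r-1}$, which is a palindrome precisely by the hypothesis of the lemma and has length $r - 1$.

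For the inductive step, suppose $\Refl(\beta^{(n)}) = \beta^{(n)}$. By \eqref{EQ:beta-recursion}, the word $\beta^{(n+1)}$ is the concatenation of the blocks $\beta^{(n)}, f^n(k_1), \beta^{(n)}, f^n(k_2), \ldots, \beta^{(n)}, f^n(k_{r-1}), \beta^{(n)}$ in this order. Since $\Refl$ reverses the order of the factors of a concatenation while reflecting each factor individually, and since every $f^n(k_i) \in \overline{\mc N}$ is a single letter (hence fixed by $\Refl$), I obtain
\[
\Refl(\beta^{(n+1)}) = \Refl(\beta^{(n)})\, f^n(k_{r-1})\, \Refl(\beta^{(n)}) \cdots f^n(k_1)\, \Refl(\beta^{(n)}) = \beta^{(n)}\, f^n(k_{r-1})\, \beta^{(n)} \cdots f^n(k_1)\, \beta^{(n)},
\]
where the second equality uses the induction hypothesis.

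It remains to match this with $\beta^{(n+1)} = \beta^{(n)}\, f^n(k_1)\, \beta^{(n)} \cdots f^n(k_{r-1})\, \beta^{(n)}$. Because the separating copies of $\beta^{(n)}$ occupy the same slots in both words, the two coincide exactly when the letter sequence $f^n(k_1), \ldots, f^n(k_{r-1})$ equals its reverse $f^n(k_{r-1}), \ldots, f^n(k_1)$. The hypothesis that $k_1 \cdots k_{r-1}$ is a palindrome means $k_i = k_{r-i}$ for all $1 \leqslant i \leqslant r - 1$, and applying the map $f^n$ (which is well-defined on $\overline{\mc N}$) to these equalities yields $f^n(k_i) = f^n(k_{r-i})$. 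Hence the two index sequences agree, so $\Refl(\beta^{(n+1)}) = \beta^{(n+1)}$ and the induction closes. The only point that requires any attention — and it is mild — is precisely this observation that applying $f^n$ letterwise preserves the palindromic symmetry of the index word $k_1 \cdots k_{r-1}$; everything else is the routine behaviour of $\Refl$ under concatenation.
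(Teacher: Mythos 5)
Your proof is correct and is exactly the induction via the recursion \eqref{EQ:beta-recursion} that the paper has in mind (it states only that the lemma ``follows easily by induction'' and leaves the details to the reader). The base case, the slot-matching of the $\beta^{(n)}$ blocks under $\Refl$, and the observation that $k_i = k_{r-i}$ forces $f^n(k_i) = f^n(k_{r-i})$ are all handled properly.
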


This follows easily by induction and is left to the reader. In fact, $k_1 \cdots k_{r-1}$ being palindromic gives an important characterization.

\begin{prop}
\label{PROP:b-palindromes}
The following are equivalent.
\begin{enumerate}
\item $k_1 \cdots k_{r-1}$ is a palindrome. 
\item $\mc L_{\bar{\varrho}}'$ contains arbitrarily long palindromes.
\item $\mc L_{\bar{\varrho}}$ contains arbitrarily long palindromes.
\item $\mathbb{X}_{\bar{\varrho}}$ contains uncountably many $A$-strongly palindromic sequences for all $A>1$.
\item $\mathbb{X}_{\bar{\varrho}}'$ contains uncountably many $A$-strongly palindromic sequences for all $A>1$.
\end{enumerate}
\end{prop}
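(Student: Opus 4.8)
The plan is to establish the five statements as equivalent through the cycle $(1)\Rightarrow(2)\Rightarrow(3)\Rightarrow(1)$ together with $(1)\Rightarrow(5)\Rightarrow(4)\Rightarrow(3)$. Several links are immediate. $(1)\Rightarrow(2)$ is exactly Lemma~\ref{LEM:inflation-palindromes}, since the words $\beta^{(n)}$ are $\infty$-free palindromes of length $r^n-1\to\infty$ lying in $\mc L_{\bar{\varrho}}'$; $(2)\Rightarrow(3)$ is the inclusion $\mc L_{\bar{\varrho}}'\subset\mc L_{\bar{\varrho}}$; $(5)\Rightarrow(4)$ is the inclusion $\mathbb{X}_{\bar{\varrho}}'\subset\mathbb{X}_{\bar{\varrho}}$; and $(4)\Rightarrow(3)$ holds because any $A$-strongly palindromic sequence carries palindromes $P_n$ whose lengths $\ell_n\to\infty$ (as $c_n\to\infty$ forces $A^{c_n}\to\infty$ while $A^{c_n}/\ell_n\to0$), each of which is a subword of a sequence in $\mathbb{X}_{\bar{\varrho}}$, hence lies in $\mc L_{\bar{\varrho}}$. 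Thus the genuine content is $(3)\Rightarrow(1)$ and $(1)\Rightarrow(5)$, and I expect $(3)\Rightarrow(1)$ to be the main obstacle.

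For $(3)\Rightarrow(1)$ I would exploit the layered Toeplitz structure from Lemmas~\ref{LEM:periodic-approximants} and~\ref{LEM:approximant-structure}. Fix a long palindrome $y\in\mc L_{\bar{\varrho}}$, realise it as $y=x_{[s,s+L]}$ for some $x=x[\p]\in\mathbb{X}_{\bar{\varrho}}$, and let $c$ be its center. The crucial observation is that any letter of value $>k_{\max}$ can occur only on the period-$r$ lattice $U_1=r\Z+q_1$: by Lemma~\ref{LEM:periodic-approximants} the positions outside $U_1$ carry the level-one letters $k_1,\dots,k_{r-1}$, all $\le k_{\max}$, whereas all deeper letters (and the possible $\infty$) sit on $U_1$. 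Since $f(k_{\max})>k_{\max}$ occurs with bounded gaps (on a sublattice of period $r^2$), a sufficiently long $y$ contains a position $i_0$ with letter $>k_{\max}$; as both $i_0$ and its mirror $2c-i_0$ must then lie on $U_1$, we obtain $2c\equiv 2q_1\pmod r$. Consequently the reflection $i\mapsto 2c-i$ preserves $U_1$ and its complement. On the complement the letters between two consecutive holes read exactly $k_1k_2\cdots k_{r-1}$ in order; applying the reflection to the block immediately to the right of a hole $h$ near $c$, whose image is the block immediately to the left of the hole $h'=2c-h$ read in reverse, yields $k_t=k_{r-t}$ for all $t$, i.e.\ $k_1\cdots k_{r-1}$ is a palindrome. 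The delicate points are confining the large letters to $U_1$ (the type-$0$ phenomenon, where $0$ recurs at all depths, does not interfere since it contributes only small letters) and checking that a hole together with a full adjacent block and its mirror image all lie inside the support $[s,s+L]$, which holds once $L$ is large.

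For $(1)\Rightarrow(5)$ I would combine Lemma~\ref{LEM:inflation-palindromes} with the minimality of $\mathbb{X}_{\bar{\varrho}}$ (established via the primitivity of $\bar{\varrho}$ in the sense of \cite{durand2}) and run the Hof--Knill--Simon scheme \cite{HofKnillSimon}. Under $(1)$ the palindromes $\beta^{(n)}$ have unbounded length, so arbitrarily long palindromes occur. For fixed $A>1$ and $N\in\N$, the set of $x\in\mathbb{X}_{\bar{\varrho}}$ admitting a palindrome centered at some $c>N$ of length $>A^c$ is open; it is dense because, by repetitivity, any prescribed legal word can be forced to appear inside a long palindrome $\beta^{(n)}$ at an offset within $\log_A|\beta^{(n)}|$ of its center, which lets us position $\beta^{(n)}$ in $x$ with the required center and length while matching a prescribed finite pattern near the origin. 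The intersection over $N$ is a dense $G_\delta$, hence comeager and, as $\mathbb{X}_{\bar{\varrho}}$ is a perfect compact metric space, uncountable; it consists of $A$-strongly palindromic sequences. Finally $\orb(x^\star)$ is countable, hence meager, so comeagerly many of these sequences lie in $\mathbb{X}_{\bar{\varrho}}'=\mathbb{X}_{\bar{\varrho}}\setminus\orb(x^\star)$, giving uncountably many $A$-strongly palindromic sequences in $\mathbb{X}_{\bar{\varrho}}'$ for every $A>1$.

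The only adaptation needed relative to the classical formulation is that $\overline{\mc N}$ is a compact rather than a finite alphabet; this affects none of the topological ingredients (perfectness, the Baire category argument, and the bounded gaps coming from minimality), which is why I treat $(3)\Rightarrow(1)$, rather than the construction in $(1)\Rightarrow(5)$, as the step requiring the most care.
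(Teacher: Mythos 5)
Your proposal is correct, and for the one substantive implication, $(3)\Rightarrow(1)$, it is essentially the argument the paper gives: both proofs confine the letters of value $>k_{\max}$ (in particular $f(k_{\max})$, which recurs with gap $r^2$) to the period-$r$ lattice $U_1$, use the mirror symmetry of a long palindrome to force that lattice to be reflection-invariant (your congruence $2c\equiv 2q_1\pmod r$ is the paper's conclusion $\ell_1=\ell_2$ in different clothing), and then read off $k_t=k_{r-t}$ from the reflected inter-hole blocks; your aside that type-$0$ letters do not interfere is exactly the reason the paper works with the specific letter $f(k_{\max})>k_{\max}$. The only genuine divergence is in producing uncountably many $A$-strong palindromes from $(1)$: the paper proves $(2)\Rightarrow(4)$ by the explicit nesting construction of \cite[Prop.~2.1]{HofKnillSimon} (uniform recurrence of each $\beta^{(n)}$ lets one nest palindromes with binary freedom at each stage) and then gets $(5)$ from $(4)$ by discarding the countable orbit of $x^{\star}$, whereas you run a Baire-category argument and extract uncountability from comeagerness in a perfect compact metric space. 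Both work; the constructive route gives you the strongly palindromic points in hand, while your category argument is shorter but requires the small additional check that ``admits a palindrome centered at $c>N$ of length $>A^{c}$'' is open in $\mathbb{X}_{\bar{\varrho}}$ despite $\overline{\mc N}$ being non-discrete --- this does hold, but only because points of $\mathbb{X}_{\bar{\varrho}}$ contain at most one occurrence of $\infty$, so a palindromic subword can carry $\infty$ only at its center, where the palindrome condition is vacuous; it would be worth spelling that out rather than asserting that compactness of the alphabet ``affects none of the topological ingredients.''
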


\begin{proof}
$(1) \Rightarrow (2)$ is due to Lemma~\ref{LEM:inflation-palindromes} and $(2) \Rightarrow (3)$ follows by the inclusion $\mc L_{\bar{\varrho}}' \subset \mc L_{\bar{\varrho}}$.
\\ $(3) \Rightarrow (1)$: Let $x \in \mathbb{X}_{\bar{\varrho}}$ and  $\p[x] = (p_n)_{n \in \N}$. Up to a finite shift in $x$, we can assume that $p_1 = r - 1$, which means that $x_{[jr +1, jr + r-1]} = k_1 \cdots k_{r-1}$ for all $j \in \mathbb{Z}$, and for all $m \in r\mathbb{Z}$ we have $x_m = f^\ell(k_i)$, for some $\ell \in \mathbb{N}$ and $1 \leqslant i \leqslant r-1 $. Every finite subword with length larger than $r$ in $x$ is of the form $u = x_{[j_1 r - \ell_1, j_2 r +\ell_2]}$ with $j_1,j_2 \in \mathbb{Z}$, $j_1 \leqslant j_2$ and $0 \leqslant \ell_i \leqslant r-1$ for $i \in \{1,2\}$. Our strategy is to show that if $u$ is large enough and palindromic, then this enforces that $\ell_1 = \ell_2$, implying that $k_1 \cdots k_{r-1}$ is a palindrome. The details follow. 
Due to the generalized Toeplitz-structure of $x$, every subword of $x$ of length $ \geqslant r^2$ contains at least one occurence of the letter $f(k_{\max}) > k_{\max}$ . In particular, $f(k_{\max})$ is not contained in $\{k_1, \cdots , k_{r-1} \}$ and must occur at a position $q \in r\mathbb{Z}$. Suppose  $|u| > 2 r^2$ and $\ell_1 \neq \ell_2$ such that $\delta = \ell_2 - \ell_1$ satisfies $0 < |\delta| \leqslant r-1$. By the reflection symmetry, we have $x_{(j_1+n)r} = x_{(j_2 - n) r + \delta}$ for all $0 \leqslant n \leqslant (j_2 - j_1)/2$. Since the left half of $u$ is larger than $r^2$, we must have $x_{(j_1+n)r} \notin \{k_1, \ldots, k_{r-1}\}$ for one such $n$. One the other hand, $x_{(j_2 - n) r + \delta} \in \{k_1, \ldots, k_{r-1} \}$ for all $n \in \mathbb{N}$, a contradiction. Thereby, $\ell_2 = \ell_1$ and we obtain 
\[
k_1 \cdots k_{r-1} = x_{[j_1 r +1, j_1 r + r-1]} = \Refl(x_{[(j_2 r - r +1, j_2 r -1]}) = k_{r -1} \cdots k_1,
\]
which proves that $k_1 \cdots k_{r-1}$ is a palindrome.
\\$(4) \Rightarrow (3)$ follows from the definition of $\mc L_{\bar{\varrho}}$. We show $(2) \Rightarrow (4)$. Let $y \in \mc L_{\bar{\varrho}}'$. Then, $y \triangleleft x$ for some $x \in \mathbb{X}_{\bar{\varrho}}'$ and by the Toeplitz structure of $x$ there is some $n \in \N$ such that $y \triangleleft \beta^{(n)}$. Hence, every word in $\mc L_{\bar{\varrho}}'$ of length at least $2r^n$ contains $y$ as a subword. This uniform recurrence property ensures that large palindromes in $\mc L_{\bar{\varrho}}'$ can be nested into each other in a way that ensures strong palindromicity and there exists enough freedom in the construction to obtain an uncountable number of examples. For details, compare the proof of \cite[Prop.~2.1]{HofKnillSimon}, which carries over to our situation verbatim.
\\Finally, $(4) \Leftrightarrow (5)$ is clear because $\mathbb{X}_{\bar{\varrho}} \setminus \mathbb{X}_{\bar{\varrho}}'=\orb(x^{\star})$ is just a countable set.
\end{proof}

Combining Lemma~\ref{LEM:sp-only-for-large-pal} and Proposition~\ref{PROP:b-palindromes}, we see that existence of non-trivial strong palindromes in $\mathbb{X}_{\varrho}$ requires that $k_1 \cdots k_{r-1}$ is a palindrome. In Theorem~\ref{THM:PALINDROMES}, we will see that this condition is not sufficient in general.
\\ For the remainder of this section we assume that $k_1 \cdots k_{r-1}$ is palindromic, if not explicitly stated otherwise. 
Under this assumption, comparing Lemma~\ref{LEM:approximant-structure} and Lemma~\ref{LEM:inflation-palindromes} yields that each $x^{(n)}$ is a periodic repetition of the palindrome $\beta^{(n)}$ followed by an undetermined letter. Thus, it is intuitive that a palindrome in $w$ has a better chance of being long if it is adapted to the structure of $x^{(n)}$ for large $n\in \mathbb{N}$. In the following, we will make this idea more precise.

\begin{definition}
Given a sequence $x \in \mathcal{B}^{\mathbb{Z}}$, the \emph{reflection of $x$ in $c$} for $c \in \mathbb{Z}/2$ is given by $\Refl_c(x)$, where $(\Refl_c(x))_{j} = x_{2c-j} $, for all $j \in \mathbb{N}$. 
We call $x$ \emph{reflection symmetric in $c$} for $c \in \mathbb{Z}/2$ if $x = \Refl_c(x)$.
Given  $x \in \mathbb{X}_{\bar{\varrho}}$ with approximants $x^{(n)}$ and a position $c\in \mathbb{Z}/2$, we call
\[
n_c = n_c(x) = \sup \{ m \in \mathbb{N} \mid x^{(m)} \mbox{ is reflection symmetric in } c \}
\]
the reflection-level of $c$. We set $n_c = 0$ if none of the $x^{(m)}$ is reflection symmetric in $c$.
\end{definition}

By the requirement that $\Refl_c$ maps undetermined letters onto each other, it is straightforward to check that $x^{(n)}$ is reflection symmetric in $c \in \mathbb{Z}/2$ if and only if $c \in U_n \cup (U_n + r^n/2)$. By construction, $U_{n+1} \subset U_{n}$, and we have the following cases
\begin{itemize}
\item If $c \in U_n$, then $c \in U_m$, for all $m \leqslant n$.
\item If $c \in U_n + r^n/2$, then
\begin{itemize}
\item[$\star$] $c \in U_m$, for all $m< n$ if $r$ is even.
\item[$\star$] $c \in U_m + r^m/2$, for all $m < n$ if $r$ is odd.
\end{itemize}
\end{itemize}
For the last case we have used that $U_m$ is $r^m$-periodic.
In any case, the assumption that $x^{(n)}$ is reflection symmetric in $c$ implies that $x^{(m)}$ is also reflection symmetric in $c$, for all $m \leqslant n \in \mathbb{N}$. In particular, $x^{(n)}$ is reflection symmetric in $c$ for all $n \leqslant n_c$. 
\\In the quest for strongly palindromic sequences we will be confronted with the inverse of this problem. Given $c \in \mathbb{N}/2$, and $n \in \mathbb{N}$ such that $x^{(n)}$ is reflection symmetric in $c$, how to ensure that $x^{(n+1)}$ is reflection symmetric in $c$, as well? This is of course determined by the choice of $p_{n+1}$ which accounts for the difference in $x^{(n)}$ and $x^{(n+1)}$. It turns out that we can pin down $p_{n+1}$ specifically as soon as $n \in \mathbb{N}$ is large enough. We consider the two cases $c \in U_n$ and $c \in U_n + r^n/2$ separately. As we have already seen in the discussion above, the parity of $r$ plays an important role. If $r$ is even, $c \in U_n$ is compatible with $c \in U_{n+1}$ or $c \in U_{n+1} + r^{n+1}/2$, whereas $c \in U_n + r^n/2$ makes it impossible that $x^{(n+1)}$ is reflection symmetric in $c$. If $r$ is odd, $c \in U_n$ can be combined with $c \in U_{n+1}$, and $c \in U_{n} + r^n/2$ is possible with $U_{n+1} + r^{n+1}/2$, whereas $U_n \cap ( U_{n+1} + r^{n+1}/2 ) = \varnothing =(U_n + r^n/2) \cap U_{n+1}$.

\begin{lemma}
\label{LEM:c-in-N-undetermined}
Let $n \in \mathbb{N}$ and $c \in U_n$ with $0 \leqslant c < r^n$. Then, $c \in U_{n+1}$ if and only if $p_{n+1} = r-1$. 
\end{lemma}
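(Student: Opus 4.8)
The plan is to reduce everything to the explicit description of the undetermined lattices $U_n$ given by Lemma~\ref{LEM:periodic-approximants} and then carry out a short modular-arithmetic computation. Recall that $U_n = r^n \mathbb{Z} + q_n$ with $q_n = -1 - \sum_{m=1}^n p_m r^{m-1}$, and analogously $U_{n+1} = r^{n+1}\mathbb{Z} + q_{n+1}$. The first step is to record the recursion linking the two offsets: directly from the formula one has $q_{n+1} = q_n - p_{n+1} r^n$, so $U_{n+1}$ is obtained from $U_n$ by refining the period from $r^n$ to $r^{n+1}$ and shifting the base point by $-p_{n+1} r^n$.

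Next I would pin down the specific point $c$ under consideration. Writing $P_n = \sum_{m=1}^n p_m r^{m-1}$ and using $0 \leq p_m \leq r-1$, one has $0 \leq P_n \leq r^n - 1$, hence $q_n \in [-r^n, -1]$. Because the elements of $U_n$ are spaced $r^n$ apart, there is exactly one of them in the half-open interval $[0, r^n)$, namely $c = q_n + r^n = r^n - 1 - P_n$. This identifies the hypothesis $c \in U_n$ with $0 \leq c < r^n$ as this single representative, corresponding to the coefficient $j = 1$ in the parametrization $c = q_n + r^n j$.

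The final step is the membership test for $U_{n+1}$. A point lies in $U_{n+1}$ if and only if it is congruent to $q_{n+1}$ modulo $r^{n+1}$, so I compute $c - q_{n+1} = (q_n + r^n) - (q_n - p_{n+1} r^n) = r^n(1 + p_{n+1})$. Thus $c \in U_{n+1}$ if and only if $r^{n+1} \mid r^n(1 + p_{n+1})$, i.e. if and only if $r \mid (1 + p_{n+1})$. Since $0 \leq p_{n+1} \leq r-1$, this congruence holds precisely when $p_{n+1} = r-1$, which is the claim.

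There is no substantial obstacle here; the result is a direct bookkeeping consequence of Lemma~\ref{LEM:periodic-approximants}. The only point requiring mild care is the correct identification of the representative $c$ inside $[0, r^n)$ — one must use the sharp bounds on $P_n$ to see that $j = 1$ is forced — after which the divisibility condition collapses immediately to $p_{n+1} = r-1$ in both directions.
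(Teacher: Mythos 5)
Your proof is correct, and it takes a genuinely different (though closely related) route from the paper's. The paper argues structurally: since $c$ is the unique point of $U_n$ in $[0,r^n)$, it is the first undetermined position to the right of the origin in $x^{(n)}$, hence it receives the $0$th letter of $\alpha^{(n+1)} = S^{p_{n+1}}\bigl((f^n(k_1\cdots k_{r-1})\,?)^{\Z}\bigr)$ in the filling step, and that letter is `$?$' precisely when $p_{n+1}=r-1$. You instead take the closed-form description $U_n = r^n\Z + q_n$ with $q_n = -1-\sum_{m=1}^n p_m r^{m-1}$ from Lemma~\ref{LEM:periodic-approximants} as the starting point and reduce the claim to the divisibility $r \mid (1+p_{n+1})$; all the steps check out, including the identification $c = q_n + r^n$ forced by $q_n \in [-r^n,-1]$ and the equivalence $r \mid (1+p_{n+1}) \Leftrightarrow p_{n+1}=r-1$ under $0 \leqslant p_{n+1} \leqslant r-1$. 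Your computation is arguably more self-contained and is essentially the same bookkeeping the paper itself uses for the odd-$r$ companion statement (Lemma~\ref{LEM:c-in-N/2-undetermined}, via $q_{n+1} = q_n - p_{n+1}r^n$), so it unifies the two cases; the paper's version buys a picture of \emph{why} the condition is $p_{n+1}=r-1$ in terms of the Toeplitz filling, which is the intuition reused later in Figure~\ref{FIG:center-conditions}.
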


\begin{proof}
The assumptions $c \in U_n$ and $0 \leqslant c < r^n$ imply that $c $ is the first undetermined letter in $x^{(n)}$ to the right of the origin. Hence, $x_c^{(n)} = ?$ is replaced by the first letter of $\alpha^{(n+1)} = S^{p_{n+1}}\bigr(( f^n(k_1 \ldots k_{r-1}) ?)^{\mathbb{Z}} \bigr)$ in the next step. In other words, $x^{(n+1)}_c = ?$ if and only if $\alpha^{(n+1)}$ has an undetermined letter at the $0$th position. This is the case precisely if $p_{n+1} = r-1$.  
\end{proof}

\begin{lemma}
\label{LEM:c-in-N/2-undetermined}
Suppose $r \in 2 \mathbb{N} +1$, $n \in \mathbb{N}$ and $c \in U_n + r^n/2$ with $- r^n/2 < c < r^n/2$. Then, $c \in U_{n+1} + r^{n+1}/2$ if and only if $p_{n+1} = (r-1)/2$.
\end{lemma}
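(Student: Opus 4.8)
The plan is to reduce the statement to a single modular-arithmetic identity, exploiting that membership of a reflection centre in $U_{n+1} + r^{n+1}/2$ is a purely \emph{lattice} condition. By Lemma~\ref{LEM:periodic-approximants} we have $q_n = -1 - \sum_{m=1}^n p_m r^{m-1}$, so the transition from level $n$ to level $n+1$ is governed by the clean relation $q_{n+1} = q_n - p_{n+1} r^n$. Because of this, I would not need the palindrome structure of $\beta^{(n+1)}$ here at all; that structure only feeds into the separately established equivalence between reflection symmetry of $x^{(m)}$ and $c \in U_m \cup (U_m + r^m/2)$, which is not what the present lemma asks about.

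First I would pin down the representative $c$. From $q_n = -1 - \sum_{m=1}^n p_m r^{m-1}$ and $0 \le p_m \le r-1$ one gets $-r^n \le q_n \le -1$. Writing a general element of $U_n + r^n/2$ as $c = q_n + r^n/2 + r^n j$ with $j \in \Z$, the constraint $-r^n/2 < c < r^n/2$ is equivalent to $q_n + r^n j \in (-r^n, 0)$. Provided $q_n > -r^n$ this open interval isolates $j = 0$ uniquely, so $c = q_n + r^n/2$; in the single degenerate case $q_n = -r^n$ (equivalently, every $p_m$ equals $r-1$), the set $U_n + r^n/2$ meets the interval in no point and the hypothesis of the lemma is vacuous. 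This is the one place the range assumption is used, and it is exactly what discards the reflection centres of the neighbouring $\beta^{(n)}$-blocks.

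Next I would carry out the computation. Since $r$ is odd, $r^n$ is odd, $c$ is a half-integer and $c - r^{n+1}/2$ is an integer, so the condition $c \in U_{n+1} + r^{n+1}/2$ is the honest congruence $c - r^{n+1}/2 \equiv q_{n+1} \pmod{r^{n+1}}$. Substituting $c = q_n + r^n/2$ and $q_{n+1} = q_n - p_{n+1} r^n$ gives
\[
\left(c - \tfrac{r^{n+1}}{2}\right) - q_{n+1} = r^n\left(\tfrac{1-r}{2} + p_{n+1}\right),
\]
where $\tfrac{1-r}{2}$ is an integer precisely because $r$ is odd. Cancelling the common factor $r^n$, the congruence modulo $r^{n+1}$ reduces to $p_{n+1} \equiv \tfrac{r-1}{2} \pmod r$, and since $0 \le p_{n+1} \le r-1$ this holds if and only if $p_{n+1} = \tfrac{r-1}{2}$, as claimed.

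The only genuinely delicate point is the identification $j = 0$: one must verify that $(-r^n/2, r^n/2)$ really isolates the single block centre $q_n + r^n/2$ among all translates in $U_n + r^n/2$, and notice that the all-$(r-1)$ case empties the hypothesis. Everything after that is bookkeeping with the formula for $q_n$, and the parity of $r$ enters only to guarantee both that $c - r^{n+1}/2$ is an integer and that $(r-1)/2$ is an integer, so that the final congruence has the stated unique solution in $\{0,\dots,r-1\}$.
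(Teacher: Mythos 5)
Your proof is correct and follows essentially the same route as the paper's: identify $c = q_n + r^n/2$ as the unique representative in the given window, use $q_{n+1} = q_n - p_{n+1}r^n$, and solve for $p_{n+1}$. The only difference is cosmetic — you phrase the level-$(n+1)$ membership as a congruence modulo $r^{n+1}$ and explicitly dispose of the degenerate all-$(r-1)$ case, where the paper simply asserts $c = q_{n+1} + r^{n+1}/2$ "by similar reasoning"; your version is a touch more careful but not a different argument.
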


\begin{proof}
By the requirements that $c \in U_n + r^n/2$ and $-r^n/2 < c < r^n/2$, we observe that $c$ lies directly in the middle between the first undetermined letter to the right of the origin and the first undetermined letter to the left of the origin. Recall that the position of the latter is given by $q_n$ and hence $c = q_n + r^n/2$. By similar reasoning, $c \in U_{n+1} + r^{n+1}/2$ if and only if $c = q_{n+1} + r^{n+1}/2$. Since $q_{n+1} = q_n - p_{n+1} r^n$, this is equivalent to 
\[
q_n + \frac{1}{2}r^n = c = q_n - p_{n+1} r^n + \frac{r}{2} r^n,
\]
which holds if and only if $p_{n+1} = (r-1)/2$.
\end{proof}

The possible relations of $c$ being a palindromic center of $x^{(n)}$ and being a palindromic center of $x^{(n+1)}$ are summarized in Figure~\ref{FIG:center-conditions}.
\begin{figure}
\begin{tikzpicture}

\matrix[row sep=3mm,column sep=2mm] {
 \node (11)  {}; & \node (12)  {}; &&&&&&&& \node (13)  {}; &\node (14)  {};  \\
 \node[scale=0.8] (21)  {$c\in U_{n+1}$}; & \node[scale=0.8] (22)  {$c\in U_{n+1}+\frac{r^{n+1}}{2}$}; &&&&&&&& \node[scale=0.8] (23)  {$c\in U_{n+1}$}; &\node[scale=0.8] (24)  {$c\in U_{n+1}+\frac{r^{n+1}}{2}$};  \\
&&&&&&&&&&&& \\
&&&&&&&&&&&& \\
\node[scale=0.8] (41)  {$c\in U_{n}$}; & \node[scale=0.8] (42)  {}; &&&&&&&& \node[scale=0.8] (43)  {$c\in U_{n}$}; &\node[scale=0.8] (44)  {$c\in U_{n}+\frac{r^{n}}{2}$};  \\
};

\draw[white] (11)  -- node[scale=0.8,black,pos=0.4] {$r$ even}  (12);
\draw[white] (13)  -- node[scale=0.8,black,pos=0.4] {$r$ odd}  (14);

\draw[->,transform canvas={xshift=0.5ex}] (21) --   (41);
\draw[->,transform canvas={xshift=0.5ex}] (22)  --  (41);

\draw[->,dashed,transform canvas={xshift=-0.5ex},] (41)  -- node[left=0.1mm,scale=0.8]{$p_{n+1}=r-1$} (21);
\draw[->,dashed,transform canvas={xshift=-0.5ex}] (41)  --  (22);

\draw[->,transform canvas={xshift=0.5ex}] (23) --   (43);
\draw[->,transform canvas={xshift=-0.5ex}] (24)  --  (44);

\draw[->,dashed,transform canvas={xshift=-0.5ex}] (43) -- node[left=0.1mm,scale=0.8]{$p_{n+1}=r-1$}  (23);
\draw[->,dashed,transform canvas={xshift=0.5ex}] (44)  -- node[right=0.1mm,scale=0.8]{$p_{n+1}=\frac{r-1}{2}$} (24);
\end{tikzpicture}
\caption{Relation between palindromic centers in $x^{(n)}$ and $x^{(n+1)}$. Solid lines denote implication, dashed lines denote implication if an additional condition on $p_{n+1}$ is satisfied. These conditions are as detailed in the diagram if $n$ is large enough.
}
\label{FIG:center-conditions}
\end{figure}
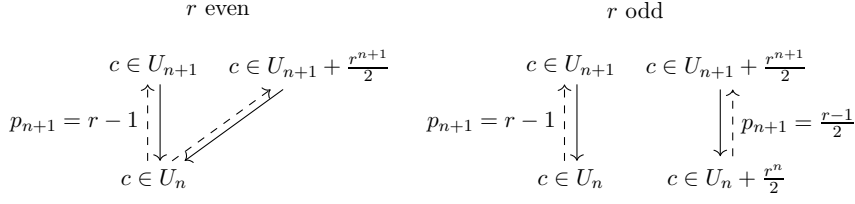
Note that if we want to move up several steps in the diagram, there are two disjoint paths in the case that $r$ is odd. If $r$ is even, we are basically restricted to one path, with a choice in the last step. This is illustrated in Figure~\ref{FIG:centers-move-up}.
\begin{figure}
\begin{tikzpicture}

\matrix[row sep=0.5mm,column sep=2mm] {
 \node (11)  {}; & \node (12)  {}; & \node (13)  {}; &\node (14)  {}; &\node (15)  {}; &\node (16)  {}; \\
 \node (21)  {$m$}; & \node (22)  {$c \in U_m$}; & \node (23)  {$ c\in U_m+\frac{r^m}{2}$}; &\node (24)  {}; &\node (25)  {$c \in U_m$}; &\node (26)  {$ c \in U_m+\frac{r^m}{2}$}; \\
 \node (31)  {n+3}; & \node[fill,circle,scale=0.1,white] (32)  {}; & \node[fill,circle,scale=2,white] (33)  {}; &\node (34)  {}; &\node[fill,circle,scale=0.1,white] (35)  {}; &\node[fill,circle,scale=0.1,white] (36)  {}; \\
&&&&& \\
 \node (41)  {n+2}; & \node[fill,circle,scale=0.1,white] (42)  {}; & \node[fill,circle,scale=2,white] (43)  {}; &\node (44)  {}; &\node[fill,circle,scale=0.1,white] (45)  {}; &\node[fill,circle,scale=0.1,white] (46)  {}; \\
 &&&&& \\
  \node (51)  {n+1}; & \node[fill,circle,scale=0.1,white] (52)  {}; & \node[fill,circle,scale=2,white] (53)  {}; &\node (54)  {}; &\node[fill,circle,scale=0.1,white] (55)  {}; &\node[fill,circle,scale=0.1,white] (56)  {}; \\
  &&&&& \\
  \node (61)  {n}; & \node[fill,circle,scale=0.1,white] (62)  {}; &  \node[fill,circle,scale=3,white] (63)  {}; &\node (64)  {}; &\node[fill,circle,scale=0.1,white] (65)  {}; &\node[fill,circle,scale=0.1,white] (66)  {}; \\
};

\draw[white] (12)  -- node[scale=1,black,pos=0.4] {$r$ even}  (13);
\draw[white] (15)  -- node[scale=1,black,pos=0.4] {$r$ odd}  (16);

\draw[->] (62)  --  (52);
\draw[->] (52)  --  (42);
\draw[->] (42)  --  (32);

\draw[-|] (62)  --  (53);
\draw[-|] (52)  --  (43);
\draw[-|] (42)  --  (33);

\draw[->] (65)  --  (55);
\draw[->] (55)  --  (45);
\draw[->] (45)  --  (35);

\draw[->] (66)  --  (56);
\draw[->] (56)  --  (46);
\draw[->] (46)  --  (36);

\end{tikzpicture}
\caption{Possible ways to move up in Figure~\ref{FIG:center-conditions}, ensuring that $c$ is a palindromic center of $x^{(m)}$, for several consecutive $m \in \mathbb{N}$. Each line from level $m$ to level $m+1$ is subject to a condition on $p_{m+1}$. If a line with a bar ends at level $m$, there is no way to move up further in the diagram, that is, it is impossible to make it consistent with $c$ being a palindromic center of $x^{(m+1)}$.}
\label{FIG:centers-move-up}
\end{figure}
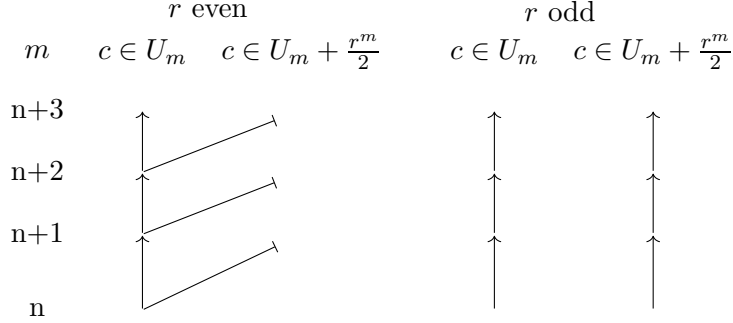
\begin{lemma}
\label{LEM:wc}
Let $x \in \mathbb{X}_{\bar{\varrho}}'$, $c \in \mathbb{Z}$ and $n_c = n_c(x)$. If $r$ is even, then $x_{c} = f^N(k_i)$, for some $n_c -1 \leqslant N \leqslant n_c$ and $1\leqslant i \leqslant r-1$. In particular, $n_c < \infty$ in this case.
\end{lemma}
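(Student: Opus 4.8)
The plan is to read the letter $x_c$ directly off the Toeplitz construction and to tie the step at which $c$ becomes determined to its reflection-level $n_c$. Call $m=m(c)$ the \emph{determination step} of $c$, i.e.\ the unique $m \in \mathbb{N}$ with $c \in U_{m-1} \setminus U_m$ (with the convention $U_0 = \Z$). At step $m$ the position $c$, still undetermined in $x^{(m-1)}$, is filled by a determined letter of $\alpha^{(m)} = S^{p_m}\bigl((f^{m-1}(k_1 \cdots k_{r-1})?)^{\Z}\bigr)$ from Definition~\ref{DEF:notations}; since the determined letters of $\alpha^{(m)}$ are exactly $f^{m-1}(k_1), \ldots, f^{m-1}(k_{r-1})$, Lemma~\ref{LEM:periodic-approximants} gives $x_c = x^{(\infty)}_c = x^{(m)}_c = f^{m-1}(k_i)$ for some $1 \leqslant i \leqslant r-1$. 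Thus $x_c$ has the asserted form with $N = m-1$, and the whole statement reduces to showing $m \in \{n_c, n_c+1\}$ together with $m < \infty$.

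For the two-sided bound, note first that $c \in U_{m-1}$ forces $c \in U_j$ for all $j \leqslant m-1$, so each $x^{(j)}$ is reflection symmetric in $c$ for $j \leqslant m-1$ and hence $n_c \geqslant m-1$. It remains to decide whether $x^{(m)}$ is reflection symmetric in $c$; since $c \notin U_m$, this happens precisely when $c \in U_m + r^m/2$, and here the hypothesis that $r$ is even is essential, as $r^m/2 \in \Z$. Writing $c = q_{m-1} + t\, r^{m-1}$ with $t \in \Z$ and using $q_m = q_{m-1} - p_m r^{m-1}$ from Lemma~\ref{LEM:periodic-approximants}, the condition $c \notin U_m$ reads $t + p_m \not\equiv 0 \pmod r$, while $c \in U_m + r^m/2$ becomes the single congruence $t + p_m \equiv r/2 \pmod r$. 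If this congruence holds, then $x^{(m)}$ is reflection symmetric in $c$ through the antipodal branch $U_m + r^m/2$; by the dead-end property for even $r$ recorded before Lemma~\ref{LEM:c-in-N-undetermined} and illustrated in Figure~\ref{FIG:center-conditions}, $x^{(m+1)}$ can then no longer be reflection symmetric in $c$, so $n_c = m$ and $N = n_c - 1$. If the congruence fails, then $x^{(m)}$ is not reflection symmetric in $c$, whence $n_c = m-1$ and $N = n_c$. In either case $n_c - 1 \leqslant N \leqslant n_c$.

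For finiteness, suppose $n_c = \infty$, so that $c \in U_m \cup (U_m + r^m/2)$ for every $m$. If the antipodal alternative $c \in U_m + r^m/2$ with $c \notin U_m$ occurred at some level, the same dead-end property would force $x^{(m+1)}$ to fail reflection symmetry, contradicting $n_c = \infty$; since $U_m$ and $U_m + r^m/2$ are disjoint, we conclude $c \in U_m$ for all $m$, i.e.\ $c \in \bigcap_m U_m$. But for $x \in \mathbb{X}_{\bar{\varrho}}'$ the limit $x^{(\infty)}$ carries no surviving undetermined letter (equivalently $\p[x] \in \Z_r \setminus \Z$, by the discussion preceding Lemma~\ref{LEM:p-from-x}), so $\bigcap_m U_m = \varnothing$, a contradiction. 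Hence every $c$ has a finite determination step $m$, and $n_c \leqslant m < \infty$.

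The main obstacle is the parity bookkeeping in the middle step: one must verify that for even $r$ the antipodal branch is genuinely terminal, which is exactly what pins $N$ to the two values $n_c - 1, n_c$ and simultaneously forces $n_c$ to be finite. This is precisely where even and odd $r$ diverge, since for odd $r$ the antipodal branch can be iterated indefinitely (Lemma~\ref{LEM:c-in-N/2-undetermined}), allowing $n_c = \infty$. Accordingly, the argument must invoke the parity hypothesis in an essential way rather than merely formally, and the congruence computation is best organized exactly along the disjoint cases $c \in U_m$ versus $c \in U_m + r^m/2$.
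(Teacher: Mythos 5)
Your proof is correct and follows essentially the same route as the paper's: both arguments pin down the step $m$ at which the position $c$ becomes determined in the Toeplitz construction (so that $x_c=f^{m-1}(k_i)$) and then use the even-$r$ dead-end property of the antipodal branch $U_m+r^m/2$ to force $m\in\{n_c,n_c+1\}$, your version merely running forward from the determination step with an explicit congruence check where the paper runs backward from $n_c$ via $c\in U_{n_c-1}$ and $c\notin U_{n_c+1}$. The only caveat is one you share with the paper: the finiteness step via $\bigcap_m U_m=\varnothing$ (equivalently $\p[x]\in\Z_r\setminus\Z$) silently excludes the type-$0$ fixed point $x^{\star,0}\in\mathbb{X}_{\bar{\varrho}}'$, for which $\p[x]\in\Z$ and the reflection-level at the extended position is infinite.
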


\begin{proof}
The claim is trivial for $n_c \leqslant 1$, so let us assume $n_c \geqslant 2$ in the following. By the case distinction above, it is $c \in U_{n_c-1}$ and thus $x^{(n_c -1)}_c$ is undetermined. This implies that $x_c \in f^{n_c-1}(\mathcal{N})$. On the other hand, $c \notin U_{n_c + 1}$ requires that $x_c = f^N(k_i)$ for some $N \leqslant n_c$ and $1 \leqslant i \leqslant r-1$.
\end{proof}

 We emphasize that this structure enforces $x_c$ to be large for large values of $n_c$ precisely if $\varrho$ is not of type $0$. If $\varrho$ is of type $0$ we have the additional possibility that $x_c = 0$.

\begin{remark}
The result in Lemma~\ref{LEM:wc} is in sharp contrast to the case that $r$ is odd. Here, we have an additional `path' available ($c \in U_m + r^m/2$ for several consecutive $m \in \mathbb{N}$), compare the right column in Figure~\ref{FIG:centers-move-up}. In this case, the distance of the undetermined parts from $c$ increases with $m$ and the letters in a neighborhood of $c$ remain unchanged in $x^{(m)}$ as $m$ increases. As we will see later, this observation is at the heart of results that strongly palindromic sequences exist under much less restrictive conditions if $r$ is odd.
\end{remark}

We show that in order to place a large palindrome at position $c \in \mathbb{Z}/2$ in $x \in \mathbb{X}_{\bar{\varrho}}$, it is indeed both necessary and sufficient that $c$ has a large reflection-level. We make this more precise in the following.

\begin{lemma}
\label{LEM:palindromic-length}
Let $(P,\ell,c) \triangleleft x$, with $c \in \mathbb{Z}/2$ and suppose $c$ has reflection-level $n_c < \infty$. Then,
\begin{enumerate}
\item  $P$ is strictly contained in some cyclic permutation of $ \beta^{(n_c + 2)} k$, with $k \leqslant f^{n_c}(k_{\max})$.
\item If $1\leqslant n_c$, then $P$ can be extended to a palindrome $\widetilde{P}$ in $w$, also centered at $c$, such that $\beta^{(n_c)} \triangleleft \widetilde{P}$.
\end{enumerate}
In particular, if $P$ is chosen maximal, we have $\beta^{(n_c)} \triangleleft P$ and 
\begin{equation}
\label{EQ:ell-bounds}
r^{n_c} -1 \leqslant \ell \leqslant r^{n_c + 2} -1
\end{equation}
provides upper and lower bounds for the length of $P$.
\end{lemma}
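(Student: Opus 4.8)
The plan is to set $m = n_c$ and exploit that, by definition of the reflection-level, $x^{(m)}$ is reflection symmetric in $c$ while $x^{(m+1)}$ (and hence every $x^{(n)}$ with $n > m$) is not. I would treat the two bounds separately and argue throughout via the positional lattice structure of the approximants (Lemma~\ref{LEM:periodic-approximants}, Lemma~\ref{LEM:approximant-structure}) rather than directly through letter values, since for type-$0$ substitutions the letters do not separate cleanly by level.

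For the upper bound (part (1)), the central quantity is $V := f^{n_c}(k_{\max})$. I would first show that $P$ contains no letter strictly larger than $V$. Indeed, if $x_v > V$ for some position $v$ in the support of $P$, then $v$ must have been filled at a step $\geq n_c + 2$, i.e.\ $v \in U_{n_c+1}$, because any position determined at step $\leq n_c+1$ carries a letter $f^N(k_i)$ with $N \leq n_c$, hence $\leq V$. By the reflection symmetry of $P$ about $c$, the mirror position $2c - v$ also lies in the support and satisfies $x_{2c-v} = x_v > V$, so $2c-v \in U_{n_c+1}$ as well. Then $v$ and $2c-v$ form a symmetric pair inside $U_{n_c+1} = r^{n_c+1}\Z + q_{n_c+1}$, and $v + (2c-v) = 2c$ forces $c \in U_{n_c+1} \cup (U_{n_c+1} + r^{n_c+1}/2)$; by the symmetry criterion this means $x^{(n_c+1)}$ is reflection symmetric in $c$, contradicting $n_c < n_c+1$. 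Since $f(k_{\max}) > k_{\max}$, the letter $f^{n_c+1}(k_{\max})$ is \emph{strictly} larger than $V$, and by the generalized Toeplitz structure these occurrences recur along $U_{n_c+1}$ with gaps at most $r^{n_c+2}$. As $P$ must avoid all of them, it is confined to one gap, which has length $r^{n_c+2}-1$ and is a cyclic copy of $\beta^{(n_c+2)}$; hence $\ell \leq r^{n_c+2}-1$, $P$ is strictly shorter than $\beta^{(n_c+2)}k$, and every letter of $P$ is $\leq V = f^{n_c}(k_{\max})$ (for $r$ even the central letter is pinned down further through Lemma~\ref{LEM:wc}).

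For the lower bound (part (2)) I would use that $x^{(n_c)}$ is reflection symmetric in $c$, coincides with $x$ off the symmetric lattice $U_{n_c}$, and that $\beta^{(n_c)}$ is a palindrome (Lemma~\ref{LEM:inflation-palindromes}). Distinguishing $c \in U_{n_c}$ from $c \in U_{n_c} + r^{n_c}/2$: in the first case $c$ occupies an undetermined slot of $x^{(n_c)}$ flanked on both sides by the determined block $\beta^{(n_c)}$, and symmetry gives $x_{c+i} = x_{c-i}$ for $1 \leq i \leq r^{n_c}-1$, so $x_{[c-(r^{n_c}-1),\,c+(r^{n_c}-1)]}$ is a palindrome containing $\beta^{(n_c)}$; in the second case $c$ is the midpoint of a determined $\beta^{(n_c)}$-block, which is itself a palindrome centered at $c$. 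In either case the resulting palindrome extends the given $P$ (or equals it, if $P$ is already longer) and contains $\beta^{(n_c)}$; this produces $\widetilde P$ and, for the maximal choice of $P$, the inequality $\ell \geq r^{n_c}-1$.

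Combining the two parts gives $\beta^{(n_c)} \triangleleft P$ for the maximal $P$ and the two-sided estimate \eqref{EQ:ell-bounds}. The step I expect to be most delicate is the upper bound, and specifically the decision to argue through the single largest letter $f^{n_c}(k_{\max})$ and the strict inequality $f(k_{\max}) > k_{\max}$: a naive level-by-level argument fails, because for type-$0$ substitutions a deep-level letter $f^N(k_i)$ with small $k_i$ can coincide in value with a shallow-level letter, so letters do not encode their level. Routing everything through the threshold $V$ and the guaranteed period-$r^{n_c+2}$ recurrence of letters exceeding $V$ is what circumvents this, and the parity of $r$ (which governs whether $c$ is an integer or a half-integer center, cf.\ Figure~\ref{FIG:center-conditions}) must be carried through the symmetric-pair computation $v+(2c-v)=2c$.
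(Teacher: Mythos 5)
Your proposal is correct and follows essentially the same route as the paper's proof: part (1) rests on the identical mirror-pair observation that two positions symmetric about $c$ cannot both lie in $U_{n_c+1}$ (else $c \in U_{n_c+1} \cup (U_{n_c+1}+r^{n_c+1}/2)$, contradicting the reflection-level), which bounds every letter of $P$ by $f^{n_c}(k_{\max})$ and then forces $\ell < r^{n_c+2}$ via the period-$r^{n_c+2}$ recurrence of the strictly larger letter $f^{n_c+1}(k_{\max})$; part (2) is the same case distinction $c \in U_{n_c}$ versus $c \in U_{n_c}+r^{n_c}/2$ using the palindromicity of $\beta^{(n_c)}$. The paper merely phrases the first part as a contradiction argument rather than your direct contrapositive; the content is identical.
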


\begin{proof}
We start by proving the first claim. Suppose $P = x_{[\ell_1,\ell_2]}$ is a palindrome and $c = (\ell_1+\ell_2)/2$. By definition, $x^{(n_c+1)}$ is not reflection symmetric in $c$ 
and hence, given $j \in U_{n_c + 1}$, we find that $2c - j \notin U_{n_c + 1}$. This implies $? = x^{(n_c+1)}_j \neq x^{(n_c +1)}_{2c - j}= x_{2c-j}$ since $x$ and $x^{(n_c+1)}$ coincide on the complement of $U_{n_c +1}$, compare Figure~\ref{FIG:palindrome-structure}.
\begin{figure}
\begin{tikzpicture}

\matrix[row sep=3mm,column sep=10mm] {
 \node (11)  {?}; & \node (12)  {?}; & \node (13)  {?};&\node (14) {?}; & \node (15)  {?};& \node (16)  {?};&\node (17) {?};\\
	\node[left](21) {};   &\node (22)  {?}; &				   & \node[fill,circle,scale=0.25]{}; &			\node[fill,circle,scale=0.25]{}; & \node (26)  {?};&\node[right](27) {};\\
};

\draw[|-|] (11) -- node[above,scale=0.8] {$\beta^{(n_c)}$} (12);
\draw[|-|] (12) -- node[above,scale=0.8] {$\beta^{(n_c)}$} (13);
\draw[|-|] (13) -- node[above,scale=0.8] {$\beta^{(n_c)}$} (14);
\draw[|-|] (14) -- node[above,scale=0.8] {$\beta^{(n_c)}$} (15);
\draw[|-|] (15) -- node[above,scale=0.8] {$\beta^{(n_c)}$} (16);
\draw[|-|] (16) -- node[above,scale=0.8] {$\beta^{(n_c)}$} (17);

\draw[-|] (21) -- (22);
\draw[|-|] (22) -- node[above,pos=0.5,scale=0.7,outer sep=0.1cm] {$x_{2c-j}$}  node[below,pos=0.77,scale=0.7,outer sep=0.15cm] {$c$}   node[below,pos=1.035,scale=0.7,outer sep=0.1cm] {$j$} (26);
\draw[|-] (26) --  (27);
\end{tikzpicture}
\caption{Construction for the proof of Lemma~\ref{LEM:palindromic-length}. The approximants $x^{(n_c)}$ and $x^{(n_c +1)}$ are displayed in the first and second line, respectively.}
\label{FIG:palindrome-structure}
\end{figure}
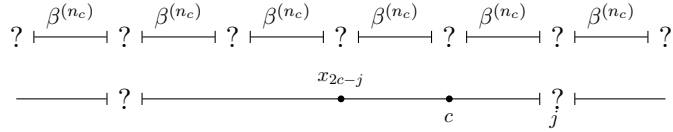
Therefore, $x_{2c -j} = f^{m_j}(k_{i_j})$ for some $m_j \leqslant n_c$ and $1 \leqslant i_j \leqslant r-1$. Assume that  the first item does not hold. Then, $k > f^{n_c}(k_{\max})$ or $|P| \geqslant r^{n_c + 2}$. The second case implies that $P$ contains a letter of the form $f^{n_c + 1}(k_{\max})$ since these appear at distance at most $r^{n_c +2}$ in $x$. In both cases, $P = x_{[\ell_1,\ell_2]}$ contains a letter $x_j = k > f^{n_c}(k_{\max}) $. Necessarily, $j \in U_{n_c + 1}$ and hence $k = x_j = x_{2c - j} = f^{m_j}(k_{i_j}) \leqslant f^{n_c}(k_{\max})$, a contradiction. 
\\ To prove the second claim, we distinguish two cases. First, assume that $c \in U_{n_c}$. Then, $x_{[c-r^{n_c} +1, c-1]} = x^{(n_c)}_{[c-r^{n_c} +1, c-1]} = \beta^{(n_c)}$ and similarly, $x_{[c+1, c+ r^{n_c} -1]} = \beta^{(n_c)}$, compare Lemma~\ref{LEM:approximant-structure}. Since $\beta^{(n_c)}$ is palindromic by Lemma~\ref{LEM:inflation-palindromes}, this implies that $\widetilde{P} =  x_{[c-r^{n_c} +1, c+r^{n_c} -1]} = \beta^{(n_c)} x_c \beta^{(n_c)}$ is a palindrome, centered at $c$, that contains $\beta^{(n_c)}$. Now, suppose $c \in U_{n_c} + r^{n_c}/2$. Then, $P$ can either be extended to $\widetilde{P} = x_{[c - r^{n_c}/2 +1, c +r^{n_c}/2 -1]} = \beta^{(n_c)}$ or it already contains it. 
\end{proof}

Eventually, we are interested in the existence of $b$-palindromes in sequences $w = \tau(x) \in \mathbb{X}^b_{\varrho}$. These are in one-to-one correspondence to palindromes in $x \in \mathbb{X}_{\bar{\varrho}}'$, compare the discussion before Lemma~\ref{LEM:sp-only-for-large-pal}.

\begin{definition}
Let $(P,\ell,c) \triangleleft x \in \mathbb{X}_{\bar{\varrho}}'$ and $w = \tau(x)$. Suppose that $P = x_{[m,n]}$ is mapped to $\tau(P) = w_{[m',n']}$ under $x \mapsto w$, compare Figure~\ref{FIG:palindrome-relation}. Then, we define $\tau(P,\ell,c)$ to be the triple $(P',\ell',c')$ with
\[
P' = \tau(P) b, \quad \ell' = |P'|, \quad 
c' = (m' + n' +1)/2,
\]
describing the data of a $b$-palindrome $P'$ in $w$.
Conversely, given a $b$-palindrome $P'$ with $(P',\ell',c') \triangleleft w$, we define $\tau^{-1}(P',\ell',c')$ to be the unique $(P,\ell,c) \triangleleft x$ such that $\tau(P,\ell,c) = (P',\ell',c')$.
\end{definition}

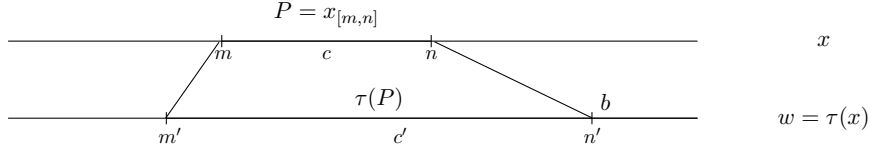
\begin{figure}
\begin{tikzpicture}

\matrix[row sep=6mm,column sep=7mm] {
\node (10)  {};&&&&\node[fill,circle,scale=0.1,white] (11)  {};&&&&\node[fill,circle,scale=0.1,white] (12) {};&&&&&\node (13)  {};&\node[scale=0.8] (capt)  {$x$};\\
\node (20)  {};&&&\node[fill,circle,scale=0.01,white] (21)  {};&&&&&&&& \node[fill,circle,scale=0.01,white] (22)  {};&&
\node (23)  {};&
\node[scale=0.8] (capt)  {$w = \tau(x)$};\\
};

\draw (10)  --  (13);
\draw[|-|] (11)  -- node[above,outer sep=0.1cm,scale=0.8] {$P = x_{[m,n]}$} node[below,outer sep=0.05cm,pos=0.01,scale=0.7] {$m$} node[below,outer sep=0.05cm,pos=1,scale=0.7] {$n$} 
node[below,outer sep=0.05cm,scale=0.7] {$c$} 
(12);

\draw (20)  --  (23);
\draw[|-|] (21)  -- node[above,outer sep=0.001cm,scale=0.8] {$\tau(P)$} node[below,outer sep=0.05cm,pos=0.01,scale=0.7] {$m'$} node[below,outer sep=0.05cm,pos=1,scale=0.7] {$n'$}
node[below,outer sep=0.05cm,pos=0.55,scale=0.7] {$c'$} (22);
\draw (22) -- node[above,outer sep = 0.001cm,pos=0.12,scale=0.8]{$b$} (23);

\draw (11)  --  (21);

\draw (12)  --  (22);

\end{tikzpicture}
\caption{Illustration of how the palindrome $P$, centered at $c$ in $x$ is related to the palindrome $\tau(P) b$, centered at $c'$ in $\tau(x)$.}
\label{FIG:palindrome-relation}
\end{figure}

Phrased more simply, with some abuse of notation, we let $\tau$ relate the data of a palindrome $(P,\ell,c)$ in $x$ to the data $(P',\ell',c')$ of the corresponding $b$-palindrome $P'$ in $\tau(x)$.
We are mostly interested in the case that $c>0$. Under this assumption, we can calculate $c'$ as follows.

\begin{lemma}
\label{LEM:c'-c-relation}
Let $(P,\ell,c) \triangleleft x \in \mathbb{X}_{\bar{\varrho}}'$ and $(P',\ell',c') = \tau(P,\ell,c)$. If $c \in \mathbb{N}$, then
\begin{equation}
\label{EQ:c'-for-n}
c' = |\tau(x_{[0,c-1]})| + |\tau(x_c)|/2. 
\end{equation}
If $c \in \mathbb{N}_0 + 1/2$, we obtain
\begin{equation}
\label{EQ:c'-for-n-half}
c' = |\tau(x_{[0,c-1/2]})| .
\end{equation}
In particular, $c' \geqslant c > 0$, whenever $c \in \mathbb{N}/2$.
\end{lemma}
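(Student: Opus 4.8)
The plan is to track positions under $\tau$ and exploit the reflection symmetry of $P$. First I would record the bookkeeping. Since $w = \tau(x)$ has its marker between the positions $-1$ and $0$, for every $j \geqslant 0$ the return word $\tau(x_j) = ba^{x_j}$ occupies the block of positions in $w$ that starts at $|\tau(x_{[0,j-1]})|$ and has length $|\tau(x_j)| = x_j + 1$; so the initial ($b$-)letter of $\tau(x_j)$ sits at position $|\tau(x_{[0,j-1]})|$. Because $c > 0$, the center of the palindrome lies at a positive position and all the return words $\tau(x_0), \tau(x_1), \ldots$ relevant for locating it lie to the right of the marker, so these cumulative lengths pin down the center correctly, irrespective of whether $P$ extends to the left of the origin.

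Next I would identify the symmetry axis of $P' = \tau(P)b$. Writing $P = x_{[m,n]}$ with $x_{m+i} = x_{n-i}$, we have $P' = ba^{x_m}ba^{x_{m+1}}\cdots ba^{x_n}b$; reading this backwards and using $x_{m+i} = x_{n-i}$ returns the same word, so $P'$ is indeed a $b$-palindrome. It contains exactly $\ell + 1$ copies of $b$, the $j$-th of which (for $1 \leqslant j \leqslant \ell$) is the initial letter of $\tau(x_{m+j-1})$, the final $b$ being the appended one. The center $c'$ of $P'$ is forced by the symmetric placement of these $b$'s.

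I would then split into the two cases of the statement. If $c \in \mathbb{N}$, then $P$ has odd length with central letter $x_c$, so the central return word is $\tau(x_c) = ba^{x_c}$; the reflection about $c'$ fixes the block $a^{x_c}$ and swaps the leading $b$ of $\tau(x_c)$ with the leading $b$ of $\tau(x_{c+1})$, so $c'$ is the midpoint of the $a^{x_c}$-block. As $\tau(x_c)$ starts at $|\tau(x_{[0,c-1]})|$, its $a$'s occupy the positions from $|\tau(x_{[0,c-1]})|+1$ to $|\tau(x_{[0,c-1]})|+x_c$, whose midpoint is $|\tau(x_{[0,c-1]})| + (x_c+1)/2 = |\tau(x_{[0,c-1]})| + |\tau(x_c)|/2$, which is \eqref{EQ:c'-for-n}. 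If $c \in \mathbb{N}_0 + 1/2$, then $P$ has even length, the number $\ell + 1$ of $b$'s is odd, and the middle $b$ is the $(\ell/2 + 1)$-th one, i.e.\ the initial letter of $\tau(x_{m+\ell/2})$. Using $m + \ell/2 = c + 1/2$ and checking that the two halves of $P'$ about this $b$ are mirror images (precisely the relation $x_{c-1/2-i} = x_{c+1/2+i}$ coming from the palindromicity of $P$), the axis $c'$ is the position of this $b$, namely $|\tau(x_{[0,c-1/2]})|$, which is \eqref{EQ:c'-for-n-half}.

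Finally, the inequality $c' \geqslant c$ follows by bounding the cumulative lengths with $|\tau(x_j)| = x_j + 1 \geqslant 1$: in the integer case $c' \geqslant c + 1/2$ since the sum $|\tau(x_{[0,c-1]})|$ has $c$ terms each at least $1$, and in the half-integer case $c' = |\tau(x_{[0,c-1/2]})|$ has $c + 1/2$ terms each at least $1$; in both cases $c' \geqslant c > 0$. I expect the only delicate point to be the bookkeeping that identifies the symmetry axis of $P'$ with the correct feature of the central return word — the midpoint of the central $a$-block when $c$ is an integer versus the central $b$ when $c$ is a half-integer — together with the verification that the two sides of $P'$ about that point are genuine reflections; once this is settled, the remaining length counts are routine.
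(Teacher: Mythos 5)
Your argument is correct and follows essentially the same route as the paper's proof: you locate the image of the central return word under $\tau$ via cumulative lengths and identify the reflection axis of $P'=\tau(P)b$ with the midpoint of the central $a$-block when $c\in\mathbb{N}$ and with the middle $b$ when $c\in\mathbb{N}_0+1/2$. Your verification of the symmetry axis and of the final inequality $c'\geqslant c$ is slightly more explicit than the paper's, but the substance is identical.
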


\begin{proof}
First, we discuss the case that $c \in \mathbb{N}$. We observe that  $x_{[0,c-1]}$ is mapped onto $w_{[0, k-1]}$, where $k = |\tau(x_{[0,c-1]})|$, followed by the word $w_{[k,k+x_c]} = ba^{x_c} = \tau(x_c)$. The center of $P'$ should be located in the middle of the block $a^{x_c}$, which is placed at $c' = k + (1 + x_c)/2$.
Since $| \tau(x_c) | = 1 + x_c$, we find
\[
c' = |\tau(x_{[0,c-1]})| + |\tau(x_c)|/2. 
\]
If $c \in \mathbb{N}_0 + 1/2$, we have that $x_{c -1/2} = x_{c+1/2} = m$ for some $m \in \mathcal{N}$. Under $\tau$, the word $x_{[0,c-1/2]}$ is mapped to $w_{[0,j-1]}$, where $j = |\tau(x_{[0,c-1/2]})|$ followed by $ba^m b$. The center $c'$ needs to be located on the first of the letters $b$ in this word. This is given by $w_j$ and hence, 
\[
c' = j = |\tau(x_{[0,c-1/2]})|,
\]
which completes the proof.
\end{proof}

We have already seen in Lemma~\ref{LEM:palindromic-length} that palindromes in $x \in \mathbb{X}_{\bar{\varrho}}$ have a specific structure. This is helpful to relate the lengths $\ell'$ and $\ell$ in the above formalism, since both can be expressed in terms of the reflection-level $n_c$. The following result should be compared to \eqref{EQ:ell-bounds}.

\begin{lemma}
\label{LEM:ell-ell'-length}
Let $(P,\ell,c) \triangleleft x \in \mathbb{X}_{\bar{\varrho}}'$ and $(P',\ell',c') = \tau(P,\ell,c)$. Suppose $P$ is chosen to be of maximal length and that $n_c \geqslant 1$. Then,
\begin{equation}
|\varrho^{n_c -1}(b)| \leqslant \ell' \leqslant |\varrho^{n_c + 3}(b)|, 
\label{EQ:ell'-bounds}
\end{equation}
where the upper bounds hold also in the case that $n_c = 0$. 
\end{lemma}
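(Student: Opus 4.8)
The plan is to convert the combinatorial length bounds \eqref{EQ:ell-bounds}, which count \emph{letters} of $\overline{\mc N}$, into bounds on $\ell' = |\tau(P)| + 1$, which count \emph{symbols} of $\{a,b\}$. The bridge between the two scales is Lemma~\ref{LEM:u-n-inclusion}, i.e. $\tau(\beta^{(n)})b \triangleleft \varrho^n(b)$, which gives $|\tau(\beta^{(n)})b| \leqslant |\varrho^n(b)|$. Since $P = x_{[\ell_1,\ell_2]}$ is a word over $\mc N$ and $\tau$ acts letter by letter, any subword relation $u \triangleleft P$ over $\mc N$ passes to $\tau(u) \triangleleft \tau(P) \triangleleft P'$; this is the mechanism I will use repeatedly.

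For the lower bound (where $n_c \geqslant 1$), first I would invoke Lemma~\ref{LEM:palindromic-length}(2): since $P$ is maximal and $n_c \geqslant 1$, we have $\beta^{(n_c)} \triangleleft P$, hence $\ell' \geqslant |\tau(\beta^{(n_c)})|$. It then remains to see $|\tau(\beta^{(n_c)})| \geqslant |\varrho^{n_c-1}(b)|$. To this end, note that by \eqref{EQ:beta-recursion} (equivalently Lemma~\ref{LEM:approximant-structure}) the word $\bar{\varrho}^{n_c-1}(k_1) = \beta^{(n_c-1)} f^{n_c-1}(k_1)$ is a prefix of $\beta^{(n_c)}$. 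Applying the conjugation of Lemma~\ref{LEM:conjugate-substitutions} gives $\tau(\bar{\varrho}^{n_c-1}(k_1)) = \varrho^{n_c-1}(\tau(k_1)) = \varrho^{n_c-1}(ba^{k_1})$, whose prefix is $\varrho^{n_c-1}(b)$. Hence $\varrho^{n_c-1}(b) \triangleleft \tau(\beta^{(n_c)})$ and the lower bound in \eqref{EQ:ell'-bounds} follows.

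For the upper bound (which I would establish for all $n_c \geqslant 0$), I start from Lemma~\ref{LEM:palindromic-length}(1): $P$ is contained in a cyclic permutation $v$ of $\beta^{(n_c+2)}k$ with $k \leqslant f^{n_c}(k_{\max})$. Since a cyclic permutation only reorders letters and $\tau$ acts letterwise, $|\tau(v)| = |\tau(\beta^{(n_c+2)})| + |\tau(k)|$, so $|\tau(P)| \leqslant |\tau(\beta^{(n_c+2)})| + |\tau(k)|$. I bound the two terms separately: Lemma~\ref{LEM:u-n-inclusion} gives $|\tau(\beta^{(n_c+2)})| + 1 \leqslant |\varrho^{n_c+2}(b)|$; and since the letter $f^{n_c}(k_{\max})$ occurs in $\beta^{(n_c+1)}$ (it is one of the inserted letters in \eqref{EQ:beta-recursion}), we get $|\tau(k)| = 1 + k \leqslant 1 + f^{n_c}(k_{\max}) = |\tau(f^{n_c}(k_{\max}))| \leqslant |\tau(\beta^{(n_c+1)})| \leqslant |\varrho^{n_c+1}(b)|$. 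Combining, $\ell' = |\tau(P)| + 1 \leqslant |\varrho^{n_c+2}(b)| + |\varrho^{n_c+1}(b)|$. Finally I would use the elementary observation that $\varrho^{m+1}(b)$ contains $r \geqslant 2$ disjoint copies of $\varrho^m(b)$ (apply $\varrho^m$ to the $r$ letters $b$ of $\varrho(b)$), so that $|\varrho^{n_c+3}(b)| \geqslant r\,|\varrho^{n_c+2}(b)| \geqslant 2|\varrho^{n_c+2}(b)| \geqslant |\varrho^{n_c+2}(b)| + |\varrho^{n_c+1}(b)|$, which yields $\ell' \leqslant |\varrho^{n_c+3}(b)|$ and covers $n_c = 0$ as well.

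The two subword-to-length passages are routine. The only point requiring care is the upper bound, where one must (i) exploit that cyclic permutation preserves the total $\tau$-length in order to avoid an extra factor of two, and (ii) convert the combinatorial bound $k \leqslant f^{n_c}(k_{\max})$ into the symbol-length bound $|\varrho^{n_c+1}(b)|$ via the occurrence of $f^{n_c}(k_{\max})$ inside $\beta^{(n_c+1)}$. The closing comparison of the $|\varrho^m(b)|$ then rests only on the self-similar containment of inflation words, and this is exactly what produces the slack between the exponents $n_c+2$ and $n_c+3$.
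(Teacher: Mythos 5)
Your proof is correct and follows essentially the same route as the paper: both bounds are extracted from Lemma~\ref{LEM:palindromic-length} and transported to symbol-length via the conjugacy $\tau\circ\bar{\varrho}=\varrho\circ\tau$ and Lemma~\ref{LEM:u-n-inclusion}, with the lower bound argued verbatim as in the paper. The only (harmless) divergence is in the last step of the upper bound, where the paper embeds $\beta^{(n_c+2)}f^{n_c+2}(k_{\max})$ directly into $\bar{\varrho}^{n_c+3}(0)$, while you split off the term $|\tau(k)|$ and absorb it using $|\varrho^{m+1}(b)|\geqslant r\,|\varrho^{m}(b)|$; both yield the same exponent $n_c+3$.
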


\begin{proof}
By Lemma~\ref{LEM:palindromic-length}, $\beta^{(n_c)} \triangleleft P$, which relates to $P'$ because $P' = \tau(P)b$ by definition.
For the lower bound on $\ell'$, recall $\beta^{(n_c)} = \beta^{(n_c -1)} f^{n_c-1} (k_1) \cdots \beta^{(n_c -1)}$ and that $\bar{\varrho}^{n_c -1}(k_1) = \beta^{(n_c -1)} f^{n_c-1} (k_1) \triangleleft \beta^{(n_c)}$, due to Lemma~\ref{LEM:approximant-structure}. Using that $\varrho$ is conjugate to $\bar{\varrho}$ under $\tau$ this shows that
\[
\varrho^{n_c -1}(b) \triangleleft \varrho^{n_c -1}(ba^{k_1})
= \tau(\bar{\varrho}^{n_c -1}(k_1)) \triangleleft 
\tau(\beta^{(n_c)}) \triangleleft \tau(P) \triangleleft P',
\]
implying $|\varrho^{n_c -1}(b)| \leqslant \ell'$.
For the upper bound on $\ell'$, we use that $P$ is strictly contained in some permutation of $\beta^{(n_c + 2)} k$ with $k \leqslant f^{n_c}(k_{\max})$. Using that $\beta^{(n_c + 2)} f^{n_c + 2}(k_{\max}) \triangleleft \bar{\varrho}^{n_c + 3}(0)$, we obtain
\[
\ell' = |\tau(P)b| \leqslant |\tau(\beta^{(n_c + 2)} f^{n_c + 2}(k_{\max}))| \leqslant |\bar{\varrho}^{n_c + 3}(0)| = |\varrho^{n_c + 3}(b)|
\]
and the claim follows.
\end{proof}

Before we proceed to the main results of this section, let us give some heuristic motivation. In order to construct strongly palindromic sequences $\tau(x)$, we need to find data $(P',\ell',c') = \tau(P,\ell,c)$ such that $\ell'$ is exponentially larger than $c'$. Combining Lemma~\ref{LEM:ell-ell'-length} with Lemma~\ref{LEM:inflation-word-growth}, we find that $\ell'$ grows with $n_c$ roughly like $M^{n_c}$, where $M = \max\{p,r \}$. For the behavior of $c'$, we need to make a case distinction.
\\If $r$ is odd, we can combine $c \in \mathbb{N} + 1/2$ with a large $n_c$ by choosing $c \in U_{n_c} + r^{n_c}/2$. After a certain level of approximation, there are no more undetermined letters in the interval $[0,c-1/2]$ and so $c'$ becomes constant in $n_c$. Thus, $\ell'$ can be made arbitrarily larger than $c'$ and we are led to expect
\begin{itemize}
\item Existence of $B$-strongly palindromic sequences in $\mathbb{X}^b_{\varrho}$ for every $B>1$.
\end{itemize}
If $r$ is even,  the situation is more subtle. Here, $n_c > 0$ is only possible for $c \in \mathbb{N}$ and we are in the situation of \eqref{EQ:c'-for-n}. We further have that $c$ is an undetermined position at level $n_c -1$, compare Figure~\ref{FIG:centers-move-up}. This enforces that $x_c$ is at least $f^{n_c-1}(k)$ for some small $k \in \mathcal{N}$. Since the other undetermined parts `move away' from $c$ as we increase $n_c$, the block $x_{[0,c-1]}$ is eventually constant. Thus, $c'$ behaves roughly like $f^{n_c}(k)$. 
If $p>1$  and $\varrho$ is not of type $0$, we have $f^{n_c}(k) \sim p^{n_c}$ and hence $\ell'$ can grow only polynomially faster than $c'$. 
If $p>1$ and $\varrho$ is of type $0$ we can choose $k = 0 \in \N$ and obtain $f^{n_c}(k) = 0$. Like in the case that $r$ is odd, this implies that $c'$ is eventually constant in $n_c$.
If $p=1$, we have $f^{n_c}(k) \sim n_c$ and $c'$ grows linearly in $n_c$. In this situation, $\ell' >> B^{c'}$ is only possible for $B$ smaller than a critical value $B'$. Hence we expect,
\begin{itemize}
\item No strongly palindromic sequences in $\mathbb{X}^b_{\varrho}$ if $p>1$ and $\varrho$ is not of type $0$.
\item Existence of $B$-strongly palindromic sequences in $\mathbb{X}^b_{\varrho}$ for all $B>1$ if $\varrho$ is of type $0$.
\item Existence of $B$-strongly palindromic sequences in $\mathbb{X}^b_{\varrho}$ only for $1<B<B'$ if $p=1$.
\end{itemize}
We have not yet shown how to combine infinitely many long palindromes in the same sequence $\tau(x)$ but this turns out to be a purely technical obstacle. The intuition provided by the reasoning above is indeed correct, as we see in the following theorem.

\begin{theorem}
\label{THM:PALINDROMES}
If $k_1 \cdots k_{r-1}$ is not a palindrome, then $\mathbb{X}_{\varrho}$ contains only the trivial strong palindrome $a^{\mathbb{Z}}$. If $k_1 \cdots k_{r-1}$ is a palindrome, we have the following cases.
\begin{itemize}
\item If $r$ is odd or $\varrho$ is of type $0$, then $\mathbb{X}_{\varrho} \setminus \{ a^{\mathbb{Z}} \}$ contains uncountably many $B$-strongly palindromic sequences for all $B>1$.
\item If $r$ is even and $p=1$, then $\mathbb{X}_{\varrho} \setminus \{ a^{\mathbb{Z}} \}$ contains uncountably many $B$-strongly palindromic sequences for all $1<B< r^{2/k_r}$ and no $B$-strongly palindromic sequences if $B \geqslant r^{2/k_r}$.
\item If $r$ is even, $p>1$  and $\varrho$ is not of type $0$, there are no strongly palindromic sequences in $\mathbb{X}_{\varrho} \setminus \{ a^{\mathbb{Z}} \}$.
\end{itemize}
\end{theorem}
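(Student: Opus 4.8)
The plan is to translate everything into the return-word picture via $\tau$ and to read off the growth of palindrome lengths and centers from the reflection-level $n_c$. First, if $k_1 \cdots k_{r-1}$ is not a palindrome, then by Proposition~\ref{PROP:b-palindromes} (failure of (1), hence of (2)) the language $\mc L_{\bar{\varrho}}'$ contains no arbitrarily long palindromes, so the contrapositive of Lemma~\ref{LEM:sp-only-for-large-pal} forbids any nontrivial strongly palindromic sequence and only $a^{\Z}$ survives. For the remaining cases assume $k_1\cdots k_{r-1}$ is a palindrome. By Lemma~\ref{LEM:b-strong-palindromes} it suffices to study $b$-palindromes, equivalently (via the $\tau$-dictionary of Lemmas~\ref{LEM:c'-c-relation} and~\ref{LEM:ell-ell'-length}) palindromes $(P,\ell,c)$ in $x \in \mathbb{X}_{\bar{\varrho}}'$, whose image $(P',\ell',c')=\tau(P,\ell,c)$ satisfies $|\varrho^{n_c-1}(b)| \leqslant \ell' \leqslant |\varrho^{n_c+3}(b)|$ and $c' \geqslant c$. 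Strong palindromicity of $w$ with constant $B$ amounts to producing $b$-palindromes with centers $c'_n \to \infty$ and $\ell'_n / B^{c'_n} \to \infty$; since $\ell'_n \to \infty$ forces $n_{c_n} \to \infty$, the whole question reduces to comparing the growth of $\ell' \asymp |\varrho^{n_c}(b)|$ (Lemma~\ref{LEM:inflation-word-growth}) with that of $c'$.

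For the impossibility statements suppose $r$ is even and let $w$ be $B$-strongly palindromic; passing to $b$-palindromes and then to maximal palindromes $(P,\ell,c) \triangleleft x$, the decisive input is Lemma~\ref{LEM:wc}, which forces $x_c = f^N(k_i)$ with $N \geqslant n_c - 1$. Hence $|\tau(x_c)| = 1 + x_c$ is large, and the first relation in Lemma~\ref{LEM:c'-c-relation} gives $c' \geqslant (1+x_c)/2 \geqslant \tfrac12 f^{\,n_c-1}(k_i)$. If $p>1$ and $\varrho$ is not of type $0$, then $f^{\,n_c-1}(k_i) \gtrsim p^{\,n_c}$, so $c'$ is exponential in $n_c$ while $\ell' \leqslant |\varrho^{n_c+3}(b)|$ is also only exponential in $n_c$; thus $B^{c'}$ is doubly exponential in $n_c$ and $\ell'/B^{c'} \to 0$, contradicting strong palindromicity for every $B>1$. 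If instead $p=1$ (which excludes type $0$, so $k_r>0$), then $f^{N}(k_i) = k_i + N k_r$ yields $c' \gtrsim \tfrac12 n_c k_r$, while $\ell' \leqslant |\varrho^{n_c+3}(b)| \lesssim r^{\,n_c}$ because $1=p<r$. Hence $\ell'/B^{c'} \lesssim (r / B^{k_r/2})^{n_c}$, which stays bounded as soon as $B \geqslant r^{2/k_r}$; this rules out $B$-strong palindromicity in that range and pins down the threshold.

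For the existence statements the task is to exhibit (uncountably many) $x$ whose palindromes translate to $b$-palindromes with $\ell'_n / B^{c'_n} \to \infty$, by placing palindromic centers of ever larger reflection level $n_c$ at positions where $c'$ is kept small. When $r$ is odd we use the ``middle path'' $c \in U_{n_c}+r^{n_c}/2$ of Figure~\ref{FIG:centers-move-up}, realized by repeatedly setting $p_{m+1}=(r-1)/2$ (Lemma~\ref{LEM:c-in-N/2-undetermined}); then \eqref{EQ:c'-for-n-half} gives $c' = |\tau(x_{[0,c-1/2]})|$ with the undetermined positions receding from the origin side, so $c'$ stabilizes while $\ell' \asymp |\varrho^{n_c}(b)|$ grows exponentially. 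When $\varrho$ is of type $0$ we instead take $x_c = 0$ (allowed by the remark following Lemma~\ref{LEM:wc}), so $|\tau(x_c)| = 1$ and again $c'$ stays small via \eqref{EQ:c'-for-n}; in both situations $\ell'$ can be made to dominate $B^{c'}$ for every $B>1$. When $r$ is even and $p=1$, the choice $c \in U_{n_c}$ with $x_c = f^{N}(k_i)$, $N\in\{n_c-1,n_c\}$, gives $x_c \asymp n_c k_r$, hence $c' \asymp \tfrac12 n_c k_r$ and $\ell' \asymp r^{\,n_c}$, so that $\ell'/B^{c'} \asymp (r/B^{k_r/2})^{n_c}\to\infty$ exactly when $B < r^{2/k_r}$.

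It remains to assemble infinitely many such palindromes into a single sequence with $c'_n \to \infty$ while retaining enough freedom for uncountably many distinct examples. The plan is to nest the blocks as in \cite[Prop.~2.1]{HofKnillSimon}, letting the reflection levels $N_n$ increase rapidly (so that $|\varrho^{N_n}(b)| \gg B^{c'_n}$, with $c'_n$ governed by the radius of the previous block) and exploiting the uniform recurrence of $\mc L_{\bar{\varrho}}'$ recorded in the proof of Proposition~\ref{PROP:b-palindromes}, while the still-unconstrained digits of $\p$ furnish a Cantor set of choices. I expect this nesting-and-counting step, together with the verification that the prefix length $c'$ truly stabilizes (odd $r$ or type $0$) or grows only linearly (even $r$, $p=1$), to be the main obstacle; the impossibility half is comparatively mechanical once Lemma~\ref{LEM:wc} is in hand.
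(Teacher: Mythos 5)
Your overall strategy coincides with the paper's: the non-palindromic case is dispatched exactly as you say via Proposition~\ref{PROP:b-palindromes} and Lemma~\ref{LEM:sp-only-for-large-pal}; the impossibility statements are proved exactly as in the paper's Lemma~\ref{LEM:B-too-large} (Lemma~\ref{LEM:wc} forces $x_c \geqslant f^{n_c-1}(k_{\min})$, hence $c' \geqslant |\tau(x_c)|/2$ grows exponentially in $n_c$ when $p>1$ and linearly with slope $k_r/2$ when $p=1$, against $\ell' \lesssim M^{n_c}$); and the existence proof for $r$ odd or type $0$ is the paper's recursive construction of $\p$ along the ``middle path'' $c\in U_{n}+r^{n}/2$, respectively with $x_c=0$, with free digits $\delta_n$ giving uncountably many examples.

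The one place you genuinely diverge is the existence half for $r$ even, $p=1$, $1<B<r^{2/k_r}$. The paper (Lemma~\ref{LEM:p=1,r-even}) does not build these sequences directly: it shows that for $A$ large enough, every $A$-strongly palindromic $x\in\mathbb{X}_{\bar\varrho}'$ has $B$-strongly palindromic image $\tau(x)$, and then invokes Proposition~\ref{PROP:b-palindromes}(5) for the uncountable supply. You instead propose a direct nesting construction as in the odd-$r$ case. That route can be made to work, but be careful with your assertion that $c' \asymp \tfrac12 n_c k_r$: by Lemma~\ref{LEM:c'-c-relation}, $c'$ also contains the prefix term $|\tau(x_{[0,c-1]})|$, which is only bounded by $C' r^{m_c}$ with $r^{m_c}\asymp c$, i.e.\ it is \emph{exponential} in $m_c$, not negligible. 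The paper's whole computation in Lemma~\ref{LEM:p=1,r-even} is devoted to showing that $A$-strong palindromicity of the preimage forces $n_c \gtrsim r^{m_c}$, so that $B^{C'r^{m_c}}$ is beaten by $(B'/B)^{k_r n_c/2}$; in a direct construction you must arrange the analogous inequality by hand (choose each new reflection level $n_{c_j}$ exponentially large compared to $r^{m_{c_j}}$), and this constraint must be compatible with the recursion that produces the next center $c_{j+1}$ from the undetermined part at level $n_{c_j}$. You have correctly identified this assembly step as the remaining work; it is exactly where the paper's two-step detour through $\mathbb{X}_{\bar\varrho}'$ saves effort, since the nesting and the uncountability are already packaged in Proposition~\ref{PROP:b-palindromes}.
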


We break this result down into a couple of statements. 
First, we treat the case of even $r$ with the additional assumption that $p=1$. We show that below a critical value for $B$, we can construct $B$-strong palindromes in $\mathbb{X}_{\varrho}$ as return word expansions of $A$-strong palindromes in $\mathbb{X}_{\bar{\varrho}}'$ for an appropriate $A>1$. Since by Proposition~\ref{PROP:b-palindromes} $\mathbb{X}'_{\bar{\varrho}}$ contains uncountably many $A$-strong palindromes for all $A>1$, there are uncountably many $B$-strong palindromes in $\mathbb{X}_{\varrho}$ in this case.

\begin{lemma}
\label{LEM:p=1,r-even}
Let $k_1 \cdots k_{r-1}$ be a palindrome, $r \in 2 \mathbb{N}$ and $p=1$. Assume $B' = r^{2/k_r}$ and let $1<B<B'$. Then there exists an $A > 1$ with the following property. If $x \in \mathbb{X}_{\bar{\varrho}}'$ is $A$-strongly palindromic, then $\tau(x)$ is $B$-strongly palindromic.
\end{lemma}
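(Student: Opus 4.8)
The plan is to transport the given $A$-strong palindromes of $x$ through $\tau$ and to verify that, when $A$ is large, the resulting $b$-palindromes of $\tau(x)$ are $B$-strong. Let $(Q_n,L_n,C_n)_{n\in\N}$ witness $A$-strong palindromicity of $x$. Replacing each $Q_n$ by the \emph{maximal} palindrome of $x$ centred at $C_n$ only enlarges $L_n$, so we may assume $Q_n$ maximal; write $n_c:=n_{C_n}(x)$ for its reflection-level, which is finite by Lemma~\ref{LEM:wc}. Set $(Q'_n,\ell'_n,c'_n)=\tau(Q_n,L_n,C_n)$, a sequence of $b$-palindromes in $\tau(x)$. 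By Lemma~\ref{LEM:c'-c-relation} we have $c'_n\geq C_n>0$, hence $c'_n\to\infty$, and it remains to check $B^{c'_n}/\ell'_n\to 0$.

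First I would pin down the two competing rates in $n_c$. Since $p<r$, Lemma~\ref{LEM:inflation-word-growth} gives $|\varrho^m(b)|\sim r^m$, so the lower bound of Lemma~\ref{LEM:ell-ell'-length} yields $\ell'_n\geq|\varrho^{n_c-1}(b)|\geq c_1\ts r^{n_c}$ for some $c_1>0$; thus $\ell'_n$ grows exponentially in $n_c$. For the centre I use $c'_n=|\tau(x_{[0,C_n-1]})|+\tfrac12|\tau(x_{C_n})|$ from Lemma~\ref{LEM:c'-c-relation} (for $r$ even and $n_c\geq1$ the centre $C_n$ is an integer, so this is the relevant case). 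Since $r$ is even, Lemma~\ref{LEM:wc} gives $x_{C_n}=f^N(k_i)$ with $N\leq n_c$, and because $p=1$ we have $f^N(k_i)=k_i+N k_r\leq k_{\max}+n_c k_r$, whence $\tfrac12|\tau(x_{C_n})|=\tfrac12(1+x_{C_n})\leq\tfrac{k_r}{2}n_c+\mathcal{O}(1)$. Note also $n_c\to\infty$: otherwise $L_n\leq r^{n_c+2}$ (by \eqref{EQ:ell-bounds}) stays bounded while $A^{C_n}/L_n\to0$ forces $A^{C_n}\to\infty$, a contradiction. Discarding finitely many indices, we assume $n_c\geq1$.

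The remaining term $|\tau(x_{[0,C_n-1]})|$ is where the hypothesis $p=1$ is decisive, and controlling it is the main obstacle. From $A^{C_n}\leq L_n\leq r^{n_c+2}$ (valid once $A^{C_n}/L_n<1$) I get $C_n\leq (n_c+2)\log r/\log A$, so $C_n$ is a small multiple of $n_c$ when $A$ is large. I then show $|\tau(x_{[0,C_n-1]})|=\mathcal{O}(C_n)$ with an implied constant depending only on $\varrho$. The point is that, as $C_n\ll r^{n_c}$, the window $x_{[0,C_n-1]}$ sits inside the level-$n_c$ determined neighbourhood of $C_n$ and is therefore a subword of $\beta^{(m_0)}$ with $m_0=\lceil\log_r(C_n+1)\rceil$ (this holds in both sub-cases $C_n\in U_{n_c}$ and $C_n\in U_{n_c}+r^{n_c}/2$ of the even-$r$ analysis, compare Figure~\ref{FIG:centers-move-up} and Lemma~\ref{LEM:approximant-structure}). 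Writing $S_m$ for the sum of letter-values in $\beta^{(m)}$, the recursion $\beta^{(m+1)}=\beta^{(m)}f^m(k_1)\cdots\beta^{(m)}$ gives $S_{m+1}=r S_m+\sum_{i=1}^{r-1}f^m(k_i)$; because $p=1$ the separators $f^m(k_i)=k_i+m k_r$ grow only linearly in $m$, so $S_m=\mathcal{O}(r^m)$ and the average letter-value in $\beta^{(m)}$ is bounded. Hence $|\tau(x_{[0,C_n-1]})|=C_n+\sum_{j=0}^{C_n-1}x_j\leq C_n+S_{m_0}=\mathcal{O}(C_n)=\mathcal{O}(n_c/\log A)$. (For $p>1$ one has $S_m\sim p^m$ and this estimate fails, which is exactly why the even-$r$, $p>1$ case behaves differently.)

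Combining the three estimates gives $c'_n\leq\bigl(\tfrac{k_r}{2}+\varepsilon(A)\bigr)n_c+\mathcal{O}(1)$, where $\varepsilon(A)\to0$ as $A\to\infty$, and therefore
\[
\frac{B^{c'_n}}{\ell'_n}\leq\frac{B^{\mathcal{O}(1)}}{c_1}\Bigl(\frac{B^{\,k_r/2+\varepsilon(A)}}{r}\Bigr)^{n_c}.
\]
Since $B<B'=r^{2/k_r}$ is equivalent to $\tfrac{k_r}{2}\log B<\log r$, I fix $\varepsilon>0$ with $(\tfrac{k_r}{2}+\varepsilon)\log B<\log r$ and then choose $A$ so large that $\varepsilon(A)\leq\varepsilon$; for this $A$ the base $B^{\,k_r/2+\varepsilon}/r$ is $<1$, so $B^{c'_n}/\ell'_n\to0$ as $n_c\to\infty$, i.e.\ $\tau(x)$ is $B$-strongly palindromic. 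The crux of the argument is the $\mathcal{O}(C_n)$ bound on the prefix length: it is what lets the exponentially long palindromes survive the passage through $\tau$, and it is precisely where the assumption $p=1$ enters.
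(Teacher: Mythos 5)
Your proof is correct and takes essentially the same route as the paper's: both transport maximal palindromes through $\tau$, use the lower bound $\ell'\geqslant C r^{n_c}$ from Lemma~\ref{LEM:ell-ell'-length}, bound $|\tau(x_c)|/2$ by $\tfrac{k_r}{2}n_c+\mathcal{O}(1)$ via $p=1$, and win because $B^{k_r/2}<r$. The only (cosmetic) difference is how the prefix term $|\tau(x_{[0,c-1]})|$ is controlled --- you use a letter-sum recursion for $\beta^{(m)}$ and absorb the resulting $\mathcal{O}(C_n)=\mathcal{O}(n_c/\log A)$ into an $\varepsilon(A)\ts n_c$ term, while the paper bounds it by $C'r^{m_c}\leqslant|\varrho^{m_c}(b)|$ and beats it with $n_c\gtrsim r^{m_c}\log A$; both hinge identically on $A$ being large.
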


\begin{proof}
Let us start the discussion with an arbitrary $A>1$. At a later point, we will make more explicit how large $A$ needs to be to yield the desired property. Suppose $x \in \mathbb{X}_{\bar{\varrho}}'$ is $A$-strongly palindromic with data $(P_j,c_j, \ell_j)_{j \in \mathbb{N}}$. Without loss of generality, we assume that $A^{c_j} \leqslant \ell_j$ and that $P_j$ is of maximal length, for all $j \in \mathbb{N}$. For a moment, fix an arbitrary $j \in \mathbb{N}$ and set $P = P_j$, $c = c_j$ and $\ell = \ell_j$. Let $(P',c',\ell') = \tau(P,c,\ell)$ be the data of the corresponding $b$-palindrome $P'$ in $\tau(x)$.
\\Let $n_c$ be the reflection-level of $c$. Since $r$ is even, we have not only $c \notin U_{n_c +1}$ but also $c \in U_m$ for all $m < n_c$, compare the discussion before Lemma~\ref{LEM:wc}. That means $x_c$ is either of the form $f^{n_c-1}(k_i)$ or $f^{n_c}(k_i)$ for some $1 \leqslant i \leqslant r-1$. In particular,
\begin{equation}
\label{EQ:x_c}
x_c \leqslant f^{n_c} (k_{\max}).
\end{equation}
Let $m_c \in \mathbb{N}$ be the the unique natural number such that $0 < c < r^{m_c} \leqslant rc$. By Lemma~\ref{LEM:palindromic-length}, it is $r^{n_c} -1 \leqslant \ell \leqslant r^{n_c + 2} -1$. For the relation of $m_c$ and $n_c$, we obtain,
\[
A^{r^{m_c -1}} \leqslant  A^c \leqslant \ell \leqslant r^{n_c + 2}.
\]
Solving for $n_c$, this yields
\begin{equation}
\label{EQ:nc-mc}
n_c \geqslant \frac{\log(A)}{r\log(r)} r^{m_c} - 2.
\end{equation}
Hence, by choosing $c$ large enough, we can assume that $m_c < n_c$ which implies that $c \in U_{m_c}$. This means that $x_{[0,c-1]}$ is contained as a suffix in $\beta^{(m_c)}$. 
Turning to the data $(P',c',\ell')$, our strategy amounts to combining a lower bound for $\ell'$ that is exponential in $n_c$ with an upper bound for $c'$ that is linear in $n_c$ and exponential in $m_c$. The details follow. 
First, a lower bound for $\ell'$ is given by $\ell' \geqslant |\varrho^{n_c-1}(b)|$, as was shown in \eqref{EQ:ell'-bounds}.
Due to Lemma~\ref{LEM:inflation-word-growth} and since $1=p<r$, there is a $C>0$ such that $|\varrho^n(b)| \geqslant C r^{n+1}$ for all $n \in \mathbb{N}$ and hence 
\begin{equation}
\label{EQ:l'-bound}
\ell' = |P'| \geqslant C r^{n_c}.
\end{equation}
On the other hand, by Lemma~\ref{LEM:c'-c-relation},
\begin{equation}
\label{EQ:c'-center}
c' = |\tau(x_{[0,c-1]})| + |\tau(x_c)|/2. 
\end{equation}
Let us find an upper bound for the first term. As we discussed before, $x_{[0,c-1]} \triangleleft \beta^{(m_c)}$, and we combine this with $\tau(\beta^{(m_c)}) \triangleleft \varrho^{m_c}(b)$, compare Lemma~\ref{LEM:u-n-inclusion}. This yields
\[
|\tau(x_{[0,c-1]})| \leqslant |\varrho^{m_c}(b)| \leqslant C' r^{m_c},
\]
for some constant $C' > 0$ which is independent of $m_c$, again by Lemma~\ref{LEM:inflation-word-growth}. The second term in \eqref{EQ:c'-center} can be estimated using \eqref{EQ:x_c}. We find $|\tau(x_c)| = 1+ x_c \leqslant 1 + f^{n_c}(k_{\max}) = 1 + k_{\max} + k_r n_c$, where in the last step we have used that $p=1$. Overall,
\[
c' \leqslant C' r^{m_c} + \frac{1+k_{\max}}{2} + \frac{k_r}{2} n_c. 
\]
Setting $D = B^{(1+k_{\max})/2}$ and using the definition $B' = r^{2/k_r}$, we find
\[
B^{c'} \leqslant D B^{C' r^{m_c}} B^{k_r n_c/2} 
= D B^{C' r^{m_c}} \biggl( \frac{B}{B'} \biggr)^{k_r n_c/2} r^{n_c}
\leqslant \frac{D}{C} B^{C' r^{m_c}} \biggl( \frac{B}{B'} \biggr)^{k_r n_c/2} \ell'.
\]
For notational convenience, let us set $d = B'/B$, keeping in mind that $1<d$. Using \eqref{EQ:nc-mc}, we can rewrite the last equation as
\[
\frac{B^{c'}}{\ell'} 
\leqslant \frac{D}{C} \exp{\bigl[ C' \log(B) r^{m_c}  - \log(d) k_r n_c / 2\bigr]}
\leqslant \widetilde{D} \exp{\bigl[ - \delta r^{m_c} \bigl(\log(A) - C' \log(B) / \delta \bigr) \bigr]},
\]
where $\widetilde{D} = \frac{D}{C} \exp( \log(d) k_r n_{\varrho}/2)$ and $0 < \delta = k_r \log(d) / (2 r \log(r))$. Hence, if 
\[
A > \exp{\left( \frac{2C' \log(B) r \log(r)}{k_r \log(d)}  \right)},
\]
we find that 
\[
\frac{B^{c'}}{\ell'}  \leqslant \widetilde{D} \me^{- \delta' r^{m_c}},
\]
for some $\delta'>0$ that does not depend on $c$. Note that for this argument it was essential that $1<d = B'/B$. Since we can repeat the argument for every choice of $j \in \mathbb{N}$ in $c = c_j$ we find a sequence of palindromes $(P'_j)_{j \in \mathbb{N}}$, with $P'_j$ centered at $c'_j$ in $\tau(x)$ and $|P'_j| = \ell'_j$, such that
\[
\lim_{j \to \infty} \frac{B^{c'_j}}{\ell'_j} = \lim_{j \to \infty} \widetilde{D} \me^{- \delta' r^{m_{c_j}}} = 0.
\]
Hence, $\tau(x)$ is $B$-strongly palindromic.
\end{proof}

It turns out that the bound for $B$ in Lemma~\ref{LEM:p=1,r-even} is indeed sharp. The reason for this is that in order to get a long $b$-palindrome, we need a large number of $a$'s around its center. Eventually, this enforces the center to move away from the origin and the length of the palindrome can not grow faster with this distance than with a certain exponential rate. A similar reasoning applies for arbitrary $B$ if $p>1$. 

\begin{lemma}
\label{LEM:B-too-large}
Let $r \in 2 \mathbb{N}$ and suppose that one of the following holds. 
\begin{enumerate}
\item $p=1$ and $B \geqslant B' = r^{2/k_r}$. 
\item $p>1$, $\varrho$ is not of type $0$ and $B>1$.
\end{enumerate}
Then, there are no $B$-strong palindromes in $\mathbb{X}_{\varrho} \setminus \{a^{\mathbb{Z}} \}$.
\end{lemma}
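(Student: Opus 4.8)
The plan is to argue by contradiction, transferring a hypothetical strong palindrome in $\mathbb{X}_\varrho$ to $\mathbb{X}_{\bar\varrho}'$ and showing that the generalized Toeplitz structure forces its palindromic centers to drift away from the origin too quickly. Suppose $w \in \mathbb{X}_\varrho \setminus \{a^{\mathbb{Z}}\}$ is $B$-strongly palindromic. By Lemma~\ref{LEM:b-strong-palindromes} (and the observation that no eventually periodic point other than $a^{\mathbb{Z}}$ can be strongly palindromic) we may pass to a sequence of $b$-palindromes $(P'_n, L_n, C_n) \triangleleft w$ with $C_n \to \infty$ and $B^{C_n}/L_n \to 0$. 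Lifting through the bijection between $b$-palindromes and palindromes of $\mathbb{X}_{\bar\varrho}'$, write $(P'_n, L_n, C_n) = \tau(P_n, \ell_n, c_n)$ with $(P_n, \ell_n, c_n) \triangleleft x := \tau^{-1}(w) \in \mathbb{X}_{\bar\varrho}'$, and extend each $P_n$ to the maximal palindrome in $x$ centered at $c_n$; by \eqref{EQ:c'-for-n} this leaves $C_n$ unchanged and only increases $L_n$, hence preserves the limit $B^{C_n}/L_n \to 0$.

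Next I would identify the reflection levels $n_n := n_{c_n}(x)$. The upper bound $L_n \leq |\varrho^{n_n + 3}(b)|$ from Lemma~\ref{LEM:ell-ell'-length}, together with $L_n \to \infty$ and the monotonicity of $m \mapsto |\varrho^m(b)|$, forces $n_n \to \infty$. Since $r$ is even, for $n \geq 1$ both $U_n$ and $U_n + r^n/2$ are contained in $\mathbb{Z}$, so every half-integer center has reflection level $0$; thus once $n_n \geq 1$ the center $c_n$ is a positive integer and Lemma~\ref{LEM:wc} applies, giving $x_{c_n} = f^N(k_i)$ with $n_n - 1 \leq N \leq n_n$. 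This structural fact is the heart of the argument: via \eqref{EQ:c'-for-n} it yields the lower bound $C_n \geq |\tau(x_{c_n})|/2 = (1 + x_{c_n})/2 \geq (1 + f^{n_n - 1}(k_i))/2$.

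It remains to compare growth rates. In case $(1)$, $p = 1$ gives $f^N(k_i) = k_i + N k_r$, whence $C_n \geq \tfrac12 (n_n - 1) k_r$, while $1 = p < r$ and Lemma~\ref{LEM:inflation-word-growth} give $\log L_n \leq n_n \log r + \mathrm{const}$. Therefore $\log(B^{C_n}/L_n) \geq n_n\bigl(\tfrac{k_r \log B}{2} - \log r\bigr) + \mathrm{const}$, and for $B \geq B' = r^{2/k_r}$ the coefficient of $n_n$ is nonnegative, so $B^{C_n}/L_n$ stays bounded away from $0$, contradicting strong palindromicity. In case $(2)$, the closed form $f^N(k) = p^N\bigl(k + \tfrac{k_r}{p-1}\bigr) - \tfrac{k_r}{p-1}$ together with the not-type-$0$ hypothesis (which guarantees $k_i + \tfrac{k_r}{p-1} > 0$ for every admissible $i$) yields $x_{c_n} \geq c\, p^{n_n}$ for some $c > 0$, hence $C_n \gtrsim p^{n_n}$, whereas $\log L_n \leq \mathrm{const}\cdot n_n$ by Lemma~\ref{LEM:inflation-word-growth}. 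Then $\log(B^{C_n}/L_n) \geq c\, p^{n_n}\log B - \mathrm{const}\cdot n_n \to +\infty$ for every $B > 1$, again a contradiction.

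I expect the main obstacle to lie not in the growth estimates but in the bookkeeping that links the two sides of the $\tau$-transfer: one must ensure that the lower bound on the center $C_n$ and the upper bound on the length $L_n$ are governed by the \emph{same} reflection level $n_n$, and that the even-$r$ hypothesis genuinely forces the center to be an integer with $x_{c_n}$ of order $f^{n_n}(k_i)$ (Lemma~\ref{LEM:wc}). Once this is in place, the reduction to maximal $b$-palindromes and the two elementary growth comparisons close the argument.
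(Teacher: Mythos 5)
Your proposal is correct and follows essentially the same route as the paper's proof: transfer to $x=\tau^{-1}(w)$, use the even-$r$ Toeplitz structure to force $x_{c}=f^{N}(k_i)$ with $N\geqslant n_c-1$ (hence a lower bound on the center $c'$ via \eqref{EQ:c'-for-n}), combine with the upper bound $\ell'\leqslant|\varrho^{n_c+3}(b)|$ from Lemma~\ref{LEM:ell-ell'-length}, and compare growth rates in the two parameter regimes. The only (cosmetic) omission is the one-line reduction to the case where $k_1\cdots k_{r-1}$ is a palindrome, which the paper dispatches first via Proposition~\ref{PROP:b-palindromes} and Lemma~\ref{LEM:sp-only-for-large-pal} so that the standing palindromicity hypothesis under which Lemmas~\ref{LEM:wc} and~\ref{LEM:ell-ell'-length} are stated is in force.
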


\begin{proof}
If $k_1 \cdots k_{r-1}$ is not a palindrome, this follows from Proposition~\ref{PROP:b-palindromes} and Lemma~\ref{LEM:sp-only-for-large-pal}. We can therefore assume that $k_1 \cdots k_{r-1}$ is a palindrome. 
Let us start with the case that $p=1$ and $B \geqslant B' = r^{2/k_r}$.
Assume to the contrary of the claim that $B \geqslant B'$ and $w \in \mathbb{X}_{\varrho} \setminus \{ a^{\mathbb{Z}} \}$ is $B$-strongly palindromic. Note that none of the eventually periodic points is $B$-strongly palindromic (by Lemma~\ref{LEM:b-strong-palindromes}) and that a finite shift preserves the property of being $B$-strongly palindromic. We can hence assume that $w \in \mathbb{X}^b_{\varrho} \setminus \mathbb{X}^{\ep}_{\varrho}$ and set $x = \tau^{-1}(w)$. Suppose $(P',\ell',c') \triangleleft w$ for a $b$-palindrome $P'$ and $(P,\ell,c) = \tau^{-1}(P',\ell',c') \triangleleft x$. 
By \eqref{EQ:ell'-bounds} and Lemma~\ref{LEM:inflation-word-growth}, there is a $C>0$ such that
\begin{equation}
\label{EQ:l'-upper-bound}
\ell' \leqslant |\varrho^{n_c +3} (b)| \leqslant C r^{n_c}.
\end{equation}
On the other hand, $c \in U_{n_c -1}$ because $r$ is even, which implies that $x_c = f^{n_c -1}(k)$ for some $k \in \mathcal{N}$. Setting $k' = \min \mathcal{N}$, we obtain $x_c \geqslant f^{n_c -1}(k') = k' + k_r (n_c -1)$. Therefore, by Lemma~\ref{LEM:c'-c-relation},
\[
c' \geqslant |\tau(x_c)|/2 \geqslant  \frac{k_r}{2} (n_c -1). 
\]
Combining this with $B \geqslant r^{2/k_r}$ and \eqref{EQ:l'-upper-bound}, we find that
\[
B^{c'} \geqslant r^{2c'/k_r} \geqslant r^{n_c -1} \geqslant \frac{1}{Cr} \ell'.
\] 
Since this holds for every $b$-palindrome in $w$, it is in contradiction to $B$-strong palindromicity of $w$.
\\ If $p>1$, $\varrho$ is not of type $0$ and $B>1$, the same proof works, with some minor adjustments. In this case, the bound in \eqref{EQ:l'-upper-bound} changes to $\ell' \leqslant M^{n_c}$ for some $M \geqslant \max \{p,r\}$.
Since $\varrho$ is not of type $0$, we have $k_r > 0$ or $k_{\min} = \min \{k_1,\ldots, k_{r-1}\} > 0$. In both cases, we obtain $f(k_{\min}) \geqslant 1$ and hence
we have $x_c \geqslant f^{n_c-1}( k_{\min}) \geqslant p^{n_c -2}$, which yields
\[
c' \geqslant |\tau(x_c)|/2 \geqslant \frac{1}{2} p^{n_c -2}.
\] 
Hence, there is $n_0 \in \mathbb{N}$ such that for all $n_c \geqslant n_0$,
\[
B^{c'} \geqslant B^{p^{n_c - 2}/2} > M^{n_c} \geqslant \ell',
\]
Thus, either $\ell' \leqslant M^{n_0}$ (if $n_c < n_0$) or $\ell' \leqslant B^{c'}$. Again, this is in contradiction to $w$ being $B$-strongly palindromic.
\end{proof}

This finishes the discussion of the cases where $r$ is  even and $\varrho$ is not of type $0$. If $r$ is odd, we show the existence of $B$-strong palindromes in $\mathbb{X}_{\varrho} \setminus \{a^{\Z}\}$ directly without taking $A$-strong palindromes in $\mathbb{X}_{\bar{\varrho}}'$ as an intermediate step. The reason is that if $p\geqslant r$, there is no $A>1$ that guarantees that $x$ being $A$-strongly palindromic implies that $\tau(x)$ is $B$-strongly palindromic. (We mention this only as an aside and leave the proof to the interested reader). Nevertheless, the decomposition into return words of $b$ still plays an important role. The case that $\varrho$ is of type $0$ can be treated using similar ideas and is therefore discussed in parallel.

\begin{lemma}
Suppose $k_1 \cdots k_{r-1}$ is palindromic. Assume that either $r \in 2\mathbb{N} +1$ or $r\in 2\N$ and $\varrho$ is of type $0$. Then, there exist uncountably many $B$-strongly palindromic sequences in $\mathbb{X}_{\varrho}$ for all $B>1$.
\end{lemma}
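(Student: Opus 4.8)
The plan is to build the desired sequences as return‑word expansions $w = \tau(x)$ of suitable $x \in \mathbb{X}_{\bar{\varrho}}'$, controlling everything through the dictionary between $b$-palindromes in $w$ and palindromes in $x$ set up before Lemma~\ref{LEM:sp-only-for-large-pal}. Concretely, if $(P,\ell,c) \triangleleft x$ is a maximal palindrome with reflection-level $n_c$ and $(P',\ell',c') = \tau(P,\ell,c)$, then Lemma~\ref{LEM:ell-ell'-length} gives $\ell' \geqslant |\varrho^{n_c-1}(b)|$, which by Lemma~\ref{LEM:inflation-word-growth} grows at least exponentially with base $M = \max\{p,r\}$, while Lemma~\ref{LEM:c'-c-relation} expresses $c'$ through the short initial block $x_{[0,c-1/2]}$ (half-integer $c$) or $x_{[0,c-1]}$ together with the single letter $x_c$ (integer $c$). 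Since $B$-strong palindromicity of $w$ only requires a sequence of $b$-palindromes centred at $c'_j \to \infty$ with $B^{c'_j}/\ell'_j \to 0$, it suffices to exhibit palindrome centres $c_j \to \infty$ in a single $x$ whose reflection-levels $n_{c_j}$ grow much faster than any power of $c_j$, while $c'_j$ stays polynomially bounded in $c_j$.

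The two hypotheses serve precisely to keep $c'_j$ small. If $r$ is odd, I would use \emph{half-integer} centres lying on the middle path $c \in U_n + r^n/2$; for these Lemma~\ref{LEM:c'-c-relation} gives $c' = |\tau(x_{[0,c-1/2]})|$ with no contribution from a possibly huge letter $x_c$, and, because $c \in U_m + r^m/2$ forces every level-$m$ undetermined position to lie at distance $\geqslant r^m/2$ from $c$, all letters of $x$ in $[0,c]$ are introduced at levels $\lesssim \log_r c$ and hence have size $O(M^{\log_r c})$; combined with $\tau(\beta^{(m)})b \triangleleft \varrho^m(b)$ (Lemma~\ref{LEM:u-n-inclusion}) this bounds $c'_j$ by a polynomial in $c_j$. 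If $r$ is even and $\varrho$ is of type $0$, then $0 \in \mathcal N$ and $f(0)=0$, so I would use \emph{integer} centres $c \in U_n$ whose filling letter is $x_c = 0$; then $|\tau(x_c)| = 1$, the dangerous $|\tau(x_c)|/2$ term in Lemma~\ref{LEM:c'-c-relation} is bounded, and the same ``the undetermined lattice recedes from $c$'' argument bounds $|\tau(x_{[0,c-1]})|$ polynomially. This is exactly the point that breaks outside type~$0$ for even $r$, where Lemma~\ref{LEM:wc} forces $x_c \sim p^{n_c}$, the case disposed of in Lemma~\ref{LEM:B-too-large}.

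I would then construct $\p = (p_n)_n \in \Z_r$ recursively so as to realise infinitely many such centres at once. Fix a very rapidly increasing sequence $n_1 < n_2 < \cdots$, take $c_1$ to be a small positive (half-)integer of reflection-level $\geqslant n_1$, and set $c_{j+1} = c_j + r^{n_j}$, so $c_j \to \infty$. By Lemma~\ref{LEM:periodic-approximants} the condition $c \in U_n + r^n/2$ (resp.\ $c \in U_n$) is the single congruence $c \equiv q_n + r^n/2$ (resp.\ $c \equiv q_n$) modulo $r^n$, which together with Lemma~\ref{LEM:c-in-N/2-undetermined} (resp.\ Lemma~\ref{LEM:c-in-N-undetermined}) determines $p_1,\dots,p_{n_j}$; the choice $c_{j+1} \equiv c_j \pmod{r^{n_j}}$ makes these congruences consistent from one stage to the next, using for odd $r$ that $r^{n_{j+1}}/2 \equiv r^{n_j}/2 \pmod{r^{n_j}}$. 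In the type~$0$ case one additional digit $p_{n_j+1}$ is fixed to cap the centre by $x_{c_j}=0$, and since varying the coset representative of $c_{j+1}$ lets $p_{n_j+1}$ run through all residues modulo $r$, the capping digit and the continuation digit can be chosen compatibly. Taking $n_j$ to grow faster than any prescribed exponential in $c'_{j-1}$ then yields $B^{c'_j}/\ell'_j \to 0$ for every $B>1$ simultaneously, so each resulting $w = \tau(x)$ is $B$-strongly palindromic for all $B>1$.

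For uncountability I would note that at every stage there is genuine freedom (e.g.\ at least two admissible values of $n_{j+1}$, or a block of unconstrained digits inserted between the blocks fixed above), producing $2^{\aleph_0}$ distinct $\p \in \Z_r \setminus \Z$; since $\p[\cdot]$ is injective on $\mathbb{X}_{\bar{\varrho}}'$ (Lemma~\ref{LEM:p-from-x}) and $\tau$ is injective on $\mathbb{X}_{\bar{\varrho}}$ (Lemma~\ref{LEM:tau-hom}), the corresponding $w = \tau(x) \in \mathbb{X}_{\varrho} \setminus \{a^{\Z}\}$ are pairwise distinct and all $B$-strongly palindromic. I expect the main obstacle to be the bookkeeping of the recursion: verifying that the reflection-level constraints for all $c_j$, the type-$0$ capping conditions, and the requirement $\p \notin \Z$ (so that $x \in \mathbb{X}_{\bar{\varrho}}'$ and no letter $\infty$ occurs) can be met simultaneously, and that the resulting $c'_j$ really do remain polynomially bounded in $c_j$ — that is, that at every intermediate level the deep letters are genuinely pushed away from the block $[0,c_j]$.
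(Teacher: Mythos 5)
Your proposal is correct and follows essentially the same route as the paper: recursively building $\p\in\Z_r\setminus\Z$ so that a sequence of centres (half-integer, on the path $c\in U_n+r^n/2$, when $r$ is odd; integer with filling letter $x_c=0$ in the type-$0$ case) acquires arbitrarily large reflection-level while $c'$ stays controlled via $x_{[0,c]}\triangleleft\beta^{(m)}$ and Lemmas~\ref{LEM:c'-c-relation} and~\ref{LEM:ell-ell'-length}, with uncountability coming from free digits inserted between the forced blocks. The paper's proof is exactly this construction, differing only in bookkeeping details such as how the next centre $c_{j+1}$ is chosen from $U_{n_j}$.
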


\begin{proof} 
The idea of the proof rests on the following observation.
If we construct $\p \in \Z_r  \setminus \Z$ in such a way that a given position $c$ has a sufficiently large reflection-level $n_c$, we can create an arbitrarily large palindrome $P$ around it due to Lemma~\ref{LEM:palindromic-length}. 
This relates to an arbitrarily large palindrome $P'$ in $\tau(x[\p])$. If $r$ is odd, we may choose $c \in \N_0 + 1/2$. If $r$ is even and $\varrho$ is of type $0$, we may take $x_c = 0$, even for large values of $n_c$. In both cases, the center of $P'$ in $\tau(x[\p])$ can be bounded in terms of $c$. Repeating this procedure on a sequence of positions $(c_j)_{j \in \mathbb{N}}$, we can construct a sequence $x[\p]$ such that $\tau(x[\p])$ is $B$-strongly palindromic. It turns out that there is enough freedom in the construction to create an uncountable family of examples. The details follow. We define the sequence $\p = (p_n)_{n \in \mathbb{N}}$ recursively. It is important to note that given $x = x[\p]$ the approximant $x^{(n)}$ only depends on the entries $p_1, \ldots, p_n$ of $\p$ for all $n\in \mathbb{N}$ and that $x$ coincides with $x^{(n)}$ on $\mathbb{Z} \setminus U_n$. Hence, if  $(P,\ell,c) \triangleleft x^{(n)}$ and if $P$ contains no undetermined letter, then $(P,\ell,c) \triangleleft x$, no matter how we choose $p_m$ for $m > n$. 
\\First, assume that $r \in 2\N + 1$. Start with an arbitrary $c \in \mathbb{N}_0 + 1/2$ and let $m_1 \in \mathbb{N}$ such that $ r^{m_1 -1}/2 \leqslant c < r^{m_1}/2$. Let $(p_1,\ldots,p_{m_1}) \in \{0,\ldots,r-1\}^{m_1}$ be the unique tuple such that $c \in U_{m_1} + r^{m_1}/2$. For every extension from $(p_1,\ldots,p_{m_1})$ to a sequence $\p$ and $x = x[\p]$, we have that $x_{[0,c-1/2]} \triangleleft \beta^{(m_1)}$ and hence 
\begin{equation}
\label{EQ:prepare-c'}
|\tau(x_{[0,c-1/2]})| \leqslant |\tau(\beta^{(m_1)})| \leqslant |\varrho^{m_1}(b) | \leqslant M^{m_1},
\end{equation}
for some $M \geqslant \max \{p,r\}$. If $n > m_1$ and $p_j = (r-1)/2$ for all $j \in \{m_1 +1, \ldots, n\}$, then $c \in U_{n} + r^{n}/2$ by Lemma~\ref{LEM:c-in-N/2-undetermined} and hence $n_c \geqslant n$. In that case, setting $P = \beta^{(n)}$ and $\ell = |P|$, we have that $(P,\ell,c) \triangleleft x$. For the corresponding data $(P',\ell',c') = \tau(P,\ell,c) \triangleleft \tau(x)$, we find that 
\[
c' \leqslant M^{m_1},
\]
by \eqref{EQ:prepare-c'} and Lemma~\ref{LEM:c'-c-relation}. On the other hand, we use Lemma~\ref{LEM:ell-ell'-length} to conclude
\[
\ell' \geqslant |\varrho^{n_c-1}(b)| \geqslant |\varrho^{n-1}(b)| \geqslant N^{n-1},
\]
for some $N>1$. We choose $n = n_1$, where $n_1$ is large enough to ensure $N^{n_1 -1} \geqslant B^{(M^{m_1})}$, implying that $\ell' \geqslant B^{c'}$. We proceed with a free parameter $p_{n_1 + 1} = \delta_1 \in \{0,r-1\}$ and set $p_{n_1+2} = 0$. Therefore, $q_{n_1+1} = q_{n_1 + 2}$ which implies $- r^{n_1+1} \leqslant q_{n_1+2} < 0$ and for $c_2 = q_{n_1+2} + r^{n_1+2} /2$ we obtain
\[
\frac{1}{2} r^{n_1+1} \leqslant \biggl( \frac{r}{2} -1 \biggr) r^{n_1+1} \leqslant c_2 < \frac{1}{2} r^{n_1 + 2}.
\]
Notably, $c_2 > c$ and we are in the situation of Lemma~\ref{LEM:c-in-N/2-undetermined}. Let $n_2 \in \mathbb{N}$ be minimal with the property that $N^{n_2-1} \geqslant 2 B^{c'_2}$. We set $p_j = (r-1)/2$ for all $j \in \{ n_1 +2,\ldots, n_2 \}$ and obtain palindromic data $(P_2,c_2,\ell_2) \triangleleft x$, where $P_2 = \beta^{(n_2)}$. The corresponding palindrome in $w$ is specified by $(P'_2,c'_2,\ell'_2) = \tau(P_2, c_2, \ell_2)$. Similar as before, we find that $c_2 \in U_{n_2} + r^{n_2}/2$ and hence $\ell_2' \geqslant N^{n_2 - 1} \geqslant 2 B^{c'_2}$. Proceeding inductively, we find for every sequence $(\delta_n)_{n \in \mathbb{N}} \in \{ 0,r-1 \}^{\N}$ a different $\p \in \Z_{r}$ and a family of palindromes $(P'_n, c'_n, \ell'_n)$ in $\tau(x[\p])$ with the property that $c'_n \to \infty$ as $n \to \infty$ and $\ell'_n \geqslant n B^{c'_n}$ for all $n \in \mathbb{N}$. Hence, each such $\tau(x[\p])$ is $B$-strongly palindromic.\\
The proof for the case that $r \in 2\N$ and $\varrho$ is of type $0$ follows the same line of thought. Here, we start with some $c \in U_{m_1}$ such that $0 \leqslant c < r^{m_1}$, and for some large $n_1 > m_1$ choose $p_j = r-1$ for $m_1 + 1 \leqslant j \leqslant n_1$, ensuring $c \in U_{n_1}$. Let $p_{n+1} = i_0$ where $1 \leqslant i_0 \leqslant r-1$ is such that $k_{i_0} = 0$. Then, $x_c = f^{n_1}(0) = 0$ and if $n_1$ is large enough, there is a palindrome $(P,c,\ell) = \tau^{-1}(P',c',\ell') \triangleleft x$ such that $\ell' \geqslant B^{c'}$. We take $p_{n_1 + 2}=\delta_1 \in \{1,\ldots,r-1\}$ arbitrary. Take $c_2$ to be the smallest positive integer in $U_{n_1 + 2}$ and proceed inductively.
\end{proof}

At this point, we have finally shown all the claims stated in Theorem~\ref{THM:PALINDROMES}.

\begin{remark}
The way we constructed strong palindromes so far always relied on taking many consecutive entries $p_j$ of $\p$ to be constant (either equal to $r-1$ or $(r-1)/2$, depending on the parity). 
This is indeed necessary, as we can see from Lemma~\ref{LEM:palindromic-length} and Figure~\ref{FIG:centers-move-up}. Hence, if $\tau(x[\p])$ is strongly-palindromic, $\p$ must have very specific letter-statistics as a sequence. 
Since the ergodic measure $\nu$ on $\mathbb{X}_{\varrho}$ is closely related to the Haar measure on $\mathbb{Z}_r$ (compare Remark~\ref{REM:Z_r-conjugation} and Remark~\ref{REM:measure-relation}), this can be used to show that the strongly palindromic sequences in $X_{\varrho}$ are contained in a $\nu$-nullset.
\end{remark}

\subsection{Repetition properties}
Recall that given $m \in \N$, we denote by $u^m$ the concatenation of $m$ copies of the word $u$, called the $m$-th power of $u$. By construction, we already know that $a^m \in \mc L_{\varrho}$ for all $m$. 
However, for our purposes it is more interesting to consider powers of words $u \in \mc L_{\varrho}$ with the property that $u_1 = b$. The reason for this is that the existence of large enough powers of such words guarantees almost sure absence of eigenvalues for the corresponding Schr\"{o}dinger operators, compare Proposition~\ref{PROP:as-absence-of-eigenvalues} below. Before we continue, we extend the notion of a power of $u \in \mc A^+$ (or $u \in \mc N^+$) as follows. If $v = u^m u'$, where $u'$ is a prefix of $u$, we write $v = u^s$, where $s = m + |u'|/|u|$ and call $v$ the $s$-th power of $u$. If $\ell =|u|$, we say that $v$ is $\ell$-periodic.

\begin{definition}
The \emph{index} of a language $\mc L$ over an alphabet $\mc A$ is given by
\[
\Ind(\mc L) = \sup \{s \in \Q \mid u^s \in \mc L \mbox{ for some } u \in \mc L \}.
\]
Given $b \in \mc A$, the \emph{$b$-index} of $\mc L$ is defined as
\[
\Ind_{b}(\mc L) = \sup \{s \in \Q \mid u^s \in \mc L \mbox{ for some } u \in \mc L \mbox{ with } u_1 = b \}.
\]
\end{definition}

Given $n \in \N$, note that $\Ind_b(\mc L_{\varrho})> n$ if and only if there is a $u \in \mc L_{\varrho}$ with $u_1 = b$ such that $u^n b \in \mc L_{\varrho}$. This is the case precisely if $u = \tau(y)$ for some $y \in \mc L_{\bar{\varrho}}'$ and $y^n \in \mc L_{\bar{\varrho}}'$. Hence, we have that
\begin{equation}
\label{EQ:index-criterion}
\Ind_b(\mc L_{\varrho}) > n \iff \Ind(\mc L_{\bar{\varrho}}') \geqslant n,
\end{equation}
for all $n \in \N$. 

\begin{remark}
\label{REM:ind-actually-greater}
In fact, if $y^n \in \mc L_{\bar{\varrho}}'$ for some $n \in \N$, there is $k \in \N$ such that $y^n k$ is also legal. Then, $\bar{\varrho}(y^n k) = (\bar{\varrho}(y))^n k_1 \cdots k_{r-1} f(k) \in \mc L_{\bar{\varrho}}'$. Since $k_1 \cdots k_{r-1}$ is a prefix of $\bar{\varrho}(y)$, this shows that $\Ind(\mc L_{\bar{\varrho}}') > n$. That is, $\Ind(\mc L_{\bar{\varrho}}') \geqslant n \Leftrightarrow \Ind(\mc L_{\bar{\varrho}}') > n $ for all $n \in \N$.
\end{remark}

\begin{lemma}
\label{LEM:reduce-by-r}
Suppose $y^n \in \mc L_{\bar{\varrho}}'$ for some $n \in \N$ and that $|y| = r \ell$ for some $\ell \in \N$. Then, there exists $y' \in \mc L_{\bar{\varrho}}'$ of length $|y'| = \ell$ such that $(y')^n \in \mc L_{\bar{\varrho}}'$.
\end{lemma}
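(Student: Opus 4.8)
The plan is to desubstitute $y^n$ one level under $\bar\varrho$ and to exploit that $\bar\varrho$ has constant length $r$, so that a period of length $r\ell$ downstairs collapses to a period of length $\ell$ upstairs. First I would pick $x \in \mathbb{X}_{\bar\varrho}'$ with $y^n \triangleleft x$, say $y^n = x_{[s+1,\,s+r\ell n]}$, which exists because $\mc L_{\bar\varrho}'$ is exactly the language of $\mathbb{X}_{\bar\varrho}'$. Since $\mathbb{X}_{\bar\varrho}$ is the orbit closure of the fixed point $x^{\star} = \bar\varrho(x^{\star})$, the proof of Lemma~\ref{LEM:pre-images} carries over verbatim to the continuous constant-length substitution $\bar\varrho$ and yields $x = S^{t}\bar\varrho(x')$ for some $x' \in \mathbb{X}_{\bar\varrho}$ and $0 \leqslant t < r$.

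Next I would record the block structure. Each block $\bar\varrho(x'_j) = k_1 \cdots k_{r-1} f(x'_j)$ has length $r$, and after the shift $S^t$ its final letter $f(x'_j)$ occupies a fixed residue class $\rho \equiv r-1-t \pmod r$ in $x$, while every other position carries the fixed pattern $k_1 \cdots k_{r-1}$. Reading $x$ along the class $\rho$ therefore returns $f(x')$ letterwise. The window $[s+1,\,s+r\ell n]$ has length divisible by $r$, hence contains exactly $\ell n$ positions of class $\rho$; these correspond to $x'_{[a+1,\,a+\ell n]}$ for a suitable $a$ and carry the letters $f(x'_{a+1}), \ldots, f(x'_{a+\ell n})$.

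Now comes the transfer of periodicity, which is the crux. Since $y^n$ is $r\ell$-periodic, $x_i = x_{i+r\ell}$ for every $i$ in the window with $i+r\ell$ still in the window. Restricting $i$ to the class $\rho$ and writing $x_i = f(x'_j)$, the position $i+r\ell$ lies again in class $\rho$ with $x_{i+r\ell} = f(x'_{j+\ell})$, because $r\ell/r = \ell$. As $f(k) = k_r + p k$ with $p \geqslant 1$ is strictly increasing and hence injective, this forces $x'_j = x'_{j+\ell}$ throughout the window, so $x'_{[a+1,\,a+\ell n]} = (y')^n$ with $y' := x'_{[a+1,\,a+\ell]}$ of length $\ell$. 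The main obstacle is precisely the index bookkeeping: one must keep the shift $t$, the residue class $\rho$, and the alignment of the window consistent so that exactly $\ell n$ evenly spaced $f$-letters and an integer number $n$ of full periods appear. Once that alignment is set up correctly, injectivity of $f$ does the rest.

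Finally I would verify $\infty$-freeness. Every letter of $y^n$ is finite, so in particular each $f(x'_j)$ in the window is finite; since $f(\infty) = \infty$, this forces $x'_j \in \mc N$, and hence $y'$ and $(y')^n$ are $\infty$-free. As $(y')^n \triangleleft x' \in \mathbb{X}_{\bar\varrho}$ it is legal, and being $\infty$-free it lies in $\mc L_{\bar\varrho} \cap \mc N^{+} = \mc L_{\bar\varrho}'$; the same holds for its subword $y'$, which is the required word of length $\ell$.
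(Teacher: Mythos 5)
Your proof is correct and follows essentially the same route as the paper's: desubstitute one level (your $x = S^{t}\bar{\varrho}(x')$ is the paper's Toeplitz decomposition $x = \alpha^{(1)} \blacktriangleright f(x')$), observe that the $f$-letters sit on a period-$r$ lattice so that $r\ell$-periodicity of the window forces $\ell$-periodicity of $f(x')$ on the corresponding window, and conclude via injectivity of $f$. Your explicit check of $\infty$-freeness at the end is a detail the paper leaves implicit, but nothing more.
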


\begin{proof}
Since $y^n \in \mc L_{\bar{\varrho}}'$, we can find a $x \in \mathbb{X}_{\bar{\varrho}}'$ such that $y^n = x_{[0,nr \ell -1 ]}$, which is $r\ell$ - periodic. Recall that we can rewrite $x$ as $x = \alpha^{(1)} \blacktriangleright f(x')$ for some $x' \in \mathbb{X}_{\bar{\varrho}}'$. 
By construction, the word $f(x'_{[0,n \ell -1]})$ is filled into a lattice with period $r$ within $[0,nr\ell -1]$. Hence, the $r\ell$-periodicity of $x_{[0,nr \ell -1 ]}$ implies that $f(x'_{[0,n \ell -1]})$ is $\ell$-periodic. Since $f$ is injective, the same holds for $x'_{[0,n \ell-1]}$. In other words, we can write $x'_{[0,n\ell-1]} = (y')^n$ for some $y' \in \mc L_{\bar{\varrho}}'$ with $|y'| = \ell$.
\end{proof}

\begin{lemma}
\label{LEM:n-larger-3-criterion}
Suppose $y^n \in \mc L_{\bar{\varrho}}'$ for some $n \in \N$ with $n\geqslant 3$ and that $|y| > r$ is not divisible by $r$. Then, there exists a letter $k \in \mc N$ such that $k^n \in \mc L_{\bar{\varrho}}'$. 
\end{lemma}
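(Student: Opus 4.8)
The plan is to descend one level in the generalized Toeplitz hierarchy and exploit that, because $r \nmid |y|$, a shift by $\pm |y|$ carries every lattice position of $U_1$ onto an off-lattice position, where the letter is rigidly prescribed by the $r$-periodic background pattern. Concretely, I would set $\ell_0 = |y|$ and fix $x \in \mathbb{X}_{\bar{\varrho}}'$ with $y^n = x_{[0,n\ell_0 - 1]}$, which is $\ell_0$-periodic. As in the proof of Lemma~\ref{LEM:reduce-by-r}, I would write $x = \alpha^{(1)} \blacktriangleright f(x')$ with $x' \in \mathbb{X}_{\bar{\varrho}}'$, so that the undetermined part $U_1 = r\mathbb{Z} + q_1$ of $\alpha^{(1)}$ is filled with the letters $f(x'_j)$ in order, while every off-lattice position $i \notin U_1$ carries a background value $g(i \bmod r) \in \{k_1,\ldots,k_{r-1}\}$ depending only on $i \bmod r$.

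The key observation is rigidity off the lattice. Since $r \nmid \ell_0$, for any lattice position $i = q_1 + jr$ both $i+\ell_0$ and $i-\ell_0$ are off-lattice. Hence $\ell_0$-periodicity forces $f(x'_j) = g\bigl((q_1+\ell_0)\bmod r\bigr) =: k_{s_0}$ whenever $i, i+\ell_0 \in [0,n\ell_0-1]$, and $f(x'_j) = g\bigl((q_1-\ell_0)\bmod r\bigr) =: k_{s_1}$ whenever $i-\ell_0, i \in [0,n\ell_0-1]$, where $k_{s_0}, k_{s_1}$ are constants independent of $j$. The corresponding index ranges are $i \in [0,(n-1)\ell_0-1]$ and $i \in [\ell_0, n\ell_0-1]$; their union covers $[0,n\ell_0-1]$, and their overlap $[\ell_0,(n-1)\ell_0-1]$ is nonempty exactly when $n \geq 3$. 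This is precisely where the hypothesis $n \geq 3$ enters: on the overlap one gets $k_{s_0} = f(x'_j) = k_{s_1}$, so the two background values coincide and $f(x'_j) = k_{s_0}$ on \emph{every} lattice position in $[0,n\ell_0-1]$. By injectivity of $f$, the letter $x'_j$ then equals a single $k := f^{-1}(k_{s_0}) \in \mc N$ on the whole contiguous block of $x'$-indices corresponding to these lattice positions, and since consecutive positions of $U_1$ correspond to consecutive indices of $x'$, this exhibits a constant run $k^L \triangleleft x'$.

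Finally I would check $L \geq n$. The number of lattice positions in $[0,n\ell_0-1]$ is at least $\lfloor n\ell_0/r\rfloor$, and the hypothesis $\ell_0 > r$ (so $\ell_0 \geq r+1$) gives $n\ell_0/r \geq n + n/r > n$, whence $L \geq n$. Therefore $k^n \triangleleft x' \in \mathbb{X}_{\bar{\varrho}}'$, i.e.\ $k^n \in \mc L_{\bar{\varrho}}'$, as claimed.

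The main obstacle is the bookkeeping in the middle step rather than any deep input: one must verify that the $+\ell_0$ and $-\ell_0$ constraints can be glued into a single constant value (which is exactly what forces $n \geq 3$, via the nonempty overlap), and that $\ell_0 > r$ makes the resulting constant block of $x'$ long enough to contain $k^n$. Both are elementary counting arguments, once the descent $x = \alpha^{(1)} \blacktriangleright f(x')$ and the off-lattice rigidity coming from $r \nmid \ell_0$ are in place; the conceptual content is entirely in choosing to pass to $x'$ and using the two opposite shifts.
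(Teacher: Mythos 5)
Your proof is correct and rests on the same core mechanism as the paper's: after the decomposition $x = \alpha^{(1)} \blacktriangleright f(x')$, the off-lattice positions are rigid modulo $r$, so shifting a lattice position by $\pm|y|$ (which lands off the lattice because $r \nmid |y|$) pins the filled letters to background values, with $n \geqslant 3$ supplying exactly the room needed to combine the two shift directions. The only divergence is in the packaging: the paper uses a covering argument to show that $y^n$ is $r$-periodic and then invokes Lemma~\ref{LEM:reduce-by-r}, whereas you use an overlap argument to show the filled letters are constant and read off the run $k^n$ in $x'$ directly --- both are sound, though in your version one should say explicitly that the overlap $[\ell_0,(n-1)\ell_0-1]$, having length $(n-2)\ell_0 \geqslant \ell_0 > r$, actually contains a lattice position, which is what identifies $k_{s_0}$ with $k_{s_1}$.
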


\begin{proof}
Let $\ell = |y| \in \N$ and choose $x \in \mathbb{X}_{\bar{\varrho}}'$ such that $y^n = x_{[0,n\ell -1]}$. 
We will show, that this word is in fact $r$-periodic. Let $i \in \N$ with $0 \leqslant i \leqslant n\ell - 1 -r$. If $i \notin U_1$, then $x_i = x_{i+r} $ by the Toeplitz structure of $x$. Let us assume $i \in U_1$. By the requirement that $n \geqslant 3$ and $\ell > r$ we find that $n\ell \geqslant 3 \ell > 2\ell + r$. Therefore, one of the conditions $i - \ell \in [0,n\ell -1]$ or $i+r + \ell \in [0, n\ell -1]$ needs to be true. For a moment, suppose $i - \ell \in [0,n\ell -1]$. Since $i \in U_1$ and $\ell$ is not divisible by $r$, we find that $i - \ell \notin U_1$. Hence,
\[
x_i = x_{i - \ell} = x_{i-\ell + r} = x_{i +r},
\]
where the first and last step follow by $\ell$-periodicity and the second step is due to the Toeplitz structure. Analogously, if $i+r + \ell \in [0,n\ell -1]$, we find 
\[
x_i = x_{i + \ell} = x_{i + \ell +r} = x_{i+r}.
\]
Consequently, we can write $y^n = (y')^s$ for some $y' \in \mc L_{\bar{\varrho}}'$ with $|y'| = r$ and $s > n$. In particular, $(y')^n \in \mc L_{\bar{\varrho}}'$ and the claim follows by an application of Lemma~\ref{LEM:reduce-by-r}.
\end{proof}

\begin{prop}
\label{PROP:index-short-words}
Suppose $y^n \in \mc L_{\bar{\varrho}}'$ for some $n \in \N$ and $y \in \mc L_{\bar{\varrho}}'$. Then, there exists a word $y' \in \mc L_{\bar{\varrho}}'$ of length $|y'| \leqslant r-1$ such that $(y')^n \in \mc L_{\bar{\varrho}}'$.
\end{prop}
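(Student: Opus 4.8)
The plan is to reduce the length of $y$ by descent, using the two preceding lemmas, and inducting on $|y|$. If $|y| \leqslant r-1$ we may take $y' = y$, so assume $|y| \geqslant r$. As long as $r \mid |y|$, Lemma~\ref{LEM:reduce-by-r} replaces $y$ by a word of length $|y|/r$ whose $n$-th power still lies in $\mc L_{\bar{\varrho}}'$; since the length strictly decreases, after finitely many steps we reach a word $\tilde{y}$ with $\tilde{y}^{\hspace{0.5pt}n} \in \mc L_{\bar{\varrho}}'$ such that either $|\tilde{y}| \leqslant r-1$ (and we are done) or $r \nmid |\tilde{y}|$ and $|\tilde{y}| > r$. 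In the remaining case, if $n \geqslant 3$ then Lemma~\ref{LEM:n-larger-3-criterion} directly yields a letter $k \in \mc N$ with $k^n \in \mc L_{\bar{\varrho}}'$, so $y' = k$ has length $1 \leqslant r-1$. The case $n = 1$ is trivial, since any single letter occurring in $y$ lies in $\mc L_{\bar{\varrho}}'$.

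This leaves the case $n = 2$ with $r \nmid |\tilde{y}|$ and $\ell := |\tilde{y}| > r$, which I expect to be the main obstacle. The reason Lemma~\ref{LEM:n-larger-3-criterion} does not apply is genuine: its $r$-periodicity argument relies on the inequality $n\ell \geqslant 2\ell + r$ to reach, from each undetermined position, a determined one inside the word, and this fails for $n = 2$ on a window of width $r$. To close this case I would argue directly from the generalized Toeplitz structure. Write $\tilde{y}^{\hspace{0.5pt}2} = x_{[0,2\ell-1]}$ for a suitable $x \in \mathbb{X}_{\bar{\varrho}}'$ and decompose $x = \alpha^{(1)} \blacktriangleright f(x')$ with $x' \in \mathbb{X}_{\bar{\varrho}}'$, as in the proof of Lemma~\ref{LEM:reduce-by-r}. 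For every undetermined position $i \in U_1 \cap [0,\ell-1]$ the $\ell$-periodicity of $\tilde{y}^{\hspace{0.5pt}2}$ gives $x_i = x_{i+\ell}$, and because $r \nmid \ell$ the target $i+\ell$ lies outside $U_1$, hence is a determined position whose value depends only on $(i+\ell) \bmod r = (q_1 + \ell) \bmod r$. Thus all these $x_i$ share one common value $c_1$, and since they are filled by consecutive letters of $f(x')$ and $f$ is injective, the corresponding block of $x'$ is a constant run of the single letter $f^{-1}(c_1) \in \mc N$.

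The length of this run is $|U_1 \cap [0,\ell-1]| \geqslant \lfloor \ell/r \rfloor$. As soon as $\ell \geqslant 2r$ the run has length at least $2$, so $\bigl(f^{-1}(c_1)\bigr)^2 \triangleleft x'$, which produces a single-letter word $y'$ of length $1 \leqslant r-1$ with $(y')^2 \in \mc L_{\bar{\varrho}}'$, as required. The finitely many short lengths $r < \ell < 2r$ (all not divisible by $r$) can be dispatched separately: here $\tilde{y}^{\hspace{0.5pt}2}$ has length $< 4r$ and carries the exact period $\ell$ together with the near-period $r$ coming from the pattern $k_1 \cdots k_{r-1}$ on the determined positions, and a direct Fine--Wilf type analysis of these two periodicities either rules out such a square or exhibits a shorter period feeding back into the induction. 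I note that this mechanism subsumes the \emph{type $0$} case as well, since the extracted letter $f^{-1}(c_1)$ may equal the $f$-fixed letter $0$ without any change to the argument. Assembling the cases yields a word $y' \in \mc L_{\bar{\varrho}}'$ of length at most $r-1$ with $(y')^n \in \mc L_{\bar{\varrho}}'$, completing the induction.
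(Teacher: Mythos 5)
Your reductions for $n=1$, $n\geqslant 3$, and the divisibility descent via Lemma~\ref{LEM:reduce-by-r} match the paper, and you correctly identify $n=2$ with $r\nmid\ell$, $\ell>r$ as the real content. Your treatment of that case for $\ell\geqslant 2r$ is valid and is genuinely different from the paper's: you descend to $x'$ through the filling $x=\alpha^{(1)}\blacktriangleright f(x')$, observe that all undetermined positions in $[0,\ell-1]$ are forced (by $\ell$-periodicity and the common residue of $U_1$ mod $r$) to carry one value $c_1\in f(\mc N)$, and extract the square $\bigl(f^{-1}(c_1)\bigr)^2\triangleleft x'$. That is a correct alternative mechanism, and it even lands on a single letter rather than a word of length $\leqslant r-1$.

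The gap is the subcase $n=2$, $r<\ell<2r$, which you do not actually prove: there $U_1\cap[0,\ell-1]$ can consist of a single position, so no constant run of length $2$ is produced, and your fallback ("a direct Fine--Wilf type analysis \ldots either rules out such a square or exhibits a shorter period") is only a hope. Squares with $r<\ell<2r$ are \emph{not} ruled out in general, so the second branch is the one you must establish, and a naive two-period argument does not apply because $y^2=x_{[0,2\ell-1]}$ is not $r$-periodic --- only its determined part (the complement of $U_1$) is. The paper closes this uniformly for all $\ell>r$ with $r\nmid\ell$ as follows: let $s_0,s_1\in\{0,\ldots,r-1\}$ be the phases with $s_0\in U_1$ and $\ell+s_1\in U_1$ (so $s_0\neq s_1$); say $s_0<s_1$. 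Then $x_{[0,s_1-s_0-1]}=x_{[\ell,\ell+s_1-s_0-1]}$ by $\ell$-periodicity, and $x_{[\ell,\ell+s_1-s_0-1]}=x_{[s_0-s_1,-1]}$ because the shift $\ell+s_1-s_0$ is divisible by $r$ and both intervals avoid $U_1$, so the $r$-periodicity of the determined part applies. Hence $y'=x_{[0,s_1-s_0-1]}$ has $(y')^2=x_{[s_0-s_1,\,s_1-s_0-1]}\in\mc L_{\bar{\varrho}}'$ with $|y'|=s_1-s_0\leqslant r-1$ (the case $s_1<s_0$ is symmetric). Substituting this argument for your missing subcase --- or simply using it for all $\ell>r$, which makes your split at $2r$ unnecessary --- completes the proof.
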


\begin{proof}
The claim is trivial for $n =1$. By Lemma~\ref{LEM:reduce-by-r}, we can assume that $|y|$ is not divisible by $r$ without loss of generality. If $|y| < r$, we choose $y' = y$ and we are done. Hence, let $|y|>r$. If $n \geqslant 3$, apply Lemma~\ref{LEM:n-larger-3-criterion} and set $y' = k$. It remains to consider the case $n =2$ and $|y|>r$. Let $\ell = |y|$ and choose $x \in \mathbb{X}_{\bar{\varrho}}$ with $y^2 = x_{[0,2\ell -1]}$. Let $0 \leqslant s_0, s_1 \leqslant r -1$ be such that $s_0 \in U_1$ and $\ell + s_1 \in U_1$. Note that $s_0 \neq s_1$ since $\ell$ is not divisible by $r$. For a moment, assume that $s_0 < s_1$. Then, 
\[
x_{[0,s_1 - s_0-1]} = x_{[\ell, \ell + s_1 - s_0 -1]} = x_{[s_0 - s_1, -1]},
\] 
where we have used for the second equality that $\ell + s_1 - s_0$ is divisible by $r$ and that the interval $[s_0-s_1,-1]$ contains no position in $U_1$. The claim follows with $y' = x_{[0,s_1 - s_0 -1]}$ since $(y')^2 = x_{[s_0 - s_1, s_1 - s_0 -1]} \in \mc L_{\bar{\varrho}}'$. If $s_1 < s_0$, we find by similar reasoning,
\[
x_{[0,s_0 - s_1 -1]} = x_{[\ell, \ell + s_0 - s_1 - 1]} = x_{[s_0 - s_1, 2(s_0 - s_1) - 1]},
\]
and hence $(y')^2 = x_{[0, 2(s_0 - s_1) - 1]} \in \mc L_{\bar{\varrho}}'$ for $y' = x_{[0,s_0 - s_1 - 1]}$.
\end{proof}

The main result reads as follows.

\begin{prop}
\label{PROP:power-test}
For $n \in \N$, it is $\Ind_b(\mc L_{\varrho}) > n$ if and only if there exists a $y \in \mc L_{\bar{\varrho}}'$ with $|y| \leqslant r-1$  and there is $k \in f^2(\mathcal N)$ such that $y^n$ is contained in a permutation of $\beta^{(2)} k$. If $k > k_{\max}$, it is even $y \triangleleft \beta^{(2)}$.
\end{prop}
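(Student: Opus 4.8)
The plan is to reduce everything to the return-word language and then exploit the two-scale Toeplitz structure. By \eqref{EQ:index-criterion} together with Remark~\ref{REM:ind-actually-greater}, the inequality $\Ind_b(\mc L_{\varrho}) > n$ is equivalent to the existence of some $y_0 \in \mc L_{\bar{\varrho}}'$ with $y_0^n \in \mc L_{\bar{\varrho}}'$, and Proposition~\ref{PROP:index-short-words} then lets me assume $\ell := |y| \leqslant r-1$. So it suffices to prove, for such short $y$, that $y^n \in \mc L_{\bar{\varrho}}'$ holds if and only if $y^n$ is a subword of a cyclic permutation of $\beta^{(2)} k$ for some $k \in f^2(\mc N)$.

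For the direction ``$\Leftarrow$'' I would realize any cyclic permutation of $\beta^{(2)} k$, with $k \in f^2(\mc N)$, as a window of length $r^2$ inside an actual sequence $x \in \mathbb{X}_{\bar{\varrho}}'$. Indeed, a window of length $r^2$ meets $U_2$ in exactly one position, and on its complement $x$ agrees with the $\beta^{(2)}$-periodic approximant $x^{(2)}$ (Lemma~\ref{LEM:approximant-structure}); conversely, since the letters occurring on $U_2$ are precisely those of the form $f^j(k_i)$ with $j \geqslant 2$, that is, the elements of $f^2(\mc N)$, any prescribed $k$ and any rotation can be produced by a suitable choice of $\p \in \Z_r \setminus \Z$. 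Hence $y^n \in \mc L_{\bar{\varrho}}'$, so $\Ind(\mc L_{\bar{\varrho}}') \geqslant n$ and thus $\Ind_b(\mc L_{\varrho}) > n$ by \eqref{EQ:index-criterion}.

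The crux is the direction ``$\Rightarrow$'', where I must bound the length of $y^n$. Write $y^n = x_{[0, n\ell -1]}$ for some $x \in \mathbb{X}_{\bar{\varrho}}'$, so this word is $\ell$-periodic, and let $K$ be its largest letter. If $n \geqslant 2$, then $\ell$-periodicity forces $K$ to occur at two positions of $x_{[0,n\ell-1]}$ at distance exactly $\ell \leqslant r-1$. I distinguish two cases. If $K > k_{\max}$, then $K$ can only sit on positions of $U_1$ (all positions outside $U_1$ carry the small letters $k_1, \ldots, k_{r-1} \leqslant k_{\max}$, and $U_2 \subset U_1$); but $U_1$ is $r$-periodic, so two occurrences of $K$ are at distance $\geqslant r > \ell$, a contradiction. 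Therefore $K \leqslant k_{\max}$, which means that $f(k_{\max}) > k_{\max}$ does not occur in $y^n$; since every window of length $\geqslant r^2$ contains the letter $f(k_{\max})$ (as in the proof of Proposition~\ref{PROP:b-palindromes}), this yields $n\ell < r^2$. The case $n=1$ is trivial as $\ell \leqslant r - 1 < r^2$. Consequently $y^n$, having length $< r^2 = |\beta^{(2)} k|$, is contained in a length-$r^2$ window of $x$, which by the structure recalled above is a cyclic permutation of $\beta^{(2)} k$ with $k \in f^2(\mc N)$.

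Finally, the refinement is a byproduct of the same dichotomy: if the relevant fill satisfies $k > k_{\max}$ and $n \geqslant 2$, then $y^n$ cannot contain the unique $U_2$-position carrying $k$, since otherwise $\ell$-periodicity would reproduce $k$ at distance $\ell < r$, impossible for a letter exceeding $k_{\max}$ and therefore living only on the $r$-separated lattice $U_1$; hence $y^n$, and in particular $y$, is a subword of $\beta^{(2)}$ (for $n=1$ one simply selects $y \triangleleft \beta^{(2)}$). The main obstacle I anticipate is exactly this length bound in ``$\Rightarrow$'': naively one would compare letter values across the Toeplitz levels in the spirit of Lemma~\ref{LEM:n-larger-3-criterion}, but because the representation $f^m(k_i)$ need not be unique, values from different levels may coincide. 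The argument above sidesteps this by reasoning only with the single largest letter $K$, together with the two clean facts that letters exceeding $k_{\max}$ live only on $U_1$ and that $f(k_{\max})$ recurs within every window of length $r^2$.
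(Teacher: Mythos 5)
Your proposal is correct and follows essentially the same route as the paper: reduce to $|y|\leqslant r-1$ via \eqref{EQ:index-criterion} and Proposition~\ref{PROP:index-short-words}, observe that letters exceeding $k_{\max}$ live on the $r$-separated lattice $U_1$ so the $\ell$-periodic word $y^n$ (with $\ell<r$) cannot contain one, deduce $|y^n|<r^2$ from the recurrence of $f(k_{\max})$ in every window of length $r^2$, and identify such windows with cyclic permutations of $\beta^{(2)}k$, $k\in f^2(\mathcal N)$. You merely spell out the converse direction and the refinement $y\triangleleft\beta^{(2)}$ more explicitly than the paper does.
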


\begin{proof}
Again, we assume $n \geqslant 2$, since the claim is trivial for $n =1$. 
In the light of \eqref{EQ:index-criterion} and Proposition~\ref{PROP:index-short-words}, t is clear that we can restrict to the case $|y| \leqslant r -1$. Note that letters $k> k_{\max}$ appear in every legal word with distance at least $r$. Since the period of $y^n$ is smaller than $r$, it cannot contain such a letter. This also enforces that $|y^n| < r^2 = |\beta^{(2)} k|$ because otherwise it would contain the letter $f(k_{\max})>k_{\max}$. Since the admitted word of length at most $r^2$ are precisely the subwords of $\beta^{(2)} k' \beta^{(2)}$, with $k' \in f^2(\mathcal N)$, the claim follows.
\end{proof}

\begin{example}
\label{EX:guiding-power}
Our guiding example $\varrho \colon a \mapsto a, b \mapsto bba$, has data $p = 1, r= 2, k_1 = k_{\max} = 0$ and $k_2 = 1$. Since $bb \in \mc L_{\varrho}$, we surely have $\Ind_b(\mc L_{\varrho}) \geqslant 2$. On the other hand, all letters in $f^2(\mathcal N)$ are larger than $k_{\max}$ and $\beta^{(2)} = 010$ contains no power larger than $1$. By Proposition~\ref{PROP:power-test}, this implies that $\Ind_b(\mc L_{\varrho}) \leqslant 2$. Hence, $\Ind_b(\mc L_{\varrho}) = 2$.
\end{example}

\begin{example}
\label{EX:power-three}
Consider the substitution $\varrho \colon a \mapsto a^2, b \mapsto bbaaabba$, with return word substitution $\bar{\varrho} \colon k \mapsto 030 (2k +1)$. All words of the form $\beta^{(2)} k \beta^{(2)}$ with $k \in f^2(\mathcal N)$ and $k \leqslant k_{\max}$ are contained in $\bar{\varrho}^3(0)$. We find that $\bar{\varrho}^3(0)$ contains the word $(03)^3$ but no word of power $4$. This implies $3 < \Ind_b(\mc L_{\varrho}) \leqslant 4$.
\end{example}

\begin{remark}
\label{REM:minimal}
Many of the techniques that we used to investigate structural properties of almost primitive substitutions on $\mc A= \{a,b \}$ can be extended to the case that $p = 1$ and $k_r = 0$. This corresponds to the minimal substitution
$
\varrho \colon a \mapsto a, b \mapsto ba^{k_1} \cdots ba^{k_{r-1}} b,
$
which was shown to give rise to a strictly ergodic subshift $\mathbb{X}_{\varrho}$ \cite{dOliveira}. We exclude the trivial case $k_1 = \ldots = k_{r-1} = 0$ in the following. The definitions of the return word substitution $\bar{\varrho}$, as well as $\tau$ carry over to this setting, with the modification that $\mc N = \{ k_1,\ldots, k_{r-1} \}$ is finite and $f$ is just the identity. Now, $\mathbb{X}_{\bar{\varrho}}$ is a Toeplitz subshift over a finite alphabet which is an almost $1$-$1$ extension of its maximal equicontinuous factor $\mathbb{Z}_r$, compare \cite{down}. Using this structure, we obtain the following results.
\begin{enumerate}
\item $\mathbb{X}_{\varrho}$ is periodic if and only if $k_1 = k_2 = \ldots = k_{r-1}$.
\item $\mathbb{X}_{\varrho}$ contains uncountably many $B$-strong palindromes for all $B>1$ if and only if $\varrho(b)$ is palindromic.
\item The set of strong palindromes in $\mathbb{X}_{\varrho}$ is contained in a null set for the unique ergodic measure on $(\mathbb{X}_{\varrho},S)$.
\item $\Ind_b(\mc L_{\varrho}) > 3$ if and only if there exists a $y \in \mc L_{\bar{\varrho}}$ with $|y| \leqslant r-1$ such that $y^3 \in \mc L_{\bar{\varrho}}$.
\end{enumerate}
These statements complement the results by de Oliveira and Lima in \cite{dOliveira} and are partly stronger, illustrating the usefulness of using return words in this setting.
The proofs are similar as for the almost primitive case (and often simpler); we leave the details to the interested reader. Since $\mathbb{X}_{\varrho}$ is minimal, it is either periodic or aperiodic (contains no eventually periodic point). Minimality also implies that it contains $B$-strongly palindromic points for all $B>1$ if and only if the language contains arbitrarily long palindromes. 
\end{remark}

\section{Spectral results for Schr\"{o}dinger operators}
\label{SEC:schroedinger}

In the following, let $\varrho$ be a (non-trivial) almost primitive substitution on some finite alphabet $\mc A$. Many of the results in this section hold in this general setting. We will  specify to $\mc A=\{a,b\}$ when discussing criteria for excluding eigenvalues.
\\In this section, we establish properties for the bounded, self-adjoint operator $\H_w$, for $w \in \mathbb{X}_{\varrho}$, defined in \eqref{EQ:Schroedinger}. We prove all of the theorems stated in the introduction, giving a precise form for the interval $\Sigma'$ in Theorem~\ref{THM:palindromes-spectral} and to the finite algorithm mentioned in Theorem~\ref{THM:as-no-eigenvalues}. Finally, we discuss an example of $\varrho$ such that every spectral type appears for some $\H_w$ with $w \in \mathbb{X}_{\varrho}$.\\
First, we observe that the spectrum of $\H_w$ is the same for all but the periodic point and that it always contains an interval.

\begin{lemma}
\label{LEM:thick-spectrum}
For every $w \in \mathbb{X}_{\varrho}$, we have $V(a) + [-2,2] \subset \sigma(\H_{w})$. Further, $\sigma(H_w) = \sigma(H_{w'})$ for all $w,w' \in \mathbb{X}_{\varrho} \setminus \{ a^{\Z} \}$. 
\end{lemma}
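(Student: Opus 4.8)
The plan is to reduce both assertions to a single monotonicity principle for the spectrum with respect to orbit-closure inclusion, which is standard in the theory of ergodic Schr\"odinger operators.

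First I would record the following sub-lemma: if $w' \in \overline{\orb(w)}$, equivalently if every finite subword of $w'$ occurs in $w$, then $\sigma(\H_{w'}) \subseteq \sigma(\H_w)$. To prove it, choose $(n_k)_{k \in \N}$ with $S^{n_k} w \to w'$ in the product topology. Since shifting a sequence conjugates the associated operator by the shift unitary on $\ell^2(\Z)$, we have $\sigma(\H_{S^{n_k}w}) = \sigma(\H_w)$ for every $k$. Given $\lambda \in \sigma(\H_{w'})$ and $\varepsilon > 0$, the Weyl criterion provides a normalized, finitely supported $\psi \in \ell^2(\Z)$ with $\norm{(\H_{w'} - \lambda)\psi} < \varepsilon$. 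The difference $\H_{S^{n_k}w} - \H_{w'}$ acts as multiplication by $V_n(S^{n_k}w) - V_n(w')$, a purely diagonal term. For $k$ large enough, $S^{n_k} w$ and $w'$ agree on $\operatorname{supp}(\psi)$, so this term vanishes on $\psi$ and $\H_{S^{n_k}w}\psi = \H_{w'}\psi$; hence $\norm{(\H_{S^{n_k}w} - \lambda)\psi} < \varepsilon$, which forces $\dist(\lambda, \sigma(\H_w)) = \dist(\lambda, \sigma(\H_{S^{n_k}w})) < \varepsilon$. Letting $\varepsilon \to 0$ and using that $\sigma(\H_w)$ is closed yields $\lambda \in \sigma(\H_w)$.

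For the first statement, I would observe that $\H_{a^{\Z}}$ is the free discrete Laplacian shifted by the constant potential $V(a)$, so $\sigma(\H_{a^{\Z}}) = [-2,2] + V(a)$. For any $w \in \mathbb{X}_{\varrho}$ the set $\overline{\orb(w)}$ is a nonempty closed $S$-invariant set and therefore contains a minimal subsystem; by almost minimality (Proposition~\ref{Prop:ap-properties}) the only such subsystem is $\{a^{\Z}\}$, so $a^{\Z} \in \overline{\orb(w)}$. The sub-lemma then gives $[-2,2] + V(a) = \sigma(\H_{a^{\Z}}) \subseteq \sigma(\H_w)$ for every $w \in \mathbb{X}_{\varrho}$.

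For the second statement, recall that every $w \in \mathbb{X}_{\varrho} \setminus \{a^{\Z}\}$ has a dense orbit by almost minimality, so $\overline{\orb(w)} = \mathbb{X}_{\varrho} = \overline{\orb(w')}$ for any two such points; in particular $w' \in \overline{\orb(w)}$ and $w \in \overline{\orb(w')}$. Two applications of the sub-lemma give $\sigma(\H_{w'}) \subseteq \sigma(\H_w)$ and the reverse inclusion, hence equality. The only genuinely analytic input is the sub-lemma, and its one delicate point is passing from the strong approximation $\H_{S^{n_k}w} \to \H_{w'}$ to the inclusion of spectra; I expect this to be the main (though routine) obstacle, which I would handle directly by the approximate-eigenvector argument above rather than by invoking strong resolvent convergence.
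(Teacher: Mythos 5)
Your proposal is correct and follows essentially the same route as the paper: the spectrum of $\H_{a^{\Z}}$ is $V(a)+[-2,2]$, strong approximation gives $\sigma(\H_{w'}) \subseteq \sigma(\H_w)$ whenever $w' \in \overline{\orb(w)}$, and almost minimality (dense orbits of all points other than $a^{\Z}$) yields both inclusions. The only difference is that you spell out the Weyl-sequence proof of the strong approximation step, which the paper simply cites as standard.
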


\begin{proof}
The spectrum associated to the point $a^{\Z}$ is given by $\sigma(\H_{w_a}) = \sigma_{ac}(\H_{a^{\Z}}) = V(a) + [-2,2]$. Recall that by strong approximation, $\sigma(\H_{w'}) \subset \sigma(\H_{w})$ whenever $w'$ is in the orbit closure of $w$. Since every point in $\mathbb{X}_{\varrho} \setminus \{a^{\Z} \}$ has a dense orbit by Proposition~\ref{Prop:ap-properties}, we obtain
\[
V(a) + [-2,2] = \sigma(\H_{a^{\Z}}) \subset \sigma(\H_{w}),
\]
for all $w \in \mathbb{X}_{\varrho}$. By the same argument, $\sigma(H_{w'}) \subset \sigma(H_w)$ and $\sigma(H_w) \subset \sigma(H_{w'})$ for all $w,w' \in \mathbb{X}_{\varrho} \setminus \{ a^{\Z} \}$.
\end{proof}

As was shown in \cite[Thm.~8]{Killip-Simon}, the inclusion $V(a) + [-2,2] \subset \sigma(\H_w)$ is strict unless $w = a^{\Z}$.
From this, it is already clear, that the spectrum cannot be uniform on $\mathbb{X}_{\varrho}$. Also, the spectral type depends on $w \in \mathbb{X}_{\varrho}$, as we will discuss below. However, several spectral characteristics are fixed almost surely. To be more precise, we consider the dynamical system $(\mathbb{X}_{\varrho},S,\nu)$, where $\nu$ is the unique non-atomic ergodic measure on $\mathbb{X}_{\varrho}$, which might be infinite. If $\nu$ is a probability measure, it is a classical result that the sets $\sigma(\H_w)$, $\sigma_{\pp}(\H_w)$, $\sigma_{\scp}(\H_w)$ and $\sigma_{\ac}(\H_w)$ are the same for $\nu$-almost every $w \in \mathbb{X}_{\varrho}$ \cite{KunzSouillard}. It was shown recently that an analogue statement holds in the context of $\sigma$-finite ergodic measures under some additional technical assumptions.

\begin{theorem} \cite[Thm.~4.3, Thm.~4.6]{BDFL}.
\label{THM:as-spectrum}
Let $(\Omega, \mc B, \mu)$ be a $\sigma$-finite probability space such that $\mu(\Omega) = \infty$. Suppose that $T \colon \Omega \to \Omega$ is invertible, ergodic and measure-preserving and that $\mu$ is non-atomic. For $w \in \Omega$, let $\H_w$ be defined via \eqref{EQ:Schroedinger}, with $V_n(w) = f(T^n(w))$ for some bounded measurable function $f \colon \Omega \to \mathbb{R}$. 
Then, there exist compact sets $\Sigma, \Sigma_{\pp}, \Sigma_{\scp}, \Sigma_{\ac} \subset \mathbb{R}$ such that for $\mu$-almost every $w \in \Omega$, one has $\sigma_{\bullet}(\H_w) = \Sigma_{\bullet}$ for $\bullet \in \{ \varnothing, \pp, \scp,\ac \}$.
\end{theorem}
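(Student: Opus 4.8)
The plan is to reduce the statement to the familiar pattern \emph{covariance plus ergodicity}, which is insensitive to whether $\mu$ is finite or infinite, and to localise the role of the $\sigma$-finite hypothesis entirely in a measurability question. First I would establish covariance at the level of operators. Let $U \colon \ell^2(\Z) \to \ell^2(\Z)$ be the unitary shift $(U\psi)_n = \psi_{n+1}$. A direct computation from \eqref{EQ:Schroedinger}, using $V_n(Tw) = f(T^{n+1}w) = V_{n+1}(w)$, gives $U \H_w U^{-1} = \H_{Tw}$ for every $w \in \Omega$. Since unitary conjugation preserves the orthogonal decomposition $\ell^2(\Z) = \mc H_{\ac} \oplus \mc H_{\scp} \oplus \mc H_{\pp}$ attached to a self-adjoint operator, it preserves each spectral component, so that
\[
\sigma_{\bullet}(\H_{Tw}) = \sigma_{\bullet}(\H_w), \qquad \bullet \in \{ \varnothing, \pp, \scp, \ac\},
\]
holds \emph{pointwise} in $w$. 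Thus each of the four set-valued maps $w \mapsto \sigma_{\bullet}(\H_w)$ is strictly $T$-invariant.

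Second, I would turn invariance into constancy by encoding each (compact) set in countably many scalar events. For a rational interval $I = (q_1,q_2)$ put $A_{\bullet,I} = \{ w \mid \sigma_{\bullet}(\H_w) \cap I \neq \varnothing \}$. Granting that these events are measurable, each is $T$-invariant by the previous step, so ergodicity of $(\Omega,\mc B,\mu,T)$ forces $A_{\bullet,I}$ to be either $\mu$-null or $\mu$-conull; this dichotomy is precisely the definition of ergodicity and uses nothing about the total mass of $\mu$. I would then set $\Sigma_{\bullet}$ to be the closure of the union of those rational $I$ for which $A_{\bullet,I}$ is conull. Intersecting, over the countably many rational intervals and the four choices of $\bullet$, the conull member of each dichotomous pair yields a conull set $\Omega_0$; for $w \in \Omega_0$ the set $\sigma_{\bullet}(\H_w)$ meets exactly those rational intervals that meet $\Sigma_{\bullet}$, and since both are closed this forces $\sigma_{\bullet}(\H_w) = \Sigma_{\bullet}$. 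Compactness of $\Sigma_{\bullet}$ is automatic from the uniform bound $\| \H_w \| \leqslant 2 + \|f\|_{\infty}$.

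The main obstacle — and the reason the cited statement carries ``additional technical assumptions'' — is the measurability of the events $A_{\bullet,I}$ in the infinite-measure setting. For the full spectrum this is soft: $\sigma(\H_w) \cap I \neq \varnothing$ iff $\langle \delta_n, \chi_I(\H_w)\delta_n\rangle > 0$ for some $n \in \Z$, and $w \mapsto \langle \delta_n, g(\H_w)\delta_n\rangle$ is measurable for every bounded Borel $g$, by approximating $g$ with continuous functions (Stone--Weierstrass on the spectrum), passing to bounded pointwise limits, and using weak-$\ast$ convergence of the spectral measures $\mu^{\delta_n}_w$. The components $\ac,\scp,\pp$ are genuinely harder, because the finite-measure route through the Lebesgue decomposition of the \emph{averaged} spectral measure $\int \mu^{\delta_n}_w \, d\mu(w)$ is unavailable once $\mu(\Omega) = \infty$. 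Here I would detect the components pointwise and measurably from resolvent data: the absolutely continuous part via the a.e.-existing boundary values $\lim_{\varepsilon \downarrow 0} \operatorname{Im} \langle \delta_n, (\H_w - E - \im\varepsilon)^{-1}\delta_n\rangle$ of the Green function, whose positivity on a subset of $I$ of positive Lebesgue measure detects $\sigma_{\ac} \cap I$ (equivalently, through the Ishii--Pastur--Kotani description of $\Sigma_{\ac}$ via the Lyapunov exponent, with the Birkhoff average replaced by a ratio-ergodic one), and the singular-versus-pure-point split through the de la Vallée Poussin derivative of the same spectral measures. All of these depend measurably on $w$ because $w \mapsto (\H_w - z)^{-1}$ is weakly measurable for each fixed $z$ in the upper half plane. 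Establishing that these detectors are jointly measurable in $(w,E)$ and assemble into measurable events $A_{\bullet,I}$ is the technical crux; once it is granted, the covariance and ergodicity steps above close the argument verbatim in the $\sigma$-finite case.
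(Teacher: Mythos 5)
This theorem is not proved in the paper at all: it is imported verbatim from \cite[Thm.~4.3, Thm.~4.6]{BDFL}, and the only comment the paper adds is that the hypothesis ``conservative'' used there may be replaced by ``non-atomic'' via \cite[Prop.~1.2.1]{Aaronson}. So there is no in-paper argument to compare yours against; I can only assess your reconstruction on its own terms. Its skeleton is the standard and correct one: the covariance computation $U\H_wU^{-1} = \H_{Tw}$ is right, unitary conjugation does preserve each spectral component pointwise in $w$, the null/conull dichotomy for invariant sets is exactly the definition of ergodicity and is indifferent to $\mu(\Omega)=\infty$, and a countable intersection of conull sets is conull even for infinite $\mu$. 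You are also right that the entire difficulty is concentrated in the measurability of the events $A_{\bullet,I}$, and your proposed detectors (boundary values of the Green function for the $\ac$ part, derivatives of the spectral measures for the singular/point split) are pointwise-in-$w$ constructions that never integrate over $\Omega$, so they survive the passage to infinite measure; this is essentially how \cite{BDFL} proceeds. Note that your argument never uses non-atomicity, which is consistent with the fact that this hypothesis is only there to guarantee conservativity, needed elsewhere in \cite{BDFL} (e.g.\ for ratio-ergodic versions of Ishii--Pastur--Kotani), not for bare constancy.

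Two defects, one concrete and one structural. First, your explicit formula for $\Sigma_{\bullet}$ is wrong: the closure of the union of those rational intervals $I$ with $A_{\bullet,I}$ conull is in general strictly larger than the common value of $\sigma_{\bullet}(\H_w)$. If $\sigma_{\bullet}(\H_w)=\{0\}$ a.e., then every rational interval containing $0$ is conull for $A_{\bullet,I}$, and the closure of their union is all of $\R$. The correct definition is the complement of the union of those rational $I$ with $A_{\bullet,I}$ null (equivalently, the set of $E$ all of whose rational neighbourhoods are conull); with that fix your subsequent sentence --- that $\sigma_{\bullet}(\H_w)$ and $\Sigma_{\bullet}$ are closed sets meeting exactly the same rational intervals, hence equal --- goes through. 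Second, the measurability of $A_{\bullet,I}$ for $\bullet\in\{\pp,\scp,\ac\}$ is explicitly left ``granted''; since you yourself identify it as the technical crux, the proposal as written is a correct strategy rather than a complete proof. To close it you would need to actually verify that $w\mapsto\langle\delta_n,\chi_I(\H_w)P_{\bullet}(\H_w)\delta_n\rangle$ is measurable, e.g.\ via weak measurability of $w\mapsto(\H_w-z)^{-1}$ together with the standard characterizations of the spectral projections by Lebesgue differentiation of the measures $\mu_w^{\delta_n}$; none of this is hard, but it is the part of the cited theorem that carries the actual content.
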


In \cite{BDFL}, this result was formulated with the requirement that $\mu$ is conservative instead of being non-atomic. However, the authors remark in \cite{BDFL} that non-atomic implies conservative under the additional assumptions stated in Theorem~\ref{THM:as-spectrum}, compare \cite[Prop.~1.2.1]{Aaronson}.
Since all the requirements are fulfilled for $(\mathbb{X}_{\varrho}, S, \nu)$ in the infinite-measure setting, we conclude the following.

\begin{coro}
Let $\mathbb{X}_{\varrho}$ be the subshift of a non-trivial almost primitive substitution and $\nu$ its unique non-atomic $S$-ergodic measure.
There exist compact sets $\Sigma, \Sigma_{\pp}, \Sigma_{\scp}, \Sigma_{\ac} \subset \mathbb{R}$ such that for $\nu$-almost every $w \in \mathbb{X}_{\varrho}$, one has $\sigma_{\bullet}(\H_w) = \Sigma_{\bullet}$ for $\bullet \in \{ \varnothing, \pp, \scp,\ac \}$.
\end{coro}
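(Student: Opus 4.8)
The plan is to derive the statement directly from the two black-box results already available, splitting according to whether $\nu$ is finite or infinite. By Proposition~\ref{Prop:ergodic-measures} this dichotomy is governed by the sign of $p - \lambda$.

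First I would dispose of the case $p < \lambda$. Here Proposition~\ref{Prop:ergodic-measures} lets us normalize $\nu$ to an $S$-invariant probability measure $\mu$, which is again non-atomic and ergodic. The existence of compact sets $\Sigma_{\bullet}$ with $\sigma_{\bullet}(\H_w) = \Sigma_{\bullet}$ for $\mu$-almost every $w$ is then precisely the classical theorem of Kunz and Souillard \cite{KunzSouillard}, applied to the ergodic family $\H_w$ from \eqref{EQ:Schroedinger}. Since $\mu$ and $\nu$ share the same null sets, the conclusion transfers to $\nu$-almost every $w$.

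The substantive case is $p \geqslant \lambda$, where $\nu$ is infinite, and the plan is to verify the hypotheses of Theorem~\ref{THM:as-spectrum} for the triple $(\Omega, \mc B, \mu) = (\mathbb{X}_{\varrho}, \mc B, \nu)$ with $T = S$. The operator structure matches upon setting $f(w) = V(w_0)$: this is a bounded, continuous (hence measurable) function since $\mc A$ is finite and $w \mapsto w_0$ is continuous, and it satisfies $V_n(w) = f(S^n(w))$. The shift $S$ is an invertible homeomorphism, $\nu$ is $S$-invariant and ergodic by construction, and non-atomic by Proposition~\ref{Prop:ergodic-measures}. For $\sigma$-finiteness I would exhaust the co-null set $\mathbb{X}_{\varrho} \setminus \{a^{\Z}\}$ by the clopen sets $\{w \mid w_j \in \mc A' \text{ for some } |j| \leqslant N\}$, each disjoint from $a^{\Z}$ and hence of finite $\nu$-measure by Proposition~\ref{Prop:ergodic-measures}; combined with $\nu(\{a^{\Z}\}) = 0$ and $\nu([a^m]) = \infty$, this gives $\sigma$-finiteness together with $\nu(\mathbb{X}_{\varrho}) = \infty$.

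The only delicate point is the conservativity requirement in the original formulation of Theorem~\ref{THM:as-spectrum}: as recorded after its statement, non-atomicity of $\nu$ upgrades to conservativity under the present hypotheses \cite[Prop.~1.2.1]{Aaronson}, so no separate ergodic-theoretic argument is needed. With every hypothesis in place, Theorem~\ref{THM:as-spectrum} supplies the compact sets $\Sigma_{\bullet}$ and the almost-sure constancy of $\sigma_{\bullet}(\H_w)$, completing the proof. I expect the main (and essentially only) obstacle to be this bookkeeping of hypotheses---particularly confirming $\sigma$-finiteness and the passage from non-atomicity to conservativity---since all spectral content is carried by the cited theorems.
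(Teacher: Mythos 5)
Your proposal is correct and follows essentially the same route as the paper: the paper deduces the corollary immediately from Theorem~\ref{THM:as-spectrum} in the infinite-measure case (invoking the same non-atomic-implies-conservative remark from \cite{Aaronson}) and from the classical Kunz--Souillard result \cite{KunzSouillard} when $\nu$ normalizes to a probability measure. Your verification of $\sigma$-finiteness via clopen sets disjoint from $a^{\Z}$ is exactly the bookkeeping the paper leaves implicit.
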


In Lemma~\ref{LEM:thick-spectrum}, we have already seen that the full measure set such that $\sigma(\H_w) = \Sigma$ is given by $\mathbb{X}_{\varrho} \setminus \{a^{\Z} \}$. For the absolutely continuous component, we find that the exceptional set is given by the countable set of eventually periodic points.

\begin{lemma}
\label{LEM:ac-parts}
If $w \in \mathbb{X}_{\varrho}^{\ep}$, then $\sigma_{ac}(\H_{w}) = V(a) + [-2,2]$. Otherwise, $\sigma_{ac}(\H_w) = \varnothing$. In particular, $\Sigma_{\ac} = \varnothing$.
\end{lemma}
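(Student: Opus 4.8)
The plan is to reduce the whole-line operator to its two half-line restrictions and to analyze the two halves separately: a half-line on which the potential is constant equal to $V(a)$ contributes the interval $V(a)+[-2,2]$, while any half-line on which $w$ is not eventually periodic contributes nothing. First I would record the decoupling principle. Splitting $\ell^2(\Z)=\ell^2(\Z_{<0})\oplus\ell^2(\Z_{\geqslant 0})$ and cutting the single hopping bond between the sites $-1$ and $0$ writes $\H_w$ as $\H_w^-\oplus \H_w^+$ plus a rank-two (hence trace-class) perturbation, where $\H_w^{\pm}$ are the Dirichlet half-line operators with potentials $(V(w_n))_{n<0}$ and $(V(w_n))_{n\geqslant 0}$. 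Since the absolutely continuous spectrum is invariant under trace-class perturbations (Kato--Rosenblum), this yields
\[
\sigma_{\ac}(\H_w)=\sigma_{\ac}(\H_w^-)\cup\sigma_{\ac}(\H_w^+),
\]
so it suffices to determine $\sigma_{\ac}$ for each half.

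The two ingredients are then the following. (a) If a half-line potential is constant equal to $V(a)$ (as happens on the $a^{\infty}$-side of an eventually periodic point), the corresponding operator is $V(a)$ plus the free discrete half-line Laplacian, whose spectrum is purely absolutely continuous and equal to $V(a)+[-2,2]$. (b) If $w$ is not eventually periodic on the relevant half-line, I would invoke Remling's theorem: every right limit of $\H_w^+$ is reflectionless on $A$, the essential support of the absolutely continuous part of $\H_w^+$. Because $\mathbb{X}_{\varrho}$ is closed and shift-invariant over the finite alphabet $\{a,b\}$, all these right limits are whole-line operators $\H_v$ with $v\in\mathbb{X}_{\varrho}$. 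If $w$ is not eventually periodic to the right, then $b$ occurs infinitely often to the right; shifting such occurrences to the origin and passing to a subsequential limit (compactness of $\{a,b\}^{\Z}$) produces a right limit $\H_v$ with $v_0=b$, and since $a^{\Z}$ is the only periodic point of $\mathbb{X}_{\varrho}$ by Proposition~\ref{Prop:ap-properties}, this $v$ is aperiodic and finite-valued. As a finite-valued whole-line operator that is reflectionless on a set of positive Lebesgue measure is necessarily periodic, we conclude $|A|=0$; an absolutely continuous spectral measure supported (up to essential closure) on a Lebesgue-null set vanishes, so $\sigma_{\ac}(\H_w^+)=\varnothing$. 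The left half-line is identical after reflection.

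Assembling the pieces finishes the proof. If $w\in\mathbb{X}_{\varrho}^{\ep}$, then either $w=a^{\Z}$, where both halves are constant and (a) gives $\sigma_{\ac}(\H_w)=V(a)+[-2,2]$ directly, or, after a shift, exactly one half of $w$ equals $a^{\infty}$ while the other is not eventually periodic (by Proposition~\ref{Prop:ap-properties} only $a^{\Z}$ is eventually periodic on both sides); then (a) and (b) give contributions $V(a)+[-2,2]$ and $\varnothing$, so $\sigma_{\ac}(\H_w)=V(a)+[-2,2]$. If $w\notin\mathbb{X}_{\varrho}^{\ep}$, then $w$ is not eventually periodic on either side, so both halves contribute $\varnothing$ by (b) and $\sigma_{\ac}(\H_w)=\varnothing$. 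Finally, since $\mathbb{X}_{\varrho}^{\ep}$ is countable by Lemma~\ref{LEM:event-periodic} and $\nu$ is non-atomic, $\nu$-almost every $w$ lies outside $\mathbb{X}_{\varrho}^{\ep}$; hence $\sigma_{\ac}(\H_w)=\varnothing$ almost surely and $\Sigma_{\ac}=\varnothing$.

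The main obstacle is the deterministic input in step (b): the statement that a finite-valued whole-line operator which is reflectionless on a set of positive Lebesgue measure must be periodic, which I would cite in combination with Remling's theorem (see \cite{DamanikFillmanESO} and the references therein). One must also keep in mind that Remling's theorem controls only the essential support of the absolutely continuous part and therefore yields merely $|A|=0$; this nevertheless suffices, since it forces the absolutely continuous component to be the zero measure. The remaining steps—trace-class invariance of $\sigma_{\ac}$, the explicit constant-potential computation, and the extraction of an aperiodic right limit—are routine.
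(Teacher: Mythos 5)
Your proposal is correct and follows essentially the same route as the paper: decompose $\sigma_{\ac}(\H_w)$ into the two half-line contributions, use Remling's theorem to show that a non-eventually-periodic half-line potential over a finite alphabet has empty absolutely continuous spectrum, compute the constant-$a$ half explicitly, and invoke the countability of $\mathbb{X}_{\varrho}^{\ep}$ (Lemma~\ref{LEM:event-periodic}) for the almost-sure statement. The only difference is presentational: the paper cites Remling's oracle theorem directly in the form ``nonempty ac spectrum for a finite-valued half-line potential forces eventual periodicity,'' whereas you unpack it into the reflectionless-right-limits statement plus the deterministic fact about finite-valued reflectionless whole-line operators, which is a valid (and correctly executed) elaboration of the same input.
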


\begin{proof}
Since the absolutely continuous spectrum can be decomposed as $\sigma_{ac}(\H_{w}) = \sigma_{ac}(\H^+_{w}) \cup \sigma_{ac}(\H^-_{w})$,
it is enough to consider the corresponding half-line operators. It follows from Remling's oracle theorem \cite[Thm.~1.1]{Remling}, that $\sigma_{\ac}(\H^+_{w}) \neq \varnothing$ implies that the corresponding half-line potential is eventually periodic. Hence, if $w \notin \mathbb{X}_{\varrho}^{\ep}$, we find that $\sigma_{\ac}(\H_w) = \sigma_{\ac}(\H^+_w) = \sigma_{\ac}(\H^-_w) = \varnothing$. If $w \in \mathbb{X}_{\varrho}^{\ep} \setminus \{a^{\Z} \}$, then $w$ is of the form $w = a^{\infty} w^+$ or $w = w^- a^{\infty}$, where $w^+,w^-$ are not eventually periodic. In both cases, we find that the $\ac$-spectrum is $V(a) + [-2,2]$ for one of the half-line operators and empty for the other half-line operator. Therefore, $\sigma_{\ac}(\H_w)= V(a) + [-2,2]$. For $w = a^{\Z}$, the claim is trivial. Since $\mathbb{X}_{\varrho}^{\ep}$ is countable by Lemma~\ref{LEM:event-periodic}, it forms a null set for $\nu$, implying that $\Sigma_{\ac} = \varnothing$.
\end{proof}

This completes the proof of Theorem~\ref{THM:basic-spectral}.
Let us discuss how to rule out eigenvalues almost surely with the help of Gordon's Lemma. Both the result and its proof resemble \cite[Thm.~3]{DamanikLenz}, where an analogous condition was established for minimal substitution systems. Some additional care is required here, because $\nu$ is possibly an infinite measure.
The following result implies the first part of Theorem~\ref{THM:as-no-eigenvalues}. Recall that $\mc A' = \mc A \setminus \{a\}$ denotes the set of primitive letters.

\begin{prop}
\label{PROP:as-absence-of-eigenvalues}
Let $\varrho$ be an almost primitive substitution. Suppose there exists $u \in \mc A^+$ with $u u u u_1 \in \mc L_{\varrho}$ and $u_1 \in \mc A'$. Then, for $\nu$-a.e. $w \in \mathbb{X}_{\varrho}$, the Schr\"{o}dinger operator $\H_w$ has no eigenvalues.
\end{prop}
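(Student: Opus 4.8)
The plan is to show that for $\nu$-almost every $w$ the potential $V_\cdot(w)$ is a Gordon potential and then to invoke Gordon's lemma, which excludes $\ell^2$-eigenfunctions as soon as $V_\cdot(w)$ is $q_n$-periodic on a window $[-q_n,2q_n]$ for some sequence $q_n\to\infty$ (the transfer matrices over one period lie in $\mathrm{SL}(2,\R)$, and Cayley--Hamilton forces $\max\{\|\Psi(q_n)\|,\|\Psi(2q_n)\|,\|\Psi(-q_n)\|\}\geqslant\tfrac12$, contradicting decay). A useful reduction comes first: by the corollary of Theorem~\ref{THM:as-spectrum} there is a fixed compact set $\Sigma_{\pp}$ with $\sigma_{\pp}(\H_w)=\Sigma_{\pp}$ for $\nu$-a.e.\ $w$. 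Hence it suffices to produce a set of \emph{positive} $\nu$-measure on which $\H_w$ has no eigenvalues: intersecting it with the full-measure set $\{w:\sigma_{\pp}(\H_w)=\Sigma_{\pp}\}$ forces $\Sigma_{\pp}=\varnothing$, and therefore $\sigma_{\pp}(\H_w)=\varnothing$ for $\nu$-a.e.\ $w$. This is the step that absorbs the possibly infinite measure.

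Next I would build the Gordon blocks by iteration. Since $u_1\in\mc A'$, the word $u$ contains a primitive letter, so $q_n:=|\varrho^n(u)|\to\infty$ by Lemma~\ref{LEM:inflation-word-growth} (and its analogue for larger alphabets). Applying $\varrho^n$ to $uuuu_1\in\mc L_\varrho$ gives $\varrho^n(u)^3\,\varrho^n(u_1)\in\mc L_\varrho$, and because $u_1$ is the first letter of $u$, the word $\varrho^n(u_1)$ is a prefix of $\varrho^n(u)$; thus $\varrho^n(u)^3 c_n$ is legal, with $c_n$ the first letter of $\varrho^n(u)$. Consequently, whenever the cube $\varrho^n(u)^3$ occurs in $w$ at the position $[-q_n+1,2q_n]$, the potential satisfies $V(j)=V(j\pm q_n)$ for $1\leqslant j\leqslant q_n$, i.e.\ the three-block Gordon condition at scale $q_n$.

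The heart of the proof is to show that such a cube straddles the origin for infinitely many $n$, on a set of positive $\nu$-measure. I would reformulate this via recognizability. As $\{a^{\Z}\}$ and $\mathbb{X}_\varrho^{\ep}$ are $\nu$-null (Lemma~\ref{LEM:event-periodic} and non-atomicity of $\nu$), almost every $w$ is non-periodic, hence $\varrho$-recognizable \cite{BSTY}, with a well-defined desubstitution tower $w=S^{L_n}\varrho^n(w^{(n)})$. An occurrence of $\varrho^n(u)^3$ in $w$ aligned with the $\varrho^n$-structure is exactly the $\varrho^n$-image of an occurrence of $uuuu_1$ in $w^{(n)}$, and the cumulative shift obeys $L_n=O(q_n)$ because the lengths $|\varrho^j(c)|$ grow (sub-)exponentially, so the relevant geometric sum is dominated by its last term (Lemma~\ref{LEM:inflation-word-growth}). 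Thus the origin of $w$ pulls back to within bounded distance of the origin of $w^{(n)}$, and placing the cube over the origin in $w$ becomes equivalent to $w^{(n)}$ carrying $uuuu_1$ straddling its own origin. To finish, I use the odometer structure of Remark~\ref{REM:Z_r-conjugation}: $\tau^{-1}$ conjugates the induced system $(\mathbb{X}^b_\varrho,S_b,\nu_b)$ to $(\Z_r,+1)$ with $\nu_b$ a \emph{probability} measure matching Haar measure, while desubstitution corresponds to multiplication by $r$, i.e.\ to the one-sided digit shift $(p_1,p_2,\dots)\mapsto(p_2,p_3,\dots)$ on the $r$-adic coordinate $\p$. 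The near-origin event for $w^{(n)}$ depends only on finitely many of the digits $p_{n+1},p_{n+2},\dots$ and describes a positive-measure set of digit patterns; since the digit shift is the uniform Bernoulli shift, Birkhoff's theorem yields that almost every $\p$ realizes this pattern for a positive density of $n$, hence infinitely often. Transporting back through $\tau$ gives a positive-$\nu_b$- (hence positive-$\nu$-) measure set of Gordon potentials, which by the first paragraph completes the argument; for a general finite alphabet the same scheme applies with the desubstitution tower in place of the explicit odometer, ergodicity of the base map supplying the recurrence.

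The main obstacle is precisely this last recurrence statement. Because the subshift is in general not linearly repetitive (the complexity can be quadratic, Corollary~\ref{COR:complexities}) and $\nu$ may be infinite, the cubes $\varrho^n(u)^3$ do \emph{not} appear with gaps proportional to their length, so one cannot simply find an occurrence near the origin at every scale, as in the primitive case. The resolution is to pass to the finite-measure induced (odometer) system and to exploit self-similarity through recognizability, converting ``occurrence near the origin at scale $q_n$'' into a fixed positive-measure cylinder event under an ergodic digit shift; the estimate $L_n=O(q_n)$ is exactly what makes this translation legitimate.
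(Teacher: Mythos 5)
Your reduction to a positive-$\nu$-measure set and your choice of Gordon blocks $\varrho^n(u)^3\varrho^n(u_1)$ agree with the paper, but the central recurrence step has a genuine gap. Gordon's lemma needs $w\in[v.vv]$, i.e.\ the cube must start \emph{exactly} at $-q_n$ (with only $|\varrho^n(u_1)|$ letters of slack coming from the appended prefix), because $\varrho^n(u)^3\varrho^n(u_1)$ supplies $q_n$-periodicity only on a window of length $3q_n+|\varrho^n(u_1)|$; ``straddling the origin'' is not sufficient. For the fixed word $v=\varrho^n(u)$ the event $w\in[v.vv]$ is a single cylinder whose measure decays like $\lambda^{-n}$ (by the frequency formula of Proposition~\ref{Prop:ergodic-measures} and \eqref{Eq:left-ev}), so by Borel--Cantelli almost every $w$ realizes this exact alignment for only finitely many $n$. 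Your attempt to rescue this via desubstitution and Birkhoff does not work as stated: tracing the alignment through $w=S^{L_n}\varrho^n(w^{(n)})$ shows that the level-$n$ event constrains both the position of $uuuu_1$ in $w^{(n)}$ \emph{and} the offset $L_n$, so it is not of the form $\sigma^n\p\in E$ for a fixed positive-measure set $E$ under the digit shift $\sigma$, and Birkhoff's theorem does not apply in the way you invoke it. A secondary issue is that the odometer/return-word machinery exists in the paper only for $\mc A=\{a,b\}$, whereas the proposition concerns general almost primitive substitutions; ``the same scheme applies with the desubstitution tower'' is not substantiated.

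The ingredient you are missing is the paper's counting argument, which replaces the single cylinder by a union: for each primitive letter inside the prefix $\varrho^n(u_1)$ of $\varrho^n(u)$ one gets a cyclic shift $v$ of $\varrho^n(u)$ with $v_1\in\mc A'$ and $vvv\in\mc L_{\varrho}$, yielding $\ell_n=|\varrho^n(u_1)|'$ pairwise disjoint cylinders $[v.vv]$. Counting occurrences of these cubes inside $\varrho^{r+n+k}(b)$ and using Proposition~\ref{Prop:ergodic-measures} together with Perron--Frobenius gives $\nu(B'_{m_n})\geqslant\lambda^{-k}\,|\varrho^n(u_1)|'\lambda^{-n}\,\nu([b])\to\lambda^{-k}L'_{u_1}\nu([b])>0$, uniformly in $n$. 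Restricting throughout to words beginning with a primitive letter keeps all these sets inside $\mathbb{X}_{\varrho}\setminus[a]$, which has finite $\nu$-measure, so continuity from above (reverse Fatou) yields $\nu\bigl(\bigcap_k\bigcup_{n\geqslant k}B'_n\bigr)\geqslant\limsup_n\nu(B'_n)>0$, and ergodicity of $\nu$ finishes the proof. No recognizability, induced system, or odometer is needed, and the argument is alphabet-independent. Your proof would become correct if you replaced the single-position/Birkhoff step by this union over shifted period words together with the reverse-Fatou argument on the finite restricted measure.
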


\begin{proof}
The set $A := \{w \in \mathbb{X}_{\varrho} \mid \H_{w} \, \mbox{has no eigenvalues} \, \}$ is clearly shift-invariant and since $\nu$ is ergodic, either $\nu(A) = 0$ or $\nu(A^C)=0$. Let 
\[
B = \{ w \in \mathbb{X}_{\varrho} \mid w \in [v^j.v^j v^j] \; \mbox{for all} \, j \in \N \; \mbox{and some} \, (v^j)_{j \in \N} \in \mc L_{\varrho}^\N, |v^j| \to \infty \}.
\]
Since $B \subset A$ by a classical variant \cite[Lemma~1]{DP86} of Gordon's Lemma \cite{G76}, it suffices to show that $\nu(B) > 0$. We can rewrite $B = \cap_{k\geqslant 1} \cup_{n \geqslant k} B_n$, where 
$
B_n = \cup_{v \in \mc L_{\varrho} \cap \mc A^n} [v.vv].
$ 
We would like to relate $\nu(B)$ and $\limsup_{n \to \infty} \nu(B_n)$, but there is some subtlety involved because as a (generally) infinite measure, $\nu$ need not be continuous from above in the set $B$. We therefore restrict $B$ to a smaller set $B' \subset B$, where 
\[
B' = \bigcap_{k\geqslant 1} \bigcup_{n \geqslant k} B'_n, \quad 
B'_n = \bigcup_{v \in \mc L_{\varrho} \cap \mc A^n, v_1 \in \mc A'} [v.vv].
\]
Setting $X' = \mathbb{X}_{\varrho} \setminus [a]$, this construction ensures that $B'_n \subset X'$ for all $n \in \N$ and thus $B' \subset X'$. As $\nu(X') = \sum_{b \in \mc A'} \nu([b]) < \infty$, the restriction of $\nu$ to $X'$ is finite. We therefore find
\[
\nu(B') = \nu|_{X'}(B') = \lim_{k \to \infty} \nu|_{X'}\bigl( \cup_{n \geqslant k} B'_n \bigr) \geqslant \limsup_{k \to \infty} \nu|_{X'} (B'_k) = \limsup_{k \to \infty} \nu (B'_k).
\] 
Thus, it remains to show that $\limsup_{n \to \infty} \nu (B'_n) > 0$. By the assumptions, there exists a word $u$ with $u u u u_1 \in \mc L_{\varrho}$ and thus, there is a $k \in \N$ and $b \in \mc A'$ such that $u u u u_1 \triangleleft \varrho^k(b)$. This implies that $w^n := \varrho^n(u) \varrho^n(u) \varrho^n(u) \varrho^n(u_1) \triangleleft \varrho^{n+k}(b)$ for all $n \in \N$. The word $w^n$ contains $|\varrho^n(u_1)|$ words of the form $vvv$ with $|v| = |\varrho^n(u)| =: m_n$. Of these words, $|\varrho^n(u_1)|' =: \ell_n$ have the additional property that $v_1 \in \mc A'$. Let us define $W_m = \{ vvv \in \mc L_{\varrho} \mid |v| = m, v_1 \in \mc A'\}$ for all $m \in \N$. For $r \in \N$, we find that $\varrho^{r+n+k}(b) = \varrho^{n+k}(\varrho^r(b))$ contains at least $|\varrho^r(b)|_b$ non-overlapping occurrences of the word $\varrho^{n+k}(b)$ and thereby at least the same number of non-overlapping occurrences of $w^n$. Collecting the above observations, this yields
\[
\sum_{w \in W_{m_n}} |\varrho^{r+n+k}(b)|_w 
\geqslant \ell_n |\varrho^r(b)|_b.
\]  
Note that by shift-invariance, $\nu(B'_m) = \sum_{w \in W_m} \nu([w])$. Recalling the explicit form of $\nu$ in Proposition~\ref{Prop:ergodic-measures}, we find
\begin{align*}
\nu(B'_{m_n}) &= \sum_{w \in W_{m_n}} \lim_{r \to \infty} \frac{|\varrho^{r+n+k}(b)|_w}{|\varrho^{r+n+k}(b)|'} \geqslant \lim_{r \to \infty} \frac{\ell_n |\varrho^r(b)|_b}{|\varrho^{r+n+k}(b)|'}
\\ &= \lim_{r \to \infty} \frac{1}{\lambda^k} \frac{|\varrho^n(u_1)|'}{\lambda^n} \frac{|\varrho^r(b)|_b}{|\varrho^r(b)|'} \frac{|\varrho^r(b)|'}{\lambda^r} \frac{\lambda^{r+n+k}}{|\varrho^{r+n+k}(b)|'}
= \frac{1}{\lambda^k} \frac{|\varrho^n(u_1)|'}{\lambda^n} \nu([b]),
\end{align*}
where we have made use of \eqref{Eq:left-ev} in the last step. Performing the limit $n \to \infty$, 
\[
\limsup_{n \to \infty} \nu(B'_n) \geqslant \limsup_{n \to \infty} \nu(B'_{m_n})
\geqslant \frac{1}{\lambda^k} L'_{u_1} \nu([b]) > 0.
\]
Finally, $\nu(A) \geqslant \nu(B) \geqslant \nu(B') \geqslant \limsup_{n \to \infty} \nu (B'_n) > 0$ and thus $\nu(A^C) = 0$ by the ergodicity of $\nu$. 
\end{proof}

Comparing this result to the case of primitive substitutions we observe that we do not only need a word in $u \in \mc L_{\varrho}$ with index $>3$ but also require that it starts with a primitive letter. Since for primitive substitutions all letters are primitive, this is a natural generalization.

\subsection{Criteria for absence of eigenvalues for two-letter alphabets}

We can get more refined conditions for excluding eigenvalues if $\varrho$ is defined on a two-letter alphabet $\mc A = \{a,b \}$. For this, we make use of the structural properties established in Section~\ref{SEC:structural}. Without loss of generality, we assume that $\varrho$ is in its `normal form', $\varrho(a) = a^p$ and $\varrho(b) = ba^{k_1} \cdots ba^{k_r}$. First, we recall a sufficient condition for excluding eigenvalues for Schr\"{o}dinger operators that are associated to strongly palindromic sequences. For $c \in \mc A$ and $E \in \mathbb{R}$, define
\begin{equation}
\label{EQ:T_E}
T_E(c) = \begin{pmatrix}
E - V(c) & -1
\\ 1 & 0
\end{pmatrix}
\end{equation}
and $B_E = ( \max \{\norm{T_E(c)} \mid c \in \mc A \} )^2$, where $\norm{\cdot}$ denotes the spectral norm. 
The following condition for the absence of eigenvalues is a modification of a result in \cite{HofKnillSimon}, going back to \cite{JitomirskayaSimon}.

\begin{prop}
Let $E \in \mathbb{R}$ and suppose $w \in \mathbb{X}_{\varrho}$ is $B_E$-strongly palindromic. Then, $E$ is not an eigenvalue for $H_w$. In particular, if $w$ is $B_w$-strongly palindromic, where $B_w = \sup \{B_E \mid E \in \sigma(\H_w) \}$, then $H_w$ has no eigenvalues.
\end{prop}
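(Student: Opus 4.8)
The plan is to run the classical palindrome argument of Jitomirskaya--Simon and Hof--Knill--Simon inside our transfer-matrix framework; the only genuinely new point is that the relevant threshold is the constant $B_E$ built from the finitely many matrices $T_E(c)$, $c \in \mc A$. First I would set up the formalism. A number $E$ is an eigenvalue of $\H_w$ precisely when $\H_w \psi = E \psi$ has a nonzero solution in $\ell^2(\Z)$. Writing $v_n = (\psi_n, \psi_{n-1})^{\mathsf T}$, the equation becomes $v_{n+1} = T_E(w_n) v_n$ with $T_E$ as in \eqref{EQ:T_E}, and the cocycle $A_E(m,n) = T_E(w_{m-1}) \cdots T_E(w_n)$ satisfies $\norm{A_E(m,n)} \leqslant B_E^{(m-n)/2}$ and $\norm{A_E(m,n)^{-1}} \leqslant B_E^{(m-n)/2}$ for $m \geqslant n$, since each factor has operator norm at most $\sqrt{B_E}$ and determinant $1$. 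The palindromic symmetry enters through the identity $T_E(c)^{\mathsf T} = J\, T_E(c)\, J$ with $J = \begin{pmatrix} 1 & 0 \\ 0 & -1 \end{pmatrix}$; reversing a product of transfer matrices therefore amounts to transposing and conjugating by $J$. Hence, whenever the potential of $w$ is symmetric about a center $c$ on a window, the transfer matrix across the right half and the transfer matrix across the left half are tied together \emph{exactly} by this reflection, with no loss through transfer-matrix norms.

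Assume, for contradiction, that $E$ is an eigenvalue with normalized $\ell^2$-solution $\psi$. After a bounded shift (which alters each $c_n$ only by a constant and so preserves the defining limit) I may assume $v_0 = (\psi_0,\psi_{-1})^{\mathsf T} \neq 0$, recalling that an eigensolution cannot vanish at two consecutive sites; thus $\norm{v_0} =: \Delta > 0$ is a \emph{fixed} lower bound on the size of $\psi$ near the origin. For the $n$-th palindrome $(P_n,\ell_n,c_n)$ the origin lies inside the window, because $\ell_n \geqslant B_E^{c_n} > 2 c_n$ eventually, and the reflection $\Refl_{c_n}$ carries the origin to the site $2 c_n$, which again lies in the window and satisfies $2 c_n \to \infty$. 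The heart of the matter is the quantitative reflection estimate of \cite{JitomirskayaSimon, HofKnillSimon}: transporting $v_0$ across the palindrome by the exact reflection symmetry, and controlling the accumulated error by the norm $B_E^{c_n}$ over the short distance $c_n$ from the origin to the center, one shows $\norm{v_{2 c_n}} \geqslant \kappa\, \Delta$ for some $\kappa > 0$ and all large $n$, provided $B_E^{c_n}/\ell_n \to 0$ so that the window is long enough for the symmetry to be effective. Since $2 c_n \to \infty$, this contradicts $\psi \in \ell^2(\Z)$, so $E$ cannot be an eigenvalue.

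The main obstacle is exactly this quantitative step: making precise how the reflection symmetry of the window propagates the non-negligible value $\Delta$ at the origin out to the distant site $2 c_n$, and verifying that the admissible loss is measured by $B_E^{c_n}$ rather than by the far larger $B_E^{\ell_n/2}$ that the crude transfer-matrix bound across the whole window would give. This is precisely where the separation $B_E^{c_n} \ll \ell_n$ is used, and it is the content of the cited results; in our setting their hypotheses hold verbatim because the potential takes only finitely many values and $\norm{T_E(c)} \leqslant \sqrt{B_E}$ for every letter $c$.

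Finally, for the \emph{in particular} statement I would argue as follows. Each $E \mapsto \norm{T_E(c)}$ is continuous, so $E \mapsto B_E$ is continuous, and $\sigma(\H_w)$ is compact; hence $B_w = \sup\{ B_E \mid E \in \sigma(\H_w)\}$ is finite, and $B_w > 1$ because for real $E$ at least one $T_E(c)$ fails to be orthogonal (the potential is injective and takes at least two values). Being $B_w$-strongly palindromic is a \emph{stronger} requirement than being $B_E$-strongly palindromic for every $E$ with $B_E \leqslant B_w$, since $B^{c_n}/\ell_n \to 0$ is monotone in $B$. Thus $w$ is $B_E$-strongly palindromic for each $E \in \sigma(\H_w)$, and by the first part no such $E$ is an eigenvalue. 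As eigenvalues necessarily lie in the spectrum, $\sigma_{\pp}(\H_w) = \varnothing$.
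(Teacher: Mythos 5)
Your proposal is correct and follows essentially the same route as the paper: both reduce the statement to the quantitative reflection estimate in the proof of \cite[Thm.~8.1]{HofKnillSimon} (going back to \cite{JitomirskayaSimon}), observing that the argument there applies to an individual energy $E$ with the threshold $B_E = (\max_{c}\norm{T_E(c)})^2$ read off from the transfer-matrix bounds. The paper's proof is merely a sketch citing Step~5 of that argument, so your additional detail on the cocycle bounds, the identity $T_E(c)^{\mathsf T} = J\,T_E(c)\,J$, and the monotonicity in $B$ used for the ``in particular'' clause is consistent with, and somewhat more explicit than, what the paper records.
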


\begin{proof}[Sketch of proof]
This is a corollary of the proof of \cite[Thm.~8.1]{HofKnillSimon}. There, the argument was made simultaneously for all values $E$ in the spectrum but the argumentation works the same way for individual $E \in \mathbb{R}$. The required form of the constant $B_E$ can be extracted from Step~$5$ in the proof of \cite[Thm.~8.1]{HofKnillSimon}.
\end{proof}

For every $w \in \mathbb{X}_{\varrho}$, we have that $\sigma(\H_w)\subset \mc I = [-2,2] + \{ V(a), V(b) \}$. If we set $\bar{B} = \sup \{B_E \mid E \in \mc I \}$, this implies that $\bar{B} \geqslant B_w$ for all $w \in \mathbb{X}_{\varrho}$. Hence, whenever $w$ is $\bar{B}$-strongly palindromic, $\H_w$ has no eigenvalues. Because of the case distinctions made in Theorem~\ref{THM:PALINDROMES}, we state the result for odd and even values of $r$ separately.

\begin{prop}
\label{PROP:generic-no-eigenvaluesI}
Suppose that $k_1 \cdots k_{r-1}$ is a palindrome and that $r \in 2\N +1$ or $\varrho$ is of type $0$. Then, there is a dense $G_{\delta}$ set $\mc E \subset \mathbb{X}_{\varrho}$ such that $\sigma_{\pp}(\H_w) = \varnothing$ for all $w \in \mc E$.
\end{prop}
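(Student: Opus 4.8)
The plan is to realize the desired set $\mc E$ as a countable intersection of open dense sets and invoke the Baire category theorem. Throughout, fix the uniform constant $\bar{B} = \sup\{B_E \mid E \in \mc I\}$ with $\mc I = [-2,2] + \{V(a),V(b)\}$; as observed in the paragraph preceding the statement, $\bar B \geq B_w$ for every $w \in \mathbb{X}_{\varrho}$, so any $\bar B$-strongly palindromic sequence $w$ satisfies $\sigma_{\pp}(\H_w) = \varnothing$. It therefore suffices to exhibit a dense $G_{\delta}$ set all of whose points are $\bar B$-strongly palindromic.

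For each $m \in \N$, I would set
\[
O_m = \{ w \in \mathbb{X}_{\varrho} \mid (P,\ell,c) \triangleleft w \text{ for some palindrome } P \text{ with } c > m \text{ and } \bar B^{\,c} < \ell/m \}.
\]
Each $O_m$ is open because the relation $(P,\ell,c)\triangleleft w$ constrains $w$ only on a finite window and hence defines a cylinder set, so $O_m$ is a countable union of such cylinders over all admissible pairs $(P,c)$. Moreover $\mc E := \bigcap_{m} O_m$ is contained in the set of $\bar B$-strongly palindromic sequences: for $w \in \mc E$, choosing a witnessing palindrome for each $m$ yields centers $c_m > m \to \infty$ and ratios $\bar B^{\,c_m}/\ell_m < 1/m \to 0$. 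Conversely, any $\bar B$-strongly palindromic $w$ lies in every $O_m$, since its data provide, for each $m$, an index $n$ with $c_n > m$ and $\bar B^{\,c_n}/\ell_n < 1/m$.

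The crux is the density of each $O_m$. Here I would use that, under the hypotheses ($k_1 \cdots k_{r-1}$ palindromic and $r$ odd or $\varrho$ of type $0$), Theorem~\ref{THM:PALINDROMES} produces a $\bar B$-strongly palindromic point $w_0 \in \mathbb{X}_{\varrho} \setminus \{a^{\Z}\}$, whose orbit is dense by almost minimality (Proposition~\ref{Prop:ap-properties}). Strong palindromicity is shift-invariant: if $w_0$ carries palindromes at centers $c_n \to \infty$ with $\bar B^{\,c_n}/\ell_n \to 0$, then $S^k w_0$ carries the same palindromes at centers $c_n - k$, and $\bar B^{\,c_n-k}/\ell_n = \bar B^{-k}(\bar B^{\,c_n}/\ell_n) \to 0$, so every point of $\orb(w_0)$ is again $\bar B$-strongly palindromic and thus lies in $O_m$ for all $m$. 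Given any nonempty open $U \subset \mathbb{X}_{\varrho}$, density of $\orb(w_0)$ furnishes $k$ with $S^k w_0 \in U \cap O_m$, proving $O_m$ dense.

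By the Baire category theorem, $\mc E = \bigcap_m O_m$ is a dense $G_{\delta}$, and each $w \in \mc E$ is $\bar B$-strongly palindromic, whence $\sigma_{\pp}(\H_w) = \varnothing$. I expect the main obstacle to be precisely the density step: it rests on having produced a genuinely aperiodic strongly palindromic point (the content of Theorem~\ref{THM:PALINDROMES} in this parameter regime) and on the almost minimality of $\mathbb{X}_{\varrho}$, which upgrades a single such point into a dense orbit of eigenvalue-free points. The openness and the translation of the strong-palindromicity definition into the cylinder condition defining $O_m$ are routine bookkeeping.
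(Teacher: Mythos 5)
Your argument is correct, and its core is the same as the paper's: invoke Theorem~\ref{THM:PALINDROMES} (under the stated hypotheses) to produce a $\bar B$-strongly palindromic point $w_0\neq a^{\Z}$, note that strong palindromicity is preserved under shifts and excludes eigenvalues via the Hof--Knill--Simon criterion, and use almost minimality to make the orbit of $w_0$ dense. Where you diverge is in how the $G_{\delta}$ structure is obtained. The paper takes $\mc E$ to be the full set of $w$ with $\sigma_{\pp}(\H_w)=\varnothing$ and gets its $G_{\delta}$ property for free from Simon's wonderland theorem \cite{Simon-wonderland}; you instead exhibit the (a priori smaller) set of $\bar B$-strongly palindromic sequences directly as a countable intersection of the open sets $O_m$, each of which is a union of cylinder sets, and then apply Baire. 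Your route is more elementary and self-contained — it avoids the wonderland theorem entirely and makes the $G_{\delta}$ set combinatorially explicit — at the cost of producing a possibly smaller dense $G_{\delta}$; the paper's version identifies the largest natural candidate. Both are complete proofs of the stated proposition. One cosmetic point: for $S^k w_0$ with $k>0$ the shifted centers $c_n-k$ may be negative for small $n$, so one should discard finitely many terms of the palindromic data, as you implicitly do; this does not affect the argument.
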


\begin{proof}
By Theorem~\ref{THM:PALINDROMES}, there are $B$-strongly palindromic sequences in $\mathbb{X}_{\varrho} \setminus \{a^{\Z} \}$ for all $B>1$. If $w \in \mathbb{X}_{\varrho} \setminus \{a^{\Z} \}$ is $B$-palindromic for large enough $B$, then $H_w$ has no eigenvalues. Since the orbit of $w$ is dense in $\mathbb{X}_{\varrho}$, so is the set of points $\mc E \subset \mathbb{X}_{\varrho}$ such that the corresponding Schr\"{o}dinger operator has no eigenvalues. By Simon's wonderland theorem \cite{Simon-wonderland}, $\mc E$ is a $G_{\delta}$ set.
\end{proof}

If $r$ is even and $\varrho$ is not of type $0$, we can generically exclude eigenvalues only on a subset of the spectrum in general.

\begin{prop}
\label{PROP:generic-no-eigenvaluesII}
Suppose that $k_1 \cdots k_{r-1}$ is a palindrome, $r \in 2 \N$ and $p =1$.
Let 
\[
\mc I' = \{ E \in \mathbb{R} \mid B_E < r^{2/k_r} \} .
\]
Then, there is a dense $G_{\delta}$ set $\mc E \subset \mathbb{X}_{\varrho}$ such that $H_w$ has no eigenvalues on $\mc I'$ for all $w \in \mc E$.
If $\bar{B} < r^{2/k_r}$, then $\mc I \subset \mc I'$, implying that $\sigma_{\pp}(H_w) = \varnothing$ for all $w \in \mc E$.
\end{prop}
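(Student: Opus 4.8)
The plan is to follow the scheme of the proof of Proposition~\ref{PROP:generic-no-eigenvaluesI}, but to deal with the fact that strongly palindromic sequences now exist only strictly below the threshold $r^{2/k_r}$ (second bullet of Theorem~\ref{THM:PALINDROMES}). First I would record two elementary observations. Since each $E \mapsto \norm{T_E(c)}$ is continuous, $E \mapsto B_E$ is continuous, so $\mc I' = \{E \in \R \mid B_E < r^{2/k_r}\}$ is an \emph{open} set. Moreover, if $w$ is $B$-strongly palindromic with data $(P_n,\ell_n,c_n)$, then for any $1 < B' \leqslant B$ the same data witness that $w$ is $B'$-strongly palindromic, because $(B')^{c_n}/\ell_n \leqslant B^{c_n}/\ell_n \to 0$. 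Combining this monotonicity with the eigenvalue-exclusion criterion established above (if $w$ is $B_E$-strongly palindromic then $E \notin \sigma_{\pp}(\H_w)$), a $B$-strongly palindromic sequence $w$ satisfies $\sigma_{\pp}(\H_w) \cap \{E \mid B_E < B\} = \varnothing$.

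Next I would set up an exhaustion. Choose $B_m \uparrow r^{2/k_r}$ with $B_m < r^{2/k_r}$, and put $U_m = \{E \mid B_E < B_m\}$, which is open, with $\bigcup_m U_m = \mc I'$. Define $\mc E_m = \{w \in \mathbb{X}_{\varrho} \mid \sigma_{\pp}(\H_w) \cap U_m = \varnothing\}$. For density: by Theorem~\ref{THM:PALINDROMES} there is a $B_m$-strongly palindromic point $w^{(m)} \in \mathbb{X}_{\varrho} \setminus \{a^{\Z}\}$; by Proposition~\ref{Prop:ap-properties} its orbit is dense, and a finite shift only changes each center $c_n$ by a constant, so every point of $\orb(w^{(m)})$ is again $B_m$-strongly palindromic and hence lies in $\mc E_m$ by the previous paragraph. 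Thus $\mc E_m$ is dense. For the $G_{\delta}$ property I would invoke that $w \mapsto \H_w$ is continuous in strong resolvent sense (the potentials are uniformly bounded and $w \mapsto V(w_n)$ is continuous) and apply Simon's wonderland theorem \cite{Simon-wonderland} on the open set $U_m$, exactly as in the proof of Proposition~\ref{PROP:generic-no-eigenvaluesI}; this gives that $\mc E_m$ is a $G_{\delta}$.

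Finally, set $\mc E = \bigcap_{m} \mc E_m$. Since $\mathbb{X}_{\varrho}$ is a compact, hence complete, metric space, Baire's theorem shows that $\mc E$ is a dense $G_{\delta}$. For $w \in \mc E$ and $E \in \mc I'$ one picks $m$ with $E \in U_m$, whence $E \notin \sigma_{\pp}(\H_w)$, so $\H_w$ has no eigenvalue in $\mc I'$. For the last assertion, if $\bar{B} < r^{2/k_r}$ then every $E \in \mc I = [-2,2] + \{V(a),V(b)\}$ satisfies $B_E \leqslant \bar{B} < r^{2/k_r}$, so $\mc I \subset \mc I'$; combined with $\sigma(\H_w) \subset \mc I$ this forces $\sigma_{\pp}(\H_w) = \varnothing$ for all $w \in \mc E$. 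The step I expect to be the main obstacle is the $G_{\delta}$-ness of the $\mc E_m$ together with the necessity of the open-set exhaustion: because no $B$-strongly palindromic sequence exists for $B \geqslant r^{2/k_r}$, no single dense family of such sequences can exclude eigenvalues on all of $\mc I'$ simultaneously, so the countable Baire intersection over the approximating open sets $U_m$ is essential rather than cosmetic, and one must make sure the density input (strong palindromes) and the topological input (wonderland theorem on open sets) are matched level by level.
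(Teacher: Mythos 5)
Your argument is correct and follows essentially the same route as the paper's proof: density from dense orbits of $B$-strongly palindromic points combined with the monotonicity of strong palindromicity in $B$, the $G_{\delta}$ property from Simon's wonderland theorem, and a Baire intersection over a countable exhaustion of $\mc I'$, with the same final observation for the case $\bar{B} < r^{2/k_r}$. The only (cosmetic) difference is that you exhaust $\mc I'$ by the open sets $U_m = \{E \mid B_E < B_m\}$, whereas the paper uses the closed sets $\mc I_{\varepsilon} = \{E \mid B_E \leqslant r^{2/k_r} - \varepsilon\}$, to which the wonderland theorem applies directly; your open sets are themselves countable unions of such closed sets, so the conclusion is unaffected.
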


\begin{proof}
Due to Theorem~\ref{THM:PALINDROMES}, there are $B$-strongly palindromic sequences in $\mathbb{X}_{\varrho} \setminus \{ a^{\Z} \}$ precisely if $B < r^{2/k_r}$. 
Let $\varepsilon > 0$ and set $ \mc I_{\varepsilon} = \{ E \in \mathbb{R} \mid B_E \leqslant r^{2/k_r} - \varepsilon \}$. 
Let $B = r^{2/k_r} - \varepsilon$ and take a $B$-strongly palindromic sequence $w \in \mathbb{X}_{\varrho} \setminus \{a^{\Z} \}$. Then, $w$ is also $B_E$-strongly palindromic for all $E \in \mc I_{\varepsilon}$ and hence $\H_w$ has no eigenvalues on $\mc I_{\varepsilon}$. Since the orbit of $w$ is dense, the same holds for the set 
\[
\mc E_{\varepsilon} = \{ w \in \mathbb{X}_{\varrho} \mid H_w \mbox{ has no eigenvalues on } \mc I_{\varepsilon}  \}.
\]
Since the map $E \mapsto B_E$ is continuous, $\mc I_{\varepsilon}$ is a closed set and we can apply \cite[Thm.~1.1]{Simon-wonderland} to conclude that $\mc E_{\varepsilon}$ is a $G_{\delta}$ set. 
Suppose $\varepsilon_n = 1/n$ and set
\[
\mc E := \bigcap_{n \in \N} \mc E_{\varepsilon_n} = \{ w \in \mathbb{X}_{\varrho} \mid H_w \mbox{ has no eigenvalues on } \mc I' \},
\]
where the last equality is due to the fact that $\mc I' = \cup_{n \in \N} \mc I_{\varepsilon_n}$. By Baire's category theorem, $\mc E$ is a dense $G_{\delta}$ set as a countable intersection of dense $G_{\delta}$ sets. 
\end{proof}
Note that the set $\mc I'$ might be empty. Whether or not we can exclude eigenvalues for the whole spectrum depends on the values of $r$, $k_r$ and the variation $|V(b) - V(a)|$ of the potential function. Setting $\Sigma' = \Sigma \cap \mc I'$, we have proved Theorem~\ref{THM:palindromes-spectral}.

\begin{example}
We return to our guiding example $\varrho \colon a \mapsto a, b \mapsto bba$, where $p=1$, $r = 2$ and $k_r = 1$. The word $k_1 = 0$ is trivially a palindrome. By Proposition~\ref{PROP:generic-no-eigenvaluesII}, we have generic absence of eigenvalues on 
\[
\mc I' = \{ E \in \mathbb{R} \mid B_E < 4\} = \left(V(a) + \left(-\frac{3}{2},\frac{3}{2} \right) \right) \cap \left(V(b) + \left(-\frac{3}{2},\frac{3}{2} \right) \right),
\] 
which is empty, whenever $|V(b) - V(a)| \geqslant 3$. In any case, this criterion never gives us generic absence of eigenvalues on the \emph{whole} spectrum $\Sigma$, even in the trivial case $V(a) = V(b)$. Hence, the result is not sharp. This is due to the fact that in the proof of \cite[Thm.~8.1]{HofKnillSimon} submultiplicativity is used for the norm of large products of matrices which is far from optimal if this product is given by the power of a single matrix.
\end{example}

Given $\varrho$, we would like to find out whether we can apply Proposition~\ref{PROP:as-absence-of-eigenvalues} in order to exclude eigenvalues almost surely. In the light of Proposition~\ref{PROP:power-test}, we find that this can be tested algorithmically, thus finishing the proof of Theorem~\ref{THM:as-no-eigenvalues}.

\begin{coro}
\label{Cor:short-3-structure}
The substitution $\varrho$ satisfies the requirements of Proposition~\ref{PROP:as-absence-of-eigenvalues} if and only if the following holds. There is a subword $v \in \mc L_{\bar{\varrho}}'$ of length $|v| < r$ such that $vvv \triangleleft \beta^{(2)}$ or $vvv \triangleleft \beta^{(2)} k \beta^{(2)}$ for some $k \in f^2(\mathcal N)$ with $k \leqslant k_{\max}$.
\end{coro}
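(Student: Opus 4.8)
The plan is to read Corollary~\ref{Cor:short-3-structure} as a concrete, algorithmically checkable reformulation of Proposition~\ref{PROP:power-test} specialized to $n=3$, after first identifying the hypothesis of Proposition~\ref{PROP:as-absence-of-eigenvalues} as a statement about the $b$-index. So the argument naturally splits into three steps.

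First I would translate the hypothesis of Proposition~\ref{PROP:as-absence-of-eigenvalues} into the language of the $b$-index. Since $\mc A = \{a,b\}$ forces $\mc A' = \{b\}$, the requirement $u_1 \in \mc A'$ means $u_1 = b$, and then $uuuu_1 = u^3 u_1$ with $u_1 = b$. By the criterion recorded just before \eqref{EQ:index-criterion}, the existence of a word $u \in \mc L_{\varrho}$ with $u_1 = b$ and $u^3 u_1 \in \mc L_{\varrho}$ is equivalent to $\Ind_b(\mc L_{\varrho}) > 3$. (Note that $u \in \mc L_{\varrho}$ is automatic, being a subword of the legal word $uuuu_1$.) Hence $\varrho$ satisfies the requirements of Proposition~\ref{PROP:as-absence-of-eigenvalues} if and only if $\Ind_b(\mc L_{\varrho}) > 3$. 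Applying Proposition~\ref{PROP:power-test} with $n=3$ then already yields: this holds if and only if there is a $v \in \mc L_{\bar{\varrho}}'$ with $|v| \leqslant r-1$ and a letter $k \in f^2(\mathcal N)$ such that $vvv$ is contained in a (cyclic) permutation of $\beta^{(2)} k$, that is, $vvv \triangleleft (\beta^{(2)} k)^{\Z}$.

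The remaining step, which I expect to carry the only genuine content, is to rewrite the single condition ``$vvv$ lies in a permutation of $\beta^{(2)} k$'' as the stated dichotomy, splitting according to whether $k \leqslant k_{\max}$. Here I would invoke the separation property already established in the proof of Proposition~\ref{PROP:power-test}: any letter $> k_{\max}$ occurs in a legal word only with spacing at least $r$, whereas $vvv$ has period $|v| < r$ and therefore cannot contain such a letter. Consequently, if $k > k_{\max}$, the cube $vvv$ must avoid the unique occurrence of $k$ in each period of $(\beta^{(2)} k)^{\Z}$ and so lies inside a single gap between consecutive $k$'s, which equals $\beta^{(2)}$; this gives $vvv \triangleleft \beta^{(2)}$. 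If $k \leqslant k_{\max}$, then since $|vvv| = 3|v| < r^2$ (as $3(r-1) < r^2$ for all $r \geqslant 2$), the cube fits within one period together with the next block, so $vvv \triangleleft \beta^{(2)} k \beta^{(2)}$. For the converse, $vvv \triangleleft \beta^{(2)}$ lies in any $\beta^{(2)} k$ as a prefix, and $vvv \triangleleft \beta^{(2)} k \beta^{(2)}$ with $k \in f^2(\mathcal N)$ exhibits $vvv$ as a subword of two consecutive periods of $(\beta^{(2)} k)^{\Z}$; in either case $vvv$ lies in a permutation of $\beta^{(2)} k$, recovering the hypothesis of Proposition~\ref{PROP:power-test}. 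Chaining the three equivalences gives the corollary. The main obstacle is precisely this final translation, where one must be careful about the cyclic meaning of ``permutation'' and about using the spacing of large letters to collapse the generic case $k > k_{\max}$ to containment in a single palindromic block $\beta^{(2)}$.
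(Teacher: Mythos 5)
Your proposal is correct and follows exactly the route the paper intends: the paper leaves the corollary's proof implicit, presenting it as the combination of the observation that the hypothesis of Proposition~\ref{PROP:as-absence-of-eigenvalues} (with $\mc A'=\{b\}$) is precisely $\Ind_b(\mc L_{\varrho})>3$, with Proposition~\ref{PROP:power-test} for $n=3$, whose statement and proof already contain the dichotomy according to whether $k\leqslant k_{\max}$. Your fleshing out of that dichotomy (using that letters exceeding $k_{\max}$ occur only on the period-$r^2$ lattice of $U_2$-positions, so a cube of period $<r$ either avoids that lattice and sits inside a single $\beta^{(2)}$ block, or straddles one such position whose letter must then be $\leqslant k_{\max}$) matches the argument given in the proof of Proposition~\ref{PROP:power-test}.
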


In particular, we conclude for the substitution discussed in Example~\ref{EX:power-three} that for $\nu$-almost every $w \in \mathbb{X}_{\varrho}$ the corresponding Schr\"{o}dinger operator $H_w$ has purely singular continuous spectrum.

\begin{remark}
Let us return to the minimal substitution $\varrho \colon a \mapsto a, b \mapsto ba^{k_1} \cdots ba^{k_{r-1}} b$, discussed in Remark~\ref{REM:minimal}. By \cite[Thm.~2]{DamanikLenz} we know that the almost sure spectrum of $H_w$, $w \in \mathbb{X}_{\varrho}$ is a Cantor set of Lebesgue measure $0$. Also, the absolutely continuous component is uniformly empty as long as $\mathbb{X}_{\varrho}$ is not periodic. This is in sharp contrast to Lemma~\ref{LEM:thick-spectrum} and Lemma~\ref{LEM:ac-parts}. On the other hand, the exclusion of eigenvalues works via similar criteria. The analogue of Proposition~\ref{PROP:as-absence-of-eigenvalues} in the minimal setting was shown in \cite[Thm.~3]{DamanikLenz}, proving almost sure absence of eigenvalues if there exists $u \in \mc L_{\varrho}$ with $u_1 = b$ such that $u u u u_1 \in \mc L_{\varrho}$. By Remark~\ref{REM:minimal}, this condition can be tested algorithmically, restricting to words $u$ of length at most $r-1$, much as in the almost primitive case. If $\varrho(b)$ is palindromic, we have generic absence of eigenvalues.
\end{remark}

\begin{remark}
Despite the close relation between $\mathbb{X}_{\bar{\varrho}}$ and $\mathbb{X}_{\varrho}$, the spectral characteristics for $H_x$, with $x \in \mathbb{X}_{\bar{\varrho}}$ and $H_x$ as defined in \eqref{EQ:Schroedinger} can be very different from those shown for $\mathbb{X}_{\varrho}$. For example, if we choose the potential function $V$ to be continuous on $\overline{\mc N}$, we obtain a limit-periodic potential sequence $(V_n(x))_{n \in \N}$ for all $x \in \mathbb{X}_{\bar{\varrho}}$. If the convergence $V(n) \to V(\infty)$ is fast enough (for example $V(n) = V(\infty) - e^{-r^{2n}}$), it follows that $H_x$ has absolutely continuous spectrum for all $x \in \mathbb{X}_{\bar{\varrho}}$ \cite[Thm.~5.3]{DamanikFillman}. 
\\In the more general setting ($V$ still continuous), we can again use strong palindromes and Gordon potentials to exclude eigenvalues. The result that $H_w$ has no eigenvalues if $w$ is strongly palindromic \cite[Thm.~8.1]{HofKnillSimon} readily generalizes to the compact alphabet setting. Hence, if $k_1 \ldots k_{r-1}$ is a palindrome, we obtain generic absence of eigenvalues. Similarly, the arguments in \cite{DamanikGordon} can be shown to be applicable to the generalized substitution $\bar{\varrho}$. This yields almost sure absence of eigenvalues if $\Ind(\mc L_{\bar{\varrho}}') > 3 \Leftrightarrow \Ind_b(\mc L_{\varrho}) >3$. Therefore, the algorithmic test in Corollary~\ref{Cor:short-3-structure} applies.
\end{remark}

\subsection{Eigenvalues for eventually periodic points}
Our next aim is to show that we cannot expect to have absence of eigenvalues for all $w \in \mathbb{X}_{\varrho}$ and for all choices of parameters. This partly justifies the effort to find criteria that guarantee at least almost sure or generic absence of eigenvalues. 
We proceed with a case study of the substitution
\[
\varrho \colon a \mapsto a^p, \; b \mapsto bab^4,
\]
with $p > 5 = |\varrho(b)|_b$, which implies that the non-trivial ergodic measure $\nu$ is infinite. 
Due to Proposition~\ref{PROP:as-absence-of-eigenvalues}, it is $\Sigma = \sigma(\H_w) = \sigma_{\scp}(\H_w)$ for almost every $w \in \mathbb{X}_{\varrho}$.
The subshift $\mathbb{X}_{\varrho}$ contains the eventually periodic point
\[
\overline{w} = a^{\infty}.\varrho^{\infty}(b).
\]
which satisfies $[-2,2]+V(a) = \sigma_{\ac}(\H_{\overline{w}}) \subsetneq \sigma(\H_{\overline{w}}) = \Sigma $.
We will show that the Schr\"{o}dinger operator associated to this sequence has (at least) one eigenvalue in $\sigma(\H_{\overline{w}}) \setminus \sigma_{\ac}(\H_{\overline{w}})$ if the coupling is chosen adequately.
We prove this by explicitly constructing a choice for the coupling, the eigenvalue and the corresponding exponentially decaying eigenstate.
Before we go into the details, let us point out that the eigenvalue is necessarily a limit point of the spectrum $\sigma(\H_{\overline{w}}) = \Sigma$. This is because $\Sigma = \sigma_{\scp}(\H_w)$ for almost every $w \in \mathbb{X}_{\varrho}$ and the topological support of a singular continuous measure can have no isolated points.

\begin{prop}
\label{PROP:eigenvalue}
There exist $V(a),V(b) \in \R$ such that the Schr\"{o}dinger operators $\H_{\overline{w}}$ admits an eigenvalue. The corresponding eigenstate $\psi$ is exponentially decaying to both sides.
\end{prop}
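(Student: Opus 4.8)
The plan is to exploit the identity $\varrho(\overline{w}) = \overline{w}$ at the level of transfer matrices. With $T_E(c)$ as in \eqref{EQ:T_E}, write $A = T_E(a)$ and let $P_n$ be the transfer matrix across the block $\varrho^n(b)$, so that $\Phi_{|\varrho^n(b)|} = P_n\Phi_0$ for $\Phi_m = (\psi_m,\psi_{m-1})^{\top}$. Since $\varrho^n(b) = \varrho^{n-1}(b)\,a^{p^{n-1}}\,\varrho^{n-1}(b)^4$, reading the word right-to-left gives the renormalisation
\[
P_n \;=\; P_{n-1}^{4}\, A^{p^{n-1}}\, P_{n-1}
\]
for all $n$. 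Because $\overline{w}_m = a$ for $m \leqslant -1$ while $\overline{w}$ coincides with $\varrho^{\infty}(b)$ for $m \geqslant 0$, the eigenvalue equation $\H_{\overline{w}}\psi = E\psi$ splits into a constant-potential half-line on the left and the hierarchical word on the right, and the goal is to pick $V(a),V(b)$ and $E$ for which these two boundary behaviours are compatible with a single $\ell^2$ solution.

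First I would secure decay at $-\infty$. Choosing $E \notin [-2,2]+V(a)$ makes $A$ hyperbolic with eigenvalues $\lambda^{\pm1}$, $|\lambda|>1$, and eigenvectors $(\lambda,1)^{\top}$ (unstable) and $(1,\lambda)^{\top}$ (stable); on $m\leqslant 0$ the only $\ell^2$ solution is $\psi_m = \lambda^{m}$, which fixes the boundary vector $\Phi_0 \propto (\lambda,1)^{\top}$ up to scale. By Lemma~\ref{LEM:ac-parts} the eigenvalue must lie in $\sigma(\H_{\overline{w}})\setminus\sigma_{\ac}(\H_{\overline{w}}) = \Sigma\setminus([-2,2]+V(a))$, which is exactly this gap regime, so the restriction is natural. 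To tame the $b$-blocks I would organise the analysis near the coupling $E = V(b)$, where $T_E(b)$ is rotation by $\pi/2$ and hence $T_E(b)^{4} = \Id$; a direct computation then gives the parabolic matrix $P_1 = A\,T_E(b) = \begin{pmatrix} -1 & -\delta \\ 0 & -1\end{pmatrix}$ with $\delta = V(b)-V(a)$, $|\delta|>2$. This landmark makes the $b$-part of the recursion completely explicit to leading order and pins down the geometry one must perturb around.

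The crux is decay at $+\infty$. Across the $n$-th barrier $a^{p^{n-1}}$ the map $A^{p^{n-1}}$ contracts the stable direction by $\lambda^{-p^{n-1}}$ but expands the unstable one by $\lambda^{+p^{n-1}}$, and since $p$ is large these factors are super-exponential in $n$. Consequently any $\ell^2$ solution must enter every barrier asymptotically along the stable line $(1,\lambda)^{\top}$, and the distortion introduced by the intervening $b$-blocks $P_{n-1}^{4}$ must be cancelled by a fine choice of the parameters. I would formalise this by the perfect-threading condition that the forward orbit of $\Phi_0$ carry vanishing unstable component at each barrier; because the amplification at level $n$ is $\lambda^{p^{n-1}}$, this constrains $E$ (together with the fixed couplings) to a nested interval of length $O(\lambda^{-2p^{n-1}})$. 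A nested-interval/contraction argument then produces a single admissible $E_{\infty}$, at which the solution is contracted by $\lambda^{-p^{n-1}}$ at every barrier; this beats the bounded growth coming from the shear factors, so $\psi_m \to 0$ exponentially as $m \to +\infty$. Combined with the left decay this yields $\psi \in \ell^2(\Z)$ with exponential decay on both sides, and the values $V(a),V(b),E_{\infty}$ furnish the required coupling and eigenvalue.

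The main obstacle is precisely this simultaneous two-sided decay: the left-decay condition already fixes the solution, so right-decay is an infinite tower of alignment constraints, and the real work is to prove they are consistent and that the limiting solution genuinely remains $\ell^2$. Concretely, one must control the accumulated distortion of the $P_{n-1}^{4}$ factors against the barrier contraction $\lambda^{-p^{n-1}}$ and verify that the intervals constraining $E$ are truly nested and nonempty. The exact relation $T_E(b)^{4} = \Id$ at $E = V(b)$ and the largeness of $p$ (which also makes $\nu$ infinite, so that $\Sigma = \sigma_{\scp}$ almost surely and the eigenvalue is necessarily a non-isolated point of $\Sigma$) are the levers that keep these estimates under control and force the construction to close.
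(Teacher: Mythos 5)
Your plan diverges from the paper's proof at the decisive step, and the step where it diverges is exactly where the gap is. The left-half-line analysis (hyperbolic $T_E(a)$, initial vector forced to be the unstable eigenvector $(\mu,1)^{\top}$) and the hierarchical decomposition of $\varrho^{\infty}(b)$ agree with the paper. But your mechanism for decay at $+\infty$ --- ``perfect threading'': demand that the forward orbit enter every barrier $a^{p^{n-1}}$ with vanishing unstable component, and find $E_\infty$ by a nested-interval argument --- is not established and, as stated, cannot close. First, each threading condition is a codimension-one constraint on the single free spectral parameter $E$ (the couplings being fixed), and the constraints proliferate at \emph{all} scales and at \emph{all} positions: inside each factor $P_{n-1}^{4}$ there are barriers $a^{p^{n-2}}, a^{p^{n-3}},\dots$, and the vector entering the second, third and fourth copies of $\varrho^{n-1}(b)$ is not the vector for which the first copy was tuned, so these are genuinely independent conditions. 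You give no reason why the admissible sets are nested or have nonempty intersection. Second, the ``shear factors'' are not bounded: $\|P_{n-1}\|$ is at least of order $\mu^{p^{n-2}}$ because of the barriers it contains, so the accumulated distortion cannot be absorbed the way you claim. Third, perfect threading is structurally inconsistent with your own left boundary condition: the solution starts in the \emph{unstable} direction at $0$, so it cannot be in the stable direction at every barrier unless the intervening $b$-blocks swap the two eigendirections --- and if they do swap them, the vector is alternately expanded and contracted and one must control the cumulative balance, which is a different argument. Finally, your ``landmark'' $E=V(b)$ (where $T_E(b)^4=\Id$ and $P_1$ is parabolic) is not where the relevant algebraic coincidence occurs: a direct check shows the parabolic $P_1=\bigl(\begin{smallmatrix}-1 & -\delta\\ 0 & -1\end{smallmatrix}\bigr)$ does not exchange the eigendirections of $T_E(a)$, and the paper's numerics put the working parameters at $E-V(b)\approx 1.27$ or $1.91$.

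What the paper actually does is to replace threading by an exact algebraic ``switch'': it solves (Lemma~\ref{LEM:eig-parameters}) for parameters at which the single matrix $T_E(\varrho(b))=T_E(bab^4)$ maps the unstable eigenspace of $T_E(a)$ onto the stable one \emph{and vice versa}. Because this is a property of the matrix, not of one orbit vector, it propagates through the entire hierarchy with no further conditions on $E$: the solution alternates between the two eigendirections, each $a$-block $a^{m_j}$ contributes a factor $\mu^{\pm m_j}$ with alternating sign, and the net exponent is the height function $h(n)=\sum_{j\le n}(-1)^{j+1}m_j$. The remaining work (Lemma~\ref{LEM:height-estimate}) is purely combinatorial: the Toeplitz structure of the return-word fixed point places the large letters $p^{k}$ at odd positions $5^k$, forcing $h(n)\gtrsim p^{k+1}$ for $n\sim 5^{k+1}$, which beats both the exponential growth $C^n$ of the bounded $b$-block factors and the length $\ell_n\lesssim p^{k+1}$ since $p>5$. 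If you want to salvage your renormalisation $P_n=P_{n-1}^4A^{p^{n-1}}P_{n-1}$, the missing ingredient is precisely this swap identity at special $(V(a),V(b),E)$; without it the infinite tower of alignment constraints has no reason to be solvable.
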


Note that if $\psi$ is an eigenstate for $\H_{\overline{w}}$ then so is $S \psi$ for $\H_{S \overline{w}}$. Hence, Theorem~\ref{THM:Eigenvalue} is an immediate consequence of Proposition~\ref{PROP:eigenvalue}.
Before we give a formal proof of this result, we set up some notation and present the general idea.
We want to find a solution to the eigenvalue equation $\H_{\overline{w}} \psi = E \psi$, with $\psi \in \ell^2(\Z)$. This is equivalent to
\[
\begin{pmatrix}
\psi_{n+1}
\\ \psi_n
\end{pmatrix}
=
T_E (\overline{w}_n)
\begin{pmatrix}
\psi_n
\\ \psi_{n-1}
\end{pmatrix} ,
\]
for all $n \in \Z$. 
More generally, we have for $m \geqslant n$ that
\[
\begin{pmatrix}
\psi_{m+1}
\\ \psi_m
\end{pmatrix}
=
T_E (\overline{w}_n \cdots \overline{w}_m) 
\begin{pmatrix}
\psi_n
\\ \psi_{n-1}
\end{pmatrix},
\quad 
\mbox{with} \quad
T_E (\overline{w}_n \cdots \overline{w}_m) 
:=
T_E (\overline{w}_m) \cdots T_E (\overline{w}_n).
\]
Since $\overline{w}_n = a$ for all $n < 0$, we obtain 
\[
\begin{pmatrix}
\psi_{-n}
\\ \psi_{- n - 1}
\end{pmatrix}
=
T_E(a)^{-n} 
\begin{pmatrix}
\psi_0
\\ \psi_{-1}
\end{pmatrix} .
\]
Hence, $\psi$ is square-summable to the left precisely if $T_E(a)$ is hyperbolic and $(\psi_0, \psi_{-1})^T$ is an instable eigenvector for $T_E(a)$. For positive $n$, the situation is more involved. Our strategy is the following. First, we decompose $\varrho^{\infty}(b)$ as
\[
\varrho^{\infty}(b) = \varrho(b) a^{m_1} \varrho(b) a^{m_2} \varrho(b) a^{m_3} \ldots ,
\]
for some sequence $(m_n)_{n\in \N} \in \N_0^{\N}$. In the second step, we show that we can choose our parameters in such a way that the transition matrix corresponding to the word $\varrho(b)$ acts as a `switch' between the two eigenspaces of $T_E(a)$. That is, it maps stable eigenvectors to instable eigenvectors of $T_E(a)$ and vice versa. Hence, $a^{m_1}, a^{m_3}, a^{m_5}, \ldots$ contribute to a decay of $\psi$, while $a^{m_2}, a^{m_4}, a^{m_6}, \ldots$ tend to increase the absolute values. We are thus led to consider the characteristic "height" function
\[
h(n) = \sum_{j = 1}^n (-1)^{j+1} m_j.
\]
Finally, we show that $h(n)$  diverges to infinity as $n \to \infty$ fast enough to conclude that $\psi(n)$ decays exponentially as $n \to \infty$. The details follow. We start by giving an explicit form for the sequence of integers $(m_n)_{n \in \N_0}$.

\begin{lemma}
\label{LEM:eig-block-splitting}
Let $\bar{\varrho} \colon k \mapsto 1 \, 0^3 \, f(k) $ with $f(k) = k p$ be the return word substitution associated to $\varrho$ and $ \bar{\varrho}^{\infty}(1) = x = x_1 x_2 x_3 \ldots$. Then,
\[
\overline{w}^+ = \varrho^{\infty}(b) = \varrho(b) a^{f(x_1)} \varrho(b) a^{f(x_2)} \varrho(b) a^{f(x_3)} \ldots.
\]
\end{lemma}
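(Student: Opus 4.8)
The plan is to realize $\varrho^{\infty}(b)$ as the image under $\tau$ of the $\bar{\varrho}$-fixed point $x = \bar{\varrho}^{\infty}(1)$, and then to unfold this fixed point by one application of $\bar{\varrho}$, reading off the return-word blocks. The whole argument rests on the conjugacy $\tau \circ \bar{\varrho} = \varrho \circ \tau$ of Lemma~\ref{LEM:conjugate-substitutions} together with continuity of $\tau$ on $\infty$-free words.

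First I would record the single-letter identity. For every $k \in \mathcal{N}$,
\[
\tau(\bar{\varrho}(k)) = \tau(1\,0^3\,f(k)) = ba \cdot b \cdot b \cdot b \cdot ba^{f(k)} = bab^4 a^{f(k)} = \varrho(b)\, a^{f(k)},
\]
which is exactly the conjugacy relation evaluated on the generator $\tau(k) = ba^{k}$, using $f(k) = pk$. Since $\bar{\varrho}$ maps $\mathcal{N}$ into $\mathcal{N}^{+}$, the word $x = \bar{\varrho}^{\infty}(1)$ lies in $\mathcal{N}^{\N}$ and is in particular $\infty$-free, so $\tau$ is defined on $x$ by plain (one-sided) concatenation and is continuous there. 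Note also that $x$ is a genuine fixed point, $\bar{\varrho}(x) = x$, because $1$ is a prefix of $\bar{\varrho}(1) = 1\,0^3\,p$, so $\bar{\varrho}^{n}(1)$ is a prefix of $\bar{\varrho}^{n+1}(1)$ and $\bar{\varrho}^{n}(1) \to x$.

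Second I would identify $\varrho^{\infty}(b)$ with $\tau(x)$. Iterating the conjugacy gives $\tau(\bar{\varrho}^{n}(1)) = \varrho^{n}(\tau(1)) = \varrho^{n}(ba) = \varrho^{n}(b)\,a^{p^{n}}$ for every $n \in \N$. As $n \to \infty$ the left-hand side converges to $\tau(x)$, since $\bar{\varrho}^{n}(1) \to x$ and $\tau$ is continuous, while the right-hand side converges to $\varrho^{\infty}(b)$ as a one-sided sequence, because $\varrho^{n}(b)$ is a prefix of $\varrho^{n+1}(b)$ and the trailing block $a^{p^{n}}$ is pushed out to infinity. Hence $\varrho^{\infty}(b) = \tau(x)$.

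Finally I would unfold the fixed point. Writing $x = \bar{\varrho}(x) = \bar{\varrho}(x_1)\bar{\varrho}(x_2)\bar{\varrho}(x_3)\cdots$ and applying $\tau$ termwise (legitimate on the $\infty$-free sequence $x$), the single-letter identity yields
\[
\varrho^{\infty}(b) = \tau(x) = \tau(\bar{\varrho}(x_1))\,\tau(\bar{\varrho}(x_2))\,\tau(\bar{\varrho}(x_3))\cdots = \varrho(b)\,a^{f(x_1)}\,\varrho(b)\,a^{f(x_2)}\,\varrho(b)\,a^{f(x_3)}\cdots,
\]
which is precisely the claimed decomposition of $\overline{w}^{+}$. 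I do not expect a serious obstacle; the only point needing care is the limit identifying $\varrho^{\infty}(b)$ with $\tau(x)$, where one must check that the growing $a$-block in $\varrho^{n}(ba) = \varrho^{n}(b)\,a^{p^{n}}$ does not interfere with prefix convergence, and that $\tau$ is applied only to $\infty$-free words, so that the one-sided concatenation is unambiguous.
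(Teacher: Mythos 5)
Your proposal is correct and follows essentially the same route as the paper: both identify $\overline{w}^+$ with $\tau(\bar{\varrho}^{\infty}(1))$ via the conjugacy $\tau\circ\bar{\varrho}=\varrho\circ\tau$ of Lemma~\ref{LEM:conjugate-substitutions} and then read off the blocks $\varrho(b)\,a^{f(x_j)}$ from $\tau(\bar{\varrho}(x_j))=bab^4a^{f(x_j)}$. The only difference is cosmetic — you unfold the fixed point on the $\bar{\varrho}$ side while the paper applies $\varrho$ to $\tau(x)$ — and your extra care with the limit $\varrho^n(b)\,a^{p^n}\to\varrho^{\infty}(b)$ just makes explicit what the paper asserts "by construction".
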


\begin{proof}
By construction of the return word substitution we have
\[
\overline{w}^+ = \varrho^{\infty}(b) = \tau(\bar{\varrho}^{\infty}(1)) =\tau(x) = ba^{x_1} ba^{x_2} ba^{x_3} \ldots,
\]
compare Lemma~\ref{LEM:conjugate-substitutions}. Applying $\varrho$ yields
\[
\varrho(\overline{w}^+) = \varrho \tau(x) = \tau (\bar{\varrho}(x)) = ba b^3 ba^{f(x_1)} bab^3ba^{f(x_2)} bab^3ba^{f(x_3)} \ldots,
\]
and the claim follows from the observation that $\varrho(b) = bab^4$.
\end{proof}

\begin{lemma}
\label{LEM:eig-parameters}
There are parameters $V(a),V(b),E \in \R$ with the following properties
\begin{enumerate}
\item $T_E(a)$ is hyperbolic, with eigenvalues $\mu>1$ and $1/\mu$ and corresponding eigenvectors $(\mu,1)^T$ and $(1,\mu)^T$.
\item $T_E(bab^4)$ maps $(\mu,1)^T$ to a multiple of $(1,\mu)^T$ and vice versa.
\end{enumerate}
\end{lemma}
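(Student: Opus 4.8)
The plan is to dispatch part (1) by a direct computation, which simultaneously pins down the relation between $E$, $V(a)$ and $\mu$, and then to establish part (2) as an existence statement for the single remaining free parameter. For part (1), note that $T_E(a) \in SL(2,\R)$ has trace $E - V(a)$, hence is hyperbolic with eigenvalues $\mu > 1 > 1/\mu$ exactly when $E - V(a) > 2$. A one-line calculation shows that $(\mu,1)^{T}$ and $(1,\mu)^{T}$ are eigenvectors of $T_E(a)$, to the eigenvalues $\mu$ and $1/\mu$ respectively, if and only if $E - V(a) = \mu + \mu^{-1}$. I would therefore fix an arbitrary $\mu > 1$ and set $\alpha := E - V(a) = \mu + \mu^{-1} > 2$. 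Since each transfer matrix depends on the potential only through $\alpha = E - V(a)$ and $\beta := E - V(b)$, the value of $E$ can be chosen freely at the very end, and the genuine degrees of freedom are precisely $\mu$ (equivalently $\alpha$) and $\beta$.

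For part (2), write $A = T_E(a)$, $B = T_E(b)$, so that with the paper's ordering convention $M := T_E(bab^4) = B^4 A B$. Put $v_+ = (\mu,1)^T$, $v_- = (1,\mu)^T$ and $P = [\,v_+ \mid v_-\,]$; then (2) is equivalent to $\hat M := P^{-1} M P$ being anti-diagonal, i.e.\ to the two scalar equations $\hat M_{11} = \hat M_{22} = 0$. A convenient structural fact is the identity $J\,T_E(c)\,J = T_E(c)^{-1}$, valid for every letter $c$, where $J = \left(\begin{smallmatrix} 0 & 1 \\ 1 & 0 \end{smallmatrix}\right)$ interchanges $v_+$ and $v_-$; it shows that (2) is equivalent to $JM$ commuting with $A$, that is $JM \in \mathrm{span}\{\Id, A\}$. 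Either way one obtains two polynomial conditions in $(\mu,\beta)$. Observing that any $M \in SL(2,\R)$ which interchanges the two lines must have $\tr M = 0$ (whence $M^2 = -\Id$, so that $Mv_+ \parallel v_-$ then forces $Mv_- \parallel v_+$ automatically), I would take the first condition to be $\tr M = 0$. By cyclicity of the trace $\tr M = \tr(B^5 A)$, and expanding $B^5$ through the Chebyshev-type polynomials $U_n$ defined by $U_n = \beta U_{n-1} - U_{n-2}$, $U_0 = 1$, $U_1 = \beta$, this condition becomes $\alpha\,U_5(\beta) = 2\,U_4(\beta)$, which determines $\alpha = 2U_4(\beta)/U_5(\beta)$ as a function of $\beta$. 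Substituting this into the remaining swap condition reduces the whole problem to a single equation $g(\beta) = 0$, to be solved on the set where $\alpha(\beta) > 2$ (so that $\mu > 1$); such $\beta$ exist, for instance just above $\beta = 1$, where $U_5$ changes sign and $\alpha(\beta)$ exceeds $2$.

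The main obstacle is precisely this last step: showing that the swap condition can be met simultaneously with $\tr M = 0$ inside the hyperbolic window $\alpha > 2$. I expect to resolve it by the intermediate value theorem, locating two admissible values of $\beta$ along the trace-zero curve at which $g$ takes opposite signs; equivalently, and perhaps more transparently for a case study, one can simply exhibit explicit numerical values of $V(a)$, $V(b)$ and $E$ and verify (1) and (2) by a finite matrix computation. All remaining steps—the eigenvector verification in (1), the identity $J\,T_E(c)\,J = T_E(c)^{-1}$, and the reduction to $g(\beta)=0$—are routine algebra.
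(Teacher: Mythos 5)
Your proposal is correct and follows essentially the same route as the paper: part (1) is the same trace computation, and part (2) is turned into a polynomial system in the two effective parameters $\alpha=E-V(a)$ and $\beta=E-V(b)$ whose solvability with $\mu>1$ is ultimately certified by explicit computation, exactly as in the paper, which writes the two swap conditions as vanishing bilinear forms $(-\mu,1)\,T_E(b)^4T_E(a)T_E(b)\,(\mu,1)^T=0$ and $(1,-\mu)\,T_E(b)^4T_E(a)T_E(b)\,(1,\mu)^T=0$ and reports the two numerical solutions. Your extra reduction --- replacing one of the two swap conditions by $\tr(B^5A)=\alpha U_5(\beta)-2U_4(\beta)=0$ via the observation $M^2=-\Id$ --- is a correct simplification the paper does not bother with (and it is consistent with the paper's numerical values), but both treatments leave the final existence step to a finite numerical verification.
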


\begin{proof}
This is a straightforward exercise in linear algebra. Let $x_a = E - V(a)$ and $x_b = E - V(b)$. The first condition is fulfilled as soon as $|x_a| > 2$, which we will assume in the following. By the invariance of the trace, it is $x_a = \tr T_E(a) = \mu + 1/\mu$, compare \eqref{EQ:T_E}. The second condition is equivalent to
\[
\begin{pmatrix}
-\mu, 1
\end{pmatrix}
 T_E(b)^4 T_E(a) T_E(b) 
\begin{pmatrix}
\mu
\\1
\end{pmatrix} = 0
, \quad \quad
\begin{pmatrix}
1, - \mu
\end{pmatrix}
 T_E(b)^4 T_E(a) T_E(b) 
\begin{pmatrix}
1
\\ \mu
\end{pmatrix} = 0 .
\]
This is equivalent to a system of polynomial equations in the two variables $\mu$ and $x_b$. It turns out that this admits precisely two solutions for $(\mu,x_b)$ with $\mu > 1$. The numerical values for $x_a$ and $x_b$ are approximately
\[
E - V(a) \approx 2.3247, \quad E - V(b) \approx 1.2660,
\]
and
\[
E - V(a) \approx 2.0702, \quad E - V(b) \approx 1.9072,
\]
respectively. Hence, we have two families of solutions, where the coupling $V(b) - V(a)$ and the relative position of the spectral parameter $E$ are fixed for each of the families.
\end{proof}

\begin{lemma}
\label{LEM:height-estimate}
Let $k \in \N_0$ and $n \in \N$ with $5^k \leqslant n \leqslant 5^{k+1} - 1$. For $ \bar{\varrho}^{\infty}(1) = x = x_1 x_2 x_3 \ldots$, the function 
\[
h(n) = \sum_{j =1}^n (-1)^{j+1} f(x_j)
\]
satisfies $p^{k+1} \leqslant h(n) \leqslant h(5^k) = \sum_{m=1}^{k+1} p^m$. 
\end{lemma}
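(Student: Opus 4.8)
The plan is to exploit the constant-length-$5$ structure of $\bar{\varrho}\colon k\mapsto 1\,0^3\,f(k)$ to set up a renormalisation for the alternating sum $h$, and then to induct on $k$. First I would record the combinatorial shape of the fixed point $x=\bar{\varrho}^{\infty}(1)$: since $\bar{\varrho}(x)=x$ forces $x_{[5(i-1)+1,\,5i]}=\bar{\varrho}(x_i)=1\,0\,0\,0\,f(x_i)$, the letter at a position $\equiv 1 \pmod 5$ is $1$, the letters at positions $\equiv 2,3,4 \pmod 5$ are $0$, and the letter at position $5i$ is $f(x_i)=p\,x_i$ (so in particular $x_{5^k}=p^k$). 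This is the only input about $x$ that the argument needs.

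The engine of the proof is a self-similarity of $h$ under grouping the index set into blocks of length $5$. For $n=5M+s$ with $0\le s\le 4$ and $M\ge 1$, I would split the defining sum into the $M$ complete image-blocks together with one partial block. Because the three middle letters of each block are $0$ (hence contribute $f(0)=0$), only the leading letter $1$ (contributing $f(1)=p$) and the trailing letter (contributing $f(x_{5i})=f(p\,x_i)=p\,f(x_i)$) survive, and the alternating signs inside block $i$ collapse to a single factor $(-1)^{i+1}$. Summing over $i$ and using that $\sum_{i=1}^{M}(-1)^{i+1}f(x_i)$ is exactly $h(M)$ gives $h(5M)=p\,\epsilon_M+p\,h(M)$, while for $1\le s\le 4$ the partial block adds a further $(-1)^M p$, so that
\[
h(5M+s)=p\,\epsilon_M+p\,h(M)+(-1)^M p \quad (1\le s\le 4),\qquad \epsilon_M=\sum_{i=1}^{M}(-1)^{i+1}\in\{0,1\}.
\]
The decisive point is that both $\epsilon_M$ and the partial-block term $(-1)^M p$ switch with the parity of $M$, and their combination cancels down to the clean dichotomy
\[
h(5M+s)\in\{\,p\,h(M),\ p\,h(M)+p\,\}\qquad\text{for every }0\le s\le 4,
\]
with the value $p\,h(M)$ occurring at $s=0$ when $M$ is even and at $s\ge 1$ when $M$ is odd, and the other value in the remaining cases.

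With this dichotomy in hand I would first settle the exact endpoint value: taking $M=5^{k-1}$ (odd) and $s=0$ the recursion reads $h(5^{k})=p\,h(5^{k-1})+p$, which together with $h(1)=p$ yields $h(5^k)=\sum_{m=1}^{k+1}p^m$ by a one-line induction. For the range bounds I then induct on $k$. The base case $k=0$ is the direct computation $h(1)=h(2)=h(3)=h(4)=p$, matching the claim on $[1,4]$. For the step, if $n\in[5^k,5^{k+1}-1]$ then $M=\lfloor n/5\rfloor$ runs through $[5^{k-1},5^k-1]$, so the hypothesis $p^{k}\le h(M)\le\sum_{m=1}^{k}p^m$ applies; feeding it into the dichotomy gives $h(n)\ge p\,h(M)\ge p^{k+1}$ and $h(n)\le p\,h(M)+p\le p\sum_{m=1}^{k}p^m+p=\sum_{m=1}^{k+1}p^m=h(5^k)$, which is exactly the assertion.

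I expect the only real obstacle to be the derivation of the sharp dichotomy $h(5M+s)\in\{p\,h(M),\,p\,h(M)+p\}$. A crude triangle-inequality estimate on the recursion loses an additive $p$ at each end and therefore fails to close the induction; what makes the constants come out exactly right is the parity-synchronised cancellation between the $\epsilon_M$-term (coming from the leading $1$'s of the blocks) and the partial-block term $(-1)^M p$, which must be tracked on the nose rather than merely bounded.
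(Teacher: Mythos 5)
Your proof is correct, and it takes a genuinely different route from the paper's. The paper works with $g(n)=\sum_{j=1}^n(-1)^{j+1}x_j$ and inducts on $k$ by decomposing the prefix $x_{[1,5^{k+1}]}$ into the five \emph{top-level} blocks $\bar{\varrho}^k(1)\,(\bar{\varrho}^k(0))^3\,\bar{\varrho}^k(p)$ of length $5^k$; the key facts there are that these blocks agree except in their last letters and that the odd block length flips the alternating sign between consecutive blocks, which lets one express $g(n)$ for $n$ in the second and third blocks in terms of $g$ on the first block (and a square/periodicity observation handles the fourth and fifth). You instead group the index set into the $M=\lfloor n/5\rfloor$ \emph{bottom-level} blocks $\bar{\varrho}(x_i)$ of length $5$, which yields the exact renormalisation $h(5M+s)\in\{p\,h(M),\,p\,h(M)+p\}$ after the parity-synchronised cancellation between $p\,\epsilon_M$ and $(-1)^Mp$ (I checked the sign bookkeeping: within block $i$ both surviving letters carry the sign $(-1)^{i+1}$, and the two cases $M$ even/odd do collapse to the stated dichotomy). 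Your functional equation is sharper than what the paper extracts --- it pins $h(n)$ down to one of two explicit values in terms of $h(\lfloor n/5\rfloor)$ rather than merely bounding it --- and it makes the induction step a one-line substitution; the paper's version, by contrast, generalizes more directly to situations where the image blocks of the return-word substitution differ in more than their last letter, since it never needs the interior letters to vanish under $f$. Both arguments rest on the same structural inputs (the form of $\bar{\varrho}$ and the self-similarity of its fixed point), and your endpoint computation $h(5^k)=p\,h(5^{k-1})+p$ with $h(1)=p$ reproduces the paper's value $\sum_{m=1}^{k+1}p^m$ exactly.
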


\begin{proof}
Since $f(x) = px$, the claim is equivalent to showing that 
\[
p^k \leqslant g(n) := \sum_{j = 1}^n (-1)^{j+1} x_j \leqslant g(5^k) = \sum_{m = 0}^k p^m.
\]
We show this by induction on $k$. Note that $x$ has $\bar{\varrho}^{k+1}(1)$ as a prefix and hence the first $5^{k+1}$ letters of $x$ are given by the word $\bar{\varrho}^k(1) (\bar{\varrho}^k(0))^3 \bar{\varrho}^k(p)$. For $k=0$, we obtain that $g(1) = g(2) = g(3) = g(4) = 1 = p^0$ and hence the claim holds for this case. Assume that it holds up to some $k \in \N_0$. Recall that the words $\bar{\varrho}^k(1)$, $\bar{\varrho}^k(0)$ and $\bar{\varrho}^k(p)$ differ only in their last entry, given by $p^k$, $0$ and $p^{k+1}$, respectively. Hence,
\[
g(5^{k+1}) = g(5^k) - (g(5^k) - p^k) + (g(5^k) + p^{k+1} - p^k) 
= g(5^k) + p^{k+1} = \sum_{m=0}^{k+1} p^m,
\]
where the last step follows from the induction assumption. Let us next consider the case $5^{k+1} < n < 2 \cdot 5^{k+1}$. Then,
\begin{align*}
g(n) = g(5^{k+1}) + \sum_{j = 5^{k+1} + 1}^{n} (-1)^{j+1} x_j
 = g(5^{k+1}) - g(n - 5^{k+1}) \geqslant g(5^{k+1}) - \sum_{m =1}^k p^m = p^{k+1},
\end{align*}
where we have used $n - 5^{k+1} \leqslant 5^{k+1} - 1$, together with the induction assumption in the penultimate step. The inequality $g(n) \leqslant g(5^{k+1})$ follows from the same calculation. Since $x_{2 \cdot 5^{k+1}} = 0$, the same bounds hold for $g(2 \cdot 5^{k+1})$ as well. If $2 \cdot 5^{k+1} < n < 3 \cdot 5^{k+1}$, it is
\[
g(n) = g(5^{k+1}) - (g(5^{k+1}) - p^{k+1}) + g(n - 2\cdot 5^{k+1}) = p^{k+1} + g(n - 2\cdot 5^{k+1})
\]
and due to $1 \leqslant n - 2 \cdot 5^{k+1} \leqslant 5^{k+1} - 1$, the claim for this interval follows by the induction assumption.
Finally, observe that for $5^{k+1} \leqslant n < 3 \cdot 5^{k+1}$, the word $x_{[1,n]}$ is followed by a word $w^2$ with $|w| = 5^{k+1}$. This implies for these $n$ that $g(n) = g(n + 2 \cdot 5^{k+1})$, which completes the proof.
\end{proof}

We are now in a position to prove the main result of this section.

\begin{proof}[Proof of Proposition~\ref{PROP:eigenvalue}]
Let $V(a),V(b),E \in \R$ be as in Lemma~\ref{LEM:eig-parameters} and $\mu > 1$ be the dominant eigenvalue of $T_E(a)$. Suppose $\psi \in \R^{\Z}$ is the sequence with initial conditions $(\psi_0,\psi_{-1}) = (\mu,1)$ and satisfying $\H_{\overline{w}} \psi = E \psi$. It is clear from our earlier discussion that $\psi(n) \to 0$ exponentially as $n \to - \infty$. It remains to show the corresponding relation for $n \to \infty$. For $m \in \N$, let $\ell_m$ be the (symbolic) length of the word
\[
w_m = \varrho(b) a^{f(x_1)} \varrho(b) a^{f(x_2)} \cdots \varrho(b) a^{f(x_m)},
\]
which is a prefix of $\overline{w}^+$ by Lemma~\ref{LEM:eig-block-splitting}.
By interpolation arguments, it suffices to show that the norm of the vector $(\psi_{\ell_m}, \psi_{\ell_m - 1})^T = T_E(w_m) (\psi_0,\psi_{-1})^T $ decays exponentially in $\ell_m$ as $m \to \infty$. Working in an eigenbasis of $T_E(a)$ it is straightforward to see by Lemma~\ref{LEM:eig-parameters} that $T_E(w_j) (\psi_0, \psi_{-1})$ is in the stable eigenspace of $T_E(a)$ if $j$ is odd and in the instable eigenspace if $j$ is even. Hence,
\begin{equation}
\label{EQ:eig-transport}
T_E(w_m) 
\begin{pmatrix}
\psi_0
\\ \psi_{-1}
\end{pmatrix}
=
\mu^{- h(m)} T_E(\varrho(b))^m
\begin{pmatrix}
\psi_0
\\ \psi_{-1}
\end{pmatrix}.
\end{equation}
Let $C \geqslant 1$ be the operator norm of $T_E(\varrho(b))$. Taking norms in \eqref{EQ:eig-transport} then yields 
\[
s_m : = \frac{||(\psi_{\ell_m}, \psi_{\ell_m - 1}) ||}{||(\psi_{0}, \psi_{- 1}) ||} \leqslant C^m \mu^{- h(m)}.
\]
For $m \in \N$, let $k \in \N_0$ be such that $5^k \leqslant m < 5^{k+1}$. By Lemma~\ref{LEM:height-estimate}, it is $h(m) \geqslant p^{k+1}$. On the other hand, the choice of $k$ implies that $w_m$ is a prefix of $\tau(\bar{\varrho}^{k+1}(0)) = \varrho^{k+1}(b)$. Since we have assumed that $p > 5 = r$, it follows from Lemma~\ref{LEM:inflation-word-growth} that there is $C_2>0$ such that 
$\ell_m \leqslant |\varrho^{k+1}(b)| \leqslant C_2 p^{k+1}$. For an arbitrary $\varepsilon > 0$, we can estimate
\[
C^m \leqslant C^{5^{k+1}} \leqslant \me^{\varepsilon p^{k+1}},
\]
given that $k$ is large enough.
Choosing $\varepsilon = \log(\mu)/2$ and $k$ large, we obtain
\[
s_m \leqslant \me^{\varepsilon p^{k+1} - h(m) \log(\mu)} \leqslant \me^{-(\log(\mu) - \varepsilon) p^{k+1}} \leqslant \me^{- \frac{1}{2}\log(\mu) C_2^{-1} \ell_m}
\]
and the claim follows.
\end{proof}

\begin{remark}
Generalizing the above idea, we can split $\varrho^{\infty}(b)$ as
\[
\varrho^{\infty}(b) = \varrho^k(b) a^{f^k(x_1)} \varrho^k(b) a^{f^k(x_2)} \ldots
\]
for arbitrary $k \in \N$. Most of the discussion remains unchanged in this case. However, an anlogue of Lemma~\ref{LEM:eig-parameters} would require additional work for each $k$. If this is feasible for each $k \in \N$, this would give us a countable set of couplings $V(b) - V(a)$ that allow for an eigenvalue.
\end{remark}

The phenomenon of admitting a localized eigenstate is not restricted to the specific substitution $\varrho \colon a \mapsto a^p, b \mapsto bab^4$. For example, the same line of thought works for the substitutions $\varrho \colon a \mapsto a^p, b \mapsto bab^2$ and $\varrho \colon a \mapsto a^p, b \mapsto ba^2 bab$ if $p>3$. The analogue of Lemma~\ref{LEM:eig-parameters} can be shown to hold in these cases by explicit calculation. 
At this point, one might be inclined to think that the same conclusion holds for every almost primitive substitution on $\mc A = \{a,b\}$. However, this is not the case, as the following example shows.

\begin{prop}
Let $\varrho \colon a \mapsto a^p, b \mapsto bba$ with $p \geqslant 2$ and let $V(a),V(b) \in \R$. Then, the Schr\"{o}dinger operator $\H_{\overline{w}}$ associated to $\overline{w} = a^{\infty}.\varrho^{\infty}(b)$ has no eigenvalues. 
\end{prop}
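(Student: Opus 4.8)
The plan is to combine the rigidity forced by the $a^{\infty}$-tail on the left with the self-similar square structure of $\varrho^{\infty}(b)$ on the right, and to extract a contradiction from the growth of the transfer-matrix traces. Suppose $E$ were an eigenvalue with eigenvector $\psi\in\ell^2(\Z)$, so that $\psi_n\to 0$ as $n\to\pm\infty$. Since $\overline{w}_n=a$ for all $n<0$, the left transfer is governed by $D:=T_E(a)$ via $(\psi_{-n},\psi_{-n-1})^{T}=D^{-n}(\psi_0,\psi_{-1})^{T}$. First I would observe that decay to the left forces $D$ to be hyperbolic: if $|E-V(a)|\le 2$ then $D$ is elliptic or parabolic and $D^{-n}v$ never tends to $0$ for $v\neq 0$, so no left-decaying solution exists. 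Hence $E\notin V(a)+[-2,2]$, $D$ has eigenvalues $\mu>1>\mu^{-1}$ with eigendirections $u_+,u_-$, and the initial vector $v_0:=(\psi_0,\psi_{-1})^{T}$ must be pinned to the expanding direction, $v_0\parallel u_+$ (any $u_-$-component would blow up under $D^{-n}$).

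Next I would exploit the identity $\varrho^{n+1}(b)=\varrho^{n}(b)\,\varrho^{n}(b)\,a^{p^{n}}$, which is immediate from $\varrho(b)=bba$ and $\varrho^{n}(a)=a^{p^{n}}$. Writing $A_n:=T_E(\varrho^{n}(b))$ and $L_n:=|\varrho^{n}(b)|\to\infty$, the prefix $\varrho^{n}(b)^2$ of $\overline{w}^{+}=\varrho^{\infty}(b)$ gives $(\psi_{L_n},\psi_{L_n-1})^{T}=A_n v_0$ and the exact renormalization $A_{n+1}=D^{p^{n}}A_{n}^{2}$. Expressing $A_n$ in the basis $\{u_+,u_-\}$ as $\left(\begin{smallmatrix}a_n & b_n\\ c_n & d_n\end{smallmatrix}\right)$, with $\det A_n=1$ and $\tr A_n=t_n=a_n+d_n$, this recursion becomes the closed system
\[
a_{n+1}=\mu^{p^n}(a_n t_n-1),\qquad d_{n+1}=\mu^{-p^n}(d_n t_n-1),\qquad t_{n+1}=a_{n+1}+d_{n+1},
\]
where I have used $(A_n^2)_{11}=a_nt_n-1$ and $(A_n^2)_{22}=d_nt_n-1$. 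Square-summability of $\psi$ forces $A_nv_0\to 0$ along the sparse set of break points $L_n$, which (since $v_0\parallel u_+$) means precisely $a_n\to 0$.

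The heart of the argument, and the step I expect to be the main obstacle, is a quantitative trace-growth contradiction. From $a_{n+1}=\mu^{p^n}(a_nt_n-1)$ with $a_{n+1}$ bounded and $\mu^{p^n}\to\infty$, I would deduce $a_nt_n\to 1$, hence $|t_n|\to\infty$. Substituting $d_n=t_n-a_n$ into the trace relation yields $t_{n+1}=\mu^{-p^n}t_n^{2}+O(1)$, so that $T_n:=\log|t_n|$ obeys $T_{n+1}=2T_n-p^{n}\log\mu+O(1)$. Comparing with the particular solution $-\tfrac{\log\mu}{\,p-2\,}p^{n}$ when $p>2$, resp. $-\tfrac{\log\mu}{2}\,n\,2^{n}$ when $p=2$, the homogeneous part only contributes $O(2^n)$, so $T_n\to-\infty$, i.e.\ $t_n\to 0$ --- contradicting $|t_n|\to\infty$. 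The delicate point is controlling the $O(1)$ error against the three competing scales $\mu^{\pm p^n}$, $2^n$ and $p^n$; once this bookkeeping is done the contradiction is robust and requires nothing beyond decay of $\psi$ along the $L_n$ and the pinning $v_0\parallel u_+$. I would finally note that this is exactly the feature distinguishing the present case from $b\mapsto bab^4$: there $\varrho^{n+1}(b)=\varrho^{n}(b)\,a^{p^{n}}\,\varrho^{n}(b)^4$ carries no square at the origin, the renormalization is not $A_{n+1}=D^{p^n}A_n^2$, and the trace obstruction disappears, consistently with Proposition~\ref{PROP:eigenvalue}.
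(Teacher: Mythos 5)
Your proof is correct, and while it shares the paper's opening moves (hyperbolicity of $T_E(a)$ forced by $\ell^2$-decay on the left, the initial vector pinned to the unstable direction, and a renormalization of transfer matrices written in the eigenbasis of $T_E(a)$), the core mechanism is genuinely different. The paper decomposes $\varrho^{\infty}(b)$ via return words into blocks $w_n = \tau(\beta^{(n)})b$ satisfying $w_{n+1}=w_n a^{f^n(0)}w_n$, proves the exact vanishing of the lower-left entry ($c_n=0$, so each $T_E(w_n)$ preserves the unstable line) via a self-improving estimate $|c_n|<|c_{n+1}|/2$, and then derives divergence of $T_E(w_{n,m})e_1=a_n^m\,\mu^{\sum_j f^n(x_j)}e_1$ from the letter statistics $x_1+\cdots+x_m\gtrsim m\log m$ of the Toeplitz sequence. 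You instead use the bare inflation identity $\varrho^{n+1}(b)=\varrho^n(b)^2a^{p^n}$, which closes the recursion into the two scalar variables $(a_n,t_n)$: from $a_n\to 0$ you get $a_nt_n\to 1$, hence $|t_n|\to\infty$, while the trace relation $t_{n+1}=\mu^{-p^n}t_n^2+o(1)$ forces $\log|t_n|\to-\infty$, a contradiction. Your route is more self-contained (no return-word machinery, no growth estimate for $\sum x_j$) and uses decay of $\psi$ only along the single sparse sequence $L_n$; the paper's route exposes more structure, namely the exact invariance of the unstable direction under every block. Two details to make explicit when writing yours up: (i) the error in the trace relation is in fact $o(1)$ (both $a_{n+1}\to 0$ and $\mu^{-p^n}(a_nt_n+1)\to 0$), so $|t_{n+1}|\to\infty$ forces $\mu^{-p^n}t_n^2\to\infty$ and hence the clean recursion $T_{n+1}=2T_n-p^n\log\mu+o(1)$; dividing by $2^{n+1}$ and summing shows $T_N/2^N\to-\infty$ because $\sum_n (p/2)^n$ diverges for $p\geqslant 2$, so the bookkeeping you flag does close; (ii) for $E-V(a)<-2$ the eigenvalue $\mu$ is negative, so the asymptotic estimates should be run with $|\mu|$ (the paper glosses over the same point).
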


\begin{proof}
We prove this by contradiction. Assume that $\H_{\overline{w}}\psi = E \psi$ for some $E \in \R$ and $\psi \in \ell^2(\Z)$. As discussed bevore, this requires that $T_E(a)$ is hyperbolic with eigenvalues $\mu > 1$ and $1/\mu$ and that $(\psi_0, \psi_{-1})^T$ is colinear to the eigenvector $(\mu,1)^T$. To simplify notation we will work in the eigenbasis of $T_E(a)$ in the following, identifying $e_1 = (1,0)^T$ and $e_2 = (0,1)^T$ with the instable and stable eigenvectors, respectively. Let $\bar{\varrho} \colon j \mapsto 0 f(j)$, with $f(j) = 1 + pj$ and $x = \bar{\varrho}^{\infty}(0)$. For each $k \in \N$, we can split $\varrho^{\infty}(b)$ as
\[
\varrho^{\infty}(b) = \tau(\beta^{(n)})ba^{f^n(x_1)} \tau(\beta^{(n)})ba^{f^n(x_2)} \ldots
\] 
and we denote $w_n = \tau(\beta^{(n)})b$. Since $x_1 = 0$, these words satisfy the recursive relation $w_{n+1} = w_n a^{f^n(0)} w_n$, compare \eqref{EQ:beta-recursion}. We denote 
\[
\begin{pmatrix}
a_n & b_n
\\ c_n & d_n
\end{pmatrix}
= T_E(w_n),
\]
written in the eigenbasis of $T_E(a)$. Let $0< \varepsilon < 1$, to be fixed later. Since we assumed $\psi \in \ell^2(\Z)$, there must be an $n_0 \in \N$ such that all entries of the vectors $T_E(w_n) e_1$, $T_E(w_n a^{f^n(0)}) e_1$, $T_E(w_{n+1}) e_1$ and $T_E(w_{n+1} a^{f_n(0)}) e_1$ are smaller in absolute value than $\varepsilon$ for all $n \geqslant n_0$. With the notation $\mu_n = \mu^{f^n(0)}$, these vectors are given by
\[
 \begin{pmatrix}
a_n 
\\ c_n
\end{pmatrix},
\quad
\begin{pmatrix}
\mu_n a_n
\\ \mu_n^{-1} c_n
\end{pmatrix},
\quad
\begin{pmatrix}
a_{n+1}
\\c_{n+1}
\end{pmatrix}
= 
\begin{pmatrix}
\mu_n a_n^2 + \mu_n^{-1} b_n c_n
\\c_n (\mu_n a_n + \mu_n^{-1} d_n)
\end{pmatrix},
\quad
\begin{pmatrix}
\mu_n^2 a_n^2 + b_n c_n
\\ c_n (a_n + \mu_n^{-2} d_n)
\end{pmatrix}.
\]
For an appropriate choice of $\varepsilon > 0$ (independent of $n$), a straightforward calculation shows that this implies $|c_{n}| < |c_{n+1}|/2$.
Thus, for $n \geqslant n_0$, we find iteratively $|c_n| < 2^{-k} |c_{n+k}| < 2^{-k} \varepsilon$ for all $k \in \N$ and hence $c_n = 0$. We conclude that $T_E(w_n) e_1 = a_n e_1$, that is, $T_E(w_n)$ leaves the instable eigendirection invariant. Because $T_E(w_n)$ is invertible, it must be $a_n \neq 0$.
For each $m \in \N$ let $w_{n,m} = w_n a^{f^n(x_1)} \cdots w_n a^{f^n(x_{m})}$, which is a prefix of $\varrho^{\infty}(b)$. Due to our conclusions above, $T_E(w_{n,m})$ acts as a multiplication operator on $e_1$, given by
\[
T_E(w_{n,m}) e_1 = a_n^m \mu^{f^n(x_1) + \ldots + f^n(x_m)} e_1.
\]
Using Lemma~\ref{LEM:inflation-word-growth} we can show that $x_1 + \ldots + x_m$ grows at least with $m \log m$, up to a multiplicative constant. This is enough to conclude that $T_E(w_{n,m}) e_1$ diverges as $m \to \infty$ and we reach a contradiction.
\end{proof}

\section{Outlook}
\label{SEC:outlook}

Some questions about the spectral properties of Schr\"{o}dinger operators associated to $\mathbb{X}_{\varrho}$ remain open. For instance, at this point we know almost nothing about the set $\Sigma \setminus (V(a) + [-2,2])$ other than that it is closed, non-empty and contained in $V(b) + [-2,2]$. Does it contain an interval? Is it a Cantor set of Lebesgue-measure $0$? If so, can we estimate its Hausdorff dimension? Also, it seems natural to ask whether there exist almost primitive substitutions such that absence of eigenvalues holds \emph{uniformly} on $\mathbb{X}_{\varrho}$. A natural approach to attack these kind of questions could be to investigate the trace map of $\varrho$ which proved to be very fruitful in the primitive setting \cite{BovierGhez}. However, to date, its study has remained inconclusive to the authors in the case of almost primitive substitutions. 
\\Regarding the scope of the techniques presented in this paper, it seems desirable to explore whether return words can also elucidate the structure of almost primitive substitutions on an alphabet with more than two letters. 
Also, in the quest of exploring Schr\"{o}dinger operators over symbolic dynamical systems that exhibit an intermediate (combinatorial and topological) complexity, we may want to consider further generalizations of almost primitive substitutions. One such class of examples, called ``substitutions of some primitive components", was studied in \cite{HamaYuasa}.
\\Our interest in substitutions over infinite alphabets was mostly directed towards structural properties of a very specific class of examples of constant length. To the best of the authors knowledge, not much systematic work has been done in setting up a general framework for studying substitutions over infinite alphabets, notable exceptions being \cite{durand2, ferenczi}, and there is surely room for further exploration.

\section*{Acknowledgements} 

It is a pleasure to thank David Damanik and Dan Rust for helpful discussions. The research of BE is supported by the Austrian Science Fund (FWF), Project No. J 4138-N32. PG acknowledges support by the German Research Foundation (DFG) via
the Collaborative Research Centre (CRC 1283).

\end{document}